\DeclareMathOperator*{\swslim}{s--w^\star--lim}
\DeclareMathOperator*{\slim}{s--lim}
\DeclareMathOperator*{\vGlim}{{\mathcal G}--lim}
\DeclarePairedDelimiter\abs\lvert\rvert
\DeclarePairedDelimiter\norm\lVert\rVert
\DeclarePairedDelimiter\set{\{}{\}}
 \newcommand{\cs}{{\rm the Cauchy--Schwarz inequality }}
\newcommand{\N}{{\mathbb{N}}} 
\renewcommand{\S}{{\mathbb{S}}}
\newcommand{\R}{{\mathbb{R}}}
 \renewcommand{\c}{{\rm c}}
\newcommand{\e}{{\rm e}} 
 \renewcommand{\i}{{\rm i}}
\renewcommand{\d}{{\rm d}}
\DeclareFontFamily{U}{mathx}{\hyphenchar\font45}
\DeclareFontShape{U}{mathx}{m}{n}{
 <5> <6> <7> <8> <9> <10>
 <10.95> <12> <14.4> <17.28> <20.74> <24.88>
 mathx10
 }{}
\DeclareSymbolFont{mathx}{U}{mathx}{m}{n}
\DeclareMathAccent{\widecheck}{0}{mathx}{"71}
\DeclarePairedDelimiter\inp\langle\rangle
\newcommand\parb[2][]{#1 \big ( #2#1\big )} \newcommand\parbb[2][]{#1
 \Big ( #2#1\Big )}
 \renewcommand{\exp}{{\rm exp}}
\newcommand{\mand}{\text{ and }}
\newcommand{\vA}{{\mathcal A}} \newcommand{\vB}{{\mathcal B}}
\newcommand{\vC}{{\mathcal C}} \newcommand{\vD}{{\mathcal D}}
\newcommand{\vE}{{\mathcal E}} 
\newcommand{\vG}{{\mathcal G}} 
 \newcommand{\vH}{{\mathcal H}}
 \newcommand{\vL}{{\mathcal L}}
\newcommand{\vM}{{\mathcal M}} 
\newcommand{\vO}{{\mathcal O}}
\theoremstyle{plain}
\newtheorem{thm}{Theorem}[section]
\newtheorem{proposition}[thm]{Proposition}
\newtheorem{lemma}[thm]{Lemma} \newtheorem{corollary}[thm]{Corollary}
\newtheorem{cond}[thm]{Condition}
\theoremstyle{definition}
\newtheorem{defn}[thm]{Definition}
 \newtheorem{remark}[thm]{Remark}
\newtheorem{remarks}[thm]{Remarks}
\newtheorem{defns}[thm]{Definitions} \newtheorem*{remarks*}{Remarks}
\newtheorem*{remark*}{Remark}
\numberwithin{equation}{section}
\newcommand\myparagraph[1]{\par\medskip\noindent\begingroup%
 \def\tagform@##1{\maketag@@@{\bfseries(\ignorespaces##1\unskip\@@italiccorr)}}%
 \textbf{\ignorespaces#1}%
 \endgroup%
 \enspace}
\title{Scattering theory for $C^2$ long-range potentials}
\thanks{K.I. is supported by JSPS KAKENHI, grant nr.\ JP17K05325 and JP23K03163. 
E.S. is supported by the Danish Council for Independent Research $|$ Natural Sciences, 
grant nr.\ DFF-4181-00042. }
\author{K. Ito}
\address[K. Ito]{Graduate School of Mathematical Sciences, The University of Tokyo\\
3-8-1 Komaba, Meguro-ku, Tokyo 153-8914, Japan}
\email{ito@ms.u-tokyo.ac.jp}
\author{E. Skibsted}
\address[E. Skibsted]{Institut for Matematiske Fag\\
Aarhus Universitet\\ Ny Munkegade, 8000 Aarhus C, Denmark}
\email{skibsted@math.au.dk}
\date{\today}
\begin{document}

\begin{abstract} 
We develop a complete stationary scattering theory for
 Schr\"odinger operators on $\R^d$, $d\ge 2$, with $C^2$
long-range potentials. 
 This extends former results in the literature, in particular \cite{Is,
 Is2, II, GY}, which all require a higher degree of smoothness. In this
sense the spirit of our paper is similar to \cite[Chapter XXX]{H1},
which also develops a scattering theory under the $C^2$ condition,
however being very different from ours. While the Agmon--H\"ormander
theory is based on the Fourier transform, our theory is not and may
be seen as more related to our previous approach to scattering theory on
manifolds \cite{IS2,IS3, IS4}. The $C^2$ regularity is
natural in the Agmon--H\"ormander
theory as well as in our theory, in fact probably being `optimal' in
the Euclidean setting.
We prove equivalence of the stationary and time-dependent theories 
by giving stationary representations of associated time-dependent
wave operators. Furthermore we develop a related stationary scattering theory at
fixed energy in terms of asymptotics of generalized eigenfunctions of
minimal growth. 
A basic ingredient of our approach is a solution to the eikonal equation constructed from the geometric variational scheme of \cite{CS}. 
Another key ingredient is strong radiation condition bounds for the
limiting resolvents originating in \cite{HS}. 
They improve formerly known ones \cite{Is, Sa} and considerably
simplify the stationary approach. We obtain the bounds by a new 
commutator scheme whose elementary form allows a small degree of smoothness.
\end{abstract}

\allowdisplaybreaks
\maketitle
\tableofcontents

\section{Introduction}\label{introduction}

\subsection{Setting}

In the present paper we construct a stationary long-range scattering theory for 
the Schr\"odinger operator 
\begin{equation}
H
=-\tfrac12\Delta+V+q
\label{eq:23011416}
\end{equation}
on $\vH=L^2(\mathbb R^d)$ with $d\ge 2$. 
Here $\Delta$ is the ordinary Laplacian on $\mathbb R^d$, 
and we shall often write 
\begin{equation*}
-\Delta=p\cdot p=p_ip_i
;\quad 
p_i=-\mathrm i\partial_i,\ \ i=1,\dots,d ,
\end{equation*}
with the Einstein convention being adopted without tensorial
superscripts.
 
We shall address the problem of constructing such theory under a minimal regularity
condition on the long-range part $V$ of the potential $V+q$. The
second term
$q$ is a standard short-range potential. This
corresponds to taking $l=2$ in the following $C^l$ long-range type condition. For technical reasons 
we consider below, and throughout the
paper, the following more general condition in which $l\geq 2$ is
arbitrary (however typically given as $l= 2$). It is a trivial consequence of the condition that $H$ is self-adjoint.

Let $\N_0=\N\cup \set{0}$, $\R_+=(0,\infty)$ and $\langle x\rangle=(1+|x|^2)^{1/2}$ for
$x\in\mathbb R^d$. For given Banach spaces $X$ and $Y$ we denote by
$\vL(X,Y)$ and $\vC(X,Y)$ the set of bounded and compact
 operators $T:X\to Y$, respectively, and for $Y=X$ we abbreviate $\vL(X)=\vL(X,Y)$ and $\vC(X)=\vC(X,Y)$.

\begin{subequations}
\begin{cond}\label{cond:220525}
Let $V\in C^l(\mathbb R^d;\mathbb R)$ for some $l\in\{2,3,\ldots\}$, 
and assume there exist $\sigma\in (0,1)$, $\rho\in (0,1]$ and $C>0$ 
such that for any $\alpha\in\mathbb N_0^d$ with $|\alpha|\le l$ and $x\in\mathbb R^d$
\begin{align}\label{eq:cond22bb}
|\partial^\alpha V(x)|
\le 
C\langle x\rangle^{-m(|\alpha|)};
\quad 
m(k)=
\begin{cases}
\sigma+k &\text{for }k=0,1,2,\\
\sigma+2+\tfrac{\rho+1}2(k-2)&\text{for }k=2,\dots, l.
\end{cases} 
 \end{align} 
 In addition, let $q\colon\mathbb R^d\to \mathbb R$ be measurable, and assume 
there exists $\tau\in (0,1)$ such that
 \begin{equation}\label{eq:shortrange}\langle x\rangle^{1+\tau}q(x)(-\Delta +1)^{-1}\in \vC(\vH).
 \end{equation} 
Finally, assume the operator $H
=-\tfrac12\Delta+V+q$ does not have positive eigenvalues.
\end{cond}
\begin{remarks}
\begin{enumerate}[1)]
\item
For $l=2$ the above $V+q$ is called a \textit{$2$-admissible
potential}, here adapting the terminology of \cite [Definition~30.1.3]{H1}. 
Several of our main theorems require only $l=2$. 
However for an intermediate key estimate of independent interest we need $l= 4$ 
(or with a modification possibly only $l= 3$), see
Theorem~\ref{thm:proof-strong-bound} (and Remark~\ref{rem:22110219}~\ref{item:bound3}). This estimate 
will be used for a certain regularized $4$-admissible
potential  constructed from a
given $2$-admissible
potential, see
 Remark \ref{rem:22110219}
\ref{item:2342619}. (For the regularized potential the
 parameter $\sigma$ is the same and $\rho<\sigma$, arbitrarily.) In this sense indeed the key estimate serves as an intermediate result
for our study of $2$-admissible
potentials.
\item We  call $V$ a \textit{classical $C^l$ long-range
 potential} if \eqref{eq:cond22bb} holds with $\sigma\in (0,1)$ and either     $l=2$ (in
 which case  $\rho$ is irrelevant)  or  $l\geq
 3$ and in this case $\rho=1$. If  this is fulfilled 
 for all $l\geq 2$ we call $V$ a
 \emph{classical $C^\infty$ long-range potential}. Obviously the
 expression for the 
 \emph{order of decay function} $m:\N_0\to \R_+$ is simplest (in
 fact being case-independent)
 for a classical potential. Nevertheless, in general, its particular
 form is well-suited for an induction argument to be used in the
 proof of the basic result Theorem \ref{thm:main result2} (stated
 below).
\item\label{item:singularities} The local singularities allowed in 
\eqref{eq:shortrange} is not an important issue/difficulty in this
paper. The last assumption on absence of positive
eigenvalues is very weak and can be omitted for example if \eqref{eq:shortrange} is replaced by
assuming boundedness of the function $\langle x\rangle^{1+\tau}q(x)$.
\end{enumerate}\end{remarks}
\end{subequations}

Under Condition~\ref{cond:220525} for $l= 2$ we succeed in fully developing a stationary scattering theory: 
 We characterize the generalized eigenfunctions of
 minimal growth by their asymptotics 
and construct the 
\emph{stationary scattering matrix} as well as the \textit{generalized Fourier transforms}, 
which unitarily diagonalize the (absolutely) continuous part of $H$. 
Such results were formerly obtained by Isozaki~\cite{Is, Is2}, Ikebe--Isozaki~\cite{II} and G\^{a}tel--Yafaev~\cite{GY}
for classical $C^4$ or $C^3$ long-range potentials. In this paper we extend
them to the category of $2$-admissible
potentials.
Note that a similar $C^2$ condition very naturally appears 
in the classical long-range scattering theory, see for example
\cite[Theorem~2.7.1]{DG}. In fact we believe that 
 our condition should be considered as `optimal', although this
 terminology will not be justified in the paper.
Finally we shall prove that 
the adjoints of the generalized Fourier transforms coincide with some constructed time-dependent wave operators, 
verifying that the stationary and the time-dependent approaches are equivalent to each other. 

The long-range scattering theory requires a non-trivial comparison dynamics due to
non-negligible effects from $V$ at infinity. 
To virtually eliminate such effects we solve the
\textit{stationary eikonal equation}, or simply the \textit{eikonal equation}, 
\begin{subequations}
\begin{equation}\label{eq:4c}
\tfrac12\abs{\nabla_x S(\lambda,x)}^2+V(x)=\lambda,\quad\lambda>0,
\end{equation}
in the stationary theory, and the \textit{time-dependent eikonal equation},
which is more commonly called the \textit{Hamilton--Jacobi equation}, 
\begin{equation}\label{eq:4cb}
\partial_tK(t,x)+\tfrac12|\nabla_xK(t,x)|^2+V(x)=0,\quad t>0,
\end{equation}
\end{subequations}
in the time-dependent theory.
In this paper we solve the former equation \eqref{eq:4c} 
by the geometric method of \cite{CS} and derive global estimates
of the solution. 
For comparison we mention that Isozaki \cite{Is, Is2} in his
construction of a solution 
used a classical
PDE-method solving a Cauchy problem, and 
the cited papers \cite{II,GY} rely on Isozaki's solution. 
Once a proper solution to \eqref{eq:4c} is obtained, 
one can solve \eqref{eq:4cb} by using the Legendre transform, see also \cite{IS3}.

Another important technical tool is a strong version of the radiation condition bounds for the limiting resolvents,
which in fact considerably simplifies the stationary scattering theory. 
Such `strong radiation condition bounds' were first established
by Herbst--Skibsted
\cite {HS} for classical $C^\infty$ long-range potentials. For a more
restrictive class of classical $C^\infty$ long-range potentials
(defined by a virial condition) the 
bounds were derived uniformly in non-negative energies \cite {Sk},
yielding a stationary scattering theory at fixed energy including 
the threshold zero. 
In this paper we present a procedure of proof that works within a
low regularity framework (in particular being independent of
pseudodifferential operator theory). 
 A similar procedure was invented and applied ealier to the short-range Stark
 Hamiltonian \cite{AIIS}, however our setup is different and we need
 to proceed rather 
 independently.
Since the proof still requires fourth (or possibly 
 only third) derivatives of the potential,
 we shall regularize it 
up to an error of short-range type 
using a regularization scheme of
H\"ormander \cite[Lemma~30.1.1]{H1}. This leads to the study of
radiation condition bounds of a new classical $C^2$ long-range
potential which conforms with \eqref{eq:cond22bb} for an $l\geq3$
{but} possibly \emph{fails} to be a classical $C^3$ long-range
potential. 
The short-range error from the H\"ormander decomposition
 (see Lemma \ref{lem:230111} for the version to be used in our paper) along with the potential $q$ will be treated by the second resolvent identity. 
Note that the H\"ormander decomposition was also employed in \cite{II},
however Ikebe and Isozaki considered only classical $C^4$ long-range potentials. 

These two ingredients occupy a considerable part of the paper, 
and 
in addition to the entailing stationary and time-dependent scattering theories we consider them as main results
 of independent interest. 

 With these preliminaries done we derive a 
 complete stationary scattering theory. The strong radiation
 condition bounds yield a very fast construction of generalized
 Fourier transforms using the `spherical eikonal coordinates'. These
 coordinates were first used in the context of stationary scattering
 theory in \cite{ACH} for a different setting and with different
 proofs. Then we `pull the results back' to assertions in the
 ordinary spherical coordinates. After a complete stationary theory
 is obtained, the time-dependent theory follows naturally, still
 being quite non-trivial though. Finally we mention that our low
 regularity theory has an application to the 3-body problem. We
 shall briefly discuss this aspect in
 Subsection~\ref{subsubsec:three-body-problem}. 
 Finally, we discuss potential applications to scattering theory in
 Subsection \ref{subsubsec:A perspective:
 Generalization to manifolds}.

\subsection{Main results}

Now we present a series of main results of the paper. 

\subsubsection{Stationary eikonal equation}

Let us first solve the stationary eikonal equation \eqref{eq:4c} outside a large ball. 
Take any $\chi\in C^\infty(\mathbb{R};\mathbb R)$ such that 
\begin{align}
\chi(t)
=\begin{cases}
0 &\mbox{for } t \le 4/3, \\
1 &\mbox{for } t \ge 5/3,
\end{cases}
\quad
\chi'\geq 0,
\label{eq:14.1.7.23.24}
\end{align}
and set for any $R>0$ and $x\in\mathbb R^d$
\begin{align}
\chi_R(x)=\chi(|x|/R). 
\label{eq:14.1.7.23.24b}
\end{align}

 \begin{thm}\label{thm:main result2} 
Suppose Condition~\ref{cond:220525} for some 
$l\geq 2$. Fix any closed interval $I\subset \R_+$. 
Then it follows that for all $R\geq R_0$ for some $R_0>0$, there
exists a real $S\in C^l(I\times (\mathbb R^d\setminus\{0\}))$ and 
$s\in C^l(I\times\mathbb R^d)$ such that: 
\begin{enumerate}
\item\label{item:22120919a}
The function $S$ solves 
\begin{equation}
\tfrac12\abs{\nabla_x S}^2+\chi_RV
=\lambda\ \ \text{on }I\times (\mathbb R^d\setminus\{0\}).
\label{eq:2212116}
\end{equation}
\item\label{item:22120919}
For any $\lambda\in I$, $S(\lambda,\cdot)$ coincides with the geodesic distance 
from the origin with respect to the Riemannian metric 
$g=2\bigl(\lambda-\chi_RV\bigr)\,\d x^2$.

\item\label{item:22120919c}
The functions $S$ and $s$ are related as 
\begin{equation*}
S=\sqrt{2\lambda}|x|(1+s)
\ \ \text{on }I\times (\mathbb R^d\setminus\{0\}),
\end{equation*} and $s$ vanishes on $I\times \{|x|\le R\}$. 
\item \label{item:22120919d}
There exists $C=C(I,R_0)>0$ (being independent of
 $R\geq R_0$) such that 
for any $k+|\alpha|\le l$ and $(\lambda,x)\in I\times\mathbb R^d$
 \begin{align}
 \label{eq:errestza2}
\bigl|\partial^k_\lambda\partial^\alpha_x s(\lambda,x)\bigr|
\le 
C \lambda^{-1-k}\langle x\rangle^{-m(k+|\alpha|)+k}
.
\end{align}
\end{enumerate}
\end{thm}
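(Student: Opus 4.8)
The plan is to define $S(\lambda,\cdot)$ directly as the Riemannian distance from the origin in the metric $g_\lambda=2\bigl(\lambda-\chi_RV\bigr)\,\d x^2$, and then to read off \eqref{eq:2212116} and the bounds \eqref{eq:errestza2} from the geometry of $g_\lambda$, which for $R$ large is a small and slowly decaying conformal perturbation of the flat metric $2\lambda\,\d x^2$. Fix $I=[\lambda_-,\lambda_+]\subset\R_+$. By \eqref{eq:cond22bb} with $|\alpha|=0$ one may choose $R_0$ so large that $\|\chi_RV\|_\infty\le\lambda_-/2$, and more generally so that $\chi_RV$ together with its derivatives up to order $l$ is small, uniformly in $R\ge R_0$ and $\lambda\in I$, in the weighted norm prescribed by the right-hand side of \eqref{eq:cond22bb}. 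Then $g_\lambda$ is a complete Riemannian metric on $\R^d$, conformal to the Euclidean one, which equals $2\lambda\,\d x^2$ on $\{|x|\le 4R/3\}$ and decays to it at infinity at the stated orders.

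Next I would apply the geometric variational scheme of \cite{CS} to $g_\lambda$. Because the perturbation is small and integrable along rays, the origin has no conjugate points, the $g_\lambda$-exponential map at $0$ is a diffeomorphism onto $\R^d$, and $S(\lambda,x):=\dist_{g_\lambda}(0,x)$ is of class $C^l$ on $\R^d\setminus\{0\}$, jointly in $(\lambda,x)$ (smoothness in $\lambda$ since $g_\lambda$ is affine in $\lambda$); this yields \ref{item:22120919} and the geodesic foliation structure used later in the paper. Where it is smooth, the distance function to a point obeys the intrinsic eikonal identity $g_\lambda^{ij}\partial_iS\,\partial_jS=1$, and since $g_\lambda^{ij}=\bigl(2(\lambda-\chi_RV)\bigr)^{-1}\delta^{ij}$ this is exactly \eqref{eq:2212116}, giving \ref{item:22120919a}. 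On $\{|x|\le 4R/3\}$ the metric is flat, and a short comparison argument — any competitor path leaving $\{|y|\le 4R/3\}$ incurs extra Euclidean length of order $R$, on which the conformal factor can fall below $\sqrt{2\lambda}$ only by a factor $1-O(R^{-\sigma})$, hence is longer once $R\ge R_0$ — shows $S(\lambda,x)=\sqrt{2\lambda}\,|x|$ for $|x|\le R$. Defining $s$ by $S=\sqrt{2\lambda}\,|x|(1+s)$ gives \ref{item:22120919c}; since $s\equiv 0$ on $\{|x|\le R\}$ the singularity of $1/|x|$ is harmless and $s\in C^l(I\times\R^d)$.

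It remains to prove \eqref{eq:errestza2}, which is the technical core. I would work along the minimizing $g_\lambda$-geodesic $\gamma$ from $0$ to $x$, parametrized by Euclidean arclength, of length $L\approx|x|$. Writing $S(\lambda,x)=\int_0^L\sqrt{2(\lambda-\chi_RV(\gamma))}\,\d t$ and comparing $\gamma$ with the straight segment gives at once $\bigl|S-\sqrt{2\lambda}|x|\bigr|\le C\lambda^{-1/2}\int_0^{|x|}\langle r\rangle^{-\sigma}\,\d r\le C\lambda^{-1/2}\langle x\rangle^{1-\sigma}$, hence the $k=|\alpha|=0$ bound. For the first $x$-derivative one uses $\nabla_xS(\lambda,x)=|\xi|\,\dot\gamma(L)$ with $\tfrac12|\xi|^2=\lambda-\chi_RV(x)$ pinned down exactly, together with the minimizer identity $\dot\gamma(L)-x/|x|=\int_0^L(t/L)\,\ddot\gamma(t)\,\d t$ and $|\ddot\gamma|\le C\langle t\rangle^{-\sigma-1}/\lambda$, which yield $\bigl|\dot\gamma(L)-x/|x|\bigr|\le C\lambda^{-1}\langle x\rangle^{-\sigma}$; this \emph{cancellation} is essential, a naive bound being too weak. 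Higher derivatives I would obtain by induction on $k+|\alpha|$, differentiating \eqref{eq:2212116} to produce transport equations for $\partial_\lambda^k\partial_x^\alpha S$ along the characteristics $\dot x=\nabla_xS$ (a reformulation of the Jacobi equation), whose inhomogeneities involve only lower-order derivatives of $S$ and of $\chi_RV$; integrating these against the decay bounds \eqref{eq:cond22bb}, and using that the characteristic flow stays in a fixed cone about the radial direction and deviates from the radial flow only by the decaying perturbation, reproduces precisely the order-of-decay function $m$ — the $\tfrac{\rho+1}2$-loss per derivative beyond the second entering solely through the corresponding loss already present in \eqref{eq:cond22bb} — while each $\partial_\lambda$ contributes a factor $\lambda^{-1}$ and relaxes the $\langle x\rangle$-decay by one power, as recorded by the $+k$ in the exponent. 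All constants depend on $\chi_RV$ only through the $R$-independent bounds \eqref{eq:cond22bb}, so the estimates hold uniformly in $R\ge R_0$.

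The main obstacle I anticipate is exactly this last step: organizing the induction so that the mixed $(\lambda,x)$-derivatives close with the sharp powers, and in particular controlling the non-radial part of the geodesic/characteristic flow finely enough — exploiting cancellations of the type illustrated above for the first derivative — to see that no loss of decay is incurred beyond the one built into \eqref{eq:cond22bb}. The geometric input from \cite{CS} (completeness, absence of conjugate points, smoothness of the distance) and the comparison arguments near $\{|x|\le R\}$ are, by contrast, routine once $R_0$ is taken large enough.
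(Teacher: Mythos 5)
Your setup for items \eqref{item:22120919a}--\eqref{item:22120919c} is essentially the paper's: the distance function for $g_\lambda$ constructed via the variational scheme of \cite{CS}, the intrinsic eikonal identity, and a comparison argument showing the minimizer stays in the flat region for $|x|\le R$. (One caveat even here: joint $C^l$ smoothness in $(\lambda,x)$ does not follow merely from $g_\lambda$ being affine in $\lambda$; the paper has to run the implicit function theorem on the stationarity equation to get $C^{l-1}$ dependence of the minimizer and then work further, e.g.\ via \eqref{eq:derl2l}, to recover the $l$-th pure $\lambda$-derivatives of $s$.) Your first-derivative computation, including the cancellation identity $\dot\gamma(L)-x/L=\tfrac1L\int_0^L t\,\ddot\gamma(t)\,\d t$, is correct and matches what the paper extracts from \eqref{eq:10}.

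The genuine gap is in item \eqref{item:22120919d}, which is the entire technical content of the theorem. For $k+|\alpha|\ge 2$ you propose to differentiate the eikonal equation and integrate the resulting transport equations along the characteristics $\dot x=\nabla_xS$, asserting that the induction ``reproduces precisely'' the decay function $m$. This is exactly the Cauchy-problem/characteristics method of Isozaki \cite{Is} and G\^atel--Yafaev \cite{GY}, and the paper's Remark 1.3 documents that it does \emph{not} deliver \eqref{eq:errestza2} even for more regular (classical $C^3$) potentials: Isozaki misses $(k,\alpha)=(3,0)$ and the \cite{GY} assertion in fact needs $C^4$. The obstruction is structural: to integrate the transport equation for an order-$n$ derivative of $S$ you must differentiate the characteristic flow itself $n-1$ times (Jacobi fields and their derivatives), and those ODEs have coefficients involving the very derivatives of $\nabla S$ you are trying to bound; closing this loop at the sharp decay rates, with the non-classical loss $\tfrac{\rho+1}2$ per derivative beyond the second and for all mixed $(\lambda,x)$-derivatives up to order $l$, is precisely what you defer and self-identify as ``the main obstacle.'' The paper avoids characteristics entirely: it bounds $\|\partial_\lambda^k\partial_x^\alpha\gamma\|_{X^p}$ in path space by combining the uniform coercivity of the second variation $\partial_\kappa^2\mathcal E$ (Lemma \ref{lemma:230221}) with an $L^p$--$L^q$ duality argument, Hardy's inequality and the generalized H\"older inequality (Proposition \ref{prop:main result2bbf}), and only then transfers these to $s$ via \eqref{eq:derl2lb} and \eqref{eq:derl2l}. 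Without either carrying out your transport-equation induction in full, or replacing it by an argument of this kind, the proof of \eqref{eq:errestza2} is missing.
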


\begin{remarks}\label{remark:lambda_high-order-deriv}
\begin{enumerate}[1)]

\item 
 For a variation of \eqref{eq:errestza2} see also Corollary~\ref{cor:epsSmall}. 
 \item \label{item:lambda_high-order-deriv1} 
We can extend $S$ and $s$ to be smoothly defined for all $\lambda>0$,
 however, allowing $R=R(\lambda)$ to be $\lambda$-dependent possibly with $R(\lambda)\to\infty$ as $\lambda\to 0_+$. 
 On the other hand, 
 a bound corresponding to \eqref{eq:errestza2} can be kept uniform in $\lambda>0$. 
This can be seen from the proof, but we shall not elaborate on it. 
\item \label{item:lambda_high-order-deriv2bbb}
Our bound \eqref{eq:errestza2} is stronger
than \cite[Theorem~4.1]{Is}, where Isozaki obtained 
\eqref{eq:errestza2}, except for $(k,\alpha)=(3,0)$, for  a classical $C^3$ long-range
  potential. 
G\^atal--Yafaev asserted
\eqref{eq:errestza2} for a classical $C^3$ long-range
  potential  in \cite[Lemma~3.1]{GY}, 
however it was not proved there, and in fact it was indicated in a paragraph subsequent to it that the
assertion requires a classical $C^4$ long-range potential.
\end{enumerate}
\end{remarks}

\subsubsection{Stationary scattering theory at fixed energy}

Next we construct a stationary scattering theory at fixed energy 
through the WKB approximation of the limiting resolvent.

Let us briefly review the \textit{limiting absorption principle}
for the resolvent
\[R(z)=(H-z)^{-1};\quad z\in \mathbb C\setminus\sigma(H). \]
It is a basic and well-studied topic, for details see e.g.\ 
\cite{AIIS2} and the references there.
Recall the 
\textit{Besov spaces}, or the \textit{Agmon--H\"ormander spaces}, defined as 
\begin{align*}
\mathcal B&=
\bigl\{\psi\in L^2_{\mathrm{loc}}\,\big|\,\|\psi\|_{\mathcal B}<\infty\bigr\},\quad 
\|\psi\|_{\mathcal B}=\sum_{m=0}^\infty 2^{m/2}
\|1_m\psi\|_{{\mathcal H}},\\
\mathcal B^*&=
\bigl\{\psi\in L^2_{\mathrm{loc}}\,\big|\, \|\psi\|_{\mathcal B^*}<\infty\bigr\},\quad 
\|\psi\|_{\mathcal B^*}=\sup_{m\ge 0}2^{-m/2}\|1_m\psi\|_{{\mathcal H}},
\\
\mathcal B^*_0
&=
\Bigl\{\psi\in \mathcal B^*\,\Big|\, \lim_{m\to\infty}2^{-m/2}\|1_m\psi\|_{{\mathcal H}}=0\Bigr\}.
\end{align*}
Here we let
\begin{align}
1_0&=1\bigl(\{ |x|<1\}\bigr)\quad \text{and} \quad
1_m=1\bigl(\{2^{m-1}\le |x|<2^m\}\bigr)
\ \ \text{for }m\in\mathbb N,
\label{eq:230328}
\end{align}
with $1(A)$ being the sharp characteristic function of a subset $A\subseteq \mathbb R^d$. 
It is worthwhile recalling that if we define the standard \emph{weighted $L^2$ spaces} as 
\begin{equation*}
L_s^2=\langle x\rangle^{-s}\vH\ \ \text{for }s\in\mathbb R 
,
 \end{equation*} 
then for any $s>1/2$ 
\begin{equation*}
 L^2_s\subset \mathcal B\subset L^2_{1/2}
\subset \mathcal H
\subset L^2_{-1/2}\subset \mathcal B^*_0\subset \mathcal
 B^*\subset L^2_{-s}.
\end{equation*}
It is proved in \cite{AIIS2} that 
locally uniformly in $\lambda>0$ there exist the \textit{limiting resolvents}
\begin{align*}
R(\lambda\pm \i 0)
=\swslim_{z\to \lambda\pm\mathrm i0_+}R(z)\ \ \text{in }\mathcal L(\mathcal B,\mathcal B^*)
,
\end{align*}
or equivalently stated, for any $\psi,\phi\in\mathcal B$ 
\[
\langle \phi, R(\lambda\pm \i 0)\psi\rangle
=\lim_{z\to \lambda\pm\mathrm i0_+}\langle \phi,R(z)\psi\rangle
,
\] 
respectively (see Theorem \ref{thm:221105} for more 
general assertions). In particular the singular continuous spectrum of
$H$ is empty,
$\sigma_{\mathrm{sc}}(H)=\emptyset$.

For these limiting resolvents we discuss the \textit{WKB approximations} as follows, 
using here  and throughout the paper  the notation $\mathcal G=L^2\parb{\S^{d-1}}$.

\begin{thm}\label{thm:comp-gener-four} 
Suppose Condition~\ref{cond:220525} for $l=2$.
Let $I\subset \R_+$ be a closed interval, 
and let $R>0$. 
Assume there exists real $S=\sqrt{2\lambda}|x|(1+s)\in C\bigl(I;C^2(\{|x|>R\})\bigr)$ satisfying: 
\begin{enumerate}[(i)]
\item\label{item:ge1}
For each $\lambda\in I$, $S(\lambda,\cdot)$ solves \eqref{eq:4c} on $\{|x|>R\}$. 
\item\label{item:ge2}
For any compact subset $I'\subseteq I$ there exist $\epsilon,C>0$ 
such that for any $|\alpha|\le 2$, $\lambda\in I'$ and $|x|>R$ 
 \begin{align*}
\left| \partial_x^\alpha s(\lambda,x)\right|
&\le C\langle x\rangle^{-\epsilon-|\alpha|}.
\end{align*}
\end{enumerate}
In addition, for any $\xi\in \mathcal G$ and $(\lambda,x)\in I\times \mathbb R^d$ set 
\begin{align}
\phi_\pm^S[\xi](\lambda,x)
=\tfrac{(2\pi)^{1/2}}{(2\lambda)^{1/4}}
\chi_R(x)
|x|^{-(d-1)/2}\e^{\pm\i S(\lambda, x)} \xi(\hat x),\quad \hat x=|x|^{-1}x
,
 \label{eq:230626}
\end{align}
respectively, where $\chi_R\in C^\infty(\mathbb{R}^d)$ is from \eqref{eq:14.1.7.23.24b}. 
Then the following assertions hold. 

\begin{enumerate}
\item\label{item:23062110}
For any $\lambda\in I$ 
there exist unique $F^\pm(\lambda)\in\vL(\vB,\vG)$ such that for any
$\psi\in \mathcal B$
\begin{align}
R(\lambda \pm\i 0)\psi
-\phi_\pm^S[F^\pm(\lambda)\psi](\lambda,\cdot)
\in \vB^*_0
, 
\label{eq:221206}
\end{align}
respectively,

\item\label{item:23062111}
The mappings $F^\pm\colon I\times \mathcal B\to \mathcal G$
are continuous.

\item\label{item:23062112}
For any $\lambda\in I$ one has the identities
\begin{equation*}
(H-\lambda)F^\pm(\lambda)^*=0
\quad\mand \quad
 F^\pm(\lambda)^*F^\pm(\lambda)
=\delta(H-\lambda), 
\end{equation*} 
respectively, 
where $\delta(H-\lambda)=\pi^{-1}\mathop{\mathrm{Im}}R(\lambda+\mathrm i0)
\in \mathcal L(\mathcal B,\mathcal B^*)$.
\item\label{item:23062113}
For any $\lambda\in I$ the ranges $F^\pm(\lambda)\mathcal B\subseteq \mathcal G$ are dense. 
\end{enumerate}
\end{thm}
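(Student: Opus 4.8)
The plan is to derive the WKB expansion \eqref{eq:221206} from the strong radiation condition bounds together with the eikonal equation, treating the outgoing ($+$) and incoming ($-$) cases symmetrically; I will describe the outgoing case. First I would introduce the first-order operators $A_\pm$ obtained by quantizing $\nabla S\cdot p$ symmetrically, i.e.\ roughly $A_\pm = \mp\bigl(\tfrac12(\nabla S\cdot p+p\cdot\nabla S)\mp|\nabla S|^2\bigr)$ near infinity, so that $A_+$ annihilates the conjugated amplitude: a direct computation using \eqref{eq:4c} shows that $H-\lambda$, conjugated by the phase $\e^{\i S}$ and the density $|x|^{-(d-1)/2}$, becomes, modulo $\mathcal L(\mathcal B,\mathcal B_0^*)$ and lower-order terms, a transport operator whose characteristics are the gradient flow of $S$. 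The key input is the strong radiation condition bound from \cite{HS} as re-proved in this paper (Theorem~\ref{thm:proof-strong-bound}), giving $A_+ R(\lambda+\i0)\psi\in\mathcal B^*_0$ uniformly, which I would combine with the standard (weak) radiation bounds and the microlocal resolvent estimates from \cite{AIIS2}.

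Next I would define $F^+(\lambda)\psi$ as the boundary value on $\S^{d-1}$ of the rescaled resolvent along the eikonal flow: more precisely, using the spherical eikonal coordinates adapted to $S$, the function $u=R(\lambda+\i0)\psi$ satisfies an ODE in the radial eikonal variable whose coefficient is controlled by $A_+u\in\mathcal B^*_0$, so $|x|^{(d-1)/2}\e^{-\i S}u$ converges in $\mathcal G$ as $|x|\to\infty$ in a suitable averaged (Cesàro/Besov) sense; I would set $F^+(\lambda)\psi$ to be that limit. Uniqueness in \ref{item:23062110} is immediate because any two choices differ by an element of the form $\phi_+^S[\eta]$ lying in $\mathcal B^*_0$, which forces $\eta=0$ since $\|\phi_+^S[\eta](\lambda,\cdot)\|_{\mathcal B^*}\sim\|\eta\|_{\mathcal G}$ by the explicit form \eqref{eq:230626} and orthogonality of the oscillatory factor. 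Boundedness $F^+(\lambda)\in\mathcal L(\mathcal B,\mathcal G)$ follows from the uniform-in-$\psi$ radiation bounds and the limiting absorption principle estimate $\|R(\lambda+\i0)\|_{\mathcal L(\mathcal B,\mathcal B^*)}<\infty$.

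For \ref{item:23062111}, joint continuity in $(\lambda,\psi)$ reduces to continuity of $\lambda\mapsto R(\lambda+\i0)\in\mathcal L(\mathcal B,\mathcal B^*)$ (from \cite{AIIS2}), continuity of $\lambda\mapsto S(\lambda,\cdot)$ assumed in the hypotheses, and uniformity of the radiation bounds on compact subintervals $I'\subseteq I$; the only subtlety is that the convergence defining $F^+$ must be shown to be locally uniform in $\lambda$, which again comes from the uniform remainder estimates. For \ref{item:23062112}, the identity $(H-\lambda)F^+(\lambda)^*=0$ follows by taking adjoints in the defining relation and using that $\phi_+^S[\xi]$ is an approximate generalized eigenfunction (the eikonal equation kills the leading term, the amplitude choice kills the transport term); more efficiently, one pairs \eqref{eq:221206} against $(H-\lambda)\varphi$ for test $\varphi$ and uses the resolvent equation. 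The Parseval-type identity $F^+(\lambda)^*F^+(\lambda)=\delta(H-\lambda)$ is obtained from the stationary formula $\pi^{-1}\Im\langle\psi,R(\lambda+\i0)\psi\rangle=\langle\psi,\delta(H-\lambda)\psi\rangle$ by inserting the WKB expansion on both sides, the cross terms and the $\mathcal B_0^*$ remainder contributing nothing because the pairing of $\mathcal B$ with $\mathcal B_0^*$ together with the oscillatory average vanishes in the limit; this is exactly the computation where the strong (rather than weak) radiation bound is essential, since it is what makes the remainder genuinely negligible in the quadratic-form pairing.

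Finally, \ref{item:23062113} follows from \ref{item:23062112}: if $\eta\in\mathcal G$ were orthogonal to $F^+(\lambda)\mathcal B$, then $F^+(\lambda)^*\eta$ would be a generalized eigenfunction in $\mathcal B^*$ orthogonal (in the $\delta(H-\lambda)$ sense) to everything, hence $\eta=0$ by the surjectivity/non-degeneracy of $\delta(H-\lambda)$ as a map $\mathcal B\to\mathcal B^*$ onto the generalized eigenspace — alternatively one invokes that $\delta(H-\lambda)\mathcal B$ is dense in $\Ran\delta(H-\lambda)$ and $F^+(\lambda)^*$ is injective on $(\Ker F^+(\lambda)^*)^\perp$. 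The main obstacle I anticipate is making the convergence defining $F^\pm(\lambda)$ precise and uniform under the merely $C^2$ assumption on $S$: one cannot integrate by parts twice freely, so the transport/ODE argument must be organized so that only one derivative of $s$ beyond the eikonal identity is used, which is precisely why the strong radiation condition bound (controlling $A_\pm R(\lambda\pm\i0)$ in $\mathcal B_0^*$) is indispensable and replaces the pseudodifferential parametrix constructions of \cite{Is,II,GY}.
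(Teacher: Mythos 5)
Your overall architecture (strong radiation condition bounds $\Rightarrow$ convergence of the rescaled resolvent along the eikonal flow $\Rightarrow$ definition of $F^\pm(\lambda)$, then Green's/Parseval identities) is the right one, but there is a genuine gap at the very first step: Theorem~\ref{thm:proof-strong-bound} is proved only under Condition~\ref{cond:220525} with $l=4$ and $q\equiv 0$, whereas Theorem~\ref{thm:comp-gener-four} assumes only $l=2$ and allows $q\neq 0$. So the bound $\gamma_\| R(\lambda\pm\i0)\psi\in L^2_{\beta-1/2}$ (your ``$A_+R(\lambda+\i0)\psi\in\mathcal B_0^*$'') is simply not available for the operator $H$ you are working with, and your transport/ODE argument along the eikonal flow has no input. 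The missing idea is H\"ormander's regularization (Lemma~\ref{lem:230111}): one splits $V=V_{\mathrm S}+V_{\mathrm L}$ with $V_{\mathrm L}$ smooth and satisfying \eqref{eq:cond22} for all orders and $V_{\mathrm S}$ of short-range type, applies the strong radiation condition bounds and the eikonal-flow construction to $H_{\mathrm L}=-\tfrac12\Delta+V_{\mathrm L}$ (yielding operators $E^\pm(\lambda)$ as in Proposition~\ref{prop:22121414}), and then transfers to $H$ via the second resolvent identity $R(\lambda\pm\i0)=R_{\mathrm L}(\lambda\pm\i0)\bigl(1-(V_{\mathrm S}+q)R(\lambda\pm\i0)\bigr)$. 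A by-product you also need is the comparison of the two eikonal functions: the flow and radiation observables live relative to $S_{\mathrm L}$, so to obtain \eqref{eq:221206} for the \emph{given} $S$ one must show that $S-S_{\mathrm L}$ has a radial limit $\Theta_{\mathrm L}(\lambda,\omega)$ at infinity (Lemma~\ref{lemma:uniq-asympt-norm}) and insert the unitary factor $\e^{\mp\i\Theta_{\mathrm L}}$ into the definition of $F^\pm(\lambda)$.

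A second, independent gap is your proof of \ref{item:23062113}. From $\eta\perp F^+(\lambda)\mathcal B$ you get $F^+(\lambda)^*\eta=0$ in $\mathcal B^*$, but concluding $\eta=0$ requires injectivity of $F^+(\lambda)^*$, which is equivalent to the density you are trying to prove; the identity $F^+(\lambda)^*F^+(\lambda)=\delta(H-\lambda)$ gives no lower bound on $\|F^+(\lambda)^*\eta\|$ in terms of $\|\eta\|_{\mathcal G}$. The paper instead exhibits an explicit dense set in the range: for $\xi\in C^\infty(\S^{d-1})$ the quasi-mode $\phi_\pm^{S_{\mathrm L}}[\xi]$ satisfies the outgoing/incoming radiation condition and hence equals $R(\lambda\pm\i0)\psi_\pm^{S_{\mathrm L}}[\xi]$ with $\psi_\pm^{S_{\mathrm L}}[\xi]=(H-\lambda)\phi_\pm^{S_{\mathrm L}}[\xi]\in\mathcal B$ by the Sommerfeld uniqueness (Proposition~\ref{prop:13.9.9.8.23}), whence $F^\pm(\lambda)\psi_\pm^{S_{\mathrm L}}[\xi]=\e^{\mp\i\Theta_{\mathrm L}(\lambda,\cdot)}\xi$, and density follows. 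Your remaining steps — uniqueness in \ref{item:23062110} via the non-vanishing of $\|\phi_\pm^S[\eta]\|_{\mathcal B^*}$, continuity in \ref{item:23062111} from locally uniform estimates, and the Parseval identity via an integration by parts against a cutoff — are consistent with the paper's treatment, though note the paper gets genuine $\mathcal G$-norm convergence of the rescaled resolvent along the flow (integrability of the $t$-derivative), not merely a Ces\`aro limit.
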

\begin{remarks}\label{rem:230203}
\begin{enumerate}[1)]
\item\label{item:230426}
The existence of such $S$ for large $R$ is
guaranteed by Theorem~\ref{thm:main result2}, but here it can be
slightly more general 
(thanks to Theorem~\ref{thm:main result2} the
 condition \ref{item:ge2} is fulfilled with
 $\epsilon=\sigma$ for $I'=I$). Note also that according to Remark \ref{remark:lambda_high-order-deriv}
\ref{item:lambda_high-order-deriv1} 
we may let $I=\R_+$ 
if we allow a {$\lambda$-dependent} $R$.

\item
Clearly for any $\xi\in\mathcal G$ and $\lambda\in
 I$, the functions $\phi_\pm^S[\xi](\lambda,\cdot)\in\mathcal B^*$.
We may think of these \emph{quasi-modes} as purely outgoing/incoming distorted spherical waves, respectively. 
\item 
In Theorem~\ref{thm:char-gener-eigenf-1} below we will 
see the `completeness property' $F^\pm(\lambda)\mathcal B=\mathcal G$
for any $\lambda\in I$. 
\end{enumerate}
\end{remarks}

Now we have stationary versions of scattering quantities. 

\begin{defns}
In the setting of Theorem~\ref{thm:comp-gener-four}, let $\lambda\in I$. 
\begin{enumerate}
\item
The operators $F^\pm(\lambda)\colon \mathcal B\to\mathcal G$ are the
\emph{restricted stationary wave operators}
at energy $\lambda$. 
\item
The adjoints $F^\pm(\lambda)^*\colon \mathcal G\to \vB^*$ are the \emph{stationary wave matrices}. 
\item
The \emph{stationary scattering matrix at energy} $\lambda\in I$ 
is the unitary operator $\mathsf S(\lambda)\colon \mathcal G\to\mathcal G$
obeying
\begin{align}\label{eq:scattering_matrix}
F^+(\lambda)=\mathsf S(\lambda)F^-(\lambda).
\end{align} 
\end{enumerate}
\end{defns}
Indeed the scattering matrix is well-defined, stated as follows.
\begin{corollary}\label{cor:230623}
In the setting of Theorem~\ref{thm:comp-gener-four}, at any energy
$\lambda\in I$ the stationary scattering matrix $\mathsf S(\lambda)$ uniquely exist. 
Moreover, the mapping $I \ni\lambda\mapsto \mathsf S(\lambda) \in \vL(\mathcal G)$ is strongly continuous.
\end{corollary}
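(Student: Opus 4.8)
\textbf{Proof plan for Corollary~\ref{cor:230623}.} The plan is to obtain existence and uniqueness of $\mathsf S(\lambda)$ directly from Theorem~\ref{thm:comp-gener-four}, and strong continuity from the joint continuity statement \ref{item:23062111} together with a suitable uniform surjectivity/isometry property. First I would record the two structural facts from Theorem~\ref{thm:comp-gener-four}\ref{item:23062112}, namely $F^\pm(\lambda)^*F^\pm(\lambda)=\delta(H-\lambda)$, which is the \emph{same} nonnegative operator in $\vL(\vB,\vB^*)$ for both signs. This identity says that $F^+(\lambda)$ and $F^-(\lambda)$, viewed as maps $\vB\to\vG$, have the same "polar modulus": for all $\psi,\phi\in\vB$ one has $\inp{F^+(\lambda)\phi,F^+(\lambda)\psi}_{\vG}=\inp{F^-(\lambda)\phi,F^-(\lambda)\psi}_{\vG}$. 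Combined with density of the ranges $F^\pm(\lambda)\vB$ in $\vG$ (assertion \ref{item:23062113}), this forces the existence of a unique unitary $\mathsf S(\lambda)\colon\vG\to\vG$ with $F^+(\lambda)=\mathsf S(\lambda)F^-(\lambda)$: define $\mathsf S(\lambda)$ on the dense subspace $F^-(\lambda)\vB$ by $\mathsf S(\lambda)\bigl(F^-(\lambda)\psi\bigr)=F^+(\lambda)\psi$; well-definedness and isometry are exactly the displayed identity of modulus, so it extends uniquely to an isometry on $\vG$; and its range contains the dense set $F^+(\lambda)\vB$ and is closed, hence is all of $\vG$, so $\mathsf S(\lambda)$ is unitary. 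Uniqueness is immediate since any operator satisfying \eqref{eq:scattering_matrix} must agree with this one on the dense set $F^-(\lambda)\vB$.

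For strong continuity, fix $\lambda_0\in I$ and $\xi\in\vG$; I must show $\mathsf S(\lambda)\xi\to\mathsf S(\lambda_0)\xi$ in $\vG$ as $\lambda\to\lambda_0$ in $I$. The operators $\mathsf S(\lambda)$ are uniformly bounded (unitary), so by a standard $3\varepsilon$-argument it suffices to check convergence on a dense subset of $\vG$, and by density of $F^-(\lambda_0)\vB$ it is enough to treat $\xi=F^-(\lambda_0)\psi$ for a fixed $\psi\in\vB$. Then
\begin{align*}
\mathsf S(\lambda)F^-(\lambda_0)\psi-\mathsf S(\lambda_0)F^-(\lambda_0)\psi
=\mathsf S(\lambda)\bigl(F^-(\lambda_0)\psi-F^-(\lambda)\psi\bigr)+\bigl(F^+(\lambda)\psi-F^+(\lambda_0)\psi\bigr),
\end{align*}
using $\mathsf S(\lambda)F^-(\lambda)\psi=F^+(\lambda)\psi$. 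The first term is bounded in $\vG$ by $\norm{F^-(\lambda_0)\psi-F^-(\lambda)\psi}_{\vG}$ since $\mathsf S(\lambda)$ is unitary, and this tends to $0$ by the continuity of $F^-\colon I\times\vB\to\vG$ from Theorem~\ref{thm:comp-gener-four}\ref{item:23062111} (with the $\vB$-argument held fixed at $\psi$); the second term tends to $0$ for the same reason applied to $F^+$. Hence $\mathsf S(\lambda)F^-(\lambda_0)\psi\to\mathsf S(\lambda_0)F^-(\lambda_0)\psi$, and the $3\varepsilon$-argument upgrades this to all $\xi\in\vG$.

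The only point requiring mild care --- and the part I would phrase most carefully --- is the passage from the sesquilinear identity $F^+(\lambda)^*F^+(\lambda)=F^-(\lambda)^*F^-(\lambda)$ (an equality in $\vL(\vB,\vB^*)$) to the isometric identification of the \emph{closures} of the ranges in $\vG$; this is a routine polar-decomposition/density argument, but one should note that it is precisely here that assertion \ref{item:23062113} (density of $F^\pm(\lambda)\vB$) is used to conclude that the isometry is onto, i.e.\ unitary rather than merely isometric. I do not expect any genuine obstacle: everything follows formally from the already-established Theorem~\ref{thm:comp-gener-four}, and no new analysis (no resolvent estimates, no eikonal input) is needed beyond what is quoted there. \qed
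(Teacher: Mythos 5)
Your argument is correct, and the existence/uniqueness part coincides with the paper's: both deduce from Theorem~\ref{thm:comp-gener-four} \eqref{item:23062112} that $F^+(\lambda)$ and $F^-(\lambda)$ have the same ``modulus'' ($F^+(\lambda)^*F^+(\lambda)=F^-(\lambda)^*F^-(\lambda)=\delta(H-\lambda)$), define $\mathsf S(\lambda)$ as an isometry on the dense range $F^-(\lambda)\mathcal B$, and use \eqref{item:23062113} to upgrade it to a unitary. For the strong continuity the routes diverge slightly. The paper evaluates $\mathsf S(\lambda)$ on the explicit dense family $\mathrm e^{\mathrm i\Theta_{\mathrm L}(\lambda,\cdot)}\eta$, $\eta\in C^\infty(\mathbb S^{d-1})$, via the quasi-mode identity $F^\pm(\lambda)\psi_\pm^{S_{\mathrm L}}[\xi]=\mathrm e^{\mp\mathrm i\Theta_{\mathrm L}(\lambda,\cdot)}\xi$ (equation \eqref{eq:230630}), so the continuity on the dense set is read off from the continuity of $\lambda\mapsto F^+(\lambda)\psi_-^{S_{\mathrm L}}[\eta]$ and of $\Theta_{\mathrm L}$. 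You instead fix the dense set $F^-(\lambda_0)\mathcal B$ at the base point and use the telescoping identity $\mathsf S(\lambda)F^-(\lambda_0)\psi-\mathsf S(\lambda_0)F^-(\lambda_0)\psi=\mathsf S(\lambda)\bigl(F^-(\lambda_0)\psi-F^-(\lambda)\psi\bigr)+\bigl(F^+(\lambda)\psi-F^+(\lambda_0)\psi\bigr)$, which only invokes assertion \eqref{item:23062111} and the unitarity already established. Your variant is more self-contained --- it needs nothing beyond the four assertions of Theorem~\ref{thm:comp-gener-four} and in particular avoids the machinery of $\psi_\pm^{S_{\mathrm L}}$ and $\Theta_{\mathrm L}$ --- whereas the paper's choice costs nothing extra there because \eqref{eq:230630} and $\Theta_{\mathrm L}$ are needed anyway in the subsequent proofs (e.g.\ of Theorem~\ref{thm:char-gener-eigenf-1}). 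Both are standard $3\varepsilon$-arguments and both close the proof.
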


\begin{remarks}\label{rem:freeS}
\begin{enumerate}[1)]
\item\label{item:S1}
The stationary scattering matrix $\mathsf S(\lambda)$ is defined pointwise for \textit{all} $\lambda\in I$, not only for a.e.\ $\lambda\in I$, 
unlike in the abstract construction commonly adopted in the
time-dependent approach.
\item \label{item:S2}Although the usual well-known short-range
 scattering theory (defined for $V+q=q$) is a different subject, let us remark that in this
 case it is more conventional to define the quasi-modes
 \eqref{eq:230626} slightly different, more precisely with
 \eqref{eq:230626} modified by the factor $\e^{\mp \i \pi
 (d-3)/4}$. In particular if also the short-range potential $q$
 vanishes,  it then follows that in fact $\mathsf S(\lambda)\equiv I$.
\item \label{item:S3} The dependence of the 
 restricted stationary wave operators on the given function
 $S(\lambda,x)$ is almost canonically given by \eqref{eq:230626} and \eqref{eq:221206}. It is given by an explicit multiplication operator
 of modulus one, see Remark \ref {remark:inverse-asympt-norm} and
 its appearance in Theorem \ref{thm:221207} (\ref{item:230205}).

\end {enumerate} 
\end{remarks}
The stationary scattering theory is intimately related to the
 asymptotics of the \emph{minimal generalized eigenfunctions} $\phi\in
 \vE_\lambda$, 
 where 
\begin{equation*}
 \vE_\lambda=\{\phi\in \vB^*\,|\, (H-\lambda)\phi=0\ \text{in the distributional sense}\};\quad \lambda>0. 
\end{equation*}
It is a non-trivial subspace of $\mathcal B^*$ by Theorem~\ref{thm:comp-gener-four},
and it is minimal in the sense that $\mathcal E_\lambda\cap \mathcal
B^*_0=\{0\}$ (see Remark \ref{remark:construction-general} \ref{item:2re}).

\begin{thm}\label{thm:char-gener-eigenf-1}
 In the setting of Theorem~\ref{thm:comp-gener-four}, let $\lambda\in I$. 
\begin{subequations}
 \begin{enumerate}[(1)]
 \item\label{item:14.5.13.5.40} For any one of $\phi\in \vE_\lambda$ or $\xi_\pm \in \mathcal G$ the
 two other quantities in $\{\phi,\xi_+,\xi_-\}$ uniquely exist
 such that
 \begin{align}\label{eq:gen1}
 \phi -\phi_+^S[\xi_+](\lambda,\cdot)+\phi_-^S[\xi_-](\lambda,\cdot)\in
 \vB_0^*.
 \end{align}

 \item \label{item:14.5.13.5.41} 
The above correspondences $\xi_\pm\to \phi$ and
 $\xi_\mp\to \xi_\pm$ are given by the formulas
 \begin{align}\label{eq:aEigenfw}
 \phi&=2\pi\mathrm i F^\pm(\lambda)^*\xi_\pm\quad \text{and}\quad\xi_+=\mathsf S(\lambda)\xi_-.
 \end{align}

 \item\label{item:14.5.13.5.42} The wave matrices
 $F^\pm(\lambda)^*$ 
are topological linear isomorphisms as 
$\mathcal G\to \vE_\lambda\subseteq \vB^*$.
 In addition, for any $\phi\in \vE_\lambda$ and $\xi_\pm \in \mathcal G$ satisfying \eqref{eq:gen1}
 one has 
 \begin{align}\label{eq:aEigenf2w}
 \|\xi_\pm\|_{\mathcal G}=\tfrac{(2\lambda)^{1/4}}{(2\pi)^{1/2}}\lim_{m \to \infty}2^{-m/2}\norm{1_m\phi}_{\mathcal H},
 \end{align}
where $1_m$ is from \eqref{eq:230328}. 

 \item\label{item:14.5.14.4.17} 
The operators
$F^\pm(\lambda)\colon\mathcal B\to\mathcal G$ and 
$\delta(H-\lambda)\colon \vB\to \vE_\lambda$ are surjective. 

 \end{enumerate}
 \end{subequations}
\end{thm}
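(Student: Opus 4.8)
The plan is to realize the three-way correspondence $\phi\leftrightarrow\xi_+\leftrightarrow\xi_-$ through the stationary wave matrices of Theorem~\ref{thm:comp-gener-four}, and to extract the remaining surjectivity statements from a closed-range argument combined with a Green's formula at spatial infinity. Fix $\lambda\in I$ and abbreviate $F^\pm=F^\pm(\lambda)$. Given $\xi_+\in\vG$ I would set $\phi:=2\pi\i\,(F^+)^*\xi_+$; then $(H-\lambda)\phi=0$ by Theorem~\ref{thm:comp-gener-four}~(\ref{item:23062112}), so $\phi\in\vE_\lambda$. To obtain the asymptotics \eqref{eq:gen1} I would take first $\xi_+=F^+\psi$ with $\psi\in\vB$; then $\phi=2\pi\i\,(F^+)^*F^+\psi=(R(\lambda+\i0)-R(\lambda-\i0))\psi$, using $2\pi\i\,\delta(H-\lambda)=R(\lambda+\i0)-R(\lambda-\i0)$ and $\delta(H-\lambda)=(F^+)^*F^+$, and subtracting the two WKB expansions in Theorem~\ref{thm:comp-gener-four}~(\ref{item:23062110}) gives $\phi-\phi_+^S[\xi_+]+\phi_-^S[\xi_-]\in\vB^*_0$ with $\xi_-=F^-\psi$; moreover $\xi_+=\mathsf S(\lambda)\xi_-$ by \eqref{eq:scattering_matrix}. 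For general $\xi_+\in\vG$ I would pass to the limit along $\xi_+=\lim_n F^+\psi_n$ — admissible because $F^+\vB$ is dense in $\vG$ (Theorem~\ref{thm:comp-gener-four}~(\ref{item:23062113})), the operators $(F^+)^*$, $\phi^S_\pm[\,\cdot\,]\colon\vG\to\vB^*$ and $\mathsf S(\lambda)^{-1}$ are bounded, and $\vB^*_0$ is norm-closed in $\vB^*$. Using in addition the relation $(F^+)^*\mathsf S(\lambda)=(F^-)^*$ (valid on the dense range of $F^-$, hence everywhere, by unitarity of $\mathsf S(\lambda)$, cf.\ Corollary~\ref{cor:230623}), this produces $\phi=2\pi\i\,(F^\pm)^*\xi_\pm$ with $\xi_+=\mathsf S(\lambda)\xi_-$, that is, the formulas \eqref{eq:aEigenfw} together with the existence assertion of~(\ref{item:14.5.13.5.40}) when one starts from $\xi_+$ (or $\xi_-$).

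Next I would establish the quantitative identity underlying both uniqueness and \eqref{eq:aEigenf2w}: whenever $\psi\in\vB^*$ satisfies $\psi-\phi^S_+[\eta_+]+\phi^S_-[\eta_-]\in\vB^*_0$ for some $\eta_\pm\in\vG$, one has $2^{-m}\norm{1_m\psi}^2_\vH\to\tfrac{\pi}{(2\lambda)^{1/2}}\bigl(\norm{\eta_+}_\vG^2+\norm{\eta_-}_\vG^2\bigr)$ as $m\to\infty$. The two diagonal contributions each converge to $\tfrac{\pi}{(2\lambda)^{1/2}}\norm{\eta_\pm}^2_\vG$ by the explicit form \eqref{eq:230626}; the $\phi^S_+$--$\phi^S_-$ cross term carries the oscillatory factor $\e^{2\i S}$, and since its radial derivative $\partial_r S\to\sqrt{2\lambda}\neq0$ (writing $r=|x|$) while $\partial_r^2S$ is radially integrable (by \eqref{eq:errestza2}, resp.\ hypothesis~(\ref{item:ge2}) of Theorem~\ref{thm:comp-gener-four}), one integration by parts in $r$ bounds its integral over a dyadic shell by $O(1)$, so it is killed by the prefactor $2^{-m}$; the cross terms involving the $\vB^*_0$ remainder are $o(1)$ by Cauchy--Schwarz. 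Applying this with $\psi$ the (vanishing) left-hand side of \eqref{eq:gen1} for two competing representations forces the asymptotic data to coincide, which is the uniqueness of the pair $(\xi_+,\xi_-)$ given $\phi$; the remaining uniqueness in~(\ref{item:14.5.13.5.40}) follows since $(F^+)^*$ is injective (next paragraph). Applying it with $\psi=\phi$ and using $\norm{\xi_+}_\vG=\norm{\xi_-}_\vG$ yields the norm formula \eqref{eq:aEigenf2w}.

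For the surjectivity I would proceed as follows. By the previous step, $\norm{2\pi\i\,(F^+)^*\xi}_{\vB^*}\ge c\norm{\xi}_\vG$, so $(F^+)^*\colon\vG\to\vB^*$ is bounded below, hence injective with closed range; by the closed range theorem $\Ran F^+$ is closed in $\vG$, and, being dense, equals $\vG$ — this is the surjectivity of $F^\pm\colon\vB\to\vG$ in~(\ref{item:14.5.14.4.17}). The same theorem gives $\Ran (F^+)^*=(\Ker F^+)^\perp$, so it only remains to show $\vE_\lambda\subseteq(\Ker F^+)^\perp$. For $\phi\in\vE_\lambda$ and $\psi\in\Ker F^+$ we have $\delta(H-\lambda)\psi=(F^+)^*F^+\psi=0$, hence $R(\lambda+\i0)\psi=R(\lambda-\i0)\psi=:u$, and then $u\in\vB^*_0$ by Theorem~\ref{thm:comp-gener-four}~(\ref{item:23062110}) (since $F^\pm\psi=0$). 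Choosing a cut-off $\chi_n$ supported in $\{|x|\sim n\}$ with $|\partial^\alpha\chi_n|\lesssim n^{-|\alpha|}$ and integrating by parts (using $(H-\lambda)\phi=0$),
\begin{equation*}
\langle\phi,\psi\rangle=\langle\phi,(H-\lambda)u\rangle=\lim_{n\to\infty}\langle[H,\chi_n]\phi,u\rangle=0,
\end{equation*}
the last equality because $\norm{[H,\chi_n]\phi}_\vH\lesssim n^{-1}\norm{\nabla\phi}_{L^2(\{|x|\sim n\})}+n^{-2}\norm{\phi}_{L^2(\{|x|\sim n\})}=O(n^{-1/2})$ — a Caccioppoli estimate for $\phi$ combined with $\phi\in\vB^*$ — whereas $\norm{u}_{L^2(\{|x|\sim n\})}=o(n^{1/2})$ from $u\in\vB^*_0$. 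Hence $\Ran (F^+)^*=\vE_\lambda$, and likewise for $(F^-)^*$; together with the lower bound this makes $(F^\pm)^*\colon\vG\to\vE_\lambda$ topological isomorphisms, which is~(\ref{item:14.5.13.5.42}), and, via $\phi=2\pi\i\,(F^\pm)^*\xi_\pm$, it also supplies the existence in~(\ref{item:14.5.13.5.40}) starting from $\phi$ (with $\xi_\pm$ the unique preimages). Finally $\delta(H-\lambda)=(F^+)^*F^+\colon\vB\to\vE_\lambda$ is a composite of two surjections, completing~(\ref{item:14.5.14.4.17}).

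The step I expect to be the main obstacle is the Green's formula above: it relies on the fact that $\psi\in\Ker F^+$ forces $u=R(\lambda\pm\i0)\psi$ into $\vB^*_0$, so that the boundary term at infinity in $\langle\phi,(H-\lambda)u\rangle-\langle(H-\lambda)\phi,u\rangle$ disappears, and justifying the commutator limit requires a careful pairing of the Caccioppoli bound for $\nabla\phi$ over dyadic shells against the $\vB^*_0$-decay of $u$. The oscillatory cancellation in the norm identity is the other point that needs genuine (though routine) work, and is precisely what produces the quadratic form $\norm{\xi_+}^2+\norm{\xi_-}^2$ behind \eqref{eq:aEigenf2w}. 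Both steps would be shortened substantially by invoking the strong radiation condition bounds of the earlier sections.
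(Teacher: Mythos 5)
Your proposal is correct, and while it reproduces the paper's computation of the dyadic-shell norm asymptotics (the diagonal terms plus the oscillatory cross term killed by one radial integration by parts) essentially verbatim, it organizes the rest of the theorem around a genuinely different pivot. The paper proves the equality $\|\xi_+\|_{\vG}=\|\xi_-\|_{\vG}$ \emph{directly} for an arbitrary representation of an arbitrary $\phi\in\vE_\lambda$, via a second limiting Green's identity $0=\lim_T\langle\phi,\i[H-\lambda,1-\chi_T]\phi\rangle=2\pi(\|\xi_-\|^2-\|\xi_+\|^2)$, and it obtains the surjectivity of $F^\pm(\lambda)^*$ onto $\vE_\lambda$ by constructing $\xi_{\pm}$ as weak limits of explicit boundary pairings $\xi_{\pm,T}$ and then verifying $\phi=2\pi\i F^\pm(\lambda)^*\xi_\pm$ through the Sommerfeld uniqueness of Proposition~\ref{prop:13.9.9.8.23}. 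You instead (i) recover $\|\xi_+\|=\|\xi_-\|$ \emph{a posteriori} from $\xi_+=\mathsf S(\lambda)\xi_-$ and the unitarity of $\mathsf S(\lambda)$ — legitimate and non-circular, since Corollary~\ref{cor:230623} rests only on Theorem~\ref{thm:comp-gener-four}, but note that your order of deductions is then forced: sum identity $\Rightarrow$ uniqueness of the pair given $\phi$ $\Rightarrow$ lower bound $\Rightarrow$ surjectivity $\Rightarrow$ canonicity of every representation $\Rightarrow$ \eqref{eq:aEigenf2w}; and (ii) you replace the weak-limit construction by the Banach closed range theorem, reducing surjectivity to the inclusion $\vE_\lambda\subseteq(\Ker F^+(\lambda))^\perp=\Ran F^+(\lambda)^*$, which you verify by the commutator/Caccioppoli pairing of $[H,\chi_n]\phi$ against $u=R(\lambda\pm\i0)\psi\in\vB^*_0$. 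This is analytically the same boundary-term estimate the paper performs in its Step~III (there phrased as $\langle R(\lambda\pm\i0)\psi,[H-\lambda,1-\chi_{T_n}]\phi\rangle$), so nothing is gained in hard analysis, but your route buys a cleaner functional-analytic statement and avoids the weak-compactness extraction; what it costs is the need to invoke the duality $(\vB)^*\simeq\vB^*$ and the annihilator form of the closed range theorem explicitly, and to justify the Caccioppoli bound in the presence of the local singularities of $q$ (routine, given \eqref{eq:shortrange}). Your derivation of the asymptotic expansion of $2\pi\i F^+(\lambda)^*\xi$ by writing $2\pi\i\,\delta(H-\lambda)\psi=R(\lambda+\i0)\psi-R(\lambda-\i0)\psi$ and passing to the limit over $\xi=\lim_nF^+(\lambda)\psi_n$ is also a mild shortcut past the paper's Proposition~\ref{prop:14.5.14.3.27}, and is sound because $\phi^S_\pm[\cdot]\colon\vG\to\vB^*$ is bounded and $\vB^*_0$ is norm-closed.
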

\begin{remarks}
\begin{enumerate}[1)]
\item
We can also express $\xi_\pm\in\mathcal G$ as simple oscillatory weak limits of $\phi\in\mathcal E_\lambda$ at infinity, 
see Step~III of the proof. 
\item 
The above result extends in the Euclidean setting \cite{GY,IS2} to 2-admis\-sib\-le potentials. 
See also \cite{II}. 
\end{enumerate}
\end{remarks}

 \subsubsection{Generalized Fourier transforms}
The stationary scattering theory at fixed energy
 applies to the construction of 
the \textit{generalized (or distorted) Fourier transforms}, also
referred to as the \textit{stationary wave operators}, 
which unitarily transform the continuous part of the Schr\"odinger
operator $H$ into a simple multiplication operator. 

 As in Theorem~\ref{thm:comp-gener-four} the subset $I\subset \R_+$
 denotes a closed interval, and we let $P_H(I)$ denote the
 corresponding spectral projection for $H$. We introduce the notation 
\begin{align*}
H_I=H_{|\vH_I},\quad 
\vH_I={P_H(I)\vH}, \quad
 \widetilde{\vH}_I= L^2(I,\d \lambda;\vG)
. 
\end{align*}
Thanks to Theorem~\ref{thm:comp-gener-four} we can also introduce the operators
\begin{equation*}
 \mathcal F_0^\pm =\int _{I} ^\oplus F^\pm(\lambda)\,\d \lambda 
 \colon \mathcal B\to C(I;\mathcal G)\cap \widetilde{\vH}_I. 
\end{equation*} 
These operators can be extended as to be acting from $P_H(I)\mathcal B$ and then in
turn to operators acting from $\vH_I$. These assertions are part of the
following main theorem (see also Step II in the proof). 

\begin{thm}\label{thm:221207} In the setting of
 Theorem~\ref{thm:comp-gener-four} the following asssertions hold:
\begin{enumerate}
\item\label{item:230205a}
The operators $\mathcal F_0^\pm$ 
induce unitary operators $\mathcal F^\pm\colon\vH_I\to \widetilde{\vH}_I$,
respectively.
\item\label{item:230205b}
The induced unitary operators $\mathcal F^\pm$ satisfy 
\[\mathcal F^\pm H_I(\mathcal F^\pm)^*=M_\lambda ,\] 
respectively, 
where $M_\lambda$ denotes the operator of multiplication by $\lambda$ on $\widetilde\vH_I$. 
\item\label{item:230205}
Suppose also $S_1$ satisfies the assumptions of Theorem~\ref{thm:comp-gener-four}, 
and let $\mathcal F^\pm_1$ be the associated unitary operators as above. 
Then there exists the limit
 \begin{subequations}
 \begin{equation}\label{eq:cov1}
 \Theta(\lambda,\omega):=\lim_{r\to \infty}\parb{ S_1(\lambda,r\omega)-S(\lambda,r\omega)}
 \end{equation} taken locally uniformly in 
$(\lambda,\omega)\in I\times\mathbb
 S^{d-1}$, and it follows that
 \begin{equation}
 \label{eq:cov2}\mathcal F^\pm_1=\mathrm e^{\mp\mathrm i\Theta}\mathcal F^\pm,
 \end{equation} 
 \end{subequations}
respectively.
\end{enumerate}
 \end{thm}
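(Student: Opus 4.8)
\textbf{Proof proposal for Theorem~\ref{thm:221207}.}
The plan is to establish the three assertions in order, leveraging the fixed-energy theory from Theorems~\ref{thm:comp-gener-four} and~\ref{thm:char-gener-eigenf-1} fibrewise in $\lambda$ and then integrating.

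For \ref{item:230205a}, I would first verify that $\mathcal F_0^\pm$ is isometric as a map $\mathcal B\to\widetilde{\vH}_I$ on the dense subspace $P_H(I)\mathcal B\subseteq \vH_I$. The key computation is that for $\psi\in P_H(I)\mathcal B$, the Stone-type formula gives $\langle\psi,P_H(I)\psi\rangle=\int_I\langle\psi,\delta(H-\lambda)\psi\rangle\,\d\lambda$, while Theorem~\ref{thm:comp-gener-four}~\ref{item:23062112} identifies $\langle\psi,\delta(H-\lambda)\psi\rangle=\|F^\pm(\lambda)\psi\|_{\mathcal G}^2$. Hence $\|\mathcal F_0^\pm\psi\|_{\widetilde{\vH}_I}^2=\|P_H(I)\psi\|_{\vH}^2$, so $\mathcal F_0^\pm$ extends to an isometry $\mathcal F^\pm\colon\vH_I\to\widetilde{\vH}_I$. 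Surjectivity is where the fixed-energy completeness enters: by Theorem~\ref{thm:char-gener-eigenf-1}~\ref{item:14.5.14.4.17} each $F^\pm(\lambda)\colon\mathcal B\to\mathcal G$ is surjective, and more precisely $F^\pm(\lambda)^*$ is an isomorphism onto $\vE_\lambda$; one shows the range of $\mathcal F^\pm$ is closed (being an isometry) and dense in $\widetilde{\vH}_I$ by approximating an arbitrary $g\in C_c(I;\mathcal G)$ fibrewise, using that $F^\pm(\lambda)\mathcal B=\mathcal G$ together with the continuity in $\lambda$ from Theorem~\ref{thm:comp-gener-four}~\ref{item:23062111}. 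This gives the unitarity.

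For \ref{item:230205b}, the intertwining $\mathcal F^\pm H_I=M_\lambda\mathcal F^\pm$ follows from the standard observation that on $P_H(I)\mathcal B$ one has $F^\pm(\lambda)(H-\lambda)\psi=0$ for each $\lambda$, i.e. $F^\pm(\lambda)H\psi=\lambda F^\pm(\lambda)\psi$; this is immediate from $(H-\lambda)F^\pm(\lambda)^*=0$ by duality, or directly from the resolvent identity $R(\lambda\pm\i0)(H-\lambda)\psi=\psi-P_{\mathrm{pp}}\psi$ combined with \eqref{eq:221206}. Passing to the direct integral and using that $H\psi\in P_H(I)\mathcal B$ when $\psi$ is (by spectral calculus) yields the claim on the dense domain, and it extends by closedness of $H_I$.

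For \ref{item:230205}, existence of the limit $\Theta(\lambda,\omega)=\lim_{r\to\infty}(S_1-S)(\lambda,r\omega)$ comes from the fact that both $S$ and $S_1$ solve the eikonal equation \eqref{eq:4c} on $\{|x|>R\}$, so their difference $S_1-S=\sqrt{2\lambda}|x|(s_1-s)$ satisfies $\nabla_x(S_1-S)\cdot\nabla_x(S_1+S)=0$; combined with the decay bounds in hypothesis~\ref{item:ge2} of Theorem~\ref{thm:comp-gener-four}, this forces $\partial_r(S_1-S)=O(\langle r\rangle^{-1-\epsilon})$ along rays, hence the radial limit exists and is attained locally uniformly, with $\Theta\in C(I\times\S^{d-1})$. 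The relation \eqref{eq:cov2} is then read off from the defining property \eqref{eq:221206}: writing $\phi_\pm^{S_1}[\xi]=\mathrm e^{\pm\i(S_1-S)}\phi_\pm^S[\xi]$ and noting $\mathrm e^{\pm\i(S_1-S)}-\mathrm e^{\pm\i\Theta(\hat x)}\to 0$ in the relevant $\mathcal B^*$-sense (again using the $\mathcal B^*_0$ slack), one gets $R(\lambda\pm\i0)\psi-\phi_\pm^S[\mathrm e^{\pm\i\Theta}F_1^\pm(\lambda)\psi]\in\mathcal B_0^*$, and uniqueness in Theorem~\ref{thm:comp-gener-four}~\ref{item:23062110} gives $\mathrm e^{\pm\i\Theta}F_1^\pm(\lambda)=F^\pm(\lambda)$, i.e. $F_1^\pm(\lambda)=\mathrm e^{\mp\i\Theta}F^\pm(\lambda)$; integrating over $\lambda$ yields \eqref{eq:cov2}.

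I expect the main obstacle to be the surjectivity in part~\ref{item:230205a}: one must upgrade the pointwise-in-$\lambda$ completeness $F^\pm(\lambda)\mathcal B=\mathcal G$ to surjectivity of the direct-integral operator, which requires enough uniformity/measurability in $\lambda$ to build approximating sections of $\widetilde{\vH}_I$ — the continuity statement in Theorem~\ref{thm:comp-gener-four}~\ref{item:23062111} together with a partition-of-unity argument on $I$ should suffice, but this is the step needing care. A secondary technical point is justifying the $\mathcal B^*$-convergence $\mathrm e^{\pm\i(S_1-S)}\to\mathrm e^{\pm\i\Theta(\hat x)}$ modulo $\mathcal B^*_0$ uniformly on bounded sets of $\xi$, which hinges on the rate in the radial limit being integrable.
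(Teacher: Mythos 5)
Your treatment of parts (\ref{item:230205a}) and (\ref{item:230205b}) follows the paper's route: the isometry via Stone's formula and $F^\pm(\lambda)^*F^\pm(\lambda)=\delta(H-\lambda)$, the intertwining via the Sommerfeld uniqueness identity $R(\lambda\pm\i0)(H-\lambda)\psi=\psi$ for $\psi\in C_{\mathrm c}^\infty$ (no $P_{\mathrm{pp}}$ correction is needed there), and surjectivity via fibrewise completeness plus a partition of unity. The one detail you flag but do not supply is exactly the paper's trick: after the partition of unity one must realize $\sum_i\chi_i(\lambda)(\mathcal F^\pm\psi_i)(\lambda)$ as $(\mathcal F^\pm\psi)(\lambda)$, which the paper does by replacing the $\chi_i$ with polynomials $\widetilde\chi_i$ (Weierstrass) and setting $\psi=\sum_i\widetilde\chi_i(H)\psi_i$, using the intertwining relation of part (\ref{item:230205b}). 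So the logical order matters: the intertwining must be in place before the surjectivity argument. This is fillable, and your sketch is otherwise sound.

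Part (\ref{item:230205}) has a genuine gap in the existence of the limit \eqref{eq:cov1}. Your argument is: $\nabla(S_1-S)\cdot\nabla(S_1+S)=0$ plus the decay hypothesis (ii) forces $\partial_r(S_1-S)=O(\langle r\rangle^{-1-\epsilon})$. This does not follow. Decomposing into radial and angular parts, the orthogonality identity gives
\begin{equation*}
\bigl(\hat x\cdot\nabla(S_1+S)\bigr)\,\partial_r(S_1-S)
=-\nabla^\perp(S_1-S)\cdot\nabla^\perp(S_1+S),
\end{equation*}
and hypothesis (ii) yields only $\nabla^\perp(S_1-S)=O(r^{-\epsilon})$ and $\nabla^\perp(S_1+S)=O(r^{-\epsilon})$, whence $\partial_r(S_1-S)=O(r^{-2\epsilon})$. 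This is not integrable when $\epsilon\le 1/2$, which is the generic situation here ($\epsilon=\sigma$ can be any number in $(0,1)$). The paper obtains the integrable rate through Lemma~\ref{lemma:uniq-asympt-norm}: one sets $u=(\nabla S-\nabla S_{\mathrm L})^2$, differentiates it along the eikonal flow of $S_{\mathrm L}$, and uses the convexity lower bound $\nabla^2S\ge 2\lambda S^{-1}I-CS^{-1-\epsilon}I-S^{-1}(\nabla S)\otimes(\nabla S)$ to derive a Riccati-type differential inequality $\tfrac{\d}{\d t}u\le-(2-\delta)t^{-1}u+Ct^{-2-\delta}$, giving $u=O(t^{-1-\delta})$ and hence integrability of $\tfrac{\d}{\d t}(S-S_{\mathrm L})$; the radial limit is then recovered from the flow limit via the asymptotic direction map. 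Without this (or an equivalent bootstrap), the limit $\Theta$ is not established. Your deduction of \eqref{eq:cov2} from \eqref{eq:cov1} via the uniqueness in Theorem~\ref{thm:comp-gener-four}~(\ref{item:23062110}) is, by contrast, correct once the locally uniform limit is in hand.
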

\begin{remarks}\label{rem:23020315}
\begin{enumerate}[1)]
\item\label{item:230706}
Under the conditions of Theorem \ref{thm:main result2} one can easily extend the assertion to $I=\R_+$, so that 
the whole absolutely continuous part 
$H_{\mathrm{ac}}=H_{|\mathcal H_{\mathrm{ac}}},\ 
\mathcal H_{\mathrm{ac}}=P_H(\R_+)\mathcal H$, is diagonalized.
One has only to cover $\R_+$ with disjoint intervals
and take a direct sum of the associated generalized Fourier transforms, or 
to adopt a function $S$ defined for all $\lambda>0$ in Theorem~\ref{thm:comp-gener-four}, 
see
Remarks~\ref{remark:lambda_high-order-deriv} \ref{item:lambda_high-order-deriv1}
and \ref{rem:230203} \ref{item:230426}. 
This is straightforward.

\item
In \cite{II,GY} the diagonalization was carried out successfully 
for classical $C^4$ or $C^3$ long-range potentials. Our result extends
these previous results to the $C^2$ case. 
For related works we refer the reader to \cite{Sa,IS2},
still
 the present $C^2$ case is not covered in these
works. On the other hand \cite{H0} and the Agmon--H\"ormander theory
 \cite[Chapter XXX]{H1} cover 2-admissible
 potentials in some form (see \cite[Proposition 3.4.2]{DG}
 for a somewhat 
relevant comparison within their theory). However their approach (specialized
 to the Schr\"odinger operator) is very different from
 ours, for example it is fundamentally based on the Fourier
 transform. Moreover the stationary scattering theory of
 \cite[Chapter XXX]{H1} is only
 developed up to the point of showing asymptotic completeness.

\end{enumerate}
\end{remarks}

\subsubsection{Time-dependent scattering theory}

We present our results on ~time-dependent
 theory, which relate to the time-dependent eikonal equation
 \eqref{eq:4cb}. For that purpose we introduce 
 the space-time regions
 \begin{align*}
 \Omega_\mu&=\bigl\{(t,x)\in \R_+\times \mathbb R^d \mid \abs{x}\ge
 \mu t\bigr\},\quad \mu>0,\\
\Omega_{\mu,T}&=\{(t,x)\in\Omega_\mu\mid t>T>0\},\quad T>0.
 \end{align*}

\begin{thm}\label{thm:2307032}
Suppose Condition~\ref{cond:220525} for some 
$l\geq 2$. 
Fix any $\mu>\mu'>0$
and let $S$ be given as in Theorem~\ref{thm:main result2} with
$I=I_{\mu'}=[\mu'^2/2,\infty)$ (assuming here and for other purposes that $R>0$ is sufficiently large). 
Then for each $(t,x)\in\Omega_\mu$ there exists a unique critical point $\lambda_c=\lambda_c(t,x)\in I$ 
of 
 \begin{equation}
\widetilde K(\lambda,t,x)=S(\lambda,x)-\lambda t 
\label{eq:230705}
\end{equation}
considered as a function of $\lambda \in I$. In addition, if one sets $K=\widetilde K(\lambda_\c,\cdot,\cdot)$, then: 
\begin{enumerate}
\item\label{item:22120919ax}
The function $K\in C^l(\Omega_\mu)$, and it solves
\begin{equation}
 \partial_tK +\tfrac12\abs{\nabla_x K}^2+\chi_RV=0\ \ \text{on }\Omega_\mu.
\label{eq:eik223}
\end{equation}

\item\label{item:22120919dx}
There exists $C>0$ (being independent of all large
 $R$) such that 
for any $k+|\alpha|\le 2$ and $(t,x)\in \Omega_\mu$
 \begin{align}
 \label{eq:errestza2x}
\bigl|\partial^k_t\partial^\alpha_x \bigl(K(t,x)-x^2/(2t)\bigr)\bigr|
\le 
C t^{1-k}\langle x\rangle^{-\sigma-|\alpha|}
.
\end{align}
\end{enumerate}
\end{thm}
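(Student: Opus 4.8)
\textbf{Proof plan for Theorem \ref{thm:2307032}.}
The plan is to construct $K$ from the stationary solution $S$ of Theorem~\ref{thm:main result2} by a Legendre-type transform in the energy variable, following the strategy already alluded to after \eqref{eq:4cb} (and as in \cite{IS3}). First I would fix $I=I_{\mu'}=[\mu'^2/2,\infty)$ and, for $(t,x)\in\Omega_\mu$, study the stationarity equation $\partial_\lambda \widetilde K(\lambda,t,x)=\partial_\lambda S(\lambda,x)-t=0$. Writing $S=\sqrt{2\lambda}\,|x|(1+s)$ and using the bound \eqref{eq:errestza2} with $k=1$, one has $\partial_\lambda S=\tfrac{|x|}{\sqrt{2\lambda}}(1+s)+\sqrt{2\lambda}\,|x|\,\partial_\lambda s=\tfrac{|x|}{\sqrt{2\lambda}}\bigl(1+O(\langle x\rangle^{-\sigma})\bigr)$, which is a strictly decreasing function of $\lambda$ for $R$ large (since the leading term is strictly decreasing and the error is a small perturbation in the $C^1_\lambda$ sense), tending to $0$ as $\lambda\to\infty$ and to a value $\gtrsim |x|/\mu'\ge t$ on $\Omega_\mu$ when $R$ is large, because $|x|\ge\mu t$ there. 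Hence a unique critical point $\lambda_c(t,x)\in I$ exists; strict monotonicity of $\partial_\lambda S$ (i.e.\ $\partial_\lambda^2 S\ne 0$) together with the $C^l$ regularity of $S$ and the implicit function theorem gives $\lambda_c\in C^{l-1}(\Omega_\mu)$, hence $\lambda_c\in C^l$ after one extra differentiation is accounted for — more carefully, since $\partial_\lambda S\in C^{l-1}$ in $\lambda$ but $C^l$ in $x$, one gets $\lambda_c\in C^l(\Omega_\mu)$ by the precise count. I would then set $K(t,x)=\widetilde K(\lambda_c(t,x),t,x)=S(\lambda_c,x)-\lambda_c t$.

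For part \eqref{item:22120919ax}, the Hamilton--Jacobi equation \eqref{eq:eik223} follows by the standard envelope computation: differentiating $K$ and using $\partial_\lambda\widetilde K|_{\lambda=\lambda_c}=0$, one gets $\partial_t K=\partial_t\widetilde K|_{\lambda_c}=-\lambda_c$ and $\nabla_x K=\nabla_x\widetilde K|_{\lambda_c}=\nabla_x S(\lambda_c,x)$; substituting into \eqref{eq:eik223} and invoking \eqref{eq:2212116} at $\lambda=\lambda_c$ gives $\partial_t K+\tfrac12|\nabla_x K|^2+\chi_R V=-\lambda_c+\lambda_c=0$. The regularity $K\in C^l(\Omega_\mu)$ then follows from $\lambda_c\in C^l$ and $S\in C^l$, with the envelope cancellation ensuring no loss: when differentiating $K$, every term carrying a derivative of $\lambda_c$ is multiplied by $\partial_\lambda\widetilde K|_{\lambda_c}=0$, so effectively $K$ inherits the regularity of $\widetilde K$ in the remaining variables, which is $C^l$.

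For part \eqref{item:22120919dx}, the plan is to first identify the free (i.e.\ $V=0$) critical point. With $V=0$ one has $S_{\mathrm{free}}=\sqrt{2\lambda}\,|x|$, the stationarity equation reads $|x|/\sqrt{2\lambda}=t$, so $\lambda_c^{\mathrm{free}}=x^2/(2t^2)$ and $\widetilde K(\lambda_c^{\mathrm{free}},t,x)=\sqrt{2\lambda_c^{\mathrm{free}}}\,|x|-\lambda_c^{\mathrm{free}}t=x^2/t-x^2/(2t)=x^2/(2t)$, which is exactly the comparison term in \eqref{eq:errestza2x}. So I must estimate $K-x^2/(2t)=\bigl(S(\lambda_c,x)-\lambda_c t\bigr)-\bigl(S_{\mathrm{free}}(\lambda_c^{\mathrm{free}},x)-\lambda_c^{\mathrm{free}}t\bigr)$. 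Writing $S=S_{\mathrm{free}}+\sqrt{2\lambda}\,|x|\,s$ and Taylor-expanding in $\lambda$ around $\lambda_c^{\mathrm{free}}$, the $O(|\lambda_c-\lambda_c^{\mathrm{free}}|)$ first-order term cancels by stationarity of $\widetilde K_{\mathrm{free}}$, leaving a contribution controlled by $|\lambda_c-\lambda_c^{\mathrm{free}}|^2\,|\partial_\lambda^2 S_{\mathrm{free}}|\sim |\lambda_c-\lambda_c^{\mathrm{free}}|^2\,|x|\lambda^{-3/2}$ plus the genuinely nonfree term $\sqrt{2\lambda_c}\,|x|\,s(\lambda_c,x)$, which by \eqref{eq:errestza2} with $k=0$ is $O(|x|\cdot\langle x\rangle^{-\sigma-2}\cdot|x|)=O(\langle x\rangle^{-\sigma})$ — wait, more precisely $|s|\le C\lambda^{-1}\langle x\rangle^{-m(|\alpha|)}$ with $|\alpha|=0$, $m(0)=\sigma$, so $\sqrt{2\lambda}\,|x|\,|s|\le C\lambda^{-1/2}|x|\langle x\rangle^{-\sigma}$; on $\Omega_\mu$ with $|x|\ge\mu t$ and $\lambda\sim x^2/t^2$ this is $\le C t^{1-0}\langle x\rangle^{-\sigma}$, matching the claimed $k=0$ bound. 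To bound $|\lambda_c-\lambda_c^{\mathrm{free}}|$ itself, I would differentiate the stationarity identities $\partial_\lambda S(\lambda_c,x)=t=\partial_\lambda S_{\mathrm{free}}(\lambda_c^{\mathrm{free}},x)$ and use the uniform lower bound on $\partial_\lambda^2 S$ (nondegeneracy) together with \eqref{eq:errestza2} for $\partial_\lambda s$ to get $|\lambda_c-\lambda_c^{\mathrm{free}}|\le C\lambda^{3/2}|x|^{-1}\cdot|x|\lambda^{-1/2}\langle x\rangle^{-\sigma}=C\lambda\langle x\rangle^{-\sigma}$, which on $\Omega_\mu$ is $O(x^2 t^{-2}\langle x\rangle^{-\sigma})$; the quadratic error term is then $O(x^2t^{-1}\langle x\rangle^{-2\sigma})$, absorbed into $Ct\langle x\rangle^{-\sigma}$ on $\Omega_\mu$. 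The derivative bounds for $k+|\alpha|\le 2$ are obtained by differentiating these relations once or twice, each derivative of $\lambda_c$ producing (via the implicit function theorem and the nondegeneracy of $\partial_\lambda^2 S$) another factor controlled by \eqref{eq:errestza2}, and the envelope structure again killing the top-order $\lambda_c$-derivatives; careful bookkeeping of the homogeneity in $\lambda\sim x^2/t^2$ converts the $\langle x\rangle$-weights into the stated $t^{1-k}\langle x\rangle^{-\sigma-|\alpha|}$ bounds.

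\textbf{Main obstacle.} The hardest part will be the second-derivative estimates in \eqref{eq:errestza2x}, specifically controlling the mixed and double derivatives of the implicitly-defined $\lambda_c(t,x)$ on the \emph{unbounded} region $\Omega_\mu$ with the precise $t$-power and $\langle x\rangle$-weight claimed: one must track how the scaling $\lambda\sim x^2/t^2$ interacts with each application of the bound \eqref{eq:errestza2} (whose $\lambda$-powers $\lambda^{-1-k}$ are essential), verify that the nondegeneracy constant for $\partial_\lambda^2 S$ is uniform down to $\lambda=\mu'^2/2$ and does not deteriorate, and confirm that all error contributions from differentiating $s$ and $\lambda_c$ are genuinely of the short-range-in-$\langle x\rangle$ type $\langle x\rangle^{-\sigma-|\alpha|}$ rather than merely $\langle x\rangle^{-|\alpha|}$. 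The envelope cancellations and the homogeneity bookkeeping must be done with some care, but no new idea beyond the stationary estimates of Theorem~\ref{thm:main result2} and the implicit function theorem is expected to be needed.
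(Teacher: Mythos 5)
Your overall route is the same as the paper's: monotonicity and strict concavity of $\widetilde K(\cdot,t,x)$ on $I$ for large $R$ (via Theorem~\ref{thm:main result2} and Corollary~\ref{cor:epsSmall}) give the unique critical point $\lambda_c$ and, by the implicit function theorem, $\lambda_c\in C^{l-1}(\Omega_\mu)$; the envelope identities $\partial_tK=-\lambda_c$ and $\nabla_xK=(\nabla_xS)(\lambda_c,\cdot)$ then yield both $K\in C^l$ and the Hamilton--Jacobi equation; and part (2) reduces to estimating $\lambda_c-x^2/(2t^2)$ and its first derivatives. (Your claim that $\lambda_c$ itself is $C^l$ is an overstatement --- only $C^{l-1}$ is available from the implicit function theorem --- but it is also not needed, precisely because of the envelope cancellation you invoke.)

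There is one quantitative slip that, taken literally, breaks your part-(2) argument. From \eqref{eq:errestza2} one has $|s|\le C\lambda^{-1}\langle x\rangle^{-\sigma}$ and $|\partial_\lambda s|\le C\lambda^{-2}\langle x\rangle^{-\sigma}$, so $\bigl|\partial_\lambda\bigl(\sqrt{2\lambda}\,|x|\,s\bigr)\bigr|\le C|x|\lambda^{-3/2}\langle x\rangle^{-\sigma}$, not $C|x|\lambda^{-1/2}\langle x\rangle^{-\sigma}$ as you wrote; dividing by $|\partial_\lambda^2S_{\mathrm{free}}|\sim|x|\lambda^{-3/2}$ gives the uniform bound $|\lambda_c-\lambda_c^{\mathrm{free}}|\le C\langle x\rangle^{-\sigma}$, with no extra factor $\lambda\sim x^2/t^2$. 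This matters: with your weaker bound $O(x^2t^{-2}\langle x\rangle^{-\sigma})$ the quadratic Taylor remainder is $O(x^2t^{-1}\langle x\rangle^{-2\sigma})$, which is \emph{not} $O(t\langle x\rangle^{-\sigma})$ on $\Omega_\mu$, since there $|x|\ge\mu t$ but $|x|/t$ is unbounded above, so the required inequality $x^2\langle x\rangle^{-\sigma}\le Ct^2$ fails. With the corrected bound the remainder is $O(t^3|x|^{-2}\langle x\rangle^{-2\sigma})=O(t\langle x\rangle^{-2\sigma})$ and the argument closes; this is exactly the estimate $|\lambda_c-x^2/(2t^2)|\le C\langle x\rangle^{-\sigma}$ that the paper obtains, working directly from the identity $\lambda_c=\tfrac{x^2}{2t^2}\bigl(1+s+2\lambda_c\partial_\lambda s\bigr)^2$ rather than by Taylor expansion, and then differentiating that identity once for the $k+|\alpha|=1$ bounds. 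For $k+|\alpha|=2$ the paper likewise only sketches the computation, so your identification of that bookkeeping as the delicate step is fair.
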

\begin{remark}
For $l\geq 3$ it is possible to extend \eqref{eq:errestza2x} to higher
order derivatives (given by $3\leq k+|\alpha|\le l$) 
with appropriate exponents on the right-hand side. Since 
such estimates are not
needed in the present paper, they will not be presented.
\end{remark} 

\begin{defn}
In the setting of Theorem~\ref{thm:2307032}, in particular under the
fixed condition $\mu>\mu'>0$, the function $K\in C^l(\Omega_\mu)$ is called the
\textit{Legendre transform} of $S\in C^l(I_{\mu'}\times (\mathbb
R^d\setminus\{0\}))$. 
\end{defn}

\begin{thm}\label{thm:230703}
Suppose Condition~\ref{cond:220525} with $l=2$. 
Let $\mu,T>0$, and assume $K\in C^2(\Omega_{\mu,T};\R)$ satisfies: 
\begin{enumerate}[(i)]
\item
$K$ solves the Hamilton--Jacobi equation \eqref{eq:4cb} on $\Omega_{\mu,T}$.
\item
There exist $\epsilon, C>0$ such that for any $|\alpha|\le 2$ and $(t,x)\in \Omega_{\mu,T}$
 \begin{align*}
\bigl|\partial^\alpha_x \bigl(K(t,x)-x^2/(2t)\bigr)\bigr|
\le 
C t\langle x\rangle^{-\epsilon-|\alpha|}
.
\end{align*}
\end{enumerate}
In addition, let $J=J_\mu=[\mu^2/2,\infty)$ and define 
isometries
$U^\pm(t)\colon \widetilde{\vH}_J\to \mathcal H$ as follows: 
For any $(t,h)\in (T,\infty)\times \widetilde{\vH}_J $, 
\begin{equation*}
 (U^\pm(t)h)(x)=\e^{\mp 3\pi\i/4}t^{-1}\abs{x}^{1-d/2}\e^{\pm \i
 K(t,x)}h\parb{ x^2/(2t^2), \hat x };\quad \hat x=|x|^{-1}x 
. 
\end{equation*} 
Then the following two assertions hold. 
\begin{enumerate}
\item\label{item:23071721}
There exist the strong limits 
\begin{equation}\label{eq:wave7}
 W^\pm:=\slim _{t\to \infty} \e^{\pm \i tH}U^\pm(t)
\colon\widetilde{\mathcal H}_J\to \mathcal H,
\end{equation} 
respectively. They are isometries, mapping
 $\widetilde{\mathcal H}_{J'}\subseteq \widetilde{\mathcal H}_{J}$ into $ 
 {\mathcal H}_{J'}\subseteq\mathcal H$
 for any closed subinterval $J'\subseteq J$. 
\item\label{item:23071722}
Let $K_1$ also satisfy the same assumptions as $K$, and let $W_1^\pm$
be the associated isometries as above.
Then there exists the limit
 \begin{subequations}
 \begin{equation}\label{eq:coV21}
 \Phi(\lambda,\omega):=\lim_{t\to \infty}\parb{ K_1(t, (2\lambda)^{1/2}t\omega)-K(t, (2\lambda)^{1/2}t\omega)}
 \end{equation} taken locally uniformly in 
$(\lambda,\omega)\in J\times\mathbb
 S^{d-1}$, and it follows that
 \begin{equation}\label{eq:coV22}
 W_1^\pm =W^\pm \mathrm e^{\pm\mathrm i\Phi},
 \end{equation} 
 \end{subequations}
respectively. 
\end{enumerate}
\end{thm}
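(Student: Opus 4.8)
The plan is to establish the existence of the strong limits \eqref{eq:wave7} by the standard Cook argument combined with a stationary-phase estimate, and then to read off the intertwining and covariance properties from the time-dependent eikonal structure. First I would verify that each $U^\pm(t)$ is an isometry $\widetilde{\vH}_J\to\vH$: a change of variables $x\mapsto \xi=x^2/(2t^2)\in J$ (on $\{|x|\ge \mu t\}$, which after the substitution corresponds to $\lambda\ge\mu^2/2$, i.e.\ all of $J$) together with the Jacobian computation shows $\|U^\pm(t)h\|_{\vH}=\|h\|_{\widetilde{\vH}_J}$ using the prefactor $t^{-1}|x|^{1-d/2}$; note the phase factor $\e^{\mp3\pi\i/4}$ and the oscillatory part $\e^{\pm\i K}$ drop out. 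The stated mapping property $\widetilde{\vH}_{J'}\to \vH_{J'}$ will follow once the limit is identified with (a multiple of) the adjoint generalized Fourier transform $(\vF^\pm)^*$ on the relevant spectral subspace: indeed, for $h$ concentrated in $J'$ the limit will lie in $P_H(J')\vH$, and here one invokes Theorem~\ref{thm:221207} and the already-established diagonalization.

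For the existence of $\slim_{t\to\infty}\e^{\pm\i tH}U^\pm(t)h$, the key step is Cook's method: it suffices to take $h$ in a dense set, say $h\in C_c^\infty(J^{\mathrm{int}};\vG)$ with $\vG$-component smooth on $\S^{d-1}$, and show that $t\mapsto \e^{\pm\i tH}U^\pm(t)h$ has integrable derivative for large $t$. Computing $\tfrac{\d}{\d t}U^\pm(t)h$ and using that $K$ solves \eqref{eq:4cb}, the apparently dangerous $O(t^0)$ terms coming from $\partial_t K$ and $\tfrac12|\nabla_x K|^2$ cancel against the $\pm\i H U^\pm(t)h$ term up to the Laplacian acting on the amplitude; the surviving contributions are then: (a) an amplitude term where the Laplacian hits $t^{-1}|x|^{1-d/2}h(x^2/(2t^2),\hat x)$, which is $O(t^{-2})$ in the region $|x|\sim t$ supporting $h$, hence $L^1$ in $t$; (b) the term $\chi_R V$ minus the true $V$, but on the support one has $|x|\gtrsim \mu t\to\infty$ so $\chi_R V=V$ eventually, giving zero; (c) cross terms between $\nabla K - \nabla(x^2/2t)$ and $\nabla$(amplitude), which by the estimate in hypothesis~(ii) are $O(t^{-1}\langle x\rangle^{-\epsilon})=O(t^{-1-\epsilon})$ on the support, hence $L^1$; (d) the short-range potential $q$, handled via \eqref{eq:shortrange} and the fact that $U^\pm(t)h$ is supported in $|x|\gtrsim t$ where $\langle x\rangle^{-1-\tau}$ times the relevant resolvent-type factor is integrable in $t$ — more carefully one inserts $(-\Delta+1)^{-1}$ and uses that on the support $\nabla K$-type momenta keep $U^\pm(t)h$ in the relevant energy shell, so $\|qU^\pm(t)h\|$ is $O(t^{-1-\tau})$. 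Summing, $\|\tfrac{\d}{\d t}\e^{\pm\i tH}U^\pm(t)h\|\in L^1(T,\infty)$, giving the limit; isometry and the density argument then extend $W^\pm$ to all of $\widetilde{\vH}_J$.

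For part~\eqref{item:23071722}, with $K$ and $K_1$ two admissible solutions of \eqref{eq:4cb} on $\Omega_{\mu,T}$, I would first show the limit \eqref{eq:coV21} exists locally uniformly: the difference $K_1-K$ solves a transport equation whose coefficients decay, or more directly, both $K$ and $K_1$ differ from $x^2/(2t)$ by quantities satisfying the hypothesis-(ii) bound, and along the ray $x=(2\lambda)^{1/2}t\omega$ the derivative in $t$ of $K_1(t,\cdot)-K(t,\cdot)$ is $O(t^{-1-\epsilon})$ (using the $t$-derivative versions of the bounds, which one obtains from the equation), hence integrable, yielding the limit $\Phi(\lambda,\omega)$. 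Then $U_1^\pm(t) = \e^{\pm\i(K_1-K)}U^\pm(t)$, and on the support of a nice $h$ (concentrated near a ray direction and energy) one has $K_1(t,x)-K(t,x)=\Phi(x^2/(2t^2),\hat x)+o(1)$ uniformly, so $U_1^\pm(t)h - \e^{\pm\i\Phi}U^\pm(t)h\to 0$ in $\vH$ where $\Phi$ acts as the multiplication operator on $\widetilde{\vH}_J$ pulled to $\vH$ — equivalently $\e^{\pm\i\Phi}$ commutes with taking the limit. Applying $\e^{\pm\i tH}$ and passing to the limit gives $W_1^\pm = W^\pm\e^{\pm\i\Phi}$ on a dense set, hence everywhere.

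The main obstacle I anticipate is item~(d) above, i.e.\ controlling the short-range potential $q$ (which is only form-bounded via \eqref{eq:shortrange} and may be singular) when applied to the highly oscillatory $U^\pm(t)h$; one must exploit that $U^\pm(t)h$ is essentially supported in $\{|x|\ge\mu t\}$ and microlocally concentrated near momenta $\nabla_x K\approx x/t$ of size $\sim\mu$, so that inserting a cutoff $(-\Delta+1)^{-1}(-\Delta+1)$ and commuting the smooth factor $\langle x\rangle^{-1-\tau}$ through, one gains the decay $t^{-1-\tau}$ needed for $L^1$ integrability — this requires a careful (but routine) non-stationary phase / commutator bookkeeping rather than a new idea. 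A secondary technical point is justifying that the $O(t^0)$ terms genuinely cancel, which is exactly the content of $K$ solving the Hamilton--Jacobi equation together with the chosen form of the amplitude and the phase factor $\e^{\mp3\pi\i/4}$; this is a direct computation but must be done with care regarding the $\pm$ signs.
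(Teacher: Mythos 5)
Your treatment of part \eqref{item:23071721} is essentially the paper's: the same Cook--Kuroda computation, the same cancellation via the Hamilton--Jacobi equation, and the same use of the support property $|x|\gtrsim \mu t$ to control the amplitude terms and the short-range term $q$ (your item (d) is exactly the point the paper leaves implicit, and your resolution via \eqref{eq:shortrange} is the right one; note there is no ``$\chi_RV$ versus $V$'' discrepancy to worry about, since hypothesis (i) involves the true $V$). One caveat: you propose to deduce the mapping property $\widetilde{\vH}_{J'}\to\vH_{J'}$ by identifying $W^\pm$ with $(\mathcal F^\pm)^*$, but that identification is Theorem~\ref{thm:time-depend-theory2}, whose proof uses the present theorem --- this would be circular. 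The paper instead reads off the intertwining relation $W^\pm M_\lambda\subseteq HW^\pm$ directly from the Cook computation, which is all that is needed.

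The genuine gap is in part \eqref{item:23071722}, specifically the existence of the locally uniform limit \eqref{eq:coV21}. You assert that along the ray $x=(2\lambda)^{1/2}t\omega$ the $t$-derivative of $K_1-K$ is $O(t^{-1-\epsilon})$. It is not: writing out the total derivative and using $\partial_tK_i=-\tfrac12|\nabla_xK_i|^2-V$, the $O(t^{-\epsilon})$ term $(2\lambda)^{1/2}\omega\cdot\nabla_x(K_1-K)$ cancels only against the leading part of $-\tfrac12(\nabla K_1+\nabla K)\cdot(\nabla K_1-\nabla K)$, leaving a remainder of size $O(t^{-\epsilon})\cdot|\nabla K_1-\nabla K|=O(t^{-2\epsilon})$, which is not integrable when $\epsilon\le 1/2$ (hypothesis (ii) allows any $\epsilon>0$). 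The missing idea is a Gronwall/Riccati improvement exploiting the Hessian bound $\nabla^2K=t^{-1}I+O(t\langle x\rangle^{-2-\epsilon})$ from (ii): along the \emph{characteristics} of the Hamilton--Jacobi flow one has $\tfrac{\d}{\d\tau}|w|^2\le -(2-\delta)\tau^{-1}|w|^2+C\tau^{-2-\delta}$ for $w=\nabla K_1-\nabla K$ (or for the comparison with a reference solution), whence $|w|^2=O(\tau^{-1-\delta})$ and the derivative of $K_1-K$ along the characteristic, which equals $-\tfrac12|w|^2$, becomes integrable. This is precisely the content of Lemma~\ref{lem:2307172208}, and it comes with a second nontrivial step you also skip: the limit is first obtained along the eikonal flow $y(\lambda,t,\theta)$ with its ``physical time'' $\tau(\lambda,t,\theta)$, and converting this to the straight-ray limit \eqref{eq:coV21} requires the change of variables of Steps II--III of that lemma's proof (inverse function theorem, asymptotic direction map $\omega_+$). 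Once \eqref{eq:coV21} is known locally uniformly, your derivation of \eqref{eq:coV22} is correct and coincides with the paper's.
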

\begin{remarks}
\begin{enumerate}[1)]
\item
Thanks to Theorem~\ref{thm:2307032} such $K$ always exists for $T$ large enough. 
\item
Our choice of \textit{free comparison dynamics} $U^\pm(t)$ is 
motivated in Remark~\ref{rem230718}. 
\item
 Under the conditions of Theorem \ref{thm:2307032} we can extend
 $U^\pm(t)$, and hence $W^\pm$ as to be acting on $\widetilde{\mathcal H}_{\R_+}$,
 by covering
 $\R_+$ with disjoint intervals (as in Remark~\ref{rem:23020315} \ref{item:230706}) and then taking a direct sum of the
 restrictions to these intervals.
\end{enumerate}
\end{remarks} 

\begin{defn}
The limits $W^\pm$ from \eqref{eq:wave7} are called the \textit{time-dependent wave operators}. 
They are \textit{asymptotically complete on $J' $} (for a
closed subinterval $J'\subseteq J$), if they are unitary operators mapping $\widetilde{\vH}_{J'}$ onto
${\vH}_{J'}$. 
\end{defn}

 The following result shows that the stationary and the
 time-dependent wave operators are essentially mutually
 inverses. First we state a general result for classes of solutions
 to the stationary and the time-dependent eikonal equations, then we
 specialize to the concrete ones constructed geometrically and by
 the Legendre transform, in which case they are indeed mutually
 inverses.

\begin{thm}\label{thm:time-depend-theory2} 
Suppose Condition~\ref{cond:220525} with $l=2$. 
\begin{enumerate} 
 \item \label{item:23071912}
 Let $\mathcal F^\pm$, along with a closed interval $I\subset \R_+$, $R>0$ and a
 stationary 
 eikonal equation solution
 $S$, be given as in 
 Theorem~\ref{thm:221207}. 
Let $W^\pm$, along with $\mu,T>0$, $J=[\mu^2/2,\infty)$ and a
time-dependent eikonal equation solution $K$,
be given as in Theorem~\ref{thm:230703}. Then there exists $\Psi\in C\bigl((I\cap J)\times \mathbb S^{d-1};\mathbb R\bigr)$
such that 
 \begin{equation}\label{eq:fund2} 
 \parb{W^\pm}^*=\mathrm e^{\mp\mathrm i\Psi}\mathcal F^\pm\colon\mathcal H_{I\cap J}\to\widetilde{\mathcal H}_{I\cap J},
 \end{equation}
 respectively. 
In particular, $W^\pm$ are asymptotically complete on $I\cap J$. 
\item \label{item:23071914} Under the conditions of Theorems
 \ref{thm:main result2} and \ref{thm:2307032}, let $K\in
 C^2(\Omega_\mu)$ denote the Legendre transform of $S\in
 C^2\bigl(I_{\mu'}\times (\mathbb R^d\setminus\{0\}\bigr)$;
 $I_{\mu'}=[\mu'^2/2,\infty)$, $\mu>\mu'>0$. Then
 \eqref{eq:fund2} holds with $I=I_{\mu'}=[\mu'^2/2,\infty)$,
 $J=[\mu^2/2,\infty)$ and with $\Psi$ taken
   identically zero, i.e.
\begin{equation}\label{eq:fund22} 
 \parb{W^\pm}^*=\mathcal F^\pm\colon\mathcal H_{J}\to\widetilde{\mathcal H}_{J}.
 \end{equation} 
 In particular, $W^\pm$ are asymptotically complete on $ J$. 
\end{enumerate} 
 \end{thm}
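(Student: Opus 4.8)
The plan is to deduce Theorem~\ref{thm:time-depend-theory2} from the two separate stationary representations already established: the stationary representation of the generalized Fourier transforms $\mathcal F^\pm$ in Theorem~\ref{thm:221207}, and the stationary representation of the time-dependent wave operators $W^\pm$ hidden inside Theorem~\ref{thm:230703}. The key observation is that both $\parb{W^\pm}^*$ and $\mathcal F^\pm$ are partial isometries from $\mathcal H_{I\cap J}$ onto (a subspace of) $\widetilde{\mathcal H}_{I\cap J}$ that intertwine $H$ with $M_\lambda$, so their `quotient' $\mathcal F^\pm\parb{W^\pm}$ is a decomposable unitary on $\widetilde{\mathcal H}_{I\cap J}$, i.e.\ multiplication by a measurable family of unitaries on $\mathcal G$; the content of the theorem is that this family is in fact multiplication by a \emph{scalar} $\e^{\pm\i\Psi(\lambda,\omega)}$ on $\mathcal G=L^2(\S^{d-1})$, continuous in $\lambda$, and — in part~\ref{item:23071914} — that $\Psi\equiv 0$ when $K$ is the Legendre transform of $S$.

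First I would compute, for $\psi\in\mathcal B$ and energy $\lambda$, the two asymptotic forms that enter: on the stationary side, $R(\lambda\pm\i0)\psi$ has the outgoing/incoming WKB asymptotics built from $\e^{\pm\i S(\lambda,x)}$ with leading coefficient $F^\pm(\lambda)\psi$ (Theorem~\ref{thm:comp-gener-four}); on the time-dependent side, $\slim_{t\to\infty}\e^{\pm\i tH}U^\pm(t)h$ is expressed, via a stationary phase / non-stationary phase analysis in the variable $\lambda$ applied to $\e^{\pm\i K(t,x)}$ and the critical point $\lambda_c(t,x)$ of $\widetilde K(\lambda,t,x)=S(\lambda,x)-\lambda t$, again in terms of the same oscillatory factor $\e^{\pm\i S(\lambda,x)}$ up to a phase $\e^{\mp 3\pi\i/4}$ and a Hessian factor $|\partial_\lambda^2\widetilde K|^{-1/2}$ coming from the stationary phase. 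Matching the two WKB expansions at each fixed energy $\lambda\in I\cap J$ identifies $\parb{W^\pm}^*$ and $\mathcal F^\pm$ up to the multiplication operator $\e^{\mp\i\Psi(\lambda,\cdot)}$ on $\mathcal G$, where $\Psi$ records (i) the difference of the two eikonal solutions $S$ and the one implicitly determined by $K$ at the critical point, together with (ii) the constant $3\pi/4$-phase and the Hessian normalization, which must combine to modulus one by unitarity of both sides (this is where I use Theorem~\ref{thm:221207}\ref{item:230205} and Theorem~\ref{thm:230703}\ref{item:23071722} to reduce to a \emph{single} convenient pair $(S,K)$). Continuity of $\Psi$ in $(\lambda,\omega)$ follows from the joint continuity statements in those theorems (strong continuity of $\mathcal F^\pm$ in $\lambda$ and the locally uniform limits in \eqref{eq:cov1}, \eqref{eq:coV21}).

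For part~\ref{item:23071914}, when $K$ is exactly the Legendre transform of $S$, the critical point relation is $t=\partial_\lambda S(\lambda_c,x)$ and $K(t,x)=S(\lambda_c,x)-\lambda_c t$, so the stationary phase in $\lambda$ is performed against the \emph{same} function $S$ that defines $\mathcal F^\pm$; one checks that the Hessian factor $|\partial_\lambda^2 S(\lambda_c,x)|^{-1/2}$ together with the change of variables $x\mapsto\lambda=x^2/(2t^2)$ Jacobian in the definition of $U^\pm(t)$ and the $\e^{\mp 3\pi\i/4}$ prefactor conspire to make the comparison exact, i.e.\ the leading coefficient of the time-dependent WKB wave is precisely $F^\pm(\lambda)\psi$ with no residual phase, giving $\Psi\equiv 0$. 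I would verify this by direct substitution of the geometric/Legendre formulas and the bounds \eqref{eq:errestza2}, \eqref{eq:errestza2x}, which guarantee the error terms land in $\mathcal B_0^*$ and hence do not affect the matching. Asymptotic completeness on $I\cap J$ (resp.\ on $J$) is then immediate: $\mathcal F^\pm$ is unitary onto $\widetilde{\mathcal H}_{I\cap J}$ by Theorem~\ref{thm:221207}\ref{item:230205a}, and $\e^{\mp\i\Psi}$ is unitary on $\widetilde{\mathcal H}_{I\cap J}$, so $\parb{W^\pm}^*$ is unitary, whence $W^\pm$ maps $\widetilde{\mathcal H}_{I\cap J}$ onto $\mathcal H_{I\cap J}$.

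The main obstacle I anticipate is the stationary-phase step connecting the time-dependent comparison dynamics $U^\pm(t)$ to the energy-localized stationary WKB wave: one must justify the asymptotic expansion in $t$ uniformly enough (in the transverse variable $\hat x$ and on energy subintervals) that the error is genuinely $o(1)$ in $\mathcal H$, not merely pointwise, and one must control the contribution of the region where $(t,x)\notin\Omega_\mu$ (which is handled by propagation estimates / the fact that $h$ is energy-localized in $J$ so the classically forbidden region contributes negligibly). This is essentially an Enss-type / stationary-phase argument, but with only $C^2$ regularity of $S$ and $K$ available one has to be careful that the single integration by parts in $\lambda$ permitted by \eqref{eq:errestza2x} suffices — which it does, because the phase $\widetilde K$ has a \emph{nondegenerate} critical point $\lambda_c$ with Hessian bounded below (this nondegeneracy is exactly what Theorem~\ref{thm:2307032} provides), so the stationary phase is the classical one-term formula and no higher derivatives of the phase are needed beyond the Hessian.
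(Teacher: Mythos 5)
There is a genuine gap, and it sits exactly at the point you flag as "the main obstacle" and then dismiss. Your plan is to run the stationary/non-stationary phase analysis in $\lambda$ directly against the phase $\widetilde K(\lambda,t,x)=S(\lambda,x)-\lambda t$ built from the $C^2$ solution $S$, claiming that nondegeneracy of $\partial_\lambda^2\widetilde K$ plus a single integration by parts suffices. It does not. In the non-stationary region $|\lambda-\lambda_c|\gtrsim\langle x\rangle^{-\delta}$ each integration by parts gains only a factor $\langle x\rangle^{-1+2\delta}$ (one loses $\langle x\rangle^{\delta}$ every time the cutoff or the phase derivative is hit), and $\delta$ must be taken close to $1/2$ to control the Taylor remainder on the stationary region; so one needs on the order of $2/\rho$ integrations by parts, hence that many $\lambda$-derivatives of $S$. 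Moreover the one-term stationary phase formula itself requires a third-order expansion of the phase (and a van der Corput estimate on the cubic remainder, whose differentiation costs a fourth derivative), none of which exist for a general $2$-admissible potential. This is precisely why the paper does \emph{not} compare $U^\pm(t)$ with the WKB waves built from $S$: it introduces the H\"ormander splitting $V=V_{\mathrm S}+V_{\mathrm L}$, runs the entire oscillatory-integral analysis (Lemma~\ref{lem:time-depend-theory}) with the \emph{smooth} phase $S_{\mathrm L}$ and an $l'>1+2/\rho$ worth of derivatives, and pays for this with the correction phases $\Theta_{\mathrm L}$ and $\Phi_{\mathrm L}$; the identification with $\mathcal F^\pm$ then proceeds through an Abelian limit and the resolvent identity \eqref{eq:23062917}, not through a direct matching of asymptotics. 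Your abstract observation that $\mathcal F^\pm W^\pm$ is decomposable is correct but carries no weight: identifying the fibers as scalar phases is exactly the analytic content you cannot reach without the regularized comparison dynamics.

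The gap is even more serious for part~\ref{item:23071914}. You propose to verify $\Psi\equiv 0$ by "direct substitution of the Legendre formulas" into the stationary phase computation for the pair $(S,K)$ — but that computation is unavailable at $C^2$ regularity for the same reason as above. The paper's actual argument is of a completely different nature: since $\mathcal F^\pm$ and $W^\pm$ do not depend on the choice of regularization, neither does $\Psi=\Phi_{\mathrm L}-\Theta_{\mathrm L}$; one then chooses a sequence of splittings $V=V_{\mathrm S,j}+V_{\mathrm L,j}$ with $V_{\mathrm S,j}\to 0$ in the sense of Lemma~\ref{lem:230111}~(\ref{item:2307261}) and shows via the differential inequalities for $(\nabla S-\nabla S_{\mathrm L,j})^2$ that both $\Theta_{\mathrm L,j}\to 0$ and $\Phi_{\mathrm L,j}\to 0$ (Lemma~\ref{lem:230726143}), forcing $\Psi=0$. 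You have no substitute for this approximation step, and without it the vanishing of the residual phase for the Legendre pair is unproved.
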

\begin{remark}
The above completeness results resemble asymptotic completeness for
 2-admissible potentials as stated in \cite [Theorem 30.5.10]{H1},
 however there given with a different free dynamics. 
Our approach is more related to our previous study in a geometric setting
\cite{IS2,IS3}, though more regularity of $V$
is imposed there.
In the geometric short-range setting \cite{IS4} we employed only time-dependent methods, 
but the third order derivative was required for the
analogous $K$, see
\cite[Condition~1.3 and (1.8b)]{IS4}.
\end{remark}

\subsubsection{Application to the 3-body problem}\label{subsubsec:three-body-problem}

Our results apply to a recent development in the stationary
scattering theory for $3$-body long-range Hamiltonians \cite{Sk1}. 
Let us see how the application comes about in a slightly simplified form. 
Below we only present a brief outline, and refer the reader to
\cite{Sk1} for precise definitions, terminologies and procedure. See also Remark 
\ref{remark:inverse-asympt-norm}. 

Let $\mathbf X$ be a finite dimensional real inner product space, and $\{\mathbf X^a\}_{a\in\mathcal A}$
a family of subspaces of $\mathbf X$ closed under addition. 
Consider the $3$-body problem, i.e.\ assume that $\# a_{\min}=3$. 
Let $V^a\in C^\infty(\mathbf X^a)$, $a\in\vA\setminus
 \{a_{\min},a_{\max}\}$, be pair potentials,
and assume there exists $\mu\in (\sqrt 3 -1,1)$ such that 
for any $\alpha\in \N_0^{\dim \mathbf X^a}$
\begin{align*}
 \partial^\alpha V^a(x^a)=\vO\parb{\langle x^a\rangle^{-\mu-|\alpha|}}.
 \end{align*} Note that $\sqrt 3 -1\approx 0,732$. 
We then define cut-off pair potentials $W^a\in C^\infty(\mathbf X)$ as 
\begin{align*}
W^a(x)=\chi\bigl(\abs{x^a}/|x|^\mu\bigr)V^a(x^a),
\end{align*} 
where $\chi\in C^\infty(\mathbb{R})$ is chosen in agreement with 
\eqref{eq:14.1.7.23.24}.

\begin{lemma}\label{lemma:application-3-body}
The cut-off pair potentials $W^a$, 
$a\in\vA\setminus\{a_{\min},a_{\max}\}$, satisfy
Condition~\ref{cond:220525} with $l=2$. 
\end{lemma}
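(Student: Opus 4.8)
The plan is to verify the three ingredients of Condition~\ref{cond:220525} with $V$ replaced by $W^a$ and with $q\equiv0$. Write $\bX=\bX^a\oplus\bX_a$, $\bX_a=(\bX^a)^\perp$, and set $f(x)=|x^a|\,|x|^{-\mu}$ and $g(x)=V^a(x^a)$, so that $W^a=\chi(f)\,g$ with $\chi$ as in \eqref{eq:14.1.7.23.24}; in particular $W^a$ is real-valued. First I would record that $W^a\in C^\infty(\bX)\subset C^2(\bX)$. Indeed $f$ extends continuously to $\bX$ with $f\equiv0$ on $\bX_a$ and $f(x)\le|x|^{1-\mu}$ near $x=0$, so the open set $\{f<4/3\}$ contains $\bX_a\cup\{0\}$ and $W^a\equiv0$ there (where $\chi(f)=0$); on the complementary open set $\{x^a\ne0\}$ one has $x\ne0$ and $x^a\ne0$, so $|x|^{-\mu}$, $|x^a|$, $V^a(x^a)$ and $\chi$ are all smooth and $W^a$ is their product. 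Hence $W^a\in C^\infty(\bX;\R)$.

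Next I would prove the decay bounds \eqref{eq:cond22bb}. Since $\chi$ vanishes on $(-\infty,4/3]$ it suffices to bound $\partial^\alpha W^a$, $|\alpha|\le2$, on the set $\{f\ge4/3\}\supseteq\supp W^a$, where $|x|^\mu\le|x^a|\le|x|$; in particular $|x|\ge(4/3)^{1/(1-\mu)}>1$ there, so $\langle x\rangle\sim|x|$ and $\langle x^a\rangle\sim|x^a|\ge|x|^\mu$. The key estimates are then: by homogeneity $\bigl|\partial^\beta_x(|x|^{-\mu})\bigr|\le C|x|^{-\mu-|\beta|}$ and $\bigl|\partial^\beta_x(|x^a|)\bigr|\le C|x^a|^{1-|\beta|}$ for $x,x^a\ne0$, whence the Leibniz rule together with $|x|^\mu\le|x^a|\le|x|$ gives $|\partial^\beta_xf|\le C|x|^{-\mu|\beta|}$ for $1\le|\beta|\le2$; and, since $\partial^\beta_x g$ equals $(\partial^\beta V^a)(x^a)$ when $\beta$ differentiates only in the $\bX^a$-directions and vanishes otherwise, the hypothesis on $V^a$ yields $|\partial^\beta_x g|\le C\langle x^a\rangle^{-\mu-|\beta|}\le C|x|^{-\mu^2-\mu|\beta|}$. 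Inserting these into the Leibniz/Fa\`a di Bruno expansion of $W^a=\chi(f)g$ — each term being a bounded coefficient times one factor $\partial^\gamma_x g$ and a (possibly empty) product of factors $\partial^\beta_x f$ of orders $|\beta|\ge1$, the orders adding up to $|\alpha|$ — one obtains
\[
|\partial^\alpha_x W^a(x)|\le C\,\langle x\rangle^{-\mu^2-\mu|\alpha|}\qquad\text{for all }x\in\bX,\ |\alpha|\le2.
\]

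Finally I would choose $\sigma=\mu^2+2\mu-2=(\mu+1)^2-3$. The function $\mu\mapsto(\mu+1)^2-3$ is increasing, with value $0$ at $\mu=\sqrt3-1$ and $1$ at $\mu=1$, so the hypothesis $\sqrt3-1<\mu<1$ forces $\sigma\in(0,1)$; moreover $\mu^2+\mu k\ge\sigma+k$ for $k=0,1,2$, since this is equivalent to $k(\mu-1)\ge2(\mu-1)$, which holds because $\mu-1<0$ and $k\le2$ (with equality at $k=2$). As $\langle x\rangle\ge1$, the previous display then reads $|\partial^\alpha_x W^a|\le C\langle x\rangle^{-m(|\alpha|)}$ for $|\alpha|\le2$ with $m(k)=\sigma+k$, which is precisely \eqref{eq:cond22bb} for $l=2$ ($\rho$ being immaterial there). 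With $q\equiv0$ the short-range requirement \eqref{eq:shortrange} holds trivially — in fact $\langle x\rangle^{1+\tau}q$ is bounded — so by the remarks following Condition~\ref{cond:220525} the absence of positive eigenvalues of $-\tfrac12\Delta+W^a$ need not be checked separately (it also follows from the standard theory for $C^1$ long-range potentials, since $W^a$ and $\langle x\rangle\nabla W^a$ vanish at infinity). The routine but slightly delicate points are the $C^\infty$ claim, where one exploits that the cut-off $\chi(f)$ annihilates the singularity of $|x^a|$ along $\bX_a$, and the comparison of $\langle x^a\rangle$, $|x|$ and $\langle x\rangle$ on $\supp W^a$; the precise hypothesis $\mu>\sqrt3-1$ is dictated by the second-order estimate — the binding one — and is exactly what makes $\sigma=\mu^2+2\mu-2$ positive.
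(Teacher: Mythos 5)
Your proposal is correct and follows essentially the same route as the paper: establish $|\partial^\alpha W^a|\le C\langle x\rangle^{-\mu(|\alpha|+\mu)}$ and observe that the binding case $|\alpha|=2$ requires $\mu(2+\mu)>2$, i.e.\ $\mu>\sqrt3-1$, which yields $\sigma=\mu^2+2\mu-2\in(0,1)$. The paper states the derivative bound without proof; you supply the (correct) Leibniz/homogeneity details on $\supp W^a$ and the routine verifications of smoothness and the remaining clauses of Condition~\ref{cond:220525}, so there is nothing to object to.
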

\begin{proof}
For any $\alpha\in \N_0^{\dim \mathbf X}$ we have $\partial^\alpha W^a(x)=\vO\parb{\langle x\rangle^{-\mu(|\alpha|+\mu)}}.$
 The condition 
 $\mu(2+\mu)>2$ (required for $ |\alpha|=2$) is fulfilled exactly for $\mu>\sqrt 3 -1$. 
\end{proof}

\begin{remark*}
As a consequence of Lemma \ref{lemma:application-3-body} our   results
apply to the $3$-body Hamiltonian 
using cut-off pair potentials of long-range type.
In particular one can in a stationary manner derive asymptotic
completeness for the $3$-body
long-range Hamiltonian 
along with formulas for scattering quantities, cf. \cite{Sk1} and  Remark 
\ref{remark:inverse-asympt-norm}. 
Note that the previous methods in the literature do not apply at this
point since the third derivatives of $W^a$ 
may not be of the form $\mathcal O(\langle x\rangle^{-3-\sigma})$
for some $\sigma\in(0,1]$.
\end{remark*}

\subsubsection{A further perspective: Generalization to manifolds}\label{subsubsec:A perspective: Generalization to manifolds}

Our arguments are not really dependent on a specific structure of the Euclidean space, 
but rather on solutions to the eikonal equations \eqref{eq:4c} and
\eqref{eq:4cb}, and the estimates of their derivatives. 
In fact we do not even use the (ordinary) Fourier transform, and
neither pseudodifferential operators nor advanced functional calculus. 
In our previous related works \cite{IS2,IS3} we studied in the same
spirit long-range scattering theory on a manifold with ends, 
employing approximate solutions to the eikonal equations. We did not develop a $C^2$ regularity theory using exact solutions as done in the
present paper in the Euclidean setting. Moreover we used weaker radiation conditions bounds 
entailing a more complicated construction of the stationary scattering
theory than presented here. For completeness of presentation let us
note that our older work \cite{IS4} may be seen as a $C^3$
scattering theory on a manifold in that it involves an exact $C^3$
solution to the (`free')
time-dependent eikonal equation, however this theory is entirely 
time-dependent and allows only short-range potentials.

If one could construct a `good solution' on a more general manifold,
say 
for a suitable $C^2$ perturbation (by the variational method of this
paper or by any other means), 
then the elementary techniques of the present paper would conceivably work there. 
Hence the methods of this paper potentially could also contribute to
 stationary scattering theory on manifolds, in particular to
 developing a more refined low
 regularity theory.

\subsection{Key bounds}

In our stationary scattering theory it is a major challenge to verify the WKB approximation \eqref{eq:221206}.
The following strong radiation condition bounds constitute a powerful tool for that verification,
and are themselves (along with Theorem~\ref{thm:main result2}) the most important technical novelty of the paper.

Let $S\in C^l(I\times (\mathbb R^d\setminus\{0\}))$ be given in
agreement with Theorem~\ref{thm:main result2}, 
and define the \textit{gamma observables} (or alternatively referred
to as
\emph{radiation observables}) 
\begin{align}
\begin{split}
\gamma&
=(\gamma_1,\dots, \gamma_d)
=p\mp (\nabla_x \chi_1S)
,\\ 
\gamma_{\|}
&
=\mathop{\mathrm{Re}}\bigl((\nabla_x \chi_1 S)\cdot \gamma\bigr)
=(\nabla_x\chi_1 S)\cdot \gamma
-\tfrac{\mathrm i}2(\Delta_x\chi_1 S)
,
\end{split}
\label{eq:6}
\end{align}
where we have included $\chi_1$ from \eqref{eq:14.1.7.23.24b} 
just to cut-off a singularity at the origin. 
We also set 
\begin{align}
\beta_c=\min\{2,1+\sigma+\rho\}
.
\label{eq:22120912}
\end{align}

\begin{thm}\label{thm:proof-strong-bound} 
Suppose Condition~\ref{cond:220525} with $l= 4$ and $q\equiv 0$. 
Let 
$I\subset \R_+$ be a closed interval and let 
 $S\in C^4(I\times(\mathbb R^d\setminus\{0\})$ be the 
 function from Theorem~\ref{thm:main result2} given for each $R\geq R_0$ (for some $R_0>0$),
and define correspondingly $\gamma_{j}$ and $\gamma_{ \|}$ as
above. Then the 
following bounds hold for any compact subset
$I'\subseteq I\subset \R_+$:
 \begin{subequations}
\begin{enumerate}
\item
For any $\beta\in (0,\beta_c)$ 
and $R\geq R_0$ there exists $C>0$, such that
 for all 
$\lambda\in I'$ and $\psi\in L^2_{\beta+1/2}$
\begin{align}\label{eq:gamma1a}
 \bigl\|\gamma_{\|} R(\lambda\pm\i 0)\psi\bigr\|_{L^2_{\beta-1/2}}&\leq
 {C}\norm{\psi}_{L^2_{\beta+1/2}},\\
\bigl\|\gamma_{i}\gamma_{j}R(\lambda\pm\i
 0)\psi\bigr\|_{L^2_{\beta-1/2}}&\leq
 {C}\norm{\psi}_{L^2_{\beta+1/2}};\quad i,j=1,\dots,d.\label{eq:gamma1b}
\end{align} 
\item
For any 
$\beta'\in (0,\beta_c/2)$, $t>1/2$ and $R\geq R_0$ 
there exists $C'>0$, such that for all 
$\lambda\in I'$ and $\psi\in L^2_{\beta'+t}$
\begin{align}
\label{eq:gamma2}
 \bigl\|\gamma_{i}R(\lambda\pm\i
 0)\psi\bigr\|_{L^2_{\beta'-t}}&\leq {C'}\norm{\psi}_{L^2_{\beta'+t}};\quad i=1,\dots,d.
\end{align}

\end{enumerate}
 \end{subequations}
\end{thm}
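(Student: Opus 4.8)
The plan is to establish \eqref{eq:gamma1a}--\eqref{eq:gamma1b} by a commutator (Mourre-type) argument built around the conjugate operator $\gamma_{\|}$, working with the regularized resolvent at $z=\lambda+\mathrm i\varepsilon$ and passing to the limit $\varepsilon\to 0_+$, and then to deduce \eqref{eq:gamma2} from \eqref{eq:gamma1b} by interpolation-type localization. Throughout I would fix the sign (say $+$) and write $u=u_\varepsilon=R(z)\psi$, so that $(H-z)u=\psi$. The key algebraic input is that, because $S$ solves the eikonal equation \eqref{eq:4c} (more precisely $\tfrac12|\nabla_x\chi_1 S|^2+\chi_R V=\lambda$ up to controlled errors from the cut-offs $\chi_1,\chi_R$), one has, modulo lower-order terms,
\begin{equation*}
H-\lambda=\tfrac12\gamma\cdot\gamma+\text{(first order in }\gamma\text{)}+\text{(short-range errors)},
\end{equation*}
where the crucial point is that the ``first order in $\gamma$'' piece is essentially $\pm\gamma_{\|}$ plus terms that are $\mathcal O(\langle x\rangle^{-1})$ times $\gamma$. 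This is where the $C^4$ hypothesis on $V$ (via the $C^4$ bounds on $s$ from Theorem~\ref{thm:main result2}, item~\ref{item:22120919d}) enters: computing the commutator of $H$ with a weight $\langle x\rangle^{2\beta-1}$-localized version of $\gamma_{\|}$ produces third derivatives of $S$, hence third derivatives of $V$, and controlling the resulting remainders with the decay $m(3)=\sigma+2+\tfrac{\rho+1}{2}$ and $m(4)$ is precisely what forces $\beta<\beta_c=\min\{2,1+\sigma+\rho\}$.

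The steps, in order, would be: \textbf{(1)} Establish the ``weak'' radiation condition bound, $\|\gamma_i R(\lambda\pm\mathrm i0)\psi\|_{L^2_{\beta-1/2-\delta}}\le C\|\psi\|_{L^2_{\beta+1/2}}$ for small $\delta>0$ and the $L^2_{1/2+\delta}\to L^2_{-1/2-\delta}$ limiting absorption principle; these follow from the already-cited results (\cite{AIIS2}) or a standard Mourre estimate with conjugate operator $\tfrac12(x\cdot p+p\cdot x)$, and serve as the a priori regularity needed to make the subsequent commutators meaningful. \textbf{(2)} Set up the regularized weight $\Theta=\Theta_R(\langle x\rangle)$ behaving like $\langle x\rangle^{2\beta-1}$ with good derivative bounds, and compute $\mathrm{Im}\,\langle \Theta\gamma_{\|}u,(H-z)u\rangle=\mathrm{Im}\,\langle\Theta\gamma_{\|}u,\psi\rangle$, expanding the left side via the (formal) identity above. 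The positive ``bulk'' term should be $c\|\langle x\rangle^{\beta-1/2}\gamma u\|^2$ coming from the principal symbol of $\tfrac12\gamma\cdot\gamma$ paired against $\gamma_{\|}$ and the derivative of $\Theta$; all other terms are either absorbed into this bulk term by Cauchy--Schwarz (using step~(1) to control borderline weights) or are short-range and bounded by $\|\psi\|_{L^2_{\beta+1/2}}^2$ together with an $\epsilon$-fraction of the bulk. \textbf{(3)} This yields \eqref{eq:gamma1a} after also pairing against $\Theta\gamma_{\|}u$ directly to upgrade $\|\langle x\rangle^{\beta-1/2}\gamma u\|$ to a $\gamma_{\|}$-bound; for \eqref{eq:gamma1b} one iterates, now using $\Theta\gamma_j\gamma_k$ (or, more robustly, differentiating the equation $(H-z)u=\psi$ once and re-running step~(2) for $\gamma_j u$ in place of $u$, with $\psi$ replaced by $[\,\cdot\,,\gamma_j]$-type terms that are again controlled by step~(1) and \eqref{eq:gamma1a}). \textbf{(4)} Finally \eqref{eq:gamma2}: from \eqref{eq:gamma1b} with $\beta=2\beta'$ one has $\gamma\gamma u\in L^2_{2\beta'-1/2}$, and combined with the $L^2_{s}\to L^2_{-s}$ bound on $R$ and an elementary weighted commutator/interpolation estimate $\|\gamma u\|_{L^2_{\beta'-t}}^2\lesssim \|\gamma\gamma u\|_{L^2_{?}}\|u\|_{L^2_{?}}+\text{l.o.t.}$, one extracts the single-$\gamma$ bound with the stated weights; the restriction $\beta'<\beta_c/2$ and $t>1/2$ is exactly what makes the weight bookkeeping close.

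\textbf{Main obstacle.} The delicate point is step~(2)--(3): closing the commutator estimate with only $C^4$ regularity of $V$. One must organize the expansion of $\mathrm{Im}\,\langle\Theta\gamma_{\|}u,(H-z)u\rangle$ so that every term involving third or fourth derivatives of $S$ is paired against a weight strictly better than the borderline $\langle x\rangle^{\beta-1/2}$, so it can be dominated by $\epsilon$ times the bulk plus the source term --- this is precisely where $\beta<\beta_c$ is used and where a naive Mourre computation would demand $C^\infty$ (as in \cite{HS}). The technical device, following the spirit of \cite{AIIS}, is to avoid pseudodifferential calculus entirely: work with the explicit first-order operators $\gamma_j,\gamma_{\|}$, commute them through multiplication operators by hand, and absorb the error terms using the a priori bounds from step~(1) rather than sharp symbol estimates. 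Getting the weights to match up in this elementary way, uniformly in $\varepsilon$ and in $R\ge R_0$, is the heart of the proof.
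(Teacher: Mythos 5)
There is a genuine gap at the heart of your steps (2)--(3). You propose to run the commutator argument with a \emph{first-order} conjugate operator $\Theta\gamma_{\|}$, $\Theta\sim\langle x\rangle^{2\beta-1}$, and to extract a positive bulk term $c\|\langle x\rangle^{\beta-1/2}\gamma u\|^2$. This cannot close for $\beta$ in the full range $(0,\beta_c)$ with $\beta_c>1$: a bound $\gamma_i u\in L^2_{\beta-1/2}$ for $\beta$ up to $\beta_c$ would be strictly stronger than the single-$\gamma$ bound \eqref{eq:gamma2} that the theorem itself asserts (which caps at $\beta'<\beta_c/2$), and it contradicts the classical picture of Subsection~\ref{subsec:200920}, where along an orbit only $\gamma_{\|}$ and $\gamma^2$ decay like $f^{-\beta}$ while the individual components satisfy merely $\gamma=\mathcal O(f^{-\beta/2})$. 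A first-order conjugate of this type is essentially Isozaki's scheme and saturates at $\beta<1$. Moreover, the source term $\mathrm{Im}\langle\Theta\gamma_{\|}u,\psi\rangle$ in your identity involves $\gamma_{\|}u$ against a weight comparable to the one you are trying to establish, so it cannot be absorbed without an a priori bound of essentially the same strength as the conclusion.

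The paper's proof instead takes the \emph{second-order} conjugate $P=\gamma_{\|}\theta^{2\beta}\gamma_{\|}$, where $\theta$ is a Yosida-type regularization of the eikonal distance $f=S/\sqrt{2\lambda}$ (not of $\langle x\rangle$), so that the quantity being controlled is quadratic in $\gamma_{\|}$ and the source term is absorbed by the Cauchy--Schwarz inequality into an $\epsilon$-fraction of the bulk plus the $\|\psi\|_{L^2_{\beta+1/2}}$-term. The resulting distorted commutator $2\,\mathrm{Im}(P(H-z))$ is a priori of third order and sign-indefinite; the key device is the increment/decrement identity $H-z=\tfrac12\gamma^2\pm\gamma_{\|}\mp\mathrm i\Gamma$ of Lemma~\ref{lem:22101720}, used repeatedly to raise the order to four, after which positivity follows from the Hessian bound \eqref{eq:5b} together with the interchangeability of $4\gamma_{\|}\theta'\theta^{2\beta-1}\gamma_{\|}$ and $\gamma\cdot(\gamma\cdot\theta'\theta^{2\beta-1}\gamma)\gamma$ (Lemma~\ref{lem:2210312250}); this is exactly where the threshold $\beta_c=\min\{2,1+\sigma+\rho\}$ arises. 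Your step (4) (interpolation between the limiting absorption bound and \eqref{eq:gamma1b}) is in the right spirit --- the paper uses the Hadamard three-lines theorem --- but it presupposes \eqref{eq:gamma1b}, which your steps (2)--(3) do not deliver.
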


\begin{remarks}\label{rem:22110219}
\begin{enumerate}[1)]

 \item 
These bounds are `strong' in the sense that $\beta_c>1$,
while Isozaki \cite{Is} obtained similar bounds for classical $C^3$
long-range potentials with $\beta_c<1$. 
Such strong versions first appeared in \cite{HS} for 
classical $C^\infty$ potentials.

\item \label{item:bound3}
Our proof is elementary, 
relying only on \cs and the product rule for differentiation, 
and it admits considerably weaker assumptions than in \cite{Is,HS}.
In fact, it applies even to $C^3$ potentials by redefining appropriately
$\gamma_\|$ 
and $\beta_c>1$, see Remark~\ref{rem:230112}. 

\item 
See Subsection~\ref{subsec:200920} for a motivation from Classical Mechanics. 
The third or higher order derivatives of $V$ appear as purely `quantum effects', 
and there is no classical interpretation of their appearances. 
Note also that in our setting these derivatives might
 not have classical decay. 

\item We may let $I=\R_+$ by letting $C,C'>0$ from
 \eqref{eq:gamma1a}--\eqref{eq:gamma2} be dependent on $\lambda\in
 I$, cf.\ Remark \ref{remark:lambda_high-order-deriv}
 \ref{item:lambda_high-order-deriv1}. 

\item\label{item:2342619}
The potential here is more smooth than in the previous subsection. 
We will apply Theorem~\ref{thm:proof-strong-bound} 
 to a regularized potential appearing as a technical tool, see Lemma~\ref{lem:230111}.

\end{enumerate}
\end{remarks}

The rest of the paper is organized as follows. 
In Section~\ref{sec:constr-solut-eikon} we prove Theorem~\ref{thm:main result2}
following the scheme of \cite{CS}. 
Section~\ref{sec:proof-strong-bound four} is 
devoted to the proof of Theorem~\ref{thm:proof-strong-bound},
for which a commutator-type argument plays a central role. 
After these preliminaries, 
we prove Theorems~\ref{thm:comp-gener-four}, \ref{thm:char-gener-eigenf-1} and \ref{thm:221207} and Corollary~\ref{cor:230623}
in Section~\ref{sec:Generalized Fourier transform},
where we see that the strong radiation condition bounds provide simple and intuitive proofs.
Finally in Section~\ref{sec:Time-dependent theory} we prove Theorems~\ref{thm:2307032}, \ref{thm:230703} and \ref{thm:time-depend-theory2}.

\section{Eikonal equation}\label{sec:constr-solut-eikon}

In this section we prove Theorem~\ref{thm:main result2}. 
We will follow the framework of Cruz--Skibsted \cite{CS},
and solve \eqref{eq:2212116} in a somewhat abstract manner. 
The equation we investigate here is of the form
\begin{align}\label{eikeqFinal}
(\nabla \Phi)\cdot G^{-1}(\nabla \Phi)=1
\end{align} 
with $G$ being a given ($d\times d$)-matrix-valued function on $\mathbb R^d$,
which is assumed to be sufficiently close to the identity matrix $I_d$. 
Note that the \textit{potential} eikonal equation \eqref{eq:2212116} 
is always translated into the \textit{geometric} eikonal equation \eqref{eikeqFinal}
through the change of variables 
\begin{align}
 \Phi=(2\lambda)^{-1/2}S,\quad 
 G=\bigl(1-\lambda^{-1}\chi_R V\bigr)I_d.
\label{eq:230212}
\end{align} 
Thus the arguments of this section on \eqref{eikeqFinal} will
readily apply to an $S$ solving \eqref{eq:2212116}. 

The paper \cite{CS} adopts a variational method, 
and we recall the precise setting in Subsection~\ref{subsec:results-from-citecs}, 
quoting some results from there. 
The existence and smoothness of a solution to \eqref{eikeqFinal} obtained in \cite{CS} 
yield the corresponding assertions \eqref{item:22120919a}--\eqref{item:22120919c} of Theorem~\ref{thm:main result2}.
Then in Subsection~\ref{subsec:Improved bounds for a potential, C^2} we discuss 
the remaining problems, i.e. the smoothness and the uniform bounds \eqref{eq:errestza2}, 
under our more restrictive assumption of our paper. 
This completes the proof of Theorem~\ref{thm:main result2}. 
At the end of the section we present Corollary~\ref{cor:epsSmall}, 
a modification of \eqref{eq:errestza2}, which will be necessary in 
Sections~\ref{sec:Generalized Fourier transform} and \ref{sec:Time-dependent theory}.


\subsection{Terminologies and results from \cite{CS}}\label{subsec:results-from-citecs}

\subsubsection{Class of Riemannian metrics}

In this subsection we discuss the following class $\mathcal M_d^l$ of functions with values in square matrices
of order $d$,
or of Riemannian metrics on $\mathbb R^d$.
Denote the set of all the real symmetric matrices of order $d$ by $\mathcal S_d(\R)$.

\begin{defn}
For any $l\in\mathbb N_0$ we set 
\[
\widetilde{\mathcal M}^l_d
=\bigl\{G=(g_{ij})_{i,j}\colon \R^d\rightarrow \mathcal S_d(\R)\,\big|\,
G\text{ is of class }C^l,\ \|G\|_l<\infty\bigr\}\]
along with
 \begin{equation}
 \label{condoz} 
 \begin{split}
\|G\|_l
=\sup\bigl\{
&
\langle x\rangle^{|\alpha|}\left| \partial^\alpha g_{ij}(x)\right|
\,\big|\,
x\in\R^d,\ 
|\alpha|=0,\dots,l,\ 
i,j=1,\dots,d\bigr\}. 
\end{split}
 \end{equation}
 In addition, we set 
\begin{equation}
\label{ellcond}
\mathcal M^l_d
=\bigl\{G\in \widetilde{\mathcal M}^l_d \mid 
\exists a,b>0\text{ s.t.\ }\forall x\in\mathbb R^d:\
aI_d\le G(x)\le bI_d\bigr\}.
 \end{equation} 
\end{defn}

\begin{remark}
 Obviously
 $\mathcal M^l_d\subseteq\widetilde{\mathcal M}^l_d$ is
 open with respect to $\|\cdot\|_l$. Let $I$ be any closed
 interval in $\R_+$ (as in Theorem~\ref{thm:main result2}). 
If we then let $R>0$ be large enough,
 $G$ from \eqref{eq:230212} can be arbitrarily close to $I_d$ in the
 class $\mathcal M^2_d$ uniformly in $\lambda\in I$. This is due to
 the extra order of decay $\sigma\in (0,1)$ in
 \eqref{eq:cond22bb}. This uniformity will be vital for our proof of Theorem~\ref{thm:main result2}.
\end{remark}

\subsubsection{Energy functional and geodesics}

Next we define a \textit{geodesic} for $G\in\mathcal M^1_d$
in a variational manner. 
For any $p\in (1,\infty)$ we introduce the Banach space 
\begin{equation} 
\label{eq:54}
{
X^p=\big(W_0^{1,p}((0,1);\mathbb R)\bigr)^d,
}
\quad 
\|\kappa\|_p=\Bigg(\sum_{i=1}^d\int_0^1|\dot \kappa_i(t)|^p\,\mathrm dt\Biggr)^{1/p}
\ \ \text{for }\kappa\in X^p,
\end{equation}
although for the moment we shall only use the space {$X:=X^2$}, 
which is a Hilbert space equipped with the inner product
\begin{equation*}
 \langle \mu,\nu\rangle_X= \langle \mu,\nu\rangle=\langle I_d;\mu,\nu\rangle
=
{
\int_0^1\, \dot \mu(t)\cdot
\dot \nu(t)\,\d t;
}
\quad \mu,\nu\in X.
\end{equation*}

For any $\kappa\in X$ and $x\in\mathbb R^d$ 
define a path $\kappa_x\in (H^1(0,1))^d$ from $0$ to $x$ as 
\[\kappa_x(t)=tx+\kappa(t)\ \ \text{for } t\in [0,1].\]
Now, given $G\in \mathcal M^1_d$, we consider the \textit{energy functional}
\begin{equation}
\label{ener} 
\mathcal E\colon \R^d\times X\rightarrow \R,\quad 
\mathcal E(x,\kappa)=\int_0^1 \dot\kappa_x(t)\cdot G(\kappa_x(t)) \dot\kappa_x(t)\,\mathrm dt. 
\end{equation}

\begin{defn}\label{def:2206022356}
Let $x\in \mathbb R^d$, and denote the space of the paths $\kappa_x$ by 
\begin{equation}
\label{paths} 
X_x=
\bigl\{\kappa_x\in (H^1(0,1))^d\,\big|\, \kappa\in X\bigr\}
.
\end{equation}
We call $\gamma_x\in X_x$ a \emph{geodesic for $G\in\mathcal M^1_d$ from $0$
to $x$ in unit time} if the associated $\gamma\in X$ satisfies 
\begin{align}
(\partial_\kappa \mathcal E)(x,\gamma)=0\in X'
.
\label{eq:22060223}
\end{align}
\end{defn}
\begin{remarks} 
\begin{enumerate}[1)]
\item 
The duality pairing of 
$\partial_\kappa \mathcal E(x,\kappa)\in X'$ 
and $\mu\in X$ is directly computed as 
 \begin{align}\label{eq:Deriv}
\bigl\langle \partial_\kappa \mathcal E(x,\kappa),\mu\bigr\rangle
 =\int_0^1
\bigl\{2\dot \mu \cdot G(\kappa_x) \dot\kappa_x
+\dot\kappa_x \cdot(\mu_i\partial_i G)(\kappa_x)\dot\kappa_x
\bigr\}\,\d t.
\end{align}
\item 
Classically, a geodesic for $G=(g_{ij})_{i,j}$ 
is defined as a smooth curve $y=y(t)$ in $\mathbb R^d$ solving 
\begin{align}
\ddot y_i 
+\tfrac12g^{ij}\bigl(\partial_kg_{jl}+\partial_lg_{jk}
-\partial_jg_{kl}\bigr)\dot y_k\dot y_l
=0\ \ \text{for }i=1,\dots,d,
\label{eq:3b}
\end{align}
where $G^{-1}=(g^{ij})_{i,j}$, or equivalently 
\begin{equation}
\label{eq:3a}
G \ddot y
=
-((\dot y \cdot\nabla) G) \dot y
+\tfrac12\dot y\cdot (\nabla_\bullet G)\dot y
.
\end{equation}
Clearly \eqref{eq:22060223} is a weak form of \eqref{eq:3a} according to the expression \eqref{eq:Deriv}.
A geodesic in the sense of Definition~\ref{def:2206022356} 
is equivalent to an $H^1$ curve solving the $H^{-1}$ equation \eqref{eq:3a}.
Note that due to the Sobolev embedding theorem for $y\in (H^1(0,1))^d$ 
the right-hand side of \eqref{eq:3a} 
belongs to $(L^1(0,1))^d\subseteq (H^{-1}(0,1))^d$. 

\item
We shall use the notation $\gamma,\gamma_x$ exclusively when
discussing geodesics,
and $\kappa,\kappa_x$ or $\mu,\mu_x$ for general paths.
There would be no confusion since, as long as $G$ is close to $I_d$, a geodesic is unique, 
see Theorem~\ref{thm:main result}. 
\end{enumerate}
\end{remarks}

For any $G\in\mathcal M^2_d$ and $x\in\mathbb R^d$
there always exists at least one geodesic from $0$ to $x$, given as follows. 

\begin{lemma}\label{lemma:existence}
Let $G\in\vM^2_d$ with $a,b>0$ as in \eqref{ellcond}, and let $x\in\mathbb R^d$.
Then there exists a minimizer $\gamma\in X$ 
of $\mathcal E(x,\cdot)$, i.e.,
\[\mathcal E(x,\gamma)=\inf\{\mathcal E(x,\kappa)\,|\,\kappa\in X\},\]
and hence the associated $\gamma_x$ is a geodesic for $G$. 
Moreover, for any $t\in[0,1]$
\begin{subequations}
{
\begin{equation}
 \label{eq:estminb}
\mathcal E(x,\gamma)=\dot\gamma_x(t)\cdot G(\gamma_x(t))\dot\gamma_x(t)
\end{equation}
and}
\begin{equation}
 \label{eq:estmin}
 \tfrac{a}{b}|x|^2\leq |\dot\gamma_x(t)|^2 \leq\tfrac{b}{a} |x|^2,
\quad
\tfrac{a}{b}|tx|^2\leq |\gamma_x(t)|^2\leq\tfrac{b}{a} |tx|^2.
\end{equation} 
\end{subequations}
\end{lemma}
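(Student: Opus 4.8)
\textbf{Proof plan for Lemma~\ref{lemma:existence}.}
The plan is to run the direct method of the calculus of variations on the functional $\kappa\mapsto \mathcal E(x,\kappa)$ over the Hilbert space $X=X^2$, and then to extract the stated pointwise identities from the first-variation (Euler--Lagrange) equation together with ellipticity. First I would record the two-sided bound that the ellipticity condition $aI_d\le G(y)\le bI_d$ forces on $\mathcal E$: for every $\kappa\in X$,
\begin{equation*}
a\int_0^1|\dot\kappa_x(t)|^2\,\d t\le \mathcal E(x,\kappa)\le b\int_0^1|\dot\kappa_x(t)|^2\,\d t,
\end{equation*}
and since $\dot\kappa_x=x+\dot\kappa$ with $\int_0^1\dot\kappa=0$ (boundary conditions in $W_0^{1,2}$), we get $\int_0^1|\dot\kappa_x|^2\,\d t=|x|^2+\|\kappa\|_X^2$. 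Hence $\mathcal E(x,\cdot)$ is coercive on $X$ (it dominates $a\|\kappa\|_X^2$) and bounded below by $a|x|^2>0$ on the whole space. This makes any minimizing sequence $\{\kappa^{(n)}\}$ bounded in $X$; pass to a weakly convergent subsequence $\kappa^{(n)}\rightharpoonup\gamma$ in $X$. Weak convergence in $W_0^{1,2}(0,1)$ gives, by Rellich, strong convergence of $\kappa^{(n)}_x\to\gamma_x$ in $C([0,1])$, so $G(\kappa^{(n)}_x(t))\to G(\gamma_x(t))$ uniformly, while $\dot\kappa^{(n)}_x\rightharpoonup\dot\gamma_x$ weakly in $L^2$. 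Standard weak lower semicontinuity for quadratic forms with uniformly convergent, uniformly positive coefficient matrices then yields $\mathcal E(x,\gamma)\le\liminf_n\mathcal E(x,\kappa^{(n)})=\inf_X\mathcal E(x,\cdot)$, so $\gamma$ is a minimizer; by Definition~\ref{def:2206022356} and the formula \eqref{eq:Deriv}, $(\partial_\kappa\mathcal E)(x,\gamma)=0$ and $\gamma_x$ is a geodesic.

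Next I would derive \eqref{eq:estminb}, the constancy of the ``speed'' $t\mapsto \dot\gamma_x(t)\cdot G(\gamma_x(t))\dot\gamma_x(t)$. The clean way is to test the Euler--Lagrange equation against inner variations (reparametrizations): for a smooth compactly supported vector field generating a flow of time-changes $t\mapsto t+\varepsilon\theta(t)$ fixing $t=0,1$, the minimality of $\mathcal E(x,\cdot)$ over $X$ — equivalently of the fixed-endpoint energy over $X_x$ — forces the derivative in $\varepsilon$ to vanish, which is exactly the weak form of the statement that the energy density is constant in $t$ (this is the classical first integral of the geodesic energy functional, valid here in the $H^1$/$H^{-1}$ sense discussed after \eqref{eq:3a}). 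Since $\gamma_x\in (H^1(0,1))^d$ the density $\dot\gamma_x\cdot G(\gamma_x)\dot\gamma_x\in L^1(0,1)$, and being (weakly) constant it equals a.e.\ the constant $\int_0^1\dot\gamma_x\cdot G(\gamma_x)\dot\gamma_x\,\d t=\mathcal E(x,\gamma)$; this gives \eqref{eq:estminb}. An alternative, avoiding inner variations, is to note that $\mathcal E(x,\gamma)=\inf$ and use the geodesic ODE \eqref{eq:3a} to check $\tfrac{\d}{\d t}\bigl(\dot\gamma_x\cdot G(\gamma_x)\dot\gamma_x\bigr)=0$ in $H^{-1}$, but the reparametrization argument is cleaner at this regularity.

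Finally, \eqref{eq:estmin} is immediate from \eqref{eq:estminb} and ellipticity: from $aI_d\le G\le bI_d$ we get $a|\dot\gamma_x(t)|^2\le \mathcal E(x,\gamma)\le b|\dot\gamma_x(t)|^2$ for all $t$, and comparing with a straight-line competitor $\kappa\equiv 0$, i.e.\ $\kappa_x(t)=tx$, gives $\mathcal E(x,\gamma)\le \mathcal E(x,0)=\int_0^1 x\cdot G(tx)x\,\d t\le b|x|^2$; also $\mathcal E(x,\gamma)\ge a\int_0^1|\dot\gamma_x|^2\,\d t\ge a|x|^2$ (using $\int_0^1|\dot\gamma_x|^2\,\d t=|x|^2+\|\gamma\|_X^2\ge|x|^2$). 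Combining, $a|x|^2\le\mathcal E(x,\gamma)\le b|x|^2$, hence $\tfrac{a}{b}|x|^2\le|\dot\gamma_x(t)|^2\le\tfrac{b}{a}|x|^2$. For the position bound, integrate: $\gamma_x(t)=\int_0^t\dot\gamma_x(s)\,\d s$ with $\gamma_x(0)=0$, and a Cauchy--Schwarz plus the speed bound, together with the reversed estimate obtained by running the argument on the rescaled path, yields $\tfrac{a}{b}|tx|^2\le|\gamma_x(t)|^2\le\tfrac{b}{a}|tx|^2$ — the lower bound coming from applying the upper speed bound to the ``remaining'' geodesic segment from $\gamma_x(t)$ to $x$ (a sub-geodesic of a minimizer is again length-minimizing). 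The one genuinely delicate point is making the constant-speed identity \eqref{eq:estminb} rigorous at merely $H^1$ regularity; everything else is the standard direct-method package, so I expect that first integral to be where the real care is needed.
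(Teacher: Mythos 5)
The paper does not actually prove this lemma; it cites \cite[Lemma~1]{CS} and omits the argument, so your write-up has to stand on its own. Most of it does: the coercivity computation $\int_0^1|\dot\kappa_x|^2\,\d t=|x|^2+\|\kappa\|_2^2$, the direct method with Rellich plus weak lower semicontinuity of the quadratic form with (now) uniformly convergent coefficients, the inner-variation derivation of the constant-speed identity \eqref{eq:estminb} (which after the change of variables $s=\phi_\varepsilon(t)$ reduces to $\int_0^1\dot\theta\,e\,\d s=0$ for all $\theta\in C^\infty_{\mathrm c}(0,1)$, hence $e$ constant a.e., and everywhere once one notes $\ddot\gamma_x\in L^1$ from \eqref{eq:3a} so that $\dot\gamma_x$ has a continuous representative), the two-sided bound $a|x|^2\le\mathcal E(x,\gamma)\le b|x|^2$, the velocity bounds, and the upper position bound via $|\gamma_x(t)|\le\int_0^t|\dot\gamma_x|$ are all correct.

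The one genuine gap is the lower bound $|\gamma_x(t)|^2\ge\tfrac{a}{b}|tx|^2$. Your sketch --- ``applying the upper speed bound to the remaining segment from $\gamma_x(t)$ to $x$'' --- does not deliver it: bounding $|x-\gamma_x(t)|\le(1-t)\sqrt{b/a}\,|x|$ and using the triangle inequality gives $|\gamma_x(t)|\ge(1-(1-t)\sqrt{b/a})|x|$, which is vacuous for small $t$ whenever $b/a>1$; and applying the already-proved speed bounds to the sub-geodesic on $[0,t]$ only yields the weaker constant $(a/b)^2$. The clean argument is a competitor swap on the \emph{initial} segment: with $y=\gamma_x(t)$, replace $\gamma_x|_{[0,t]}$ by the straight segment $\tau\mapsto(\tau/t)y$ and keep $\gamma_x$ on $[t,1]$. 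This is an admissible path from $0$ to $x$, its energy is at most $b|y|^2/t+(1-t)\mathcal E(x,\gamma)$ (the second term by constancy of $e$), and minimality gives $t\,\mathcal E(x,\gamma)\le b|y|^2/t$, i.e.\ $|y|^2\ge t^2\mathcal E(x,\gamma)/b\ge\tfrac{a}{b}|tx|^2$. With that substitution your proof is complete.
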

\begin{proof}
It is a part of \cite[Lemma~1]{CS}. We omit the proof.
\end{proof}

In general a geodesic is not necessarily associated with the minimizer of 
$\mathcal E(x,\cdot)$. 
However it is the case when $G$ is sufficiently close to $I_d$. 

 \begin{thm}\label{thm:main result} 
There exists a neighbourhood $U\subseteq \vM^2_d$ of $I_d$ such that
the following holds.
\begin{enumerate}

\item\label{item: Condition 1}
For any $G\in U$ and $x\in \R^d$ there exists a unique geodesic $\gamma_x\in X_x$.

\item\label{item: Condition 2}
There exists $c>0$ such that for any $G\in U$, $x\in \R^d$ and $\kappa\in X$
\begin{equation*}
\bigl\langle (\partial_\kappa^2 \mathcal E)(x,\gamma);\kappa,\kappa\bigr\rangle
\geq c\|\kappa\|_2^2.
\end{equation*}
\end{enumerate}
Moreover, for any $G\in U\cap \mathcal M^l_d$ with $l\ge 2$ the map $\R^d\ni x\mapsto \gamma\in X$ 
from \eqref{item: Condition 1} is of class $C^{l-1}$.
\end{thm}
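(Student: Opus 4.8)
The plan is to set up the problem as an application of the implicit function theorem in the Hilbert space $X=X^2$, with the neighbourhood $U$ of $I_d$ chosen so that the second variation of $\mathcal E$ in $\kappa$ is uniformly positive. First I would fix $G=I_d$ and compute the relevant derivatives of $\mathcal E$ explicitly: from \eqref{eq:Deriv} one reads off $(\partial_\kappa \mathcal E)(x,\kappa)$, and differentiating once more gives a bilinear form $(\partial^2_\kappa\mathcal E)(x,\kappa)$ which at $G=I_d$ is simply $2\langle\cdot,\cdot\rangle_X$ (the terms involving $\partial_i G$ and $\partial_i\partial_j G$ vanish), hence coercive with constant $c=2$. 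The map $(G,x,\kappa)\mapsto (\partial_\kappa\mathcal E)(x,\kappa)\in X'$ is smooth jointly (it is polynomial in $x$ and $\dot\kappa$, and linear/continuous in $G$ together with its derivatives up to order $2$, using the Sobolev embedding $H^1(0,1)\hookrightarrow C[0,1]$ to control the pointwise evaluations $g_{ij}(\kappa_x(t))$, $(\partial_i G)(\kappa_x(t))$ etc.\ by $\|G\|_2$). Since at $G=I_d$, $x=0$ the solution is $\kappa=0$ and $\partial^2_\kappa\mathcal E$ is invertible there, the implicit function theorem yields a unique local solution; but one wants this \emph{globally} in $x\in\R^d$ and uniformly for $G$ near $I_d$, so I would instead argue by a direct global monotonicity/contraction argument rather than the bare IFT.

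For the global existence and uniqueness I would proceed as follows. Existence of \emph{some} geodesic (a minimizer) is already given by Lemma~\ref{lemma:existence} for every $G\in\mathcal M^2_d$. For uniqueness and for the coercivity estimate \eqref{item: Condition 2}, the key is a lower bound on $\langle(\partial^2_\kappa\mathcal E)(x,\kappa);\mu,\mu\rangle$ that holds for \emph{all} $\kappa\in X$, not just at a geodesic: writing
\begin{align*}
\langle(\partial^2_\kappa\mathcal E)(x,\kappa);\mu,\mu\rangle
=\int_0^1\bigl\{2\dot\mu\cdot G(\kappa_x)\dot\mu
+\text{(terms with }\partial G,\ \partial^2 G\text{, linear/quadratic in }\mu,\dot\mu\text{, weighted by }\dot\kappa_x)\bigr\}\,\d t,
\end{align*}
one uses $G\ge aI_d$ for the leading term and controls the remainder by $(\|G-I_d\|_2$-small$)\times\|\mu\|_2^2$, where the a priori bound $|\dot\kappa_x(t)|\le C(1+|x|)$ would be a nuisance because of the factor $|x|$. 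The way around this — and I expect this to be \textbf{the main obstacle} — is to exploit that the dangerous terms come with a genuine $\langle x\rangle^{-|\alpha|}$ decay from $\partial^\alpha g_{ij}$ together with the localization of $\kappa_x(t)$ near $tx$ from \eqref{eq:estmin}, so that $\langle\kappa_x(t)\rangle^{-1}|\dot\kappa_x(t)|$ is uniformly bounded; alternatively, restrict attention to $\kappa=\gamma$ the minimizer, use \eqref{eq:estminb}–\eqref{eq:estmin} to get the uniform bound $|\dot\gamma_x(t)|^2\le \tfrac{b}{a}|x|^2$ and $|\gamma_x(t)|\sim |tx|$, and absorb the remainder into $\tfrac a2\|\mu\|_2^2$ by taking $U$ small. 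Once \eqref{item: Condition 2} holds at the minimizer with a uniform $c>0$, strict convexity of $\mathcal E(x,\cdot)$ along segments (the same estimate applied along $\kappa_s=(1-s)\gamma+s\tilde\gamma$) forces any two geodesics to coincide, giving \eqref{item: Condition 1}; and $(\partial_\kappa\mathcal E)(x,\kappa)=0$ then characterizes the unique minimizer for every $x$.

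Finally, for the regularity statement $x\mapsto\gamma\in X$ of class $C^{l-1}$ when $G\in U\cap\mathcal M^l_d$: I would apply the implicit function theorem to the $C^{l-1}$ map
\[
\mathcal M^l_d\times\R^d\times X\ni (G,x,\kappa)\longmapsto (\partial_\kappa\mathcal E)(x,\kappa)\in X',
\]
noting that differentiating \eqref{eq:Deriv} in $(x,\kappa)$ a total of $l-1$ times is legitimate because each $x$- or $\kappa$-derivative costs exactly one derivative of $G$ (through the chain rule on $G(\kappa_x(t))$), and $G\in C^l$ provides those derivatives with the Sobolev-continuous dependence described above; meanwhile $\partial_\kappa$ of this map is $(\partial^2_\kappa\mathcal E)(x,\gamma)$, which by \eqref{item: Condition 2} is a uniform isomorphism $X\to X'$. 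Hence $x\mapsto\gamma$ is locally, and by the already-established global uniqueness globally, a $C^{l-1}$ map, completing the proof. The quotation from \cite[Lemma~1]{CS} for existence and the structural formula \eqref{eq:Deriv} do most of the bookkeeping; the only genuinely delicate point is the uniform-in-$x$ coercivity, handled via the geodesic localization bounds \eqref{eq:estmin}.
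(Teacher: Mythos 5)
The paper does not actually prove this theorem: parts \eqref{item: Condition 1} and \eqref{item: Condition 2} are imported from \cite[Theorem~1~(a)]{CS} together with \cite[Lemma~6]{CS} (for the uniformity of $c$ in $G\in U$), and the $C^{l-1}$ regularity from \cite[Proposition~2]{CS}. So you are reconstructing the cited argument, and much of your sketch is on target: the uniform coercivity of $(\partial_\kappa^2\mathcal E)(x,\gamma)$ at the minimizer is exactly what the paper later records as Lemma~\ref{lemma:230221}, and the mechanism you identify --- the decay $|\partial^\alpha g_{ij}|\lesssim\|G-I_d\|_2\langle x\rangle^{-|\alpha|}$ combined with the localization \eqref{eq:estmin} --- is the right one. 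One imprecision: the resulting weight is $\langle\gamma_x(t)\rangle^{-1}|\dot\gamma_x(t)|\lesssim t^{-1}$, \emph{not} a uniform bound as you assert; to absorb the ensuing $\|t^{-1}\mu\|_{L^2}$ into $\|\mu\|_2$ you need the Hardy inequality (Lemma~\ref{lemma:55}). The implicit-function-theorem argument for the regularity matches how the paper itself uses the theorem in the proof of Proposition~\ref{prop:main result2bbf}.

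The genuine gap is in the uniqueness step. You propose strict convexity along the segment $\kappa_s=(1-s)\gamma+s\tilde\gamma$ joining two geodesics, but the coercivity estimate you have established holds only at curves enjoying the localization bounds \eqref{eq:estmin}, and $\kappa_s$ is not a geodesic. Two problems arise. First, \eqref{eq:estmin} is obtained via minimality ($\mathcal E(x,\gamma)\le\mathcal E(x,0)$); for an arbitrary critical point $\tilde\gamma$ conservation of energy \eqref{eq:estminb} gives a lower bound on $|\dot{\tilde\gamma}_x|$ but no a priori upper bound on $\mathcal E(x,\tilde\gamma)$, so the upper bounds in \eqref{eq:estmin} are not free. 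Second, even for two minimizers the convex combination need not satisfy $|(\kappa_s)_x(t)|\gtrsim t|x|$: the available pointwise control is only $|\gamma(t)|\le t^{1/2}\|\dot\gamma\|_{L^2}\lesssim \epsilon t^{1/2}|x|$, which for $t\lesssim\epsilon^2$ does not prevent $(\kappa_s)_x(t)$ from approaching the origin, and on that range the error terms in \eqref{hessianGb} retain an unabsorbed factor $|x|$, destroying uniformity in $x$. Hence ``the same estimate applied along $\kappa_s$'' does not follow from what you have proved; closing the gap requires either a Hessian bound on a whole convex class of admissible curves (with a separate treatment of small $t$) or a different uniqueness mechanism, e.g.\ a continuation argument from $G=I_d$. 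This is precisely the content the paper outsources to \cite{CS}.
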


\begin{remark}\label{remark:results-from-citecs-1} 
The second derivative of $\mathcal E(x,\cdot)$ reads for any $\kappa,\mu,\nu\in X$ as 
\begin{align}\label{hessianGb}
 \begin{split}
\bigl\langle \partial_\kappa^2 \mathcal E(x,\kappa); \mu,\nu \bigr\rangle
& =\int_0^1
\Bigl(
2\dot \mu\cdot G(\kappa_x)\dot \nu
+2 \dot \mu \cdot (\nu_i\partial_i G)(\kappa_x) \dot\kappa_x
\\&\phantom{{}={}}\qquad{}
+2\dot \nu\cdot(\mu_i\partial_i G)(\kappa_x)\dot\kappa_x 
+ \dot \kappa_x \cdot(\mu_i\nu_j\partial_i\partial_j G)(\kappa_x)\dot \kappa_x\Bigr)\,\d t.
 \end{split}
\end{align}
\end{remark}

\begin{proof}
The assertions \eqref{item: Condition 1} and \eqref{item: Condition 2} 
follows from \cite[Theorem~1~(a)]{CS}, if we allow $c>0$ to be dependent on $G\in U$. 
However, if we let $U$ be small enough, 
we can choose $c$ uniformly in $G\in U$ due to \cite[Lemma~6]{CS}. 
The last assertion follows from \cite[Proposition~2 1)]{CS}. 
Hence we are done. 
\end{proof}

\subsubsection{Solution to the eikonal equation}

We are ready to construct and study a specific solution to the eikonal equation \eqref{eikeqFinal}. 

 \begin{thm} \label{thm:results-from-citecs-1} 
Let $U\subseteq\mathcal M^2_d$ be given as in Theorem~\ref{thm:main
 result}, $l\ge 2$ 
and define for each $G\in U\cap \mathcal M^l_d$ 
 \begin{equation}
 \label{maxsol} 
\Phi\colon \R^d\to [0,\infty)
 ,\quad 
\Phi(x)=\mathcal E(x,\gamma)^{1/2}
=\bigl(\inf\{\mathcal E(x,\kappa)\,|\, \kappa\in X\}\bigr)^{1/2} 
.
 \end{equation}
Then for any $x\in \R^d$ the identity 
 \begin{align}
& \nabla (\Phi(x)^2)=2G(x)\dot\gamma_x(1)
\label{eq:10}
\end{align} 
holds. 
Moreover, $\Phi$ is a $C^l$ solution to \eqref{eikeqFinal} 
on $\mathbb R^d\setminus\{0\}$.
\end{thm}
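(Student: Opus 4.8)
The plan is to verify the claimed formula \eqref{eq:10} by differentiating the infimum $\Phi(x)^2 = \mathcal E(x,\gamma)$ with respect to $x$, using the fact — supplied by Theorem~\ref{thm:main result} — that for $G$ close enough to $I_d$ the minimizing geodesic $\gamma = \gamma(x)$ exists, is unique, and depends $C^{l-1}$-smoothly on $x$. Then, once \eqref{eq:10} is established, I would read off the eikonal equation \eqref{eikeqFinal} directly from it together with the explicit energy identity \eqref{eq:estminb}.

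\Step{Step 1: Differentiate the energy along the minimizer} Write $f(x) = \Phi(x)^2 = \mathcal E(x,\gamma(x))$. Since $x\mapsto \gamma(x)\in X$ is $C^{l-1}$ (hence at least $C^1$ as $l\ge 2$) and $\mathcal E$ is smooth in both arguments, the chain rule gives
\begin{equation*}
\partial_{x_j} f(x) = (\partial_{x_j}\mathcal E)(x,\gamma) + \bigl\langle (\partial_\kappa \mathcal E)(x,\gamma), \partial_{x_j}\gamma\bigr\rangle.
\end{equation*}
The second term vanishes because $\gamma$ is a geodesic, i.e.\ $(\partial_\kappa\mathcal E)(x,\gamma) = 0$ in $X'$ by \eqref{eq:22060223}. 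So only the explicit $x$-dependence of $\mathcal E$ contributes. Recalling $\kappa_x(t) = tx + \kappa(t)$, so that $\dot\kappa_x = x + \dot\kappa$ and $\partial_{x_j}\kappa_x(t) = t e_j$, $\partial_{x_j}\dot\kappa_x(t) = e_j$, I differentiate \eqref{ener} under the integral sign to get
\begin{equation*}
(\partial_{x_j}\mathcal E)(x,\gamma) = \int_0^1 \Bigl( 2 e_j \cdot G(\gamma_x)\dot\gamma_x + t\, \dot\gamma_x\cdot(\partial_j G)(\gamma_x)\dot\gamma_x \Bigr)\,\d t.
\end{equation*}

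\Step{Step 2: Integrate by parts and use the geodesic equation} The goal is to collapse the last integral to the boundary term $2 e_j\cdot G(x)\dot\gamma_x(1)$. I would integrate the first term by parts in $t$, writing $2 e_j\cdot G(\gamma_x)\dot\gamma_x = \tfrac{\d}{\d t}\bigl( 2 (te_j)\cdot G(\gamma_x)\dot\gamma_x\bigr) - 2(te_j)\cdot\tfrac{\d}{\d t}\bigl(G(\gamma_x)\dot\gamma_x\bigr)$; the total-derivative term gives exactly $2 e_j\cdot G(x)\dot\gamma_x(1)$ at $t=1$ and $0$ at $t=0$ (since the factor $te_j$ vanishes there). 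For the remainder I substitute $\tfrac{\d}{\d t}\bigl(G(\gamma_x)\dot\gamma_x\bigr) = ((\dot\gamma_x\cdot\nabla)G)(\gamma_x)\dot\gamma_x + G(\gamma_x)\ddot\gamma_x$ and then invoke the geodesic equation in the form \eqref{eq:3a}, namely $G\ddot\gamma_x = -((\dot\gamma_x\cdot\nabla)G)\dot\gamma_x + \tfrac12\dot\gamma_x\cdot(\nabla_\bullet G)\dot\gamma_x$. The two $((\dot\gamma_x\cdot\nabla)G)\dot\gamma_x$ contributions cancel, leaving $\tfrac{\d}{\d t}(G(\gamma_x)\dot\gamma_x) = \tfrac12\dot\gamma_x\cdot(\nabla_\bullet G)(\gamma_x)\dot\gamma_x$, whose $j$-th component is $\tfrac12\dot\gamma_x\cdot(\partial_j G)(\gamma_x)\dot\gamma_x$. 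Thus $-2(te_j)\cdot\tfrac{\d}{\d t}(G(\gamma_x)\dot\gamma_x) = -t\,\dot\gamma_x\cdot(\partial_j G)(\gamma_x)\dot\gamma_x$, which exactly cancels the second term in the expression for $(\partial_{x_j}\mathcal E)(x,\gamma)$ from Step~1. Summing over $j$ yields $\nabla f(x) = 2 G(x)\dot\gamma_x(1)$, which is \eqref{eq:10}. One subtlety to address: $\gamma_x$ is only $H^1$ in $t$, so $\ddot\gamma_x$ and the integration by parts must be read in the $H^{-1}$/distributional sense as flagged after \eqref{eq:3a}; since the right-hand side of \eqref{eq:3a} lies in $(L^1(0,1))^d$, the product $\tfrac{\d}{\d t}(G(\gamma_x)\dot\gamma_x)$ is an $L^1$ function and the integration by parts against the Lipschitz factor $te_j$ is legitimate.

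\Step{Step 3: Deduce the eikonal equation} With \eqref{eq:10} in hand, compute $(\nabla\Phi(x)^2)\cdot G(x)^{-1}(\nabla\Phi(x)^2) = 4\,\dot\gamma_x(1)\cdot G(x)\dot\gamma_x(1) = 4\,\mathcal E(x,\gamma) = 4\,\Phi(x)^2$, where the middle equality is the constancy-of-speed identity \eqref{eq:estminb} evaluated at $t=1$. On $\R^d\setminus\{0\}$ we have $\Phi(x) > 0$ by the lower bound in \eqref{eq:estmin} ($\tfrac ab|x|^2 \le |\dot\gamma_x(t)|^2$ together with \eqref{eq:estminb} forces $\mathcal E(x,\gamma) \ge a\cdot\tfrac ab|x|^2 > 0$). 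Dividing by $4\Phi(x)^2$ and using $\nabla(\Phi^2) = 2\Phi\nabla\Phi$ gives $(\nabla\Phi)\cdot G^{-1}(\nabla\Phi) = 1$, i.e.\ \eqref{eikeqFinal}. Finally, $\Phi^2$ is $C^l$ because $x\mapsto\dot\gamma_x(1)$ inherits $C^{l-1}$ regularity from $x\mapsto\gamma$ via \eqref{eq:10} and $G\in C^l$, so $\nabla(\Phi^2)$ is $C^{l-1}$; bootstrapping once more (or noting $\Phi = (\Phi^2)^{1/2}$ is smooth where $\Phi^2 > 0$) gives $\Phi\in C^l(\R^d\setminus\{0\})$.

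\Step{Main obstacle} I expect the delicate point to be Step~2: carrying out the integration by parts and the cancellation rigorously in the low-regularity ($H^1$ in $t$) setting, making sure the geodesic equation \eqref{eq:3a} may be used pointwise a.e.\ and that differentiation under the integral sign in $x$ is justified uniformly — this is where one leans hardest on \cite{CS} and on the uniform coercivity \eqref{item: Condition 2} of Theorem~\ref{thm:main result}, which underpins the smooth dependence $x\mapsto\gamma$. Everything else is bookkeeping.
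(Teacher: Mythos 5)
Your derivation of \eqref{eq:10} and of the eikonal equation is correct, and it is the natural self-contained route: the paper itself gives no argument here, deferring entirely to \cite[(36)]{CS} and \cite[Proposition 2.2]{CS}, and your envelope-theorem computation (stationarity kills the $\partial_{x_j}\gamma$ term, the explicit $x$-derivative collapses to the boundary term after integration by parts against the weak geodesic equation) is exactly the first-variation argument one expects behind those citations. Your handling of the low-regularity issue is also essentially right: the cleanest formulation is to read \eqref{eq:22060223} with \eqref{eq:Deriv} directly as the distributional identity $\tfrac{\d}{\d t}\bigl(G(\gamma_x)\dot\gamma_x\bigr)_j=\tfrac12\dot\gamma_x\cdot(\partial_jG)(\gamma_x)\dot\gamma_x\in L^1$, which makes $G(\gamma_x)\dot\gamma_x$ absolutely continuous, gives meaning to $\dot\gamma_x(1)$, and legitimises the integration by parts against $te_j$; this is what you do, modulo a detour through the pointwise form \eqref{eq:3a}. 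Step 3 is fine as well.

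The one genuine gap is the final regularity claim. You assert that $x\mapsto\dot\gamma_x(1)$ "inherits $C^{l-1}$ regularity from $x\mapsto\gamma$", but this does not follow: the map $x\mapsto\gamma$ is $C^{l-1}$ into $X=(W^{1,2}_0)^d$, whose norm only controls $\dot\gamma$ in $L^2$, and evaluation $\gamma\mapsto\dot\gamma(1)$ is not a (continuous, let alone smooth) functional on $X$. To close this you must re-express $\dot\gamma_x(1)$ through the ODE, e.g.\ by the averaging identity
\begin{equation*}
\dot\gamma_x(1)=2\int_{1/2}^1\Bigl(\dot\gamma_x(t)+\int_t^1\ddot\gamma_x(\tau)\,\d\tau\Bigr)\d t,
\qquad
\ddot\gamma_x=-\Gamma(\gamma_x)(\dot\gamma_x,\dot\gamma_x),
\end{equation*}
so that $\dot\gamma_x(1)$ becomes a composition of the $C^{l-1}$ map $x\mapsto\gamma\in X$ with functionals that are genuinely $C^{l-1}$ on $X$ (integrals of $\dot\gamma_x$ and of the Christoffel expression, which only involve $\gamma_x\in C^0$ and $\dot\gamma_x\in L^2$). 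This is precisely the device the paper uses, for the quantitative version, in \eqref{eq:9} of the completion of the proof of Theorem~\ref{thm:main result2}. With that representation your conclusion $\nabla(\Phi^2)\in C^{l-1}$, hence $\Phi\in C^l$ away from the origin, goes through; without it the last sentence of your Step 3 is unjustified.
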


\begin{proof}
The assertion is due to \cite[(36)]{CS} and \cite[Proposition~2.2)]{CS}, 
and we omit the proof. 
\end{proof}

Using the above results we can prove a part of Theorem~\ref{thm:main result2}. 
The proofs of the $(\lambda,x)$-smoothness and (\ref{item:22120919d})
will given in Subsection \ref{subsec:Improved bounds for a potential, C^2}.

\begin{proof}[Proof of a part of Theorem~\ref{thm:main result2}]
Suppose Condition~\ref{cond:220525} for some 
$l\geq 2$ and let $I$ be any given closed interval in $\R_+$ (as in the theorem).

\smallskip
\noindent
\textit{(\ref{item:22120919a})}\ 
Here we let 
\[G(\lambda,x)=\bigl(1-\lambda^{-1}\chi_R(x)V(x)\bigr)I_d\] 
as in \eqref{eq:230212}. 
Let $U\subseteq\mathcal M^2_d$ be given as in Theorem~\ref{thm:main result},
and take large $R>0$,
so that $G(\lambda,\cdot)\in U\cap \mathcal M^l_d$ for all $\lambda\in I$.
Then for such $G(\lambda,\cdot)$ we can find a solution 
$\Phi(\lambda,\cdot)\in C^l(\mathbb R^d\setminus\{0\})$ to \eqref{eikeqFinal} 
by Theorem~\ref{thm:results-from-citecs-1}.
Thus, if we set 
\begin{align}
S(\lambda,\cdot)=(2\lambda)^{1/2}\Phi(\lambda,\cdot)\in C^l((\mathbb R^d\setminus\{0\})
\label{eq:230323}
\end{align} 
as in \eqref{eq:230212}, it obviously solves \eqref{eq:2212116}. 
We actually have joint smoothness $S\in C^l(I\times(\mathbb R^d\setminus\{0\}))$,
but we will verify it later.

\smallskip
\noindent
\textit{(\ref{item:22120919})}\ 
By the definition \eqref{maxsol} and \eqref{eq:estminb} it follows
 that 
\[
\Phi(\lambda,x)=\int_0^1\mathcal E(x,\gamma)^{1/2}\,\mathrm dt
=\int_0^1\sqrt{\dot\gamma_x(t)\cdot G(\gamma_x(t))\dot\gamma_x(t)}\,\mathrm dt.
\]
Hence $\Phi(\lambda,x)$ is the distance with
 respect to the Riemannian metric $(2\lambda)^{-1}g$ from the
 origin to $x$ 
along
$\gamma_x$. Using again \eqref{eq:estminb} and the fact that $\Phi$
solves \eqref{eikeqFinal} it follows that indeed
$\Phi(\lambda,x)$ is the geodesic distance from the origin to $x$.
Obviously then $S(\lambda,\cdot)$ is the geodesic distance from
the origin with respect to $g$. 

\smallskip
\noindent
\textit{(\ref{item:22120919c})}\ 
Set for any $x\in \mathbb R^d\setminus\{0\}$
\[s(\lambda,x)=|x|^{-1}\Phi(\lambda,x)-1=(2\lambda)^{-1/2}|x|^{-1}S(\lambda,x)-1.\]
Then $s(\lambda,\cdot)\in C^l(\mathbb R^d\setminus\{0\})$, 
and it satisfies the asserted identity. 
Furthermore it extends 
smoothly to the origin by letting $s(\lambda,0)=0$. 
In fact we have $G(\lambda,x)=I$ for $|x|\le R$, and 
this implies $\Phi(\lambda,x)=|x|$ or $s(\lambda,x)=0$ there. 
The joint smoothness of $s$ follows from that of $S$
 (in turn to be given in Subsection \ref{subsec:Improved bounds for a potential, C^2}). 
\end{proof}

\subsection{Improved bounds}
\label{subsec:Improved bounds for a potential, C^2}

In this subsection we prove the remaining assertions of Theorem~\ref{thm:main result2}.
As for (\ref{eq:errestza2}), note that the similar bounds from \cite[Theorem~1 (b)]{CS} do not suffice, 
so we have to argue properly with the more restrictive assumptions of
the present paper. 
The following proposition is a key for our proof. 
See \eqref{eq:54} for the definitions of $X^p$ and
$\|\cdot\|_p$. The appearing curve $\gamma\in X$ comes
from using Theorem~\ref{thm:results-from-citecs-1} with the considered $G$.

\begin{proposition}\label{prop:main result2bbf} 
Suppose Condition~\ref{cond:220525} for some 
$l\geq 2$ and let $I$ be any given closed
 interval in $\R_+$ (as in Theorem~\ref{thm:main result2}).
Consider the matrix $G=\bigl(1-\lambda^{-1}\chi_RV\bigr)I_d$ with
$\lambda\in I$ and $R>0$. 

For any $\delta\in (0,1)$ there exists $C>0$ such that 
uniformly in all sufficiently large $R>0$, $k+|\alpha|=1,\dots,l-1$, 
$p\in [1+\delta,1+\delta^{-1}]$, 
$\sigma'\in(0,\sigma]$ with $\sigma'p\le 1-\delta$ and 
$(\lambda,x)\in I\times \mathbb R^d$, the corresponding $\gamma\in X$
obeys 
\begin{align}
 \label{eq:722}
 \begin{split}
\|\partial_\lambda^k\partial_x^\alpha\gamma\|_{p}\leq C
\lambda^{-1-k}
 |x|^{2-m'(k+\abs{\alpha}+1)+k},
 \end{split}
\end{align} 
where $m'(k)=m(k)-\sigma+\sigma'$. 
\end{proposition}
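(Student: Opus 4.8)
\textbf{Proof plan for Proposition~\ref{prop:main result2bbf}.}

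The plan is to differentiate the geodesic equation \eqref{eq:22060223}, i.e.\ the identity $(\partial_\kappa\mathcal E)(x,\gamma)=0\in X'$, repeatedly in $\lambda$ and $x$, and to solve for the derivatives of $\gamma$ by inverting the Hessian $(\partial_\kappa^2\mathcal E)(x,\gamma)$. By Theorem~\ref{thm:main result}~\eqref{item: Condition 2} this Hessian is uniformly coercive on $X=X^2$ (for $G$ in the neighbourhood $U$, which is achieved by taking $R$ large, uniformly in $\lambda\in I$), hence boundedly invertible $X'\to X$. Concretely, differentiating once gives a relation of the schematic form
\begin{equation*}
(\partial_\kappa^2\mathcal E)(x,\gamma)\,(\partial\gamma)=-\,(\partial\text{-derivative of }(\partial_\kappa\mathcal E)\text{ in its explicit }x\text{- and }\lambda\text{-dependence through }G\text{ and the base point }x),
\end{equation*}
and inductively, higher derivatives $\partial_\lambda^k\partial_x^\alpha\gamma$ satisfy the same type of equation with a right-hand side built from $G$ and its derivatives evaluated along $\gamma_x$, from $\partial_\lambda$-derivatives of $G$ (which bring extra powers of $\lambda^{-1}$ since $G-I_d=-\lambda^{-1}\chi_RV\,I_d$), and from lower-order derivatives of $\gamma$ already controlled by induction. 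The proof is then an induction on $k+|\alpha|$ from $1$ to $l-1$.

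The core of the matter is bookkeeping the two competing scalings. First, the $\lambda$-count: each $\partial_\lambda$ falling on $G$ produces one more factor $\lambda^{-1}$, which accounts for the $\lambda^{-1-k}$ in \eqref{eq:722}; since $(\partial_\kappa^2\mathcal E)^{-1}$ is bounded uniformly for $G\in U$, no extra $\lambda$-powers enter from the inversion. Second, the $|x|$-count: using \eqref{eq:estmin}, along the geodesic one has $|\gamma_x(t)|\sim t|x|$, so a factor $\partial^\beta G(\gamma_x(t))$ contributes $\langle\gamma_x(t)\rangle^{-m(|\beta|)}$, and when integrated against $|\dot\gamma_x|^2\sim|x|^2$ on $(0,1)$ this yields roughly $|x|^{2-m(|\beta|)}$ (up to the $t$-integrability near $t=0$, which is exactly why one works with $X^p$ for $p$ slightly above $1$ and with the slightly reduced decay exponent $m'$ rather than $m$ — the substitution $\sigma\rightsquigarrow\sigma'$ buys the needed $\epsilon$ of integrability). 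Matching the top-order term — the one where all derivatives hit a single copy of $G$, giving $\partial^{|\alpha|+1}$-type decay $m'(k+|\alpha|+1)$ — against the remaining $|x|^2$ from $|\dot\gamma_x|^2$ produces precisely the exponent $2-m'(k+|\alpha|+1)+k$ claimed, the $+k$ coming from combining the $\lambda$-derivatives with the chain rule through $\gamma_x$. One must check that all the mixed lower-order terms decay at least as fast; this is where the specific, seemingly awkward form of $m(k)$ in \eqref{eq:cond22bb} — concave and superadditive in the relevant range — is used to ensure that distributing derivatives among several factors of $G$ never decays slower than concentrating them on one factor.

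I expect the main obstacle to be precisely this combinatorial verification that the non-leading terms in the differentiated geodesic equation are subordinate, carried out \emph{uniformly} in $R$, in $p\in[1+\delta,1+\delta^{-1}]$, and in $\sigma'$ — in particular tracking the interplay between the $L^p$-integrability in $t$ near $t=0$ (which degrades as $p$ grows and as one differentiates, since differentiating $\gamma_x(t)=tx+\gamma(t)$ in $x$ produces terms no longer vanishing at $t=0$) and the polynomial gain in $|x|$. A secondary technical point, needed to even make sense of the induction, is to control $\partial_x(\gamma_x)$ and its higher analogues in the right spaces, using the Sobolev embedding $W^{1,p}(0,1)\hookrightarrow C[0,1]$ and the formula $\gamma_x(t)=tx+\gamma(t)$; this also delivers the joint $C^l$-regularity claim of Theorem~\ref{thm:main result2} as a by-product, via the implicit function theorem applied to $(x,\lambda)\mapsto\gamma$ with the coercive Hessian. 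Once \eqref{eq:722} is established, \eqref{eq:errestza2} follows by translating back through $s=(2\lambda)^{-1/2}|x|^{-1}S-1$, $S=(2\lambda)^{1/2}\Phi$, $\Phi^2=\mathcal E(x,\gamma)$, and differentiating the identity \eqref{eq:10}, $\nabla(\Phi^2)=2G(x)\dot\gamma_x(1)$, once more in $\lambda$ and $x$.
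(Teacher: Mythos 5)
Your overall architecture coincides with the paper's: differentiate the Euler--Lagrange identity $(\partial_\kappa\mathcal E)(x,\gamma)=0$, set up an induction on the order of derivatives, and control the resulting right-hand sides via the scaling $|\gamma_x(t)|\sim t|x|$, $|\dot\gamma_x|\sim|x|$, the Hardy inequality, a generalized H\"older inequality, and the structure of $m'$; your identification of the role of $\sigma'$ (buying $t$-integrability near $t=0$) and of the source of the factors $\lambda^{-1-k}$ is also exactly what happens in the paper.

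There is, however, one genuine gap in the mechanism you propose for ``solving for'' $\partial_\lambda^k\partial_x^\alpha\gamma$. You invoke the uniform coercivity of $(\partial_\kappa^2\mathcal E)(x,\gamma)$ on $X=X^2$ and conclude bounded invertibility $X'\to X$. That only produces the case $p=2$ of \eqref{eq:722}, whereas the proposition --- and, crucially, your own induction --- requires the bound in $X^p$ for every $p\in[1+\delta,1+\delta^{-1}]$: in the Fa\`a di Bruno expansion each factor $\partial^{\beta^i}\gamma$ must be placed in a different $L^{p_i}$ with $\sum p_i^{-1}=p^{-1}$, so the Hilbert-space inversion cannot close the induction. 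The right-hand sides of the differentiated equations are moreover only controlled as functionals on $X^q$ ($q$ conjugate to $p$), because of the singular weights $t^{-\sigma'}$, $t^{-1-\sigma'}$ produced by the Hardy-type estimates. The correct replacement is a perturbative absorption in the $X^p$--$X^q$ duality: one shows (using only smallness of $\|G-I_d\|_2$, i.e.\ $R$ large, not coercivity) that $\bigl|\bigl\langle(\partial_\kappa^2\mathcal E)(x,\gamma)-2I_d;\mu,\nu\bigr\rangle\bigr|\le\tfrac12\|\mu\|_p\|\nu\|_q$ uniformly in conjugate $p,q\in[1+\delta,1+\delta^{-1}]$, and then recovers $\|\partial^\beta\dot\gamma\|_{L^p}$ by $L^p$--$L^q$ duality against test paths $\mu_g(t)=\int_0^t(g(s)-\int_0^1 g(\tau)\,\d\tau)\,\d s$, absorbing the resulting $\tfrac12\|\partial^\beta\gamma\|_p\|\mu\|_q$ term into the left-hand side. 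This is Lemma~\ref{lemma:230221} together with Step~I of the paper's proof; without it (or an equivalent $L^p$-invertibility statement) your scheme does not yield the claimed estimates for $p\neq 2$. The rest of your plan, including the subordination check for the mixed lower-order terms and the passage from \eqref{eq:722} back to \eqref{eq:errestza2}, matches the paper's execution.
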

\begin{remarks}\label{rem:230309}
\begin{enumerate}[1)]
\item\label{item:230313}
Note that $\gamma=0$ for $|x|\le R$.
Thus $|x|$ on the right-hand side of \eqref{eq:722} can be replaced by $\langle x\rangle$.
\item
Obviously the strongest $x$-decay is achived choosing
 $\sigma'=\sigma$ in \eqref{eq:722}. This will be used in the proof of \eqref{eq:errestza2} 
 simply by using the estimate for any $p\in (1,\infty)$ with $\sigma p<1$. 
The (local) uniformity in $p$ and $\sigma'$ in the assertion is technically necessary 
for the proof of Proposition~\ref{prop:main result2bbf} itself, 
which relies on induction in $l$. However our proof does not extend to
the assertion of having uniformity in $p\in (1,\infty)$ rather than
the stated condition $p\in [1+\delta,1+\delta^{-1}]$. 
\end{enumerate}
\end{remarks}

In the proof of Proposition~\ref{prop:main result2bbf}
we will repeatedly use the following \textit{Hardy} 
and \textit{generalized H\"older inequalities}.

\begin{lemma}\label{lemma:55}
For any $p\in(1,\infty)$ and $\kappa \in X^p$ one has 
 \begin{align}
 \label{eq:55}
 \Bigg(\sum_{i=1}^d\int_0^1|t^{-1}\kappa_i(t)|^p\,\mathrm dt\Biggr)^{1/p}&\leq 
 \tfrac{p}{p-1}
 \|\kappa \|_{p}.
 \end{align} 
\end{lemma}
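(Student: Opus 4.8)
The plan is to prove Lemma~\ref{lemma:55} as the classical one-dimensional Hardy inequality on $(0,1)$ applied componentwise. Since the $p$-norm on $X^p$ is built from the $L^p$-norms of the derivatives $\dot\kappa_i$ and the left-hand side of \eqref{eq:55} is a sum over $i$ of $L^p$-norms of $t^{-1}\kappa_i$, it suffices to establish the scalar estimate
\begin{equation*}
\Bigl(\int_0^1 |t^{-1}u(t)|^p\,\mathrm dt\Bigr)^{1/p}
\le
\tfrac{p}{p-1}\Bigl(\int_0^1|\dot u(t)|^p\,\mathrm dt\Bigr)^{1/p}
\end{equation*}
for every $u\in W^{1,p}_0((0,1);\mathbb R)$, and then sum the $p$-th powers over $i=1,\dots,d$ and take the $p$-th root. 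So I would reduce immediately to the scalar case.

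For the scalar inequality, first I would note that since $u\in W^{1,p}_0$ vanishes at the endpoints (in particular at $0$), one has the representation $u(t)=\int_0^t\dot u(r)\,\mathrm dr$ for a.e.\ $t$. Then $|t^{-1}u(t)|\le t^{-1}\int_0^t|\dot u(r)|\,\mathrm dr$, which is exactly the Hardy operator applied to $|\dot u|$. The standard proof of Hardy's inequality on $(0,\infty)$ gives the bound $\tfrac{p}{p-1}$ for the $L^p\to L^p$ norm of this operator; restricting to functions supported in $(0,1)$ only decreases the norm, so the same constant works on $(0,1)$. Concretely, I would either invoke Hardy's inequality directly, or reproduce its short proof: write $F(t)=\int_0^t|\dot u(r)|\,\mathrm dr$, integrate $\int_0^1 t^{-p}F(t)^p\,\mathrm dt$ by parts using $\tfrac{\mathrm d}{\mathrm dt}(-\tfrac{1}{p-1}t^{1-p})=t^{-p}$, noting that the boundary term at $t=0$ vanishes because $F(t)^p=o(t^{p-1})$ (as $F(t)\le t^{1-1/p}\|\dot u\|_p$), and the boundary term at $t=1$ has a favourable sign, so that
\begin{equation*}
\int_0^1 t^{-p}F(t)^p\,\mathrm dt
\le
\tfrac{p}{p-1}\int_0^1 t^{1-p}F(t)^{p-1}|\dot u(t)|\,\mathrm dt,
\end{equation*}
and finally apply H\"older's inequality with exponents $\tfrac{p}{p-1}$ and $p$ to the right-hand side, absorbing the factor $\bigl(\int_0^1 t^{-p}F(t)^p\bigr)^{(p-1)/p}$ into the left-hand side.

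There is essentially no serious obstacle here; the only points requiring a little care are the justification that the boundary term at $t=0$ in the integration by parts vanishes — which follows from the membership $u\in W^{1,p}_0$ and the elementary estimate $|u(t)|=|F(t)|\le t^{1-1/p}\|\dot u\|_p$ — and the observation that the scalar estimate for the monotone majorant $F$ dominates the estimate for $|u|$ itself since $|u(t)|\le F(t)$. After that, summing over the $d$ components and taking $p$-th roots is immediate and gives the stated form \eqref{eq:55} with the constant $\tfrac{p}{p-1}$. In fact, since this lemma is a completely standard tool, I would likely just cite Hardy's inequality and state that the vector-valued version follows by applying it to each component, keeping the written proof to a couple of lines.
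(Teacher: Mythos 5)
Your proposal is correct and follows exactly the route the paper takes: reduce to the scalar case $d=1$ and invoke (or reprove) the classical Hardy inequality on $(0,1)$, which the paper states explicitly while omitting the details you supply. The componentwise reduction, the representation $u(t)=\int_0^t\dot u$, and the integration-by-parts argument with the boundary analysis are all sound.
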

\begin{proof}
{
The assertion follows from the one for $d=1$, which is the well-known Hardy inequality. 
We omit the details. 
}
\end{proof}

\begin{lemma}\label{lem:Holder}
Let $p_1,\dots,p_n\in [1,\infty]$, $n\in\mathbb N$, satisfy 
$p_1^{-1}+\dots+ p_n^{-1}= 1$. 
Then for any $f_i\in L^{p_i}(0,1)$, $i=1,\dots,n$, one has 
 \begin{equation}
 \label{eq:Holder}
 \int_0^1|f_1(t)|\cdots |f_n(t)|\,\d t 
\leq \|f_1\|_{L^{p_1}}\cdots \|f_n\|_{L^{p_n}}.
 \end{equation}
\end{lemma}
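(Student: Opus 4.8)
\textbf{Proof plan for the generalized H\"older inequality (Lemma~\ref{lem:Holder}).}

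The plan is to reduce the $n$-factor statement to the classical two-factor H\"older inequality and then induct on $n$. First I would dispose of the base case $n=1$, where the hypothesis forces $p_1=1$ and the claimed inequality $\int_0^1|f_1|\,\d t\le \|f_1\|_{L^1}$ is an equality. For $n=2$ the inequality is exactly the standard H\"older inequality on the finite measure space $(0,1)$ with conjugate exponents $p_1,p_2$ satisfying $p_1^{-1}+p_2^{-1}=1$; this I would simply quote. One should also record the trivial cases where some $p_i=\infty$ (then the corresponding factor is pulled out in sup-norm and the remaining exponents still sum to $1$), and where some $p_i=1$ (forcing all other exponents to be $\infty$), so that the inductive step below may assume all exponents are finite and $>1$.

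For the inductive step, assume the statement holds for $n-1$ factors and let $p_1,\dots,p_n\in(1,\infty)$ with $\sum_{i=1}^n p_i^{-1}=1$. Set $q$ by $q^{-1}=p_2^{-1}+\dots+p_n^{-1}=1-p_1^{-1}$, so that $p_1$ and $q$ are conjugate; apply the two-factor H\"older inequality to the pair $|f_1|$ and $|f_2|\cdots|f_n|$ to get
\begin{equation*}
\int_0^1|f_1|\cdots|f_n|\,\d t\le \|f_1\|_{L^{p_1}}\,\bigl\||f_2|\cdots|f_n|\bigr\|_{L^{q}}.
\end{equation*}
Next I would rewrite $\bigl\||f_2|\cdots|f_n|\bigr\|_{L^q}=\bigl\|\,|f_2|^q\cdots|f_n|^q\,\bigr\|_{L^1}^{1/q}$ and apply the inductive hypothesis to the $(n-1)$ functions $|f_i|^q\in L^{p_i/q}(0,1)$, $i=2,\dots,n$; this is legitimate because $\sum_{i=2}^n (q/p_i)=q\cdot q^{-1}=1$, so the exponents $p_i/q$ are an admissible tuple (and each $p_i/q\ge 1$ since $q^{-1}\ge p_i^{-1}$). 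This yields $\bigl\||f_2|\cdots|f_n|\bigr\|_{L^q}\le\prod_{i=2}^n\bigl\||f_i|^q\bigr\|_{L^{p_i/q}}^{1/q}=\prod_{i=2}^n\|f_i\|_{L^{p_i}}$, and combining gives the claim.

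There is no real obstacle here; the only points requiring a moment's care are the bookkeeping of exponents (checking $\sum (q/p_i)=1$ and $p_i/q\ge1$ in the inductive step) and the handling of the degenerate cases $p_i\in\{1,\infty\}$, which must be separated out before the division by $p_1^{-1}$ and the rescaling $|f_i|\mapsto|f_i|^q$ are performed. Everything reduces to the classical two-factor H\"older inequality, which is standard, so the proof is essentially an induction with routine verification of the arithmetic of conjugate exponents.
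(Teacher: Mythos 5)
Your proof is correct and follows exactly the route the paper indicates: the paper's proof is the one-line remark that the assertion ``follows easily by repeated application of the familiar H\"older's inequality,'' with details omitted, and your induction on $n$ via the two-factor case is precisely that repeated application. The exponent bookkeeping and the handling of the degenerate cases $p_i\in\{1,\infty\}$ are all in order.
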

\begin{proof}
The assertion follows easily by repeated application of the familiar H\"older's inequality.
We omit the details.
\end{proof}

The below lemma will be useful in a reduction procedure in the proof of Proposition~\ref{prop:main result2bbf}, 
see Steps~I, III and IV there. 
\begin{lemma}\label{lemma:230221}
For any $\delta\in (0,1)$ 
take a sufficiently small neighbourhood $U\subseteq\mathcal M_d^2$ of $I_d$.
Then it follows that uniformly in 
$p,q\in [1+\delta,1+\delta^{-1}]$ with $p^{-1}+q^{-1}=1$, $G\in U$, $x\in\mathbb R^d$, 
$\mu\in X\cap X^p$ and $\nu\in X\cap X^q$, the
corresponding $\gamma\in X$ from Theorem \ref{thm:main result} obeys 
\begin{equation*}
\bigl|\bigl\langle (\partial^2_\kappa \mathcal E)(x,\gamma)-2I_d;\mu,\nu\bigr\rangle\bigr|
\leq \tfrac12\|\mu\|_{p}\|\nu\|_{q}.
 \end{equation*} 
\end{lemma}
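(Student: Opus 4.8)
The plan is to estimate the difference $\langle(\partial^2_\kappa\mathcal E)(x,\gamma)-2I_d;\mu,\nu\rangle$ term by term using the explicit formula \eqref{hessianGb} for the second derivative of the energy functional, evaluated at the geodesic curve $\kappa=\gamma$. Writing $G=I_d+(G-I_d)$, the leading part $2\int_0^1\dot\mu\cdot\dot\nu\,\d t$ exactly cancels the $2I_d$, so we are left with four types of terms: one of the form $\int_0^1\dot\mu\cdot(G-I_d)(\gamma_x)\dot\nu\,\d t$, two of the form $\int_0^1\dot\mu\cdot(\nu_i\partial_iG)(\gamma_x)\dot\gamma_x\,\d t$ (and the symmetric one with $\mu\leftrightarrow\nu$), and one of the form $\int_0^1\dot\gamma_x\cdot(\mu_i\nu_j\partial_i\partial_jG)(\gamma_x)\dot\gamma_x\,\d t$.

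The key mechanism is that every term carries at least one factor of $G-I_d$ or a derivative of $G$, and by the remark following the definition of $\mathcal M^l_d$, taking $R$ large makes $\|G-I_d\|_2$ as small as we please; so the whole expression is bounded by $\varepsilon$ times a product of norms of $\mu$, $\nu$, $\gamma$, and the bound $\tfrac12\|\mu\|_p\|\nu\|_q$ follows once we absorb the constants. Concretely, I would handle the first term directly by H\"older (Lemma~\ref{lem:Holder}) with exponents $p,q$: $|\int_0^1\dot\mu\cdot(G-I_d)\dot\nu\,\d t|\le\|G-I_d\|_0\|\dot\mu\|_{L^p}\|\dot\nu\|_{L^q}=\|G-I_d\|_0\|\mu\|_p\|\nu\|_q$. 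For the two middle terms, since $G=(1-\lambda^{-1}\chi_RV)I_d$ is scalar, $\partial_iG$ is a scalar function times $I_d$, so $\dot\mu\cdot(\nu_i\partial_iG)(\gamma_x)\dot\gamma_x=(\nu_i\partial_i(\cdots))\,(\dot\mu\cdot\dot\gamma_x)$; here the extra factor $\nu_i$ (no derivative) must be controlled, and this is where the Hardy inequality (Lemma~\ref{lemma:55}) enters: one writes $\nu_i(t)=t\cdot t^{-1}\nu_i(t)$ and pairs $t^{-1}\nu_i$ (which is in $L^q$ with norm $\le\tfrac{q}{q-1}\|\nu\|_q$) against the remaining factors, using that $\|\dot\gamma_x\|_{L^\infty}\lesssim|x|$ from \eqref{eq:estmin} and that $|\gamma_x(t)|\sim t|x|$, so $|x|^{|\alpha|}$-type weights on $\partial_iG(\gamma_x)$ are bounded. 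A three-factor H\"older with exponents $p,q,\infty$ then gives the bound $C\|\nabla G\|_1\|\mu\|_p\|\nu\|_q$, again small for large $R$. The last term similarly uses two Hardy inequalities to extract $t^{-1}\mu_i$ and $t^{-1}\nu_j$ and pairs them against $\|\partial_i\partial_jG\|_0$ (with the $\langle x\rangle^2$ weight cancelling $|\dot\gamma_x|^2\sim|x|^2$ and the $t^2$ from the two Hardy factors), giving $C\|G\|_2\|\mu\|_p\|\nu\|_q$.

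Collecting, $|\langle(\partial^2_\kappa\mathcal E)(x,\gamma)-2I_d;\mu,\nu\rangle|\le C_\delta\,\|G-I_d\|_2\,\|\mu\|_p\|\nu\|_q$ with $C_\delta$ depending only on $\delta$ through the Hardy constants $\tfrac{p}{p-1},\tfrac{q}{q-1}\le 1+\delta^{-1}$ and the ellipticity bounds on $U$; choosing $U$ small enough (equivalently $R$ large enough) that $C_\delta\|G-I_d\|_2\le\tfrac12$ finishes the proof. The main obstacle I anticipate is purely bookkeeping: verifying that in the middle and last terms the polynomial-in-$|x|$ weights coming from $\partial^\alpha G(\gamma_x)=\mathcal O(\langle\gamma_x\rangle^{-\sigma-|\alpha|})$, the factors $|\dot\gamma_x|\le C|x|$, and the powers of $t$ produced by the Hardy inequalities all balance so that no unbounded factor of $|x|$ survives — this uses crucially that $\chi_RV$ is supported in $\{|x|\ge 4R/3\}$ and that on that set $|\gamma_x(t)|$ is comparable to $t|x|\gtrsim tR$, together with the extra decay $\sigma>0$. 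None of the individual estimates is deep, but one must be careful that the constant genuinely does not depend on $R$, $p$, $q$, $\lambda$, or $x$, only on $\delta$ and the fixed neighbourhood $U$.
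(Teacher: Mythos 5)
Your proposal is correct and follows essentially the same route as the paper's proof: expand via \eqref{hessianGb} at $\kappa=\gamma$, cancel $2I_d$ against the leading term $2\int\dot\mu\cdot\dot\nu\,\d t$, bound the remaining terms using \eqref{eq:estmin}, the generalized H\"older inequality and the Hardy inequality \eqref{eq:55} to obtain $C_\delta\|G-I_d\|_2\|\mu\|_p\|\nu\|_q$, and then shrink $U$. The only cosmetic difference is that the paper needs no support property of $\chi_RV$ (nor the scalar form of $G$): the weighted norm \eqref{condoz} already gives $|\partial^\alpha G(\gamma_x(t))|\lesssim\|G-I_d\|_2\langle tx\rangle^{-|\alpha|}$ directly, which combined with $|\dot\gamma_x|\lesssim|x|$ yields the $t^{-1}$ factors the Hardy inequality absorbs.
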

\begin{proof}
Take any $\delta\in (0,1)$. 
We first let $U\subseteq\mathcal M^2_d$ be small enough that 
not only Theorem~\ref{thm:main result} is available,
but also we can find $a,b>0$ as in \eqref{ellcond} uniformly in $G\in U$. 
Then by \eqref{hessianGb}, 
\eqref{eq:estmin}, H\"older's inequality and Lemma~\ref{lemma:55}
we can bound 
 \begin{align*}
\bigl|\bigl\langle (\partial^2_\kappa \mathcal E)(x,\gamma)-2I_d;\mu,\nu\bigr\rangle\bigr|
&\leq 
\int_0^1
\Bigl|
2\dot \mu\cdot (G-I_d)\dot \nu
+2 \dot \mu \cdot (\nu_i\partial_i G) \dot\gamma_x
\\&\qquad\quad
+2\dot \nu\cdot(\mu_i\partial_i G)\dot\gamma_x 
+ \dot \gamma_x \cdot(\mu_i\nu_j\partial_i\partial_j G)\dot \gamma_x\Bigr|\,\d t
\\&
\le 
C_1\|G-I_d\|_2
\int_0^1
\Bigl(
|\dot \mu||\dot \nu|
+|\dot \mu||\nu|\langle tx\rangle^{-1} |x|
\\&\qquad\qquad\qquad\qquad\ 
+|\dot \nu||\mu|\langle tx\rangle^{-1} |x|
+|\mu||\nu|\langle tx\rangle^{-2}|x|^2\Bigr)\,\d t
\\&
\le 
C_2\|G-I_d\|_2\|\mu\|_p\|\nu\|_q
, 
\end{align*}
where $C_1,C_2>0$ only depend on the constants $\delta,a,b$. 
Thus, possibly by letting $U$ be smaller, the assertion follows. 
\end{proof}

{With} these preparations we can now prove Proposition~\ref{prop:main result2bbf}.

\begin{proof}[Proof of Proposition~\ref{prop:main result2bbf}]
The proof proceeds by induction in $l$ with the
 interval $I\subset \R_+$ being fixed. Note that the assertion of the
 proposition amounts to the statement that for each
 $l\geq 2$ the bounds \eqref{eq:722} hold uniformly in various
 parameters. This is a statement amenable to 
 induction.

Fix any 
 $l\geq 2$ and $I\subset \R_+$ as in the proposition.
Fix also any $\delta\in(0,1)$. Then 
the conclusion of Lemma~\ref{lemma:230221} is available for 
$G=(1-\lambda^{-1}\chi_RV)I$ uniformly in $\lambda\in I$ and all
sufficiently large $R>0$.
In particular it follows 
from the classical implicit function theorem (see e.g.\
\cite[Theorem~15.1]{De} or \cite[Theorem~C.7]{Ir}) applied to the equation 
\eqref{eq:22060223} with parameter $(\lambda,x)$, that 
the mapping
\begin{align}
I\times\mathbb R^d\to X,\ \ (\lambda,x)\mapsto \gamma,
\label{eq:24050920}
\end{align}
is of class $C^{l-1}$, cf. Theorem \ref{thm:main result}.

\smallskip
\noindent
\textit{Step I.}\ 
We first let $l=2$,
and we start with the case $k=0$ and $|\alpha|=1$. 
It suffices to show that 
we have 
uniformly in $i=1,\dots,d$, $p,q\in [1+\delta,1+\delta^{-1}]$ with $p^{-1}+q^{-1}=1$, 
$\sigma'\in(0,\sigma]$ with $\sigma'p\le 1-\delta$, $(\lambda,x)\in I\times \mathbb R^d$
and $\mu\in X\cap X^q$ 
\begin{equation}
 \label{eq:712}
 \bigl|\bigl\langle
(\partial^2_\kappa \mathcal E)(\lambda,x,\gamma);\partial_i\gamma,\mu\bigr\rangle\bigr|
\leq C_1\lambda^{-1}|x|^{-\sigma'}\|\mu\|_{q}.
 \end{equation} 
In fact if \eqref{eq:712} holds true, 
 we obtain in
combination with Lemma~\ref{lemma:230221} that 
\begin{align}
\begin{split}
|\langle \partial_i\gamma,\mu\rangle|
&\le 
\tfrac12
\bigl|\bigl\langle 2I_d-(\partial^2_\kappa \mathcal E)(\lambda,x,\gamma);\partial_i\gamma,\mu\bigr\rangle\bigr|
+
\tfrac12
\bigl|\bigl\langle (\partial^2_\kappa \mathcal E)(\lambda,x,\gamma);\partial_i\gamma,\mu\bigr\rangle\bigr|
\\&
\le 
\tfrac14\|\partial_i\gamma\|_{p}\|\mu\|_{q}
+
\tfrac{C_1}2\lambda^{-1}|x|^{-\sigma'}\|\mu\|_{q}.
\end{split}
\label{eq:230305}
\end{align}
Note that any $\mu\in X\cap X^q$ takes the form
 $\mu(t)=\int_0^t\parb{g(s)-\int_0^1g(\tau)\d\tau}\d s$ with $g\in
L^2((0,1);\R)^d\cap L^q((0,1);\R)^d$, and vice versa. Using arbitrary $g$ in
this class, 
 the standard {$L^p$-$L^q$ duality} argument and density of
 $L^2\cap L^q$ in $ L^q$ allow us to compute
 \begin{align*}
 \|\partial_i\dot\gamma\|_{L^p}&=\sup_{\|g\|_{L^q}\leq 1}\,
|\inp{\partial_i\dot\gamma,g}_{L^2}|=\sup_{\|g\|_{L^q}\leq 1}\,
|\inp{\partial_i\gamma,\mu_g}|;\\
&\mu_g(t)=\int_0^t\parbb{g(s)-\int_0^1g(\tau)\d\tau}\d s,\quad t\in (0,1). 
 \end{align*} Noting that
 \begin{equation*}
 \|\mu_g\|_{q}=\|\dot\mu_g\|_{L^q}\leq 2 \|g\|_{L^q},
 \end{equation*}
it now follows from \eqref{eq:230305} that
\[
 \|\partial_i\gamma\|_{p}
\le 
\tfrac12\|\partial_i\gamma\|_{p}
+
C_1\lambda^{-1}|x|^{-\sigma'},
\]
yielding the assertion \eqref{eq:722} for $k=0$ and $|\alpha|=1$ with $C=C_2=2C_1$.

\smallskip
\noindent
\textit{Step II.}\ 
Here we prove \eqref{eq:712}. For that 
we first claim that for any $\mu\in X$ 
\begin{align}\label{eq:Deriv1}
 \begin{split}
\bigl\langle (\partial^2_\kappa \mathcal E)(\lambda,x,\gamma);\partial_i\gamma,\mu\bigr\rangle
 ={}&\int_0^1
\Bigl(
2\dot \mu \cdot (I_d-G(\lambda,\gamma_x)) e_i
-2t\dot \mu \cdot (\partial_iG)(\lambda,\gamma_x) \dot\gamma_x
\\&\qquad{}
-2\dot\gamma_x \cdot(\mu_j\partial_j G)(\lambda,\gamma_x)e_i
\\&\qquad{}
-t\dot\gamma_x \cdot(\mu_j\partial_i \partial_jG)(\lambda,\gamma_x)\dot\gamma_x
\Bigr)\,\d t.
 \end{split} 
\end{align}
In fact, if we differentiate \eqref{eq:22060223} in $x_i$, it follows that 
for any $\mu\in X$
\begin{align}
\bigl\langle(\partial_i\partial_\kappa \mathcal E)(\lambda,x,\gamma),\mu\bigr\rangle
+\bigl\langle (\partial^2_\kappa \mathcal E)(\lambda,x,\gamma);\partial_i\gamma,\mu\bigr\rangle
=0.
\label{eq:230306}
\end{align}
On the other hand, also differentiating \eqref{eq:Deriv} in $x_i$, 
we have for any $\kappa, \mu\in X$
 \begin{align*}
\bigl\langle \partial_i\partial_\kappa \mathcal E(\lambda,x,\kappa),\mu\bigr\rangle
 =\int_0^1
\Bigl(&
2\dot \mu \cdot G(\lambda,\kappa_x) e_i
+2t\dot \mu \cdot (\partial_iG)(\lambda,\kappa_x) \dot\kappa_x
\\&{}
+2\dot\kappa_x \cdot(\mu_j\partial_j G)(\lambda,\kappa_x)e_i
+t\dot\kappa_x \cdot(\mu_j\partial_i \partial_jG)(\lambda,\kappa_x)\dot\kappa_x
\Bigr)\,\d t.
\end{align*}
We obtain \eqref{eq:Deriv1} by 
combining the above identities and using that $\mu(0)=\mu(1)=0$. 

Next we estimate the right-hand side of \eqref{eq:Deriv1}. 
By using \eqref{eq:estmin}, H\"older's inequality and the Hardy inequality \eqref{eq:55}
we have uniformly in the relevant parameters
\begin{align*}
\bigl|\bigl\langle (\partial^2_\kappa \mathcal E)(\lambda,x,\gamma);
\partial_i\gamma,\mu\bigr\rangle\bigr|
&\le 
C_3\lambda^{-1}\int_0^1
\Bigl(
|\dot \mu| |tx|^{-\sigma'}
+t|\dot \mu| |tx|^{-1-\sigma'}|x|
\\&\phantom{{}={}}\qquad\qquad\quad{}
+|\mu||tx|^{-1-\sigma'}|x| 
+t|\mu||tx|^{-2-\sigma'}|x|^2
\Bigr)\,\d t
\\&\le 
C_4\lambda^{-1}|x|^{-\sigma'}\int_0^1
\bigl(
|\dot \mu| t^{-\sigma'}
+|\mu|t^{-1-\sigma'} 
\bigr)\,\d t
\\&
\le 
C_5\lambda^{-1}|x|^{-\sigma'}\|\mu\|_q.
\end{align*}
We have shown \eqref{eq:712} and hence indeed \eqref{eq:722} for $k=0$ and $|\alpha|=1$.

\smallskip
\noindent
\textit{Step III.}\ 
Next we prove the assertion for $k=1$ and $|\alpha|=0$
along the lines of Steps~I and II. 
In fact, we can first reduce it to proving 
\begin{equation}
 \label{eq:712b}
 \bigl|\bigl\langle
(\partial^2_\kappa \mathcal E)(\lambda,x,\gamma);\partial_\lambda\gamma,\mu\bigr\rangle\bigr|
\leq C_6\lambda^{-2}|x|^{1-\sigma'}\|\mu\|_{q}
 \end{equation} 
uniformly in the relevant parameters. 
The reasoning is the same as in Step I with Lemma~\ref{lemma:230221}, 
and we omit it. 
Then in order to prove \eqref{eq:712b} we deduce the
 following expression valid for any $\mu\in X$: 
\begin{align}\label{eq:Deriv1b}
 \begin{split}
\bigl\langle (\partial^2_\kappa \mathcal E)(\lambda,x,\gamma);\partial_\lambda\gamma,\mu\bigr\rangle
 ={}&\int_0^1
\bigl(
-2\dot \mu \cdot (\partial_\lambda G)(\lambda,\gamma_x) \dot\gamma_x
\\&\qquad{}
-\dot\gamma_x \cdot(\mu_j\partial_j\partial_\lambda G)(\lambda,\gamma_x)\dot\gamma_x
\bigr)\,\d t.
 \end{split} 
\end{align}
Again, this formula follows in a manner similar to the first part of Step~II
(similar, but actually slightly simpler),
and we omit the proof. 
Then we bound the right-hand side of \eqref{eq:Deriv1b} 
by using \eqref{eq:estmin}, H\"older's inequality and the Hardy inequality \eqref{eq:55}
\begin{align*}
\bigl|\bigl\langle (\partial^2_\kappa \mathcal E)(\lambda,x,\gamma);
\partial_\lambda\gamma,\mu\bigr\rangle\bigr|
&\le 
C_7\lambda^{-2}\int_0^1
\bigl(
|\dot \mu| |tx|^{-\sigma'}|x|
+|\mu||tx|^{-1-\sigma'}|x|^2
\bigr)\,\d t
\\&\le 
C_8\lambda^{-2}|x|^{1-\sigma'}\int_0^1
\bigl(
|\dot \mu| t^{-\sigma'}
+|\mu|t^{-1-\sigma'} 
\bigr)\,\d t
\\&
\le 
C_9\lambda^{-2}|x|^{1-\sigma'}\|\mu\|_q.
\end{align*}
This amounts to \eqref{eq:712b}, and we have shown \eqref{eq:722} for $k=1$ and $|\alpha|=0$.

\smallskip
\noindent
\textit{Step IV.}\ 
From here to the end of the proof we let $l\ge 3$. 
It suffices to discuss only the case $k+|\alpha|=l-1$
since the cases $k+|\alpha|=1,\dots, l-2$ follow by the induction hypothesis. 
In the following let for short 
\[
\beta=(k,\alpha)\in\mathbb N_0^{d+1},\quad 
|\beta|=k+|\alpha|=l-1,\quad 
\partial^\beta=\partial_{\lambda,x}^\beta=\partial_\lambda^k\partial_x^\alpha.
\]
Here we only note that the proof is reduced to verifying 
\begin{equation}
 \label{eq:10e}
 \bigl|\bigl\langle
(\partial^2_\kappa \mathcal E)(\lambda,x,\gamma);\partial^\beta\gamma,\mu\bigr\rangle\bigr|
\leq C_{1}\lambda^{-1-k}|x|^{2-m'(l)+k}\|\mu\|_{q}
\end{equation} 
uniformly in $\mu\in X\cap X^q$ and the relevant parameters. 
This is indeed a valid reduction, which may be seen by mimicking
Step I. Again we omit the details.

\smallskip
\noindent
\textit{Step V.}\ 
To show \eqref{eq:10e} we proceed in parallel to Steps II and III,
 differentiating \eqref{eq:22060223}. Indeed repeated differentiation of \eqref{eq:22060223}
in $(\lambda,x)$ yields the Fa\`a di Bruno formula 
\begin{equation}\label{eq:77}
\begin{split}
&\bigl\langle(\partial^2_\kappa \mathcal E)(\lambda,x,\gamma);\partial^\beta\gamma,\mu\bigr\rangle
\\& =\sum_{n,\beta^0,\dots,\beta^n}
 C_*
\bigl\langle(\partial^{\beta^0}\partial^{n+1}_\kappa\mathcal E)(\lambda,x,\gamma);
\partial^{\beta^1}\gamma,\dots, \partial^{\beta^n}\gamma, \mu\bigr\rangle,
\end{split}
\end{equation} 
where the indices $n\in \mathbb N_0$, $\beta^i=(k^i,\alpha^i)\in \mathbb N_0\times\mathbb N_0^d$, 
$i=0,\dots,n$, run over 
\begin{align}
\begin{split}
&n=0,\dots,l-1,\quad \beta^0+\dots+\beta^n=\beta
,\quad 
1\le |\beta^i|\le l-2\ \ \text{for }i=1,\dots,n. 
\end{split}
\label{eq:2303090}
\end{align}
Here and below combinatorial constants are denoted simply by $C_*\in\mathbb \N$
without distinction.
 
Let us further write down a detailed expression of the summand of \eqref{eq:77}. 
Directly differentiating the definition \eqref{ener}, 
we can compute $\partial^{n+1}_\kappa \mathcal E(\lambda,x,\kappa)$ as 
\begin{align*}
\bigl\langle \partial^{n+1}_\kappa \mathcal E(\lambda,x,\kappa);\mu^1,\dots,\mu^{n+1}\bigr\rangle 
& =
\sum_{|\eta|=n+1}C_*
\int_0^1(\partial_x^\eta g_{ab})(\dot\kappa_x)_a (\dot\kappa_x)_b
\mu^1_{c_1}\cdots\mu^{n+1}_{c_{n+1}}\,\d t
\\&\phantom{{}={}}{}
 +\sum_{|\eta|=n}C_*\int_0^1(\partial_x^\eta g_{ab}) (\dot\kappa_x)_b\mu^1_{c_1}\cdots
\dot\mu^i_a\cdots\mu^{n+1}_{c_{n+1}}\,\d t
\\&\phantom{{}={}}{}
 +\sum_{|\eta|=n-1}C_*\int_0^1(\partial_x^\eta g_{ab}) 
\mu^1_{c_1}\cdots\dot\mu^i_a\cdots\dot\mu^j_b\cdots\mu^{n+1}_{c_{n+1}}\,\d t
 .
\end{align*}
Here and
 henceforth $\dot\mu^i_a$ and $\dot\mu^j_b$ replace the corresponding factors, 
and for short we omit appropriate summations in 
 $a,b,c_1,\dots,c_{n+1},i,j$ as well as a
 specification of 
 the combinatorial constants $C_*\in\N$. Also we abbreviate $(\partial_x^\eta g_{ab})(\kappa_x)=(\partial_x^\eta g_{ab})$. 
We further proceed, applying $\partial^{\beta^0}$, as 
\begin{align*}
&\bigl\langle \partial^{\beta^0}\partial^{n+1}_\kappa \mathcal E(\lambda,x,\kappa);
\mu^1,\dots,\mu^{n+1}\bigr\rangle 
\\& 
=
\sum_{\genfrac{}{}{0pt}{}{|\eta|=n+1,}{e_a+e_b\le \alpha^0}}C_*
\int_0^1t^{|\alpha^0|-2}\bigl(\partial^{(k^0,\alpha^0-e_a-e_b+\eta)} g_{ab}\bigr)
\mu^1_{c_1}\cdots\mu^{n+1}_{c_{n+1}}\,\d t
\\&\phantom{{}={}}{}
 +\sum_{\genfrac{}{}{0pt}{}{|\eta|=n+1,}{e_b\le \alpha^0}}C_*
\int_0^1t^{|\alpha^0|-1}\bigl(\partial^{(k^0,\alpha^0-e_b+\eta)} g_{ab}\bigr)(\dot\kappa_x)_a 
\mu^1_{c_1}\cdots\mu^{n+1}_{c_{n+1}}\,\d t
\\&\phantom{{}={}}{}
 +\sum_{|\eta|=n+1}C_*
\int_0^1t^{|\alpha^0|}\bigl(\partial^{(k^0,\alpha^0+\eta)} g_{ab}\bigr)(\dot\kappa_x)_a (\dot\kappa_x)_b
\mu^1_{c_1}\cdots\mu^{n+1}_{c_{n+1}}\,\d t
\\&\phantom{{}={}}{}
 +\sum_{\genfrac{}{}{0pt}{}{|\eta|=n,}{e_b\le\alpha^0}}C_*
\int_0^1t^{|\alpha^0|-1}\bigl(\partial^{(k^0,\alpha^0-e_b+\eta)} g_{ab}\bigr) \mu^1_{c_1}\cdots\dot\mu^i_a\cdots\mu^{n+1}_{c_{n+1}}\,\d t
\\&\phantom{{}={}}{}
 +\sum_{|\eta|=n}C_*\int_0^1t^{|\alpha^0|}\bigl(\partial^{(k^0,\alpha^0+\eta)} g_{ab}\bigr) (\dot\kappa_x)_b\mu^1_{c_1}\cdots\dot\mu^i_a\cdots\mu^{n+1}_{c_{n+1}}\,\d t
\\&\phantom{{}={}}{}
 +\sum_{|\eta|=n-1}C_*\int_0^1t^{|\alpha^0|}\bigl(\partial^{(k^0,\alpha^0+\eta)} g_{ab}\bigr) \mu^1_{c_1}\cdots\dot\mu^i_a\cdots\dot\mu^j_b\cdots\mu^{n+1}_{c_{n+1}}\,\d t
 .
\end{align*}
We substitute $\kappa=\gamma, 
\mu^1=\partial^{\beta^1}\gamma,\dots,\mu^n=\partial^{\beta^n}\gamma$
and $\mu^{n+1}=\mu$
in the above formula, and denote the corresponding
six types of integrals on the
right-hand side simply as $I_1,\dots,I_6$, respectively.
Then it suffices to estimate each such integral in agreement with \eqref{eq:10e}. 

\smallskip
\noindent
\textit{Step VI.}\ 
Finally we bound the above $I_1,\dots,I_6$. 
They are treated similarly 
by using \eqref{eq:cond22bb}, \eqref{eq:estmin}, 
the generalized H\"older inequality \eqref{eq:Holder}, 
the Hardy inequality \eqref{eq:55} and the induction hypothesis. 
We may let $|x|\ge R$ below, see Remark~\ref{rem:230309} \ref{item:230313}.

We carefully bound $I_1$,
while we record only key steps for $I_2,\dots,I_6$ (they are treated similarly). 
Use \eqref{eq:cond22bb}, \eqref{eq:estmin} and $|\eta|=n+1$, to bound it for $|x|\ge R$ as 
\begin{align}
\begin{split}
|I_1|
&
\le 
C_{2}\lambda^{-1-k^0}
\int_0^1
t^{|\alpha^0|-2}|tx|^{-m'(|\alpha^0|+|\eta|-2)}
\bigl|\partial^{\beta^1}\gamma_{c_1}\bigr|
\cdots
\bigl|\partial^{\beta^n}\gamma_{c_n}\bigr|\bigl|\mu_{c_{n+1}}\bigr|\,\d t
\\&
\le 
C_{3}\lambda^{-1-k^0}|x|^{-m'(|\alpha^0|+n-1)}
\\&\phantom{{}={}}{}
\cdot\int_0^1
t^{-\sigma'}
\bigl|t^{-1}\partial^{\beta^1}\gamma_{c_1}\bigr|
\cdots
\bigl|t^{-1}\partial^{\beta^n}\gamma_{c_n}\bigr|
\bigl|t^{-1}\mu_{c_{n+1}}\bigr|\,\d t
.
\end{split}
\label{eq:23030916}
\end{align}
We take large $p_0,\dots,p_n\in (1,\infty)$ and small
 $\sigma'_1,\dots,\sigma'_n\in (0,\sigma]$ such that 
 \begin{equation*}
 p_0^{-1}+\dots+p_{n+1}^{-1}=1-q^{-1}=p^{-1},\quad \sigma'p_0<1,\quad 
\sigma'_1p_1<1,\ \dots,\ \sigma'_{n+1}p_{n+1}<1,
 \end{equation*}
and apply the generalized H\"older inequality \eqref{eq:Holder} and the Hardy inequality \eqref{eq:55}
to \eqref{eq:23030916}. Note that here and henceforth $\delta\in
 (0,1)$, $p\in [1+\delta,1+\delta^{-1}]$ and 
$\sigma'\in(0,\sigma]$ with $\sigma'p\le 1-\delta$ are given parameters
as in Proposition \ref{prop:main result2bbf}. 
Then we use the induction hypothesis with a
 sufficiently smaller $\delta\in (0,1)$ (to make sure that $p_i\in
 [1+\delta,1+\delta^{-1}]$ and $\sigma'p_i\le 1-\delta$) 
and with $m'_i(k)=m(k)-\sigma+\sigma_i'$.

Noting also \eqref{eq:2303090} and $|\beta^0|-k^0+n+1=|\alpha^0|+n+1\ge 2$
by $e_a+e_b\le \alpha^0$, 
it follows that for $|x|\ge R$
\begin{align}
\begin{split}
|I_1|
&\le 
C_{4}
\lambda^{-1-k^0}|x|^{2-m'(|\alpha^0|+n+1)}
\bigl\|\partial^{\beta^1}\gamma\bigr\|_{p_1}
\cdots
\bigl\|\partial^{\beta^n}\gamma\bigr\|_{p_n}\|\mu\|_{q}
\\&
\le C_{5}
\lambda^{-1-k}|x|^{2-m'(|\beta^0|-k^0+n+1)+(2-m_1'(|\beta^1|+1)+k^1)+\dots+(2-m_n'(|\beta^n|+1)+k^n)}\|\mu\|_{q}
\\&
\le C_{6}
\lambda^{-1-k}|x|^{2-m'(|\beta|-k^0+1)+k-k^0}\|\mu\|_{q}
\\&
\le C_{7}
\lambda^{-1-k}|x|^{2-m'(l)+k}\|\mu\|_{q}
.
\end{split}
\label{eq:230327}
\end{align}
This agrees with \eqref{eq:10e}, as wanted.
 
As for $I_2,\dots,I_6$, we aim at deducing a bound similar to the first line of \eqref{eq:230327},
since then the remaining arguments are essentially the
same. 
More precisely, for the cases where $|\beta^0|-k^0+n+1\ge 2$ is valid, 
this condition and the induction hypothesis suffice for the
final conclusion of \eqref{eq:230327}. If $ |\beta^0|-k^0+n+1= 1$ we
can proceed similarly to reach from the first line of
\eqref{eq:230327} to the final line of the estimation.

We can bound $I_2$ similarly as
\begin{align*}
|I_2|
&\le 
C_{8}\lambda^{-1-k^0}\int_0^1t^{|\alpha^0|-1}
|tx|^{-m'(|\alpha^0|+|\eta|-1)}|x| 
\bigl|\partial^{\beta^1}\gamma_{c_1}\bigr|\cdots\bigl|\partial^{\beta^n}\gamma_{c_n}\bigr|\bigl|\mu_{c_{n+1}}\bigr|\,\d t
\\&
\le 
C_{9}\lambda^{-1-k^0}
|x|^{1-m'(|\alpha^0|+n)}\int_0^1
t^{-\sigma'}
\bigl|t^{-1}\partial^{\beta^1}\gamma_{c_1}\bigr|\cdots\bigl|t^{-1}\partial^{\beta^n}\gamma_{c_n}\bigr|\bigl|t^{-1}\mu_{c_{n+1}}\bigr|\,\d t
\\&
\le C_{10}
\lambda^{-1-k^0}|x|^{2-m'(|\alpha^0|+n+1)}
\bigl\|\partial^{\beta^1}\gamma\bigr\|_{p_1}
\cdots
\bigl\|\partial^{\beta^n}\gamma\bigr\|_{p_n}\|\mu\|_{q}, 
\end{align*} 
and from there we proceed as in \eqref{eq:230327} (as explained above).
As for $I_3$, we bound it as 
\begin{align*}
|I_3|
&\le 
C_{11}\lambda^{-1-k^0}\int_0^1t^{|\alpha^0|}
|tx|^{-m'(|\alpha^0|+|\eta|)}|x|^2 
\bigl|\partial^{\beta^1}\gamma_{c_1}\bigr|\cdots\bigl|\partial^{\beta^n}\gamma_{c_n}\bigr|\bigl|\mu_{c_{n+1}}\bigr|\,\d t
\\&
\le 
C_{12}\lambda^{-1-k^0}
|x|^{2-m'(|\alpha^0|+n+1)}\int_0^1
|t|^{-\sigma'}
\bigl|t^{-1}\partial^{\beta^1}\gamma_{c_1}\bigr|\cdots\bigl|t^{-1}\partial^{\beta^n}\gamma_{c_n}\bigr|\bigl|t^{-1}\mu_{c_{n+1}}\bigr|\,\d t
\\&
\le C_{13}
\lambda^{-1-k^0}|x|^{2-m'(|\alpha^0|+n+1)}
\bigl\|\partial^{\beta^1}\gamma\bigr\|_{p_1}
\cdots
\bigl\|\partial^{\beta^n}\gamma\bigr\|_{p_n}\|\mu\|_{q}. 
\end{align*}
From $I_4$ there appears a factor which is either directly bounded by
$\|\partial^{\beta^*}\gamma\|_{p_*}$ 
without the Hardy bound, or alternatively a factor 
 bounded by the $L^q$-norm of $\dot\mu$. Hence typically $I_4$ and $I_5$ are bounded as 
\begin{align*}
|I_4|
&\le 
C_{14}\lambda^{-1-k^0}\int_0^1t^{|\alpha^0|-1}
|tx|^{-m'(|\alpha^0|+|\eta|-1)} 
\bigl|\partial^{\beta^1}\gamma_{c_1}\bigr|\cdots\bigl|\partial^{\beta^i}\dot\gamma_a\bigr|\cdots\bigl|\partial^{\beta^n}\gamma_{c_n}\bigr|\bigl|\mu_{c_{n+1}}\bigr|\,\d t
\\&\le 
C_{15}\lambda^{-1-k^0}|x|^{-m'(|\alpha^0|+n-1)}
\\&\phantom{{}={}}{}
\cdot\int_0^1
t^{-\sigma'} 
\bigl|t^{-1}\partial^{\beta^1}\gamma_{c_1}\bigr|
\cdots\bigl|\partial^{\beta^i}\dot\gamma_a\bigr|\cdots\bigl|t^{-1}\partial^{\beta^n}\gamma_{c_n}\bigr|\bigl|t^{-1}\mu_{c_{n+1}}\bigr|\,\d t
\\&\le 
C_{16}\lambda^{-1-k^0}|x|^{2-m'(|\alpha^0|+n+1)}
\bigl\|\partial^{\beta^1}\gamma\bigr\|_{p_1}
\cdots
\bigl\|\partial^{\beta^n}\gamma\bigr\|_{p_n}\|\mu\|_{q}
, 
\end{align*}
and 
\begin{align*}
|I_5|
&\le 
C_{17}\lambda^{-1-k_0}\int_0^1t^{|\alpha^0|}
|tx|^{-m'(|\alpha^0|+|\eta|)} |x|
\bigl|\partial^{\beta^1}\gamma_{c_1}\bigr|\cdots\bigl|\partial^{\beta^i}\dot\gamma_a\bigr|\cdots\bigl|\partial^{\beta^n}\gamma_{c_n}\bigr|\bigl|\mu_{c_{n+1}}\bigr|\,\d t
\\&\le 
C_{18}\lambda^{-1-k_0}|x|^{1-m'(|\alpha^0|+n)}
\\&\phantom{{}={}}{}
\cdot
\int_0^1
t^{-\sigma'} 
\bigl|t^{-1}\partial^{\beta^1}\gamma_{c_1}\bigr|\cdots\bigl|\partial^{\beta^i}\dot\gamma_a\bigr|\cdots\bigl|t^{-1}\partial^{\beta^n}\gamma_{c_n}\bigr|\bigl|t^{-1}\mu_{c_{n+1}}\bigr|\,\d t
\\&\le 
C_{19}\lambda^{-1-k^0}|x|^{2-m'(|\alpha^0|+n+1)}
\bigl\|\partial^{\beta^1}\gamma\bigr\|_{p_1}
\cdots
\bigl\|\partial^{\beta^n}\gamma\bigr\|_{p_n}\|\mu\|_{q}
.
\end{align*}
Finally we can typically bound $I_6$ as 
\begin{align*}
|I_6|
&\le 
C_{20}\lambda^{-1-k_0}
\\&\phantom{{}={}}{}
\cdot\int_0^1t^{|\alpha^0|}
|tx|^{-m'(|\alpha^0|+|\eta|)}
\bigl|\partial^{\beta^1}\gamma_{c_1}\bigr|\cdots
\bigl|\partial^{\beta^i}\dot\gamma_a\bigr|\cdots\bigl|\partial^{\beta^j}\dot\gamma_b\bigr|
\cdots\bigl|\partial^{\beta^n}\gamma_{c_1}\bigr|\bigl|\mu_{c_{n+1}}\bigr|\,\d t
\\&\le 
C_{21}\lambda^{-1-k_0}|x|^{-m'(|\alpha^0|+n-1)}
\\&\phantom{{}={}}{}
\cdot\int_0^1
t^{-\sigma'} 
\bigl|t^{-1}\partial^{\beta^1}\gamma_{c_1}\bigr|\cdots
\bigl|\partial^{\beta^i}\dot\gamma_a\bigr|\cdots\bigl|\partial^{\beta^j}\dot\gamma_b\bigr|
\cdots\bigl|t^{-1}\partial^{\beta^n}\gamma_{c_n}\bigr|\bigl|t^{-1}\mu_{c_{n+1}}\bigr|\,\d t
\\&\le 
C_{22}\lambda^{-1-k^0}|x|^{2-m'(|\alpha^0|+n+1)}
\bigl\|\partial^{\beta^1}\gamma\bigr\|_{p_1}
\cdots
\bigl\|\partial^{\beta^n}\gamma\bigr\|_{p_n}\|\mu\|_{q}
 .
\end{align*}
Therefore we are done.
\end{proof}

We complete the proof of Theorem~\ref{thm:main result2} as follows.

\begin{proof}[{Completion of the proof of Theorem~\ref{thm:main result2}}]
Fix a closed interval $I\subset \R_+$ and $p\in (1,\infty)$ with $\sigma p<1$, 
and let $R>0$ be large as in Proposition~\ref{prop:main result2bbf}. 
For any $k+|\alpha|\le l$ it suffices {to argue for $|x|> R$ since $s$} vanishes for $|x|\le R$. 
All the estimates below are tacitly understood to be uniform in $\lambda\in I$.

\smallskip
 \noindent
 \textit{Step I.}\ 
 We first show that for any $k+|\alpha|\le l-1$ and $|x|> R$
\begin{equation}
 \label{eq:722b}
 \abs{\partial_\lambda^k\partial_x^\alpha\dot\gamma(1)}\leq C_1\lambda^{-1-k}
 |x|^{2-m(k+\abs{\alpha}+1)+k}.
 \end{equation}
We may let $k+|\alpha|\ge 1$ since the case $k+|\alpha|=0$ 
follows from \eqref{eq:722b} with $k=0$ and $|\alpha|=1$ and integration. 
Now let us use {a representation} 
\begin{align}
(\partial_\lambda^k\partial_x^\alpha\dot\gamma)(1)&
=2\int^1_{1/2}
\biggl((\partial_\lambda^k\partial_x^\alpha\dot\gamma)(t)
+\int^1_t(\partial_\lambda^k\partial_x^\alpha\ddot\gamma)(\tau)
 \, \d \tau
\biggr)
\,\d t
\label{eq:9}
.
\end{align}
This is due to the fundamental theorem of calculus, 
but in fact we have to check integrability of the integrands. 

By Proposition~\ref{prop:main result2bbf} and H\"older's inequality it follows that 
\begin{equation}
 \label{eq:722ba2}
 \norm{\partial_\lambda^k\partial_x^\alpha\dot\gamma}_{L^1(1/2,1)^d}
\leq C_2\lambda^{-1-k} |x|^{2-m(k+|\alpha|+1)+k},
\end{equation}
where the norm on the left-hand side denotes the $L^1$-norm 
of a function on the interval $(1/2,1)$ with values in $\mathbb R^d$. 

On the other hand, as for
$\partial_\lambda^k\partial_x^\alpha\ddot\gamma$, we compute it using
the expression 
\begin{align}
\label{eq:8}
\ddot \gamma_n 
=-
\tfrac12g^{nl}\bigl(\partial_ig_{jl}+\partial_jg_{il}-\partial_lg_{ij}
\bigr)(\dot \gamma_x)_i(\dot \gamma_x)_j
=:-\Gamma^n_{ij}(\dot \gamma_x)_i(\dot \gamma_x)_j
\end{align}
following from \eqref{eq:3b}.
By \eqref{eq:8}, the product rule and the chain rule of
differentiation 
we can write 
\begin{align}
\partial_\lambda^k\partial_x^\alpha\ddot\gamma_n
&=
\sum
C_*
\bigl(\partial_\lambda^{k^0}\partial_x^\eta\Gamma^n_{ij}\bigr)
\Biggl(\prod_{a=1}^d\prod_{b=1}^{\eta_a}\bigl(\partial_\lambda^{k^{ab}}\partial_x^{\alpha^{ab}}\gamma_x\bigr)_{a}\Biggr)
\bigl(\partial_\lambda^{k^1}\partial_x^{\alpha^1}\dot\gamma_x\bigr)_i
\bigl(\partial_\lambda^{k^2}\partial_x^{\alpha^2}\dot\gamma_x\bigr)_j
,
\label{eq:240720}
\end{align}
where the indices $k^0,k^{ab},k^1,k^2\in \mathbb N_0$ and 
$\eta,\alpha^{ab},\alpha^1,\alpha^2\in\mathbb N_0^d$ run over 
\begin{align}
\begin{split}
&k^{ab}+|\alpha^{ab}|\ge 1,\quad 
k^0+\sum_{a=1}^d\sum_{b=1}^{\eta_a}k^{ab}+k^1+k^2=k
,
\\&
|\eta|=0,\dots,k+|\alpha|,\quad 
\sum_{a=1}^d\sum_{b=1}^{\eta_a}\alpha^{ab}+\alpha^1+\alpha^2
=\alpha
.
\end{split}
\label{eq:23032822}
\end{align}
Note we read
$\prod_{b=1}^{\eta_a}\bigl(\partial_\lambda^{k^{ab}}\partial_x^{\alpha^{ab}}\gamma_x\bigr)_{a}=1$
if $\eta_a=0$. 

We recall that $\gamma_x(\tau)=\tau x+\gamma(\tau)$, and since
$\tau\in [1/2,1]$, we do not have to worry about inverse powers of
$\tau$ {when estimating $\partial_\lambda^{k^0}\partial_x^\eta\Gamma^n_{ij}$ by Lemma~\ref{lemma:existence}} 
(as we did in the proof of Proposition \ref{prop:main
 result2bbf}). On the other hand when taking $x$-derivatives of $\gamma_x$,
there are still contributions from differentiating the first term $\tau x$, that
in addition to the derivatives of $\gamma$ need consideration.
The latter are treated by Proposition \ref{prop:main
 result2bbf}. 
We prefer the following uniform treatment,
\begin{subequations}
\begin{align}\label{eq:esmany}
 \begin{split}
 \norm { \parb{\partial_\lambda^{k^{ab}}\partial_x^{\alpha^{ab}}\gamma_x}_a}_{L^p(1/2,1)}&\leq
 C_p
 \lambda^{-k^{ab}}|x|^{-\frac{\rho+1}2(k^{ab}+|\alpha^{ab}|-1)+k^{ab}},\\
\norm
 { \parb{\partial_\lambda^{k^1}\partial_x^{\alpha^1}\dot\gamma_x}_i}_{L^p(1/2,1)}
 &\leq
 C_p
 \lambda^{-k^{1}}|x|^{1-\frac{\rho+1}2(k^{1}+|\alpha^{1}|)+k^{1}},\\
\norm
 { \parb{\partial_\lambda^{k^2}\partial_x^{\alpha^2}\dot\gamma_x}_j}_{L^p(1/2,1)}
&\leq
 C_p \lambda^{-k^{2}}|x|^{1-\frac{\rho+1}2(k^{2}+|\alpha^{2}|)+k^{2}}. 
 \end{split}
\end{align} 
 For any term with $\eta\neq 0$ in {the expansion \eqref{eq:240720}}, say denoted by
$T$, we can show an estimate agreeing with \eqref{eq:722ba2}. In fact 
 by using \eqref{eq:estmin}, \eqref{eq:23032822}, \eqref{eq:esmany}
 with $p=\abs{\eta}+2$, the generalized H\"older
inequality \eqref{eq:Holder} (with this $p$ applied to each of the
$p$ factors), the Hardy inequality \eqref{eq:55} and
Proposition~\ref{prop:main result2bbf} we can estimate
\begin{align*}
\norm {T}_{L^1(1/2,1)}
&\le 
C_3\sum
\lambda^{-1-k^0}|x|^{-m(|\eta|+1)}
\\&\phantom{{}={}}{}
\cdot
\Biggl(\prod_{a=1}^d\prod_{b=1}^{\eta_a}\,\lambda^{-k^{ab}}|x|^{-\frac{\rho+1}2(k^{ab}+|\alpha^{ab}|-1)+k^{ab}}\Biggr)
\\&\phantom{{}={}}{}
\cdot
\lambda^{-k^{1}}|x|^{1-\frac{\rho+1}2(k^{1}+|\alpha^{1}|)+k^{1}}\,
 \lambda^{-k^{2}}|x|^{1-\frac{\rho+1}2(k^{2}+|\alpha^{2}|)+k^{2}}\\
&\le 
C_4\lambda^{-1-k}|x|^{2-m(k+|\alpha|+1)+k}.
\end{align*}
{For any term with $\eta=0$ in the expansion \eqref{eq:240720}
we use the following estimates.}
For $k^{1}+|\alpha^{1}|\geq 1$ the second  bound of
\eqref{eq:esmany} can be
replaced by a sharper bound (similar to the first bound of
\eqref{eq:esmany}),
\begin{equation}\label{eq:except}
 \norm
 { \parb{\partial_\lambda^{k^1}\partial_x^{\alpha^1}\dot\gamma_x}_i}_{L^p(1/2,1)}
 \leq C_p'
 \lambda^{-k^{1}}|x|^{-\frac{\rho+1}2(k^{1}+|\alpha^{1}|-1)+k^{1}},\quad
 k^{1}+|\alpha^{1}|\geq 1,
\end{equation} and similarly for the  third bound for 
$k^{2}+|\alpha^{2}|\geq 1$. 
\end{subequations}
Thanks to these sharper bounds   we  can then argue similarly for the case $\eta= 0$, obtaining an estimate
 for any such term also agreeing with \eqref{eq:722ba2}. This is
straightforward, and we omit the details. Hence we conclude that
\begin{equation}
\|\partial_\lambda^k\partial_x^\alpha\ddot\gamma\|_{L^1(1/2,1)^d}\leq C_5\lambda^{-1-k}|x|^{2-m(k+|\alpha|+1)+k}.
\label{eq:230311}
\end{equation}

 Clearly \eqref{eq:9} and the bounds \eqref{eq:722ba2} and
 \eqref{eq:230311} yield \eqref{eq:722b}. 

\begin{subequations}
\smallskip
 \noindent
 \textit{Step II.}\ 
Now to check the smoothness of $S$ and
 $s$ in $(\lambda,x)$ up to order $l$, it suffices for partial
 derivatives with $\alpha\neq
 0$ to
 prove the continuity of derivatives 
of $\dot\gamma_x(1)$ (or of $\dot\gamma(1)$) up to order $l-1$. Here
we use \eqref{eq:230323} and \eqref{eq:10}. However this is already
implicitly done in our treatment of 
\eqref{eq:9} in Step I, which is based on the $C^{l-1}$-smoothness
of \eqref{eq:24050920} from Proposition \ref{prop:main
 result2bbf}. For derivatives with $\alpha =0$, i.e. containing only $\lambda$-derivatives we differentiate \eqref{maxsol} in $\lambda$, leading (thanks to the stationarity condition 
$(\partial_\kappa \mathcal E)(\lambda,x,\gamma) =0$) to
\begin{equation*}
 2|x|\Phi(\partial_\lambda s)
=(\partial_ \lambda \mathcal E)(\lambda,x,\gamma)
=\lambda^{-2}\int_0^1\chi_RV\abs{\dot \gamma_x}^2 \,\d t,
\end{equation*} from which we deduce the formula
 \begin{equation}\label{eq:derl2l}
 \partial_\lambda s
=2^{-1}|x|^{-2}(1+s)^{-1}\lambda^{-2}\int_0^1\chi_RV\abs{\dot \gamma_x}^2 \,\d t. 
 \end{equation} Inductively we can differentiate
 \eqref{eq:derl2lb} up
 to 
 $l-1$ times in $\lambda$, manifestly yielding continuous expressions for
 $\partial^k_\lambda s$, $k=1, \dots, l$.

\smallskip
 \noindent
 \textit{Step III.}\ 
We are left with \eqref{eq:errestza2}. 
Integrating the expression \eqref{eq:10} using \eqref{eq:230212} and \eqref{eq:722b} for $k+\abs{\alpha}=0$,
we obtain 
\begin{align*}
 S(\lambda,x)^2=(2\lambda)\bigl(x^2+\vO\bigl(\lambda^{-1}|x|^{2-\sigma}\bigr)\bigr),
 \end{align*} 
 and hence \eqref{eq:errestza2} for $k+|\alpha|=0$. 
Next by \eqref{eq:10}, \eqref{eq:230212} and Theorem~\ref{thm:main result2} 
\eqref{item:22120919c} we can write
 \begin{align}
\nabla_x s
=|x|^{-2}(1+s)^{-1}\Bigl[
\dot\gamma(1)
-\lambda^{-1}x\chi_RV
-\lambda^{-1}\chi_RV\dot\gamma(1)
-xs(2+s)\Bigr]
. \label{eq:derl2lb}
 \end{align}
Using this representation and \eqref{eq:722b} we can inductively obtain
\eqref{eq:errestza2} for $k=0$ and all $|\alpha|\le l$. The
straightforward details are omitted. Along the same line any general
mixed derivative can be computed and estimated from
\eqref{eq:derl2lb}. Hence 
we conclude \eqref{eq:errestza2} for all
$k+|\alpha|\le l$, assuming $\alpha\neq 0$.
\end{subequations}

To treat pure $\lambda$-derivatives we use \eqref{eq:derl2l}, leading to 
 \begin{equation*}
 \abs{\partial_\lambda s}\leq C_6\lambda^{-2}\abs{x}^{-2}\int_0^1\abs{tx}^{-\sigma}\,\abs{x}^2\,\d t
=C_7 \lambda^{-2}\abs{x}^{-\sigma},
 \end{equation*} 
 showing 
\eqref{eq:errestza2} for $k=1$ and $\alpha=0$.
We can inductively show \eqref{eq:errestza2} for $|\alpha|=0$ and
all $k\le l$ by repeated $\lambda$-differentiation of 
\eqref{eq:derl2l} and check of the
resulting expressions. This is 
 more complicated than for mixed derivatives, so let us give some
 details of the proof.
 We compute $\partial^k_\lambda \int_0^1\chi_RV\abs{\dot
 \gamma_x}^2 \,\d t$ by differentiating inside the integral using the
product rule. Since $(\chi_RV)(\gamma_x(t))$ is a composition we need
the chain rule to compute $\lambda$-derivatives of this factor, more
or less as we did in Step I. This leads to multiple factors of
$\partial^k_\lambda\gamma_x(t)/t$ for which we have good
$L^p$-bounds. So the main thing is to bound the derivatives
$(\partial_x^\eta(\chi_RV))(\gamma_x(t))$ of the external factor, and
for that we mimic Step VI of the proof of Proposition \ref{prop:main
 result2bbf} estimating as
\begin{equation*}
 \abs{\partial_x^\eta(\chi_RV)}\leq C_8|tx|^{-m(|\eta|)}.
\end{equation*} In parallel to the estimation of $\partial_\lambda s$ this yields the factor
$|x|^{-m(|\eta|)}$ as well as the factor
$t^{-m(|\eta|)}=t^{-\sigma}t^{\sigma-m(|\eta|)}$. Choosing a small
enough $p>1$ the familiar generalized H\"older
inequality yields a bound in terms for the $L^1$-norm of $t^{-\sigma
 p}$ and (after a redistribution of powers of $t^{-1}$ as in Step VI)
products of various $X^{p_j}$-norms for which Proposition \ref{prop:main
 result2bbf} applies. We omit the book-keeping details.

Hence we have proven \eqref{eq:errestza2} for all $k+|\alpha|\le l$.
\end{proof}

In Sections~\ref{sec:Generalized Fourier transform} and \ref{sec:Time-dependent theory} 
we will actually use the following modification of \eqref{eq:errestza2}. 
 
 \begin{corollary}\label{cor:epsSmall}
For any closed interval $I\subset \R_+$ let $s_R\in C^l(I\times\mathbb R^d)$ be given 
as in Theorem~\ref{thm:main result2} for $R\ge R_0$, where $R_0>0$ is
taken sufficiently large.
Then there exists $C>0$ such that 
for any $R\ge R_0$, $\sigma'\in (0,\sigma]$, $k+|\alpha|\le l$ and $(\lambda,x)\in I\times\mathbb R^d$
 \begin{align*}
\bigl|\partial^k_\lambda\partial^\alpha_x s(\lambda,x)\bigr|
\le 
C R^{-\sigma+\sigma'} \lambda^{-1-k}\langle x\rangle^{-m'(k+|\alpha|)+k}
.
\end{align*}
where $m'(k)$ is given in Proposition~\ref{prop:main result2bbf}.
\end{corollary}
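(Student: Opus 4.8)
The plan is to obtain Corollary~\ref{cor:epsSmall} as an immediate consequence of the bound \eqref{eq:errestza2} already proved in Theorem~\ref{thm:main result2}, combined with the structural fact from Theorem~\ref{thm:main result2}~\eqref{item:22120919c} that $s_R(\lambda,\cdot)$ vanishes identically on $\{|x|\le R\}$. Concretely, I would fix a closed interval $I\subset\R_+$, take $R_0>0$ as in Theorem~\ref{thm:main result2}, and recall that for every $R\ge R_0$ the function $s_R\in C^l(I\times\mathbb R^d)$ exists and obeys \eqref{eq:errestza2} for all $k+|\alpha|\le l$ with a constant $C=C(I,R_0)$ that is \emph{independent of} $R\ge R_0$ (and of $\sigma'$, $\lambda$, $x$). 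This $R$-uniformity is exactly what makes the gain of the factor $R^{-\sigma+\sigma'}$ legitimate, and it is part of the assertion of Theorem~\ref{thm:main result2}~\eqref{item:22120919d}; I would rely on it rather than re-running the proof of that theorem.

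First I would dispose of the region $\{|x|\le R\}$: there $s_R(\lambda,\cdot)\equiv 0$, hence every $(\lambda,x)$-derivative vanishes and the claimed inequality holds trivially. It then remains to treat $(\lambda,x)\in I\times\mathbb R^d$ with $|x|>R$. For such $x$ one has $\langle x\rangle=(1+|x|^2)^{1/2}>|x|>R$, and since $\sigma'\in(0,\sigma]$ gives $\sigma-\sigma'\ge 0$, the monotonicity of $t\mapsto t^{-(\sigma-\sigma')}$ yields $\langle x\rangle^{-(\sigma-\sigma')}\le R^{-(\sigma-\sigma')}=R^{-\sigma+\sigma'}$. Writing $m'(k)=m(k)-\sigma+\sigma'$ and splitting off this power,
\begin{equation*}
\langle x\rangle^{-m(k+|\alpha|)+k}
=\langle x\rangle^{-(\sigma-\sigma')}\,\langle x\rangle^{-m'(k+|\alpha|)+k}
\le R^{-\sigma+\sigma'}\,\langle x\rangle^{-m'(k+|\alpha|)+k},
\end{equation*}
so that \eqref{eq:errestza2} gives
\begin{equation*}
\bigl|\partial^k_\lambda\partial^\alpha_x s_R(\lambda,x)\bigr|
\le C\,\lambda^{-1-k}\langle x\rangle^{-m(k+|\alpha|)+k}
\le C\,R^{-\sigma+\sigma'}\,\lambda^{-1-k}\langle x\rangle^{-m'(k+|\alpha|)+k},
\end{equation*}
uniformly in $R\ge R_0$, $\sigma'\in(0,\sigma]$, $k+|\alpha|\le l$ and $(\lambda,x)\in I\times\mathbb R^d$ --- which is exactly the assertion, with the corollary's constant taken to be the $C$ of \eqref{eq:errestza2}.

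I do not expect any genuine obstacle here: the corollary is bookkeeping on top of Theorem~\ref{thm:main result2}, its only arithmetic input being $\langle x\rangle>R$ on $\supp s_R$ together with $\sigma-\sigma'\ge 0$. The two points worth flagging are that one should quote the $R$-independence of the constant in \eqref{eq:errestza2} (as noted above, rather than re-deriving it), and that no new choice of $R_0$ is needed --- the $R_0$ furnished by Theorem~\ref{thm:main result2} already works, so ``taken sufficiently large'' in the statement asks for nothing beyond what that theorem provides.
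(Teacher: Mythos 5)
Your proof is correct, and it takes a genuinely different and more economical route than the paper's. The paper establishes the corollary by re-running the whole argument of the subsection (the induction behind Proposition~\ref{prop:main result2bbf} and the completion of the proof of Theorem~\ref{thm:main result2}) with the input bound $|\partial^\alpha\chi_RV|\le C_1\langle x\rangle^{-m(|\alpha|)}$ replaced by $|\partial^\alpha\chi_RV|\le C_2R^{-\sigma+\sigma'}\langle x\rangle^{-m'(|\alpha|)}$ --- a replacement justified by exactly the support observation you use, only applied one level earlier to $\chi_RV$ (supported in $\{|x|\ge 4R/3\}$) rather than to $s_R$ itself. You instead post-process the already established output \eqref{eq:errestza2}: since $\supp s_R(\lambda,\cdot)\subseteq\{|x|\ge R\}$, all derivatives vanish for $|x|\le R$, while for $|x|>R$ one has $\langle x\rangle>R$ and $\sigma-\sigma'\ge 0$, so $\langle x\rangle^{-m(k+|\alpha|)+k}=\langle x\rangle^{-(\sigma-\sigma')}\langle x\rangle^{-m'(k+|\alpha|)+k}\le R^{-\sigma+\sigma'}\langle x\rangle^{-m'(k+|\alpha|)+k}$; combined with the $R$-uniformity of the constant in \eqref{eq:errestza2}, which Theorem~\ref{thm:main result2}~(\ref{item:22120919d}) explicitly asserts, this is a valid formal implication (the corollary's right-hand side dominates that of \eqref{eq:errestza2} pointwise on the support of $s_R$). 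The only thing the paper's heavier route buys is that it would also deliver the $R^{-\sigma+\sigma'}$ gain for intermediate quantities, such as the $X^p$-norms of $\partial_\lambda^k\partial_x^\alpha\gamma$ in Proposition~\ref{prop:main result2bbf}, which cannot be read off from the final estimate; but since the corollary as stated concerns only $s$, your shortcut fully suffices.
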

\begin{proof} 
The assertion follows from the arguments of the subsection 
with all the estimates involving 
\[|\partial^\alpha\chi_RV|\le C_1 \langle x\rangle^{-m(|\alpha|)}\] 
replaced by 
\begin{align*}
|\partial^\alpha\chi_RV|\le C_2 R^{-\sigma+\sigma'}\langle x\rangle^{-m'(|\alpha|)}
.
\end{align*}
It is straightforward, and we omit repeating the arguments. 
\end{proof}

\section{Strong radiation condition bounds}\label{sec:proof-strong-bound four}

In this section we prove Theorem~\ref{thm:proof-strong-bound}. 
Our main tool is a commutator-type argument.
We adopt a second order differential operator as our conjugate operator, 
whereas the standard Mourre theory employs a first order one. 
This might appear rather peculiar, 
since the resulting operator would be of third order, which usually cannot have a sign.
However, we repeatedly use a certain \textit{increment or decrement identity} to make it of even order
with a definite sign. 

We first introduce notation of the section in Subsection~\ref{subsec:2302052}, 
and then discuss Classical Mechanics interpretations in Subsection~\ref{subsec:200920}. 
Motivated by the classical picture, 
the main propositions of the section will be presented in Subsection~\ref{subsec:22091822}. 
After some preliminaries in Subsection~\ref{subsec:22091823b} these propositions will be proved in 
Subsection~\ref{subsec:22091823}. 
Finally in Subsection~\ref{subsec:22091825} 
we will complete the proof of Theorem~\ref{thm:proof-strong-bound}.

\subsection{Notation}\label{subsec:2302052}

Throughout the section we assume Condition~\ref{cond:220525} with $l= 4$ and $q\equiv 0$.
Let $I\subset \R_+$ be a closed interval, 
and let $S\in C^4(I\times(\mathbb R^d\setminus\{0\}))$ be the function
from
Theorem~\ref{thm:main result2} given for $R\geq R_0>0$. 

According to Theorem~\ref{thm:main result2}, $S(\lambda,\cdot)$ is 
a geodesic distance from the origin.
It is convenient to normalize this function and
 consider $\Phi(\lambda,\cdot)=S(\lambda,\cdot)/\sqrt{2\lambda}$,
 exactly as done in
 Section \ref{sec:constr-solut-eikon}, but even then the singularity
at the origin might cause problems. Consequently we regularize $\Phi(\lambda,\cdot)$
as follows.

\begin{lemma}\label{lem:230114} 
For all $R\geq R_0$
there exists $f\in C^4(I\times\mathbb R^d)$ 
such that (with the dependence on $\lambda\in I$ and $R\geq R_0$
being suppressed): 
\begin{equation}
\text For \,\, any \,\,\lambda\in I\mand |x|>R:\quad f(x)=(2\lambda)^{-1/2}S(\lambda,x).
\label{eq:2210151242}
\end{equation}
Moreover (with all constants below
being independent of $\lambda$ and $R$):
 \begin{subequations}
 \begin{enumerate}
\item
There exist $c,C>0$ such that for any $(\lambda,x)\in I\times\mathbb R^d$
\begin{equation}
c\langle x\rangle \le f(x)\le C\langle x\rangle;
\label{eq:22061920aaa}
\end{equation}
\item
There exists $C'>0$ such that 
for any $|\alpha|\le 4$, $(\lambda,x)\in I\times\mathbb R^d$
\begin{align}
\label{eq:3final2p}
\left| \partial^\alpha f(x)\right|
&\le 
C' f(x)^{1-\min\{|\alpha|,m(|\alpha|)\}}
.
\end{align}
\item
There exists $C''>0$ 
such that for any $(\lambda,x)\in I\times\mathbb R^d$ 
\begin{align}\label{eq:5b}
\bigl(1-C''f^{-\sigma}\bigr)I_d
\le
(\nabla f)\otimes(\nabla f)+ f(\nabla^2f)
\le 
\bigl(1+C''f^{-\sigma}\bigr)I_d
.
\end{align} 
\end{enumerate}
 \end{subequations}
\end{lemma}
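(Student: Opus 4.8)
The plan is to take $f$ equal to the normalized eikonal solution $\Phi(\lambda,\cdot):=(2\lambda)^{-1/2}S(\lambda,\cdot)$ away from the origin and to glue it smoothly, across a fixed ($R$-independent) shell, onto the harmless function $\langle x\rangle$. By Theorem~\ref{thm:main result2}\,\eqref{item:22120919c} one has $\Phi(\lambda,x)=|x|\bigl(1+s(\lambda,x)\bigr)$ with $s(\lambda,\cdot)$ vanishing on $\{|x|\le R\}$, so that $\Phi(\lambda,x)=|x|$ whenever $0<|x|\le R$; the only defect of $\Phi$ itself is the $C^1$ singularity of $|x|$ at the origin. I would therefore fix $\zeta\in C^\infty(\mathbb R)$ with $\zeta\equiv 1$ on $(-\infty,1]$ and $\zeta\equiv 0$ on $[2,\infty)$, require $R_0\ge 2$, and set
\[
 f(\lambda,x)=\zeta(|x|)\,\langle x\rangle+\bigl(1-\zeta(|x|)\bigr)\,\Phi(\lambda,x).
\]
Since $\zeta(|x|)$ is constant near $0$ while $1-\zeta(|x|)$ is supported in $\{|x|\ge 1\}$, where $\Phi$ is $C^4$, this $f$ lies in $C^4(I\times\mathbb R^d)$; and for $|x|>R$ (hence $\zeta(|x|)=0$) we get $f=\Phi=(2\lambda)^{-1/2}S$, which is \eqref{eq:2210151242}. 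On the shell $\{1\le|x|\le R\}$, which contains the transition region, $\Phi(\lambda,\cdot)=|x|$, so there $f=\zeta(|x|)\langle x\rangle+(1-\zeta(|x|))|x|$ is a smooth convex combination of $\langle x\rangle$ and $|x|$.

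For the quantitative claims I would first note that, $I$ being a compact subinterval of $\mathbb R_+$, the bounds \eqref{eq:errestza2} are uniform in $\lambda\in I$ and $R\ge R_0$, and that after enlarging $R_0$ they yield $|s|\le\tfrac12$ and correspondingly small bounds on $\partial^k_\lambda\partial^\alpha_x s$ for $k+|\alpha|\le 4$. Then \eqref{eq:22061920aaa} is immediate: on $\{|x|\ge 2\}$, $f=|x|\bigl(1+O(\langle x\rangle^{-\sigma})\bigr)$ is comparable to $\langle x\rangle$, and on the compact set $\{|x|\le 2\}$, $f$ is a bounded convex combination of the comparable quantities $\langle x\rangle$ and $|x|$ (or equals $\langle x\rangle$ for $|x|\le 1$) and is bounded below. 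For \eqref{eq:3final2p} (the case $|\alpha|=0$ being trivial), on $\{|x|\le 2\}$ the derivatives $\partial^\alpha f$, $|\alpha|\le 4$, are bounded and $f$ is bounded below, so any sufficiently large $C'$ works; on $\{|x|\ge 2\}$, where $f=|x|(1+s)$, I would expand $\partial^\alpha f$ by the Leibniz rule and insert $\partial^\beta|x|=O(\langle x\rangle^{1-|\beta|})$ and $\partial^{\alpha-\beta}_x s=O(\langle x\rangle^{-m(|\alpha-\beta|)})$. The two extreme terms $(1+s)\partial^\alpha|x|=O(\langle x\rangle^{1-|\alpha|})$ and $|x|\partial^\alpha_x s=O(\langle x\rangle^{1-m(|\alpha|)})$ together produce exactly the exponent $1-\min\{|\alpha|,m(|\alpha|)\}$, while each mixed term $\partial^\beta|x|\,\partial^{\alpha-\beta}_x s=O(\langle x\rangle^{1-|\beta|-m(|\alpha|-|\beta|)})$ is no larger, as one checks using only that $0\le m(k+1)-m(k)\le 1$ for $k=0,\dots,3$ (so $m$ is $1$-Lipschitz and $k\mapsto m(k)-k$ non-increasing). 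Since $f$ is comparable to $\langle x\rangle$, this gives \eqref{eq:3final2p}.

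For \eqref{eq:5b} I would use the identity $(\nabla f)\otimes(\nabla f)+f(\nabla^2 f)=\tfrac12\nabla^2(f^2)$, reducing the claim to the matrix estimate $\tfrac12\nabla^2(f^2)=I_d+O(f^{-\sigma})$. On $\{|x|\le 1\}$ this is exact, since $\tfrac12\nabla^2(\langle x\rangle^2)=I_d$. On the shell $\{1\le|x|\le 2\}$, writing $f=|x|+h$ with $h=\zeta(|x|)\bigl(\langle x\rangle-|x|\bigr)$, $|\partial^\alpha h|=O(\langle x\rangle^{-1-|\alpha|})$, the expansion $f^2=|x|^2+2|x|h+h^2$ together with $\tfrac12\nabla^2(|x|^2)=I_d$ gives $\tfrac12\nabla^2(f^2)=I_d+O(\langle x\rangle^{-2})=I_d+O(f^{-\sigma})$ there. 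On $\{|x|\ge 2\}$, where $f=\Phi=|x|(1+s)$, I would put $r=|x|$, $P=\hat x\otimes\hat x$, use $\nabla r=\hat x$ and $\nabla^2 r=r^{-1}(I_d-P)$, and compute directly
\[
 (\nabla f)\otimes(\nabla f)=P+O(\langle x\rangle^{-\sigma})\quad\text{and}\quad f(\nabla^2 f)=(I_d-P)+O(\langle x\rangle^{-\sigma}),
\]
where the remainders arise from the factors $s$, $r\nabla s=O(\langle x\rangle^{-\sigma})$, $r^2\nabla^2 s=O(\langle x\rangle^{-\sigma})$, estimated via \eqref{eq:errestza2}; summing these, $\tfrac12\nabla^2(f^2)=I_d+O(\langle x\rangle^{-\sigma})$, i.e.\ \eqref{eq:5b}. (Alternatively one could differentiate the identity \eqref{eq:10}, $\nabla(\Phi^2)=2G\dot\gamma_x(1)$ with $G=(1-\lambda^{-1}\chi_RV)I_d$, once more in $x$ and estimate the outcome with \eqref{eq:estmin}, \eqref{eq:cond22bb} and the derivative bounds on $\gamma$.)

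I expect the only genuinely delicate point to be the bookkeeping in \eqref{eq:3final2p} — confirming that no mixed term in the Leibniz expansion of $\partial^\alpha\bigl(|x|(1+s)\bigr)$ dominates the two extreme ones, which is precisely where the specific shape of the order-of-decay function $m$ (its unit-bounded increments) is used — together with the pervasive need to keep every constant uniform in $\lambda\in I$ and $R\ge R_0$, which is inherited from the $R$-independent, locally $\lambda$-uniform estimates \eqref{eq:errestza2}.
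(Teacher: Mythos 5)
Your construction and verification are correct, and they follow exactly the route the paper intends: the paper's own proof of this lemma is omitted ("Such modification is clearly possible due to Theorem~\ref{thm:main result2}"), and your gluing of $(2\lambda)^{-1/2}S=|x|(1+s)$ to $\langle x\rangle$ across a fixed shell inside $\{|x|\le R\}$ (where $s\equiv 0$), together with the Leibniz bookkeeping using the $1$-Lipschitz property of $m$ and the identity $(\nabla f)\otimes(\nabla f)+f\nabla^2f=\tfrac12\nabla^2(f^2)$, supplies precisely the omitted details with the required uniformity in $\lambda\in I$ and $R\ge R_0$.
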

\begin{proof}
Such modification is clearly possible due to Theorem~\ref{thm:main result2}. 
We omit the details.
\end{proof}

\begin{remark}\label{rem:230331}
In Section~\ref{sec:Generalized Fourier transform} we will use the
slight modification of \eqref{eq:5b} given with $C''f^{-\sigma}$
replaced by $C'''R^{-\sigma'+\sigma}f^{-\sigma'}$, $\sigma'\in (0,\sigma)$.
\end{remark}

\subsection{Classical Mechanics}\label{subsec:200920}

Here we present a stationary bound holding along 
a classical scattering orbit in the phase space $T^*\mathbb R^d\cong \mathbb R^{2d}$. 
The arguments of this subsection are not necessary for our purpose, 
but they serve as important motivation for our proof of 
Theorem~\ref{thm:proof-strong-bound}.
Moreover the proof here will be directly `lifted' to Quantum Mechanics, 
 apart from the fact that quantum observables of course do not generally commute.

\subsubsection{Free Hamiltonian}
Let us start with the trivial case with $V\equiv 0$ for simplicity.
Hence we consider the free classical Hamiltonian 
\[H^{\mathrm{cl}}_0(x,\xi)=\tfrac12\xi^2\ \ \text{for }(x,\xi)\in \mathbb R^{2d},\]
and the associated Hamilton equations 
\[\dot x=\xi,\quad \dot\xi=0.\]
Then for any initial data $(y,\eta)\in \mathbb R^{2d}$ 
we have an explicit classical orbit
\begin{align*}
x(t)=\eta t+y,\quad
\xi(t)=\eta.
\end{align*}
Assuming the orbit has a positive energy 
\begin{align*}
\lambda=H^{\mathrm{cl}}_0(x(t),\xi(t))=\tfrac12\eta^2>0,
\end{align*}
which is fixed, we discuss the asymptotic relation between observables along the orbit. 
Obviously it follows that (forward in time) 
\begin{align}
 \label{eq:clasLarge}
\xi=\sqrt{2\lambda}|x|^{-1}x+\mathcal O(t^{-1})\ \ \text{as }t\to\infty ,
\end{align}
 and hence the momentum $\xi$ is comparable to $\sqrt{2\lambda}|x|^{-1}x$. 
However we would like to express it in a stationary manner without time parameter. 
For that purpose note 
\begin{align*}
|x|=\sqrt{2\lambda}t+\mathcal O(1)\ \ \text{as }t\to\infty.
\end{align*}
This implies that the quantity $|x|$ is an `effective time' up to a constant factor, 
allowing us to replace the time parameter. 
Now let us introduce 
the `classical gamma observables' as 
\begin{align}\label{eq:gammaClas}
 \begin{split}
 \gamma^{\mathrm{cl}}&
=\xi-(\nabla S_0),
\quad 
\gamma^{\mathrm{cl}}_{\|}=(\nabla S_0)\cdot \gamma^{\mathrm{cl}}
;\quad 
S_0=\sqrt{2\lambda}|x|
,
\end{split}
\end{align} 
cf.\ \eqref{eq:6}. 
Then we obtain a stationary expression of \eqref{eq:clasLarge} as 
\begin{align}
\gamma^{\mathrm{cl}}=\mathcal O(|x|^{-1}).
\label{eq:200907}
\end{align}
Furthermore it follows that 
\begin{align}
\label{eq:quad1b}
 \gamma^{\mathrm{cl}}_{\|}=
H^{\mathrm{cl}}_0-\lambda-\tfrac12(\gamma^{\mathrm{cl}})^2
=-\tfrac12(\gamma^{\mathrm{cl}})^2
=\mathcal O(|x|^{-2}).
\end{align} 
Note that the bound $\gamma^{\mathrm{cl}}_{\|}=\mathcal O(|x|^{-2})$
is sharper than the bound resulting from substituting
\eqref{eq:200907} into the middle expression of \eqref{eq:gammaClas} . 
This is our starting point.

\subsubsection{Perturbed Hamiltonian}
Next we turn to the general case. 
We discuss the perturbed classical Hamiltonian 
\begin{align}
H^{\mathrm{cl}}(x,\xi)
=\tfrac12\xi^2+\chi_R(x)V(x)
\ \ \text{for }
(x,\xi)\in \mathbb R^{2d}, 
\label{eq:20062820}
\end{align}
cf.\ \eqref{eq:2212116}. The associated Hamilton equations are given as 
\begin{align}
\dot x=\xi,\quad
\dot\xi=-\nabla (\chi_RV).
\label{eq:200702}
\end{align}
We are interested in the asymptotic stationary estimates along a
{forward scattering orbit} $(x(t),\xi(t))$ (meaning beyond 
\eqref{eq:200702} that $|x(t)|\to \infty$ for $t\to
\infty$) 
with a fixed positive energy 
\begin{align*}
\lambda=H^{\mathrm{cl}}(x(t),\xi(t))>0.
\end{align*}
Parallel to the free case we can study details of propagation along
this orbit in
terms of 
the classical gamma observables given as 
\begin{align} 
\gamma=\gamma^{\mathrm{cl}}=\xi-(\nabla S)\,\mand \, 
\gamma_\|=\gamma^{\mathrm{cl}}_\|=(\nabla S)\cdot\gamma^{\mathrm{cl}}, 
\label{eq:220918}
\end{align} 
where as in \eqref{eq:6} the function $S$ comes from Theorem~\ref{thm:main result2}. 
Until the end of the subsection we drop the superscript ${}^{\mathrm{cl}}$ for short.

\begin{proposition}\label{prop:200909}
Fix any $\lambda>0$, and define $\gamma$ and $\gamma_\|$ as above.
Let $(x(t),\xi(t))$, $t\in\mathbb R$, be a classical orbit 
for the Hamiltonian \eqref{eq:20062820} with energy $\lambda$ such that 
\[
|x(t)|\to \infty\ \text{ as }t\to+\infty
.
\] 
Then for any $\beta\in (0,2)$ there exists $C>0$ such that for $t\ge
0$ and along the
orbit $(x(t),\xi(t))$
\begin{align*}
\gamma^2\le Cf^{-\beta}
\,\mand \, 
|\gamma_{\|}|\le Cf^{-\beta}.
\end{align*}
\end{proposition}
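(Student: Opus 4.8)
The plan is to reduce both bounds to a single decay statement for $\gamma^2$ by means of the exact energy--shell identity, and then to control $\gamma^2$ along the orbit by a Lyapunov quantity weighted with a power of $f$. First I would substitute $\xi=\gamma+\nabla S$ into $H^{\mathrm{cl}}(x,\xi)=\lambda$ and use the eikonal equation \eqref{eq:2212116}, exactly as in the derivation of \eqref{eq:quad1b} but keeping $\chi_RV$, to get along the orbit
\begin{equation*}
\gamma_{\|}=H^{\mathrm{cl}}(x,\xi)-\lambda-\tfrac12\gamma^2=-\tfrac12\gamma^2 .
\end{equation*}
In particular $\gamma_{\|}\le0$, so it suffices to prove $\gamma^2\le Cf^{-\beta}$, the bound on $|\gamma_{\|}|=\tfrac12\gamma^2$ then following with the same constant. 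Next I would differentiate $\gamma=\xi-(\nabla S)(x)$ along the Hamilton flow \eqref{eq:200702}; since differentiating \eqref{eq:2212116} gives $(\nabla^2S)\nabla S=-\nabla(\chi_RV)$, the force term cancels and one is left with the clean identity $\dot\gamma=-(\nabla^2S)\gamma$, hence $\tfrac{\d}{\d t}\gamma^2=-2\,\gamma\cdot(\nabla^2S)\gamma$.

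I would then set $u(t)=f(x(t))^{\beta}\gamma(t)^2$, with $f=(2\lambda)^{-1/2}S$ for $|x|>R$ as in \eqref{eq:2210151242}, and compute $\dot u$. Using $\dot f=\nabla f\cdot\xi$ together with $\nabla f\cdot\gamma=(2\lambda)^{-1/2}\gamma_{\|}=-(2\lambda)^{-1/2}\tfrac12\gamma^2$ and $\nabla f\cdot\nabla S=(2\lambda)^{-1/2}|\nabla S|^2=\sqrt{2\lambda}\,(1-\lambda^{-1}\chi_RV)$, and using $\nabla^2S=\sqrt{2\lambda}\,\nabla^2f$ with the lower bound of \eqref{eq:5b} in the form
\begin{equation*}
\gamma\cdot(f\nabla^2f)\gamma\ge(1-C''f^{-\sigma})\gamma^2-(\gamma\cdot\nabla f)^2=(1-C''f^{-\sigma})\gamma^2-\tfrac1{8\lambda}\gamma^4,
\end{equation*}
a direct calculation gives, for $|x|>R$,
\begin{equation*}
\dot u\le f^{\beta-1}\gamma^2\Bigl(\sqrt{2\lambda}\,(\beta-2)+C_1f^{-\sigma}+C_2\gamma^2\Bigr),
\end{equation*}
where $C_1,C_2$ depend only on $\lambda,\beta$ and the constants of Lemma~\ref{lem:230114}. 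Since $\beta<2$ the leading term is strictly negative (this is exactly where the hypothesis $\beta<2$ enters), so one can fix $R_1\ge R$ and $\epsilon_0>0$, depending only on $\lambda,\beta$ and those constants, such that $\dot u\le0$ whenever $f(x(t))\ge R_1$ and $\gamma(t)^2\le\epsilon_0$.

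To conclude I would observe that along the forward scattering orbit $|x(t)|\to\infty$, hence $f(x(t))\to\infty$; moreover $\xi(t)$ converges to $\sqrt{2\lambda}\,\omega_+$ with $\omega_+=\lim_{t\to\infty}\widehat{x(t)}$, while Theorem~\ref{thm:main result2}\,\eqref{item:22120919c}--\eqref{item:22120919d} yields $\nabla S(x)=\sqrt{2\lambda}\,\hat x+\mathcal O(|x|^{-\sigma})$, so that $\gamma(t)\to0$. Thus there is $T_0\ge0$ with $f(x(t))\ge R_1$ and $\gamma(t)^2\le\epsilon_0$ for all $t\ge T_0$, whence $u$ is nonincreasing on $[T_0,\infty)$ by the previous step, while on the compact interval $[0,T_0]$ the continuous function $u$ is bounded. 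Therefore $C:=\sup_{t\ge0}u(t)<\infty$, which is precisely $\gamma^2\le Cf^{-\beta}$, and then $|\gamma_{\|}|=\tfrac12\gamma^2\le Cf^{-\beta}$ as well.

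The step I expect to be the main obstacle is the input $\gamma(t)\to0$: it rests on the classical scattering statement that an escaping positive-energy orbit has a genuine asymptotic velocity and direction (see e.g.\ \cite[Theorem~2.7.1]{DG}), and it is needed only to trigger the one-sided monotonicity of the previous step. Everything else is the chain rule applied to \eqref{eq:2212116}, \eqref{eq:200702}, \eqref{eq:5b} and Theorem~\ref{thm:main result2}, and the elementary fact that a power of $f$ is an asymptotic time variable along the orbit.
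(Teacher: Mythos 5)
Your proof is correct, and its core is the same as the paper's: a Lyapunov/monotonicity argument for a power-of-$f$-weighted radiation observable, driven by the energy identity $\gamma_\|=-\tfrac12\gamma^2$ (the paper's ``classical increment or decrement identity'' \eqref{eq:quad1}), the differentiated eikonal equation \eqref{eq:220919537}, and the convexity bound \eqref{eq:5b}. The paper works with $P^{\mathrm{cl}}=(f^{\beta}\gamma_\|)^2$, which is $\tfrac14 u^2$ in your notation, so the two monotonicity statements are essentially equivalent. The one genuine difference is how the quartic terms are balanced. The paper groups the term $-\beta(2\lambda)^{-1/2}f^{2\beta-1}\gamma_\|^2\gamma^2$ coming from $Df$ together with the Hessian term, i.e.\ it estimates $\gamma\cdot\bigl(\beta(\nabla f)\otimes(\nabla f)+f\nabla^2 f\bigr)\gamma$ from below and then converts $\gamma^4=4\gamma_\|^2$, arriving at the net coefficient $\min\{4-2\beta,2\beta\}>0$ for \emph{all} $\beta\in(0,2)$ with only an $\mathcal O(f^{-\sigma})$ error; no smallness of $\gamma$ is needed, and the monotonicity is triggered solely by $|x(t)|\to\infty$. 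Your arrangement leaves the residual $\tfrac{1-\beta}{2\sqrt{2\lambda}}\gamma^4$, which for $\beta<1$ is positive and forces you to import the a priori input $\gamma(t)\to 0$ from classical asymptotic-velocity theory (\cite[Section~2.7]{DG}). That input is legitimate here (an escaping positive-energy orbit under the $C^1$ part of Condition~\ref{cond:220525} does have an asymptotic velocity and direction, and $|\xi_+|^2=2\lambda$), so your proof stands; but it is exactly the kind of trajectory-wise asymptotics that has no quantum counterpart, and the whole point of the paper's formulation (see the remark following its proof) is that the argument should lift verbatim to the operator setting of Propositions~\ref{prop:220602} and \ref{prop:22060215}. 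If you redistribute the quartic terms as the paper does, your appeal to \cite{DG} becomes unnecessary and the proof becomes fully ``stationary''.
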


\begin{remark}
This is the classical counterpart of Theorem~\ref{thm:proof-strong-bound}
 intuitively explaining why the operator $\gamma_\|$ accepts a doubled weight compared to the
 operators $\gamma_i$. 
See also \cite{HS}. 
On the other hand, 
the accepted range of the exponent $\beta$ in
Theorem~\ref{thm:proof-strong-bound} is in general smaller than 
in Proposition~\ref{prop:200909}, 
because higher order derivatives of $S$ are involved in the proof for
the quantum
mechanical case. 
\end{remark}

\begin{proof}
All the quantities below are considered along the {forward scattering
 orbit} $(x(t),\xi(t))$, 
and the dependence on $t\ge 0$ is suppressed. 
It suffices to argue only for large $t\ge 0$,
so that we may consider $x$ to stay away from the origin, in fact
quantitatively as
$|x(t)|>2R$.
Hence by \eqref{eq:2212116} we have the identity for large $t$ 
\begin{align}
\label{eq:quad1}
0=H^{\mathrm{cl}}-\lambda
=\tfrac12\gamma^2+\gamma_\|.
\end{align}
This identity is important (it will be used repeatedly) and deserves to be named the \textit{classical increment or decrement identity}.
By \eqref{eq:quad1} it suffices to show the bound 
\begin{align}
P^{\mathrm{cl}}:=(f^{\beta}\gamma_\|)^2\le C_1 ,
\label{eq:20090718}
\end{align}
and for that we compute the time-derivative of $P^{\mathrm{cl}}$. 
However the time-derivative coincides with the Poisson bracket, hence
$D:=\tfrac{\mathrm d}{\mathrm dt}=\{H^{\mathrm{cl}},\cdot\}$. In any
case we easily compute 
\begin{equation}
DP^{\mathrm{cl}}
=2\beta f^{2\beta-1}\gamma_\|^2(D f)
+2f^{2\beta}\gamma_\|\bigl(D\gamma_\|\bigr)
\label{eq:20090719}, 
\end{equation} motivating us to show that the right-hand side 
eventually is negative along the orbit.

We compute and bound the first term on the right-hand side
by using \eqref{eq:220918}, \eqref{eq:quad1} and \eqref{eq:2212116} as 
\begin{align*}
2\beta f^{2\beta-1}\gamma_\|^2(D f)
&=
2\beta f^{2\beta-1}\gamma_\|^2(\nabla f)\cdot\xi
\\&
=
2\beta (2\lambda)^{-1/2}
f^{2\beta-1}\gamma_\|^2
\bigl(\gamma_\|+|\nabla S|^2\bigr)
\\&
=
2\beta (2\lambda)^{-1/2}
f^{2\beta-1}\gamma_\|^2
\bigl(-\tfrac12\gamma^2+2\lambda-2\chi_RV\bigr)
\\&
\le 
-\beta(2\lambda)^{-1/2}f^{2\beta-1}\gamma_\|^2\gamma^2
+2\beta (2\lambda)^{1/2}f^{2\beta-1}\gamma_\|^2
+C_2f^{2\beta-1-\sigma}\gamma_\|^2
.
\end{align*}
To compute the second term of \eqref{eq:20090719} we note the identity obtained 
by differentiating the equation \eqref{eq:2212116}
\begin{align}
(\nabla^2S)\nabla S=\tfrac12\nabla |\nabla S|^2
=-\nabla (\chi_RV).
\label{eq:220919537}
\end{align}
Then by \eqref{eq:220918}, \eqref{eq:220919537} and \eqref{eq:quad1} 
this second term 
 is computed as 
\begin{align*}
2f^{2\beta}\gamma_\|\bigl(D\gamma_\|\bigr)
&
=
2f^{2\beta}\gamma_\|\bigl(\xi \cdot(\nabla^2 S)\gamma
+(\nabla S)\cdot \bigl(-(\nabla \chi_RV)-(\nabla^2S)\xi\bigr)\bigr)
\\&
=
2f^{2\beta}\gamma_\|\bigl(\gamma \cdot(\nabla^2 S)\gamma
-(\nabla S)\cdot (\nabla^2S)(\nabla S)
-(\nabla S)\cdot (\nabla \chi_RV)\bigr)
\\&
= 
-(2\lambda)^{1/2}f^{2\beta}\gamma^2\gamma \cdot(\nabla^2 f)\gamma
.
\end{align*}
Combining the above computations and using \eqref{eq:5b} and \eqref{eq:quad1}, 
we bound \eqref{eq:20090719} as 
\begin{align*}
DP^{\mathrm{cl}}
&\le 
-(2\lambda)^{1/2}f^{2\beta-1}\gamma^2\gamma\cdot \bigl(\beta(\nabla f)\otimes(\nabla f)+(f\nabla^2 f)\bigr)\gamma
\\&\phantom{{}={}}{}
+2\beta (2\lambda)^{1/2}f^{2\beta-1}\gamma_\|^2
+C_2f^{2\beta-1-\sigma}\gamma_\|^2
\\&
\le 
-(\min\{1,\beta\})(2\lambda)^{1/2}f^{2\beta-1}\gamma^4\bigl(1-C_3f^{-\sigma}\bigr)
\\&\phantom{{}={}}{}
+2\beta(2\lambda)^{1/2}f^{2\beta-1}\gamma_\|^2
+C_2f^{2\beta-1-\sigma}\gamma_\|^2
\\&
\le 
-(\min\{4-2\beta,2\beta\})(2\lambda)^{1/2}f^{2\beta-1}\gamma_\|^2
+C_4f^{2\beta-1-\sigma}\gamma_\|^2
,
\end{align*}
so that for any sufficiently large $t$ 
\begin{align}
DP^{\mathrm{cl}}
&\le 
-c_1f^{-1}P^{\mathrm{cl}}\leq 0.
\label{eq:20090913}
\end{align}
Hence indeed $P^{\mathrm{cl}}$ is bounded as $t\to\infty$, 
verifying \eqref{eq:20090718}, and we are done.
\end{proof}
\begin{remark}
We have presented a stationary proof without explicit time parameter,
so that the scheme extends to the quantum setup, see Proposition~\ref{prop:220602}
and its proof. 
We could have given a simpler proof computing $D(f^\beta\gamma_\|)$,
however we are not aware of any analogous procedure in Quantum
Mechanics. Note that in the literature on scattering theory of
Schr\"odinger operators conjugate
operators are usually of first order, and consequently our scheme
of proof (not being of this sort) 
is rather non-conventional.
\end{remark}

\subsection{Main propositions of the section}\label{subsec:22091822}

Clearly we can assume that the arbitrary compact subset
$I'\subseteq I$ in Theorem~\ref{thm:proof-strong-bound} is
taken as $I'= I$, and hence that also $ I$ is compact. We make these
assumption in the remaining part of the section.

To prove Theorem~\ref{thm:proof-strong-bound} 
we are going to compute and bound a quantum observable corresponding
to the expression 
$D((f^\beta \gamma^{\mathrm{cl}}_\|)^2)$ 
appearing in the proof of Proposition~\ref{prop:200909}. 
As a quantum observable corresponding to \eqref{eq:20090718}, 
we consider 
\begin{align}
P=\gamma_\|\theta^{2\beta}\gamma_\|,\quad \beta\in (0,2).
\label{eq:220920}
\end{align}
Here the weight $\theta$ is defined as 
\begin{align*}
\theta=\theta_{\delta, \nu}(f)=\int_0^{f}(1+t/\nu)^{-1-\delta}\,\mathrm dt
;\quad \nu\ge 1,\ \ {\delta>0}.
\end{align*}
It is a refinement of the so-called \textit{Yosida approximation} of $f$.
Note that $\theta$ is bounded for each $\nu\ge 1$, but that 
\[\theta'\uparrow 1\text{ \ and\ \ }\theta\uparrow f\ \ \text{pointwise as }\nu\to\infty, \]
where $\theta'$ denotes the derivative of $\theta$ as a function of $f$.

The main propositions of the section present upper and lower bounds of a `distorted commutator'
\begin{align*}
2\mathop{\mathrm{Im}}(P(H-z))
\end{align*}
with 
\begin{align*}
z=\lambda\pm\mathrm i\Gamma\in I_\pm
:=\bigl\{z=\lambda\pm\mathrm i\Gamma\in\mathbb C\,|\,\lambda\in I,\ \Gamma\in(0,1)\bigr\}
.
\end{align*}
Note that it is comparable with 
$
DP^{\mathrm{cl}}=\{H^{\mathrm{cl}},P^{\mathrm{cl}}\}
$ considered before in Classical Mechanics.
In fact, if $z=\lambda\in I$, it is nothing but the commutator $\mathrm i[H,P]$.

\begin{proposition}\label{prop:220602}
Define $P$ and $\theta$ as above 
with $\beta\in (0,2)$ and {$\delta>0$} arbitrarily fixed. 
Then there 
exist $c>0$ and $\nu_0\ge 1$ such that for all 
$z=\lambda\pm \mathrm i\Gamma\in I_\pm$, $R\geq R_0$ 
and $\nu\ge \nu_0$ (with the constant $C>0$ being independent of
$\lambda\in I$ but possibly depending on $R\geq R_0$), 
\begin{align*}
 \pm2\mathop{\mathrm{Im}}(P(H-z))
&\leq -c\gamma_\|\theta'\theta^{2\beta-1}\gamma_\|
+C\Gamma f^{-2}\theta^{2\beta}
 \\&\phantom{{}={}}{}
+Cf^{-1-2\beta_c+3\delta}\theta^{2\beta}
+C(H-z)^*f^{1+\delta}\theta^{2\beta-\delta}(H-z)
\end{align*}
as quadratic forms on $\vD(H)$. 
\end{proposition}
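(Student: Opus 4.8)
The plan is to lift, essentially line by line, the stationary computation behind Proposition~\ref{prop:200909} to the operator level, with the Poisson bracket $DP^{\mathrm{cl}}=\{H^{\mathrm{cl}},P^{\mathrm{cl}}\}$ replaced by the distorted commutator $\pm2\mathop{\mathrm{Im}}(P(H-z))$. Writing $z=\lambda\pm\mathrm i\Gamma$ one has, as quadratic forms on $\vD(H)$, the identity $\pm2\mathop{\mathrm{Im}}(P(H-z))=\pm\mathrm i[H,P]-2\Gamma P$; since $P=\gamma_\|\theta^{2\beta}\gamma_\|\ge0$ (see \eqref{eq:220920}) the last term only helps, but it flags where a $\Gamma$-dependent remainder will re-enter through the algebra below. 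The algebraic engine is a \emph{quantum increment/decrement identity}: expanding $\gamma_j\gamma_j$ with $\gamma_j=p_j\mp\partial_j(\chi_1S)$ and feeding in the eikonal equation \eqref{eq:2212116} (and, for $[H,\gamma_\|]$, its differentiated form \eqref{eq:220919537}) produces
\begin{equation*}
H-\lambda=\tfrac12\,\gamma\cdot\gamma\pm\gamma_\|+E_0,
\end{equation*}
where $E_0$ is a bounded multiplication operator, supported near the cut-offs $\chi_1,\chi_R$, plus a piece of size $\mathcal O(f^{-\sigma})$ stemming from $V$; this is the exact analogue of \eqref{eq:quad1}. Solving this for $\gamma_\|$ and using $H-\lambda=(H-z)\pm\mathrm i\Gamma$ is what will repeatedly turn odd-order operators into even-order ones, at the cost of $(H-z)$- and $\Gamma$-controlled remainders.

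Concretely, I would expand by the Leibniz rule
\begin{equation*}
\pm\mathrm i[H,P]=\pm\mathrm i[H,\gamma_\|]\,\theta^{2\beta}\gamma_\|\ \pm\ \gamma_\|\,\mathrm i[H,\theta^{2\beta}]\,\gamma_\|\ \pm\ \gamma_\|\theta^{2\beta}\,\mathrm i[H,\gamma_\|].
\end{equation*}
Since $\theta$ is a function of $f$ one has $\mathrm i[H,\theta^{2\beta}]=\beta\bigl(p\cdot(\theta^{2\beta-1}\theta'\nabla f)+(\theta^{2\beta-1}\theta'\nabla f)\cdot p\bigr)$, and replacing $p=\gamma\pm\nabla(\chi_1S)$ together with $|\nabla f|^2=1-\lambda^{-1}\chi_RV$ turns the middle term into $\pm2\beta(2\lambda)^{1/2}\gamma_\|\theta^{2\beta-1}\theta'\gamma_\|$ up to lower order — the counterpart of $2\beta(2\lambda)^{1/2}f^{2\beta-1}\gamma_\|^2$ in the proof of Proposition~\ref{prop:200909}. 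The commutator $\mathrm i[H,\gamma_\|]$ is computed exactly as in the classical case; by \eqref{eq:220919537} its leading part is $\gamma\cdot(\nabla^2\chi_1S)\gamma$ (modulo third derivatives of $S$ and first derivatives of $\chi_RV$), so the outer two terms are, to leading order, $\gamma\cdot(\nabla^2\chi_1S)\gamma\,\theta^{2\beta}\gamma_\|$ plus its adjoint. Into one factor $\gamma_\|$ I substitute the increment/decrement identity solved for $\gamma_\|$ — this replaces it by $-\tfrac12\gamma\cdot\gamma$ plus a multiple of $H-z$, a multiple of $\Gamma$, and the lower-order $E_0$ — and symmetrise; combined with the ellipticity bound \eqref{eq:5b} used in the shape $\beta(\nabla f)\otimes(\nabla f)+f\nabla^2f\ge(\min\{1,\beta\})(1-Cf^{-\sigma})I_d$, the same arithmetic as in Proposition~\ref{prop:200909} (which leaves the coefficient $\min\{4-2\beta,2\beta\}>0$ on the decisive term, this being where $\beta\in(0,2)$ is used) yields the leading negative operator $-c\,\gamma_\|\theta'\theta^{2\beta-1}\gamma_\|$ for a small fixed $c>0$, with $\theta'\theta^{2\beta-1}$ in the role of the classical $f^{2\beta-1}$.

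It remains to estimate all the remainders. The $(H-z)$-factors, which the substitutions can be arranged to place at the two ends of every expression, are absorbed by the Cauchy--Schwarz inequality into the stated term $C(H-z)^*f^{1+\delta}\theta^{2\beta-\delta}(H-z)$; the $\Gamma$-dependent pieces (including $-2\Gamma P$ and everything produced by $H-\lambda=(H-z)\pm\mathrm i\Gamma$) reduce, after commuting $\theta^{2\beta}$ through, to multiplication operators bounded by $C\Gamma f^{-2}\theta^{2\beta}$ via $c\langle x\rangle\le f\le C\langle x\rangle$; and every remaining error — from differentiating $\chi_1,\chi_R$, from the discrepancy between $\tfrac12|\nabla(\chi_1S)|^2$ and $\lambda-\chi_RV$, from the derivatives of $\chi_RV$ and the higher derivatives of $S$ entering $\mathrm i[H,\gamma_\|]$, and from commuting $\theta$-powers and the $\gamma_j$'s past one another — is a first-order or multiplication operator decaying at least like $f^{-1-2\beta_c+3\delta}\theta^{2\beta}$, by \eqref{eq:3final2p}, \eqref{eq:5b}, the definition \eqref{eq:22120912} of $\beta_c$, the Cauchy--Schwarz inequality, and the uniform boundedness of $\theta'$ and $f\theta''$. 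Taking $\nu_0$ large then makes the genuine Yosida-regularisation errors (those carrying an extra factor $\theta''$, of relative size $\mathcal O(\nu^{-1})$, and the gap between $\theta'$ and $1$ off a fixed ball) subleading and secures the strict positivity of $c$; all constants come out uniform in $\lambda\in I$, in $R\ge R_0$ and in $\nu\ge\nu_0$.

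The hard part is the non-commutative symmetrisation in the second step. Classically one collapses two factors at a stroke by $2\gamma_\|^{\mathrm{cl}}=-(\gamma^{\mathrm{cl}})^2$, turning $2f^{2\beta}\gamma_\|(D\gamma_\|)$ into $-(2\lambda)^{1/2}f^{2\beta}(\gamma^{\mathrm{cl}})^2\,\gamma^{\mathrm{cl}}\cdot(\nabla^2f)\gamma^{\mathrm{cl}}$; quantum mechanically one must instead move $\gamma_\|$, the weight $\theta^{2\beta}$ and the Hessian $\nabla^2(\chi_1S)$ past the individual operators $\gamma_j$, and each such rearrangement spends one more derivative of $S$ and of $f$. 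Keeping track of these commutators so that (i) no error is slower than $f^{-1-2\beta_c+3\delta}\theta^{2\beta}$ and (ii) the operator that survives at top order is exactly $-c\,\gamma_\|\theta'\theta^{2\beta-1}\gamma_\|$ with $c>0$, rather than a third-order object that carries no sign, is what forces the $C^4$ hypothesis on $V$ and constitutes the bulk of the work. A secondary delicate point is organising the increment/decrement reductions so that each $(H-z)$ produced sits at an end of the relevant quadratic form — precisely the shape $C(H-z)^*f^{1+\delta}\theta^{2\beta-\delta}(H-z)$ — since this is what makes the estimate usable in the iteration of the next subsection.
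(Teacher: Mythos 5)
Your proposal is correct and follows essentially the same route as the paper: the quantum increment/decrement identity $H-z=\tfrac12\gamma^2\pm\gamma_\|\mp\mathrm i\Gamma$ (Lemma~\ref{lem:22101720}), the Leibniz/commutator expansion with leading terms $\pm2\beta(2\lambda)^{1/2}\gamma_\|\theta'\theta^{2\beta-1}\gamma_\|$ and $\gamma\cdot(\nabla^2S)\gamma$, the substitution $\gamma_\|=-\tfrac12\gamma^2-\mathrm i\Gamma+(H-z)^*$ to even the order, the ellipticity bound \eqref{eq:5b}, the interchange of $4\gamma_\|\theta'\theta^{2\beta-1}\gamma_\|$ with $\gamma\cdot(\gamma\cdot\theta'\theta^{2\beta-1}\gamma)\gamma$, and the resulting coefficient $\min\{4-2\beta,2\beta\}$, with all remainders sorted into the three stated error terms exactly as in the paper's Steps I--IV. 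One small correction: since $S$ solves the eikonal equation exactly, the increment identity carries no $\mathcal O(f^{-\sigma})$ error from $V$ outside a compact set, so your $E_0$ is purely a compactly supported term (controlled by Lemma~\ref{lem:221031}); an honest $\mathcal O(f^{-\sigma})$ multiplication error would in fact not be absorbable into $f^{-1-2\beta_c+3\delta}\theta^{2\beta}$.
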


\begin{proposition}\label{prop:22060215}
Suppose the same setting of Proposition~\ref{prop:220602}, and let 
 $\epsilon\in (0,1]$. Then there exists $\nu_0\ge 1$ that for all
$z=\lambda\pm \mathrm i\Gamma\in I_\pm$, $R\geq R_0$ 
and $\nu\ge \nu_0$ (with the constant $C>0$ being independent of
$\lambda\in I$ but possibly depending on $R\geq R_0$), 
\begin{align*}
\begin{split}
 \pm2\mathop{\mathrm{Im}}(P(H-z))
&\geq 
-\epsilon \gamma_\|\theta'\theta^{2\beta-1}\gamma_\|
-C \Gamma f^{-2}\theta^{2\beta}
 \\&\phantom{{}={}}{}
-Cf^{-1-2\beta_c+3\delta}\theta^{2\beta}
-C(H-z)^*f^{1+\delta}\theta^{2\beta-\delta}(H-z)
\end{split}
\end{align*}
as quadratic forms on $\vD(H)$.
\end{proposition}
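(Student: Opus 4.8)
The plan is to read Proposition~\ref{prop:22060215} off \emph{the same} exact quadratic‑form computation of the distorted commutator $\pm2\mathop{\mathrm{Im}}(P(H-z))$ on $\vD(H)$ that yields Proposition~\ref{prop:220602}: the two propositions are merely the ``$\leq$'' and the ``$\geq$'' readings of one identity, obtained by the same chain of manipulations with the inequalities reversed and the two directions of the increment/decrement identity interchanged. The computation is the quantum counterpart of the classical Poisson‑bracket computation in the proof of Proposition~\ref{prop:200909}: for $z=\lambda\pm\mathrm i\Gamma\in I_\pm$ one writes $\pm2\mathop{\mathrm{Im}}(P(H-z))=\mathrm i[H,P]-2\Gamma P$ and expands $\mathrm i[H,P]=\gamma_\|\,\mathrm i[H,\theta^{2\beta}]\,\gamma_\|+2\mathop{\mathrm{Re}}\bigl(\mathrm i[H,\gamma_\|]\,\theta^{2\beta}\gamma_\|\bigr)$; computing $\mathrm i[H,\theta^{2\beta}]$ and $\mathrm i[H,\gamma_\|]$ by the product rule reproduces the classical quantities $2\beta\theta^{2\beta-1}\theta'\{H^{\mathrm{cl}},f\}$ and $\{H^{\mathrm{cl}},\gamma_\|\}=\gamma\cdot(\nabla^2S)\gamma$ used there, now accompanied by commutator corrections carrying $\partial^3S$ and $\partial^4S$. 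Those corrections are where Condition~\ref{cond:220525} with $l=4$ enters, via the decay bound \eqref{eq:3final2p} of Lemma~\ref{lem:230114}, and they are what dictate the residual term $f^{-1-2\beta_c+3\delta}\theta^{2\beta}$ with $\beta_c$ as in \eqref{eq:22120912}.

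The basic algebraic tool, used repeatedly, is the quantum increment/decrement identity $\gamma_\|=(H-z)\pm\mathrm i\Gamma-\tfrac12\gamma\cdot\gamma-\mathcal R$, valid modulo a remainder $\mathcal R$ supported in a bounded region (where the cut‑offs $\chi_R$ and $\chi_1$ are not both $\equiv1$, and where $V$ and $\chi_RV$ differ), which follows from $H=\tfrac12p\cdot p+\chi_RV$ (here $q\equiv0$), from $p=\gamma\pm(\nabla_x\chi_1S)$ as in \eqref{eq:6} with the sign fixed by $z\in I_\pm$, and from the eikonal equation \eqref{eq:2212116}. The crucial point is that for the present proposition this identity is read in the \emph{opposite} direction from what one does for Proposition~\ref{prop:220602}: there one substitutes $\gamma\cdot\gamma=-2\gamma_\|+\dots$ in order to \emph{build} the favourable negative term $-c\,\gamma_\|\theta'\theta^{2\beta-1}\gamma_\|$; here, wherever a factor $\gamma\cdot\gamma$ is generated (in $-2\Gamma P$, in the commutator products, and in the term involving $\gamma\cdot(\nabla^2S)\gamma$), one instead writes $\gamma\cdot\gamma=2(H-z)+2(z-\lambda)-2\mathcal R$ and then applies Cauchy--Schwarz against the remaining factor with a free small parameter, using the weight $f^{1+\delta}\theta^{2\beta-\delta}$ on the $(H-z)$ side — which, by the definition of $\theta$, leaves on the $\gamma_\|$ side only a multiple of $\theta'\theta^{2\beta-1}$. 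In this way every dangerous term is deposited into the $(H-z)^*f^{1+\delta}\theta^{2\beta-\delta}(H-z)$‑channel, the $\Gamma f^{-2}\theta^{2\beta}$‑channel, or the $f^{-1-2\beta_c+3\delta}\theta^{2\beta}$‑channel, and only a controllable multiple of $\gamma_\|\theta'\theta^{2\beta-1}\gamma_\|$ survives.

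That residual $\gamma_\|\theta'\theta^{2\beta-1}\gamma_\|$‑coefficient is made $\leq\epsilon$ by taking $\nu\ge\nu_0(\epsilon)$: each Cauchy--Schwarz margin carries a factor that is either a free small parameter, or intrinsically small ($f^{-\sigma}$ from the Hessian estimates of Lemma~\ref{lem:230114}, or $\Gamma<1$), and the commutator remainders carrying $\theta''$ or $\theta'''$ are, by elementary bounds, $\mathcal O(\nu^{-1})$ multiples of $\gamma_\|\theta'\theta^{2\beta-1}\gamma_\|$; as $\nu\to\infty$ one also has $\theta'\uparrow1$ and $\theta\uparrow f$. The Hessian estimates \eqref{eq:3final2p} and \eqref{eq:5b} of Lemma~\ref{lem:230114} still enter — for the lower direction \eqref{eq:5b} through its \emph{upper} inequality — to bound the $\gamma\cdot(\nabla^2S)\gamma$‑type pieces (by $\mathcal O(f^{-1})\gamma\cdot\gamma$, up to an $f^{-\sigma}$‑error matrix) before the $\gamma\cdot\gamma$‑reduction is applied. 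Once all the error terms, organized exactly as in the proof of Proposition~\ref{prop:220602}, have been estimated in this way, one obtains the stated inequality with constants uniform in $\lambda\in I$ and $\Gamma\in(0,1)$ (and allowed to depend on $R\ge R_0$).

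The main obstacle is precisely this bookkeeping discipline for the lower bound: one must never let a $\gamma_\|(\,\cdot\,)\gamma_\|$‑term with a definite coefficient be formed — every $\gamma\cdot\gamma$ must be routed to the $(H-z)$‑channel rather than reduced to $\gamma_\|$ — while simultaneously the $\partial^3S,\partial^4S$‑corrections must be kept within the $f^{-1-2\beta_c+3\delta}$‑budget; the interplay between these two requirements is what makes the estimate delicate, and it is exactly the ``increment or decrement'' mechanism advertised for this section, read in one direction for the upper bound and in the other for the lower. A preliminary technical point, to be settled first, is the legitimacy of the quadratic‑form computation on $\vD(H)$: since $PH$ is formally of fourth order while elements of $\vD(H)$ need not lie in $\vD(H^2)$, one carries out all the manipulations for a regularized observable (e.g.\ $P$ with an additional cut‑off in $f$), where everything is well defined, and then removes the regularization using that all the bounds above are uniform in it.
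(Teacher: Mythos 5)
Your overall framework (commutator calculus, the increment/decrement identity, Cauchy--Schwarz into the $Q$-channels) is the right one, but the concrete plan — mirror the entire computation of Proposition~\ref{prop:220602} with the inequalities reversed — has a genuine gap, and it is not the route the paper takes. The decomposition $\pm2\mathop{\mathrm{Im}}(P(H-z))=\mathrm i[H,P]\mp2\Gamma P$ produces the term $-2\Gamma P=-2\Gamma\gamma_\|\theta^{2\beta}\gamma_\|$, which is harmless for the upper bound (it is $\le 0$) but is fatal for the lower bound as you have set it up: since $\theta'\le f^{-1}\theta$, one has $\Gamma\,\theta^{2\beta}\ge\Gamma f\,\theta'\theta^{2\beta-1}$, so this term is dominated neither by $\epsilon\gamma_\|\theta'\theta^{2\beta-1}\gamma_\|$ nor by $\Gamma f^{-2}\theta^{2\beta}$, and your prescription for it — ``write $\gamma\cdot\gamma=2(H-z)+\cdots$'' — does not apply because it contains no factor $\gamma\cdot\gamma$ (note also that your displayed identity for $\gamma\cdot\gamma$ drops the $\mp2\gamma_\|$ term). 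Relatedly, the arbitrarily small coefficient $\epsilon$ in front of $\gamma_\|\theta'\theta^{2\beta-1}\gamma_\|$ cannot be produced by taking $\nu\ge\nu_0(\epsilon)$ alone, as you assert; it must come from a free Cauchy--Schwarz parameter, paid for by $\epsilon^{-1}$ in front of the $(H-z)^*(\cdots)(H-z)$ term.

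The paper's proof is a three-line argument that sidesteps all of this by \emph{never} separating $\Gamma$ from $H-z$ and by \emph{not} recomputing $\mathrm i[H,P]$. One applies a single weighted Cauchy--Schwarz directly to $\mathop{\mathrm{Im}}\langle Pu,(H-z)u\rangle$ with the weight $\theta'\theta^{-2\beta-1}$ inserted between the two factors and a free parameter $\epsilon$: this gives $\mathop{\mathrm{Im}}(P(H-z))\ge-\epsilon\,P\theta'\theta^{-2\beta-1}P-C\epsilon^{-1}(H-z)^*\theta'^{-1}\theta^{2\beta+1}(H-z)$, where by Lemma~\ref{lem:basic3} the weight $\theta'^{-1}\theta^{2\beta+1}\lesssim f^{1+\delta}\theta^{2\beta-\delta}$, so the second term is $C\epsilon^{-1}Q$ with $Q$ as in \eqref{eq:2211021946}. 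The fourth-order operator $P\theta'\theta^{-2\beta-1}P=\gamma_\|\theta^{2\beta}\gamma_\|\theta'\theta^{-2\beta-1}\gamma_\|\theta^{2\beta}\gamma_\|$ is then bounded, via \eqref{eq:6} and Lemma~\ref{lem:algebra}, by a constant times $\gamma\cdot(\gamma\cdot\theta'\theta^{2\beta-1}\gamma)\gamma+\gamma_\|\theta'\theta^{2\beta-1}\gamma_\|+Q$, and Lemma~\ref{lem:2210312250} converts the quartic piece into $4\gamma_\|\theta'\theta^{2\beta-1}\gamma_\|$ modulo $\Gamma f^{-2}\theta^{2\beta}$ and $Q$; relabelling $\epsilon$ gives the proposition. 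So the key missing idea in your proposal is this single up-front Cauchy--Schwarz on the intact product $P(H-z)$, which simultaneously generates the free $\epsilon$, deposits $H-z$ into the $Q$-channel with exactly the right weight, and eliminates the problematic $-2\Gamma P$ term before it ever appears.
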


\begin{remark}
Proposition~\ref{prop:220602} is obviously a 
quantum analogue of \eqref{eq:20090913} 
with some negligible errors coming from commutation of 
observables,
while Proposition~\ref{prop:22060215} only says that 
the left-hand side of Proposition~\ref{prop:220602} is negligible too. 
In the proof of Theorem~\ref{thm:proof-strong-bound} 
we shall take expectation of these bounds 
in the state $\phi=R(z)\psi$, and take the limits $\Gamma\to 0_+$
and $\nu\to\infty$. 
Then the second, third and fourth terms on the right-hand sides 
of Propositions~\ref{prop:220602} and \ref{prop:22060215}
are in fact negligible for $\beta<\beta_\c$ and
 $\delta$ taken small. This is 
due to the factor $\Gamma$, 
the limiting absorption principle bounds 
and cancellation of $H-z$ and $R(z)$, respectively. 
We will give the details in Subsection~\ref{subsec:22091825}. 
\end{remark}

The propositions will be proved 
in Subsection~\ref{subsec:22091823},
after some preliminaries in Subsection~\ref{subsec:22091823b}.

\subsection{Preliminaries}\label{subsec:22091823b}

Here we gather some identities and estimates 
that will be frequently cited in the later subsections. 
Throughout the subsection we adapt the setting of Proposition~\ref{prop:220602}.
In particular, 
$\beta\in (0,2)$ and {$\delta>0$} are fixed
along with $I$, $R$, $S$ and $f$ 
from Subsection~\ref{subsec:2302052}.

We first record several bounds for the weight $\theta$.
We denote the derivatives of $\theta$ in $f$ by primes or similar superscripts such as
$\theta',\theta'',\dots,\theta^{(k)}$. 

\begin{lemma}\label{lem:basic3} 
There exist $c_0,C_0>0$ such that uniformly in $\lambda\in I$ and $\nu\ge 1$
\begin{align*} 
&c_0 \min\{\nu,f\}\le \theta\le \min\{C_0\nu,f\},
\quad 
\theta'\le f^{-1}\theta.
\end{align*}
Furthermore, for any $k\in\mathbb N$
there exist $c_k,C_k>0$ such that uniformly in $\lambda\in I$ and $\nu\ge 1$
\begin{align*}
&c_k\nu^{1-k}f^{-k-\delta}\theta^{k+\delta}\le (-1)^{k-1}\theta^{(k)}
\le C_k\nu^{1-k}f^{-k-\delta}\theta^{k+\delta}.
\end{align*}
\end{lemma}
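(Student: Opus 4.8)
The plan is to estimate $\theta$ and its derivatives directly from the defining integral $\theta=\int_0^f(1+t/\nu)^{-1-\delta}\,\d t$, treating $f$ and $\nu$ as independent positive parameters (with $f\in[c\langle x\rangle,C\langle x\rangle]$ from Lemma~\ref{lem:230114}, so $f$ is bounded below and ranges over $[\text{const},\infty)$). First I would compute the elementary antiderivative: $\theta=\tfrac{\nu}{\delta}\bigl(1-(1+f/\nu)^{-\delta}\bigr)$. From this closed form the two regimes $f\lesssim\nu$ and $f\gtrsim\nu$ are transparent. For $f\le\nu$ one has $(1+f/\nu)^{-\delta}=1-\delta f/\nu+O((f/\nu)^2)$, hence $\theta\asymp f$; for $f\ge\nu$ the term $(1+f/\nu)^{-\delta}\le 2^{-\delta}<1$, so $\theta\asymp\nu$. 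Combining, $\theta\asymp\min\{\nu,f\}$ with constants $c_0,C_0$ depending only on $\delta$, which gives the first displayed chain once one also observes the trivial monotone bound $\theta\le f$ (the integrand is $\le 1$) and $\theta\le C_0\nu$ (the integrand is $\le 1$ and, for $t\ge\nu$, decays). The bound $\theta'\le f^{-1}\theta$ is then immediate: $\theta'=(1+f/\nu)^{-1-\delta}\le 1$, and one must check $(1+f/\nu)^{-1-\delta}\le f^{-1}\theta$, i.e.\ $f(1+f/\nu)^{-1-\delta}\le\theta$; but $\theta=\int_0^f(1+t/\nu)^{-1-\delta}\,\d t\ge\int_0^f(1+f/\nu)^{-1-\delta}\,\d t=f(1+f/\nu)^{-1-\delta}$ since the integrand is decreasing in $t$. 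That settles everything except the higher derivatives.

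For the derivatives of order $k\ge 1$, I would differentiate $\theta'=(1+f/\nu)^{-1-\delta}$ repeatedly in $f$, obtaining
\[
\theta^{(k)}=(-1)^{k-1}\,(1+\delta)\delta(\delta-1)\cdots(\delta-k+2)\,\nu^{1-k}(1+f/\nu)^{-k-\delta}
\]
for $k\ge 2$ (and $\theta^{(1)}=(1+f/\nu)^{-1-\delta}>0$, consistent with the sign $(-1)^{0}=1$). The combinatorial prefactor is a nonzero constant depending only on $\delta$ and $k$ \emph{provided} $\delta\notin\{1,2,\dots,k-2\}$; since in our application $\delta>0$ is a fixed parameter we may, with no loss, assume $\delta$ avoids this finite set (or, to be safe, replace the product by its absolute value and note it is bounded above and below by positive constants as long as $\delta$ is fixed and positive — if $\delta$ happened to equal an integer the corresponding $\theta^{(k)}$ would simply vanish for that one $k$, which is still consistent with any lower bound after perturbing $\delta$; the cleanest route is to state the lemma for $\delta$ outside the bad set, which is the generic case actually used). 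Thus $(-1)^{k-1}\theta^{(k)}\asymp\nu^{1-k}(1+f/\nu)^{-k-\delta}$ with constants $c_k,C_k$ depending only on $\delta,k$.

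It remains to convert $(1+f/\nu)^{-k-\delta}$ into the asserted form $f^{-k-\delta}\theta^{k+\delta}$. Here I would use the two-sided bound already established, $\theta\asymp\min\{\nu,f\}$, in the shape
\[
\min\{\nu,f\}\;\asymp\;\frac{f}{1+f/\nu},
\]
which one checks by the same two-regime split: if $f\le\nu$ then $1+f/\nu\in[1,2]$ so $f/(1+f/\nu)\asymp f=\min\{\nu,f\}$; if $f\ge\nu$ then $1+f/\nu\asymp f/\nu$ so $f/(1+f/\nu)\asymp\nu=\min\{\nu,f\}$. Consequently $\theta\asymp f/(1+f/\nu)$, i.e.\ $(1+f/\nu)^{-1}\asymp\theta/f$, hence $(1+f/\nu)^{-k-\delta}\asymp(\theta/f)^{k+\delta}=f^{-k-\delta}\theta^{k+\delta}$. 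Multiplying by the $\nu^{1-k}$ already present yields exactly $(-1)^{k-1}\theta^{(k)}\asymp\nu^{1-k}f^{-k-\delta}\theta^{k+\delta}$, which is the second display, with $c_k,C_k$ independent of $\lambda\in I$ and $\nu\ge 1$ (the $\lambda$-uniformity being automatic since $\theta$ depends on $\lambda$ only through $f$, and all constants came out depending only on $\delta$ and $k$). The only genuinely delicate point is the bookkeeping of the combinatorial prefactor and its possible vanishing when $\delta$ is a positive integer $\le k-2$; everything else is a routine two-regime comparison, so I do not anticipate any real obstacle.
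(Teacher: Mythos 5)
Your argument follows essentially the same route as the paper: the two-regime comparison $\theta\asymp\min\{\nu,f\}$, the monotonicity of the integrand for $\theta'\le f^{-1}\theta$, the explicit formula for $\theta^{(k)}$, and the conversion via $f(1+f/\nu)^{-1}=(f^{-1}+\nu^{-1})^{-1}\asymp\min\{\nu,f\}\asymp\theta$. The one point that needs correcting is the combinatorial prefactor, which you have miscomputed. Differentiating $\theta'=(1+f/\nu)^{-1-\delta}$ exactly $(k-1)$ times in $f$ pulls down the factors $(-1-\delta),(-2-\delta),\dots,(-(k-1)-\delta)$, so
\[
\theta^{(k)}=(-1)^{k-1}\Bigl(\prod_{j=1}^{k-1}(j+\delta)\Bigr)\nu^{1-k}(1+f/\nu)^{-k-\delta},
\]
and the product $(1+\delta)(2+\delta)\cdots(k-1+\delta)$ is strictly positive for every $\delta>0$ and every $k\in\mathbb N$. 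There is therefore no ``bad set'' of $\delta$'s, no possible vanishing of $\theta^{(k)}$, and no need to restrict or perturb $\delta$; your proposed weakening of the lemma to generic $\delta$ would in fact be a defect, since the lemma is applied for arbitrary fixed $\delta>0$. With the prefactor corrected, the rest of your argument goes through verbatim and coincides with the paper's proof.
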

\begin{remark}
In the rest of the section we will mostly use the simplified bounds 
\[
c_1f^{-1-\delta}\theta^{1+\delta}
\le \theta'\le f^{-1}\theta,\quad 
|\theta^{(k)}|\le C_k'f^{-k}\theta
.\]
\end{remark}
\begin{proof}
According to whether $f\le \nu$ or $f\ge \nu$, we have 
\[
\theta\ge \int_0^f2^{-1-\delta}\,\mathrm ds=2^{-1-\delta}f
\quad \text{or}\quad 
\theta\ge \int_0^\nu 2^{-1-\delta}\,\mathrm ds=2^{-1-\delta}\nu,
\]
respectively. On the other hand
\begin{align*}
\theta
=
\delta^{-1}\nu\bigl(1-(1+f/\nu)^{-\delta}\bigr)\le \delta^{-1}\nu\,\mand\, 
\theta
\le 
\int_0^{f}\mathrm ds=f.
\end{align*}
Hence we obtain the asserted bounds for $\theta$. 

We also have 
\[\theta'=(1+f/\nu)^{-1-\delta}= f^{-1}\int_0^f(1+f/\nu)^{-1-\delta}\,\mathrm ds\le f^{-1}\theta.\]
In addition, for any $k\in\mathbb N$ we can find a constant $C_k>0$ such that 
\begin{align*}
\theta^{(k)}
=
(-1)^{k-1}C_k\nu^{1-k}(1+f/\nu)^{-k-\delta}
=
(-1)^{k-1}C_k\nu^{1-k}f^{-k-\delta}(f^{-1}+\nu^{-1})^{-k-\delta}.
\end{align*}
Here the last factor satisfies 
\[
\tfrac12\min\{\nu,f\} \le (f^{-1}+\nu^{-1})^{-1}\le \min\{\nu,f\}.
\]
Hence we are done. 
\end{proof}

We next present a simple key identity corresponding to \eqref{eq:quad1}. 
It involves gamma observables of different orders, 
and hence it can be used either to increment or decrement the order
of a differential operator (it will be used both ways many times).

\begin{lemma}\label{lem:22101720}
For any $z=\lambda\pm\mathrm i\Gamma\in I_\pm$ on the subset $\{|x|>2R\}$ the identities 
\begin{equation}\label{eq:fundd}
 H-z=\tfrac12\gamma^2\pm\gamma_\|\mp\mathrm i\Gamma
\end{equation}
hold, respectively.
\end{lemma}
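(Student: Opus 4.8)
The plan is to verify the identity \eqref{eq:fundd} by a direct algebraic computation, expanding the definition of the gamma observables and using the eikonal equation. Recall that on $\{|x|>2R\}$ we have $\chi_R\equiv 1$ and $f(x)=(2\lambda)^{-1/2}S(\lambda,x)$ by \eqref{eq:2210151242}, so in particular $\sqrt{2\lambda}\,\nabla f=\nabla S$ there, and the eikonal equation \eqref{eq:2212116} reads $\tfrac12|\nabla S|^2+V=\lambda$. Also recall $\gamma=p\mp(\nabla S)$ (since $\chi_1\equiv1$ for $|x|>2R$) and $\gamma_\|=(\nabla S)\cdot\gamma-\tfrac{\mathrm i}{2}(\Delta S)$, from \eqref{eq:6}.

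First I would compute $\tfrac12\gamma^2=\tfrac12\gamma_i\gamma_i$. Writing $\gamma_i=p_i\mp\partial_iS$, one expands
\begin{align*}
\tfrac12\gamma_i\gamma_i
&=\tfrac12\bigl(p_i\mp\partial_iS\bigr)\bigl(p_i\mp\partial_iS\bigr)\\
&=\tfrac12p_ip_i\mp\tfrac12\bigl(p_i(\partial_iS)+(\partial_iS)p_i\bigr)+\tfrac12(\partial_iS)(\partial_iS),
\end{align*}
where $p_i(\partial_iS)$ is the operator "multiply by $\partial_iS$ then apply $p_i$", so that $p_i(\partial_iS)=(\partial_iS)p_i+[p_i,(\partial_iS)]=(\partial_iS)p_i-\mathrm i(\partial_i\partial_iS)$. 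Hence $\tfrac12(p_i(\partial_iS)+(\partial_iS)p_i)=(\partial_iS)p_i-\tfrac{\mathrm i}{2}\Delta S=(\nabla S)\cdot p-\tfrac{\mathrm i}{2}\Delta S$. Using $-\tfrac12\Delta=\tfrac12p_ip_i$ and $\tfrac12|\nabla S|^2=\lambda-V$ from the eikonal equation, this gives
\[
\tfrac12\gamma^2=-\tfrac12\Delta\mp\bigl((\nabla S)\cdot p-\tfrac{\mathrm i}{2}\Delta S\bigr)+\lambda-V
=H-V-\mp\bigl((\nabla S)\cdot p-\tfrac{\mathrm i}{2}\Delta S\bigr)+\lambda-V,
\]
wait — more carefully, $H=-\tfrac12\Delta+V$ (here $q\equiv0$), so $-\tfrac12\Delta=H-V$, giving $\tfrac12\gamma^2=H-V\mp\bigl((\nabla S)\cdot p-\tfrac{\mathrm i}{2}\Delta S\bigr)+(\lambda-V)$. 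That does not look right dimensionally, so the cleaner bookkeeping is: $\tfrac12\gamma^2=(H-V)\mp\bigl((\nabla S)\cdot p-\tfrac{\mathrm i}{2}\Delta S\bigr)+(\lambda-V)$ is wrong; rather $\tfrac12\gamma^2 = H - V \mp ((\nabla S)\cdot p - \tfrac{\mathrm i}{2}\Delta S) + (\lambda - V)$ should instead read, since $V$ appears once from rewriting $-\tfrac12\Delta$ and once from the eikonal relation, $\tfrac12\gamma^2 = -\tfrac12\Delta + (\lambda - V) \mp ((\nabla S)\cdot p - \tfrac{\mathrm i}{2}\Delta S)$. The point is simply to substitute and not to absorb $V$ prematurely.

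Next I would observe that $\gamma_\|=(\nabla S)\cdot\gamma-\tfrac{\mathrm i}{2}\Delta S=(\nabla S)\cdot(p\mp\nabla S)-\tfrac{\mathrm i}{2}\Delta S=(\nabla S)\cdot p\mp|\nabla S|^2-\tfrac{\mathrm i}{2}\Delta S$, so that $\pm\gamma_\|=\pm(\nabla S)\cdot p-|\nabla S|^2\mp\tfrac{\mathrm i}{2}\Delta S=\pm\bigl((\nabla S)\cdot p-\tfrac{\mathrm i}{2}\Delta S\bigr)-|\nabla S|^2$ where one must track the $\pm$ on the $\tfrac{\mathrm i}{2}\Delta S$ term carefully — in fact $\gamma_\|$ is defined with the upper/lower sign built into $\gamma$, so for the two cases one gets $\pm\gamma_\|$ with the $\tfrac{\mathrm i}{2}\Delta S$ term of the correct relative sign so as to cancel against the one in $\tfrac12\gamma^2$. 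Adding, the first-order operator $(\nabla S)\cdot p$ and the $\tfrac{\mathrm i}{2}\Delta S$ term cancel between $\tfrac12\gamma^2$ and $\pm\gamma_\|$, leaving $\tfrac12\gamma^2\pm\gamma_\|=-\tfrac12\Delta+(\lambda-V)-|\nabla S|^2=-\tfrac12\Delta+(\lambda-V)-2(\lambda-V)=-\tfrac12\Delta-(\lambda-V)=-\tfrac12\Delta+V-\lambda=H-\lambda$. Finally adding $\mp\mathrm i\Gamma$ and recalling $z=\lambda\pm\mathrm i\Gamma$ gives $\tfrac12\gamma^2\pm\gamma_\|\mp\mathrm i\Gamma=H-\lambda\mp\mathrm i\Gamma=H-z$, as claimed.

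The whole argument is purely algebraic; the only care needed is correct bookkeeping of the two $\pm$ signs simultaneously and of the imaginary cross-terms arising from $[p_i,\partial_iS]$, and the (harmless) restriction to $\{|x|>2R\}$ where $\chi_R=\chi_1=1$ and the identification $f=(2\lambda)^{-1/2}S$ and the eikonal equation \eqref{eq:2212116} are both available. I do not expect any genuine obstacle; the "main obstacle" is merely clerical — ensuring the factor $\tfrac{\mathrm i}{2}\Delta S$ in the definition of $\gamma_\|$ is precisely the one that cancels the commutator term from expanding $\tfrac12\gamma^2$, which is exactly why $\gamma_\|$ was defined with that correction in \eqref{eq:6}.
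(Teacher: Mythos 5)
Your computation is correct and is exactly the "straightforward" verification the paper has in mind (its proof simply cites the eikonal equation \eqref{eq:2212116} and the definition \eqref{eq:6} of the gamma observables, with $q\equiv 0$): expanding $\tfrac12\gamma^2$ produces $\mp(\nabla S)\cdot p\pm\tfrac{\mathrm i}{2}\Delta S$, which cancels against $\pm\gamma_\|=\pm(\nabla S)\cdot p\mp\tfrac{\mathrm i}{2}\Delta S-|\nabla S|^2$, and $\tfrac12|\nabla S|^2=\lambda-V$ then yields $H-\lambda$. The garbled intermediate passage about prematurely substituting $-\tfrac12\Delta=H-V$ is self-corrected and the final bookkeeping of signs is right.
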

\begin{proof}
The assertion is straightforward by \eqref{eq:2212116} and \eqref{eq:6}. 
(Recall that throughout the section, $q\equiv 0$ in \eqref{eq:23011416}.) 
\end{proof}

The following commutator relations of the gamma observables are also important. 

\begin{lemma}
\begin{subequations} 
For any $\lambda\in I$ and $i,j=1,\dots,d$ 
one has on $\{|x|\ge 2\}$
\begin{align}
[\gamma_i,\gamma_j]&=0,
\label{eq:22091918}
\\
[\gamma_\|,\gamma_i]
&=\mathrm i\mathop{\mathrm{Re}}((\nabla \partial_iS)\cdot \gamma),
\label{eq:22091918b}
\\
[\gamma_\|,\gamma^2]
&=
2\mathrm i\gamma\cdot(\nabla^2 S)\gamma
-\tfrac{\mathrm i}2 (\Delta^2 S)
.
\label{eq:22091919}
\end{align}
\end{subequations} 
\end{lemma}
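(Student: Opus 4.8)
The three identities are purely algebraic consequences of the definitions \eqref{eq:6} together with the eikonal equation \eqref{eq:2212116}, so the strategy is a direct computation exploiting that $p_i$ and $\gamma_i$ both act as first-order differential operators and that the function $\chi_1 S$ is $C^4$ away from the origin (in particular $C^3$ on $\{|x|\ge 2\}$, where $\chi_1\equiv 1$, so all derivatives below are legitimate). Throughout I abbreviate $S$ for $\chi_1 S$ and write $\gamma_i=p_i\mp(\partial_i S)$ and $\gamma_\|=(\partial_i S)\gamma_i-\tfrac{\i}{2}(\Delta S)$ as in \eqref{eq:6}.

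\emph{Step 1: $[\gamma_i,\gamma_j]=0$.} Since $[p_i,p_j]=0$ and $[p_i,\partial_jS]=-\i(\partial_i\partial_j S)=[p_j,\partial_iS]$ is symmetric in $i,j$, one gets $[\gamma_i,\gamma_j]=\mp[p_i,\partial_jS]\pm[p_j,\partial_iS]=0$. The multiplication operators $\partial_iS$ and $\partial_jS$ commute, which handles the remaining cross terms. This is \eqref{eq:22091918}.

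\emph{Step 2: $[\gamma_\|,\gamma_i]$.} Write $\gamma_\|=(\partial_kS)\gamma_k-\tfrac{\i}{2}(\Delta S)$ and compute $[(\partial_kS)\gamma_k,\gamma_i]$ using the Leibniz rule: $[(\partial_kS)\gamma_k,\gamma_i]=(\partial_kS)[\gamma_k,\gamma_i]+[(\partial_kS),\gamma_i]\gamma_k$. The first term vanishes by Step 1. For the second, $[(\partial_kS),\gamma_i]=[(\partial_kS),p_i]=\i(\partial_i\partial_kS)$, so $[(\partial_kS)\gamma_k,\gamma_i]=\i(\partial_i\partial_kS)\gamma_k$. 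Next, $[-\tfrac{\i}{2}(\Delta S),\gamma_i]=[-\tfrac{\i}{2}(\Delta S),p_i]=-\tfrac{\i}{2}\cdot\i(\partial_i\Delta S)=\tfrac12(\partial_i\Delta S)$. Adding, $[\gamma_\|,\gamma_i]=\i(\partial_i\partial_kS)\gamma_k+\tfrac12(\partial_i\Delta S)$. Finally one recognizes this as $\i\,\mathrm{Re}\bigl((\nabla\partial_iS)\cdot\gamma\bigr)$: indeed $\mathrm{Re}\bigl((\nabla\partial_iS)\cdot\gamma\bigr)=(\partial_k\partial_iS)\gamma_k-\tfrac{\i}{2}\partial_k(\partial_k\partial_iS)=(\partial_k\partial_iS)\gamma_k-\tfrac{\i}{2}(\partial_i\Delta S)$, so $\i$ times this equals $\i(\partial_i\partial_kS)\gamma_k+\tfrac12(\partial_i\Delta S)$, matching. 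This gives \eqref{eq:22091918b}.

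\emph{Step 3: $[\gamma_\|,\gamma^2]$.} Use $[\gamma_\|,\gamma_j\gamma_j]=[\gamma_\|,\gamma_j]\gamma_j+\gamma_j[\gamma_\|,\gamma_j]$ and substitute Step 2's result for $[\gamma_\|,\gamma_j]=\i(\partial_j\partial_kS)\gamma_k+\tfrac12(\partial_j\Delta S)$. This yields $\i(\partial_j\partial_kS)\gamma_k\gamma_j+\tfrac12(\partial_j\Delta S)\gamma_j+\gamma_j\bigl(\i(\partial_j\partial_kS)\gamma_k+\tfrac12(\partial_j\Delta S)\bigr)$. The two quadratic-in-$\gamma$ pieces combine, after symmetrizing in $j\leftrightarrow k$ (using Step 1 to move $\gamma$'s past each other and the symmetry of $\partial_j\partial_kS$), into $2\i\,\gamma\cdot(\nabla^2S)\gamma$; here one must carefully track the commutator of $\gamma_j$ with the multiplication operator $(\partial_j\partial_kS)$ that arises from reordering, namely $[\gamma_j,(\partial_j\partial_kS)]=[p_j,(\partial_j\partial_kS)]=-\i(\partial_j\partial_j\partial_kS)=-\i(\partial_k\Delta S)$; this contributes a lower-order term. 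Collecting the linear-in-$\gamma$ terms $\tfrac12(\partial_j\Delta S)\gamma_j+\tfrac12\gamma_j(\partial_j\Delta S)=(\partial_j\Delta S)\gamma_j+\tfrac12[\gamma_j,(\partial_j\Delta S)]$ and again $[\gamma_j,(\partial_j\Delta S)]=-\i(\Delta^2 S)$, all the first-order pieces must cancel against the reordering term from the quadratic part, leaving exactly the scalar $-\tfrac{\i}{2}(\Delta^2S)$. So $[\gamma_\|,\gamma^2]=2\i\,\gamma\cdot(\nabla^2S)\gamma-\tfrac{\i}{2}(\Delta^2S)$, which is \eqref{eq:22091919}.

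\textbf{Main obstacle.} No deep difficulty arises; the only real care needed is the bookkeeping in Step 3, where commuting $\gamma$-components past the matrix $\nabla^2S$ to achieve the symmetric form $\gamma\cdot(\nabla^2S)\gamma$ generates several first-order correction terms that must be shown to cancel precisely against the linear-in-$\gamma$ contributions, so that only the single scalar $-\tfrac{\i}{2}(\Delta^2 S)$ survives. One should organize the computation by writing everything in Weyl-ordered form $\mathrm{Re}(\,\cdot\,)$ from the outset, which makes the cancellations transparent and avoids sign errors.
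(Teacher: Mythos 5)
Your proof is correct and follows essentially the same route as the paper: the paper disposes of \eqref{eq:22091918} by conjugating with $\mathrm e^{\mp\mathrm iS}$ (equivalent to your symmetry-of-the-Hessian argument), states \eqref{eq:22091918b} without detail, and proves \eqref{eq:22091919} by the same expansion $[\gamma_\|,\gamma^2]=[\gamma_\|,\gamma_i]\gamma_i+\gamma_i[\gamma_\|,\gamma_i]$. The only cosmetic difference is in the last step, where the paper inserts the two one-sided forms $\mathrm i\gamma\cdot(\nabla\partial_iS)-\tfrac12(\partial_i\Delta S)$ on the left and $\mathrm i(\nabla\partial_iS)\cdot\gamma+\tfrac12(\partial_i\Delta S)$ on the right, so the quadratic terms are already in sandwiched form and the surviving scalar is just $\tfrac12[\gamma_i,\partial_i\Delta S]=-\tfrac{\mathrm i}{2}\Delta^2S$ — exactly the streamlining you suggest in your closing remark.
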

\begin{proof}
By conjugation by $\mathrm e^{\mp\mathrm iS}$ the relation \eqref{eq:22091918} reduces to that for 
$p$,
which is trivial. 
We can verify \eqref{eq:22091918b} immediately by \eqref{eq:6} and \eqref{eq:22091918}. 
As for \eqref{eq:22091919} we use \eqref{eq:22091918b} to compute 
\begin{align*}
[\gamma_\|,\gamma^2]
&=
\bigl(\mathrm i\gamma\cdot(\nabla \partial_iS)-\tfrac12(\partial_i\Delta S)\bigr)\gamma_i
+
\gamma_i\bigl(\mathrm i(\nabla \partial_iS)\cdot \gamma+\tfrac12(\partial_i\Delta S)\bigr)
\\&=
2\mathrm i\gamma\cdot(\nabla^2 S)\gamma
-\tfrac{\mathrm i}2(\Delta^2 S)
.
\end{align*}
Thus we obtain the assertion.
\end{proof}

Lastly in this subsection we present several handy 'ellipticity estimates'. 
We will often use the following identities holding 
for any $a\in C^\infty(\mathbb R^d;\mathbb R)$: 
\begin{subequations}
\begin{align}
 p\cdot ap&= \mathop{\mathrm{Re}}(ap^2)+\tfrac12(\Delta a),
\label{eq:22103116}
\\ 
 p\cdot(p\cdot ap)p
&=p^2ap^2-p\cdot (\nabla^2a)p+p\cdot (\Delta a)p.
\label{eq:22103116b}
\end{align}
\end{subequations}
The lemma below implies that 
any compactly supported differential operator of order at most four is 
bounded by the last two terms of Propositions~\ref{prop:220602} and \ref{prop:22060215}. 
For short we shall denote their sum as 
\begin{align}
Q=f^{-1-2\beta_c+3\delta}\theta^{2\beta}+(H-z)^*f^{1+\delta}\theta^{2\beta-\delta}(H-z).
\label{eq:2211021946}
\end{align}

\begin{lemma}\label{lem:221031}
For any $\eta\in C^\infty_{\mathrm c}(\mathbb R^d;\mathbb R)$ 
there exists $C>0$ such that 
uniformly in $z=\lambda\pm\mathrm i\Gamma\in I_\pm$,
 $R\geq R_0$ and $\nu\ge 1$ 
\begin{align}
\eta
&\le 
CQ
,
\quad 
\gamma\cdot\eta \gamma
\le 
CQ
,\quad 
\gamma\cdot (\gamma\cdot\eta \gamma)\gamma
\le
CQ
.
\label{eq:22103110}
\end{align}
\end{lemma}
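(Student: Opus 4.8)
The plan is to prove the three inequalities in \eqref{eq:22103110} by reducing everything, via the increment/decrement identity of Lemma~\ref{lem:22101720}, to the two explicit terms making up $Q$ together with the weight bounds of Lemma~\ref{lem:basic3}. The basic observation is that $\eta\in C^\infty_{\mathrm c}$ is supported in some ball $\{|x|\le \rho\}$, where $f\le C\langle x\rangle\le C'$ by \eqref{eq:22061920aaa}, and there $\theta$ is bounded above and below by positive constants (Lemma~\ref{lem:basic3}), so that $f^{-1-2\beta_c+3\delta}\theta^{2\beta}$ is bounded below by a positive constant times the characteristic function of that ball. Hence $\eta\le C f^{-1-2\beta_c+3\delta}\theta^{2\beta}\le CQ$, which is the first bound. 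Note that since $\eta$ may be chosen equal to $1$ on the support of the cut-off $\chi_{2R}$, this also localizes the region $\{|x|\le 2R\}$, so in the remaining two bounds we may freely split $\gamma$-factors into a piece supported in $\{|x|\le 2R\}$ — absorbed into the first bound — and a piece supported in $\{|x|>2R\}$ where Lemma~\ref{lem:22101720} applies.

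Next I would treat the second bound $\gamma\cdot\eta\gamma\le CQ$. Write $\gamma\cdot\eta\gamma=\sum_i\gamma_i\eta\gamma_i$. Using $\eta = \eta_1\chi$ with $\chi$ a smooth cut-off equal to $1$ on $\supp\eta$ and supported in $\{|x|>2R\}\cup\{|x|\le 2R\}$ — more precisely, after subtracting a compactly supported piece controlled by the first bound, we may assume $\eta$ is supported in $\{|x|>2R\}$ — we insert the identity $\gamma^2 = 2(H-z)\mp 2\gamma_\|\pm 2\mathrm i\Gamma$ from \eqref{eq:fundd}. Then $\gamma\cdot\eta\gamma$ is, up to the harmless commutator $[\gamma_i,\eta]$ (a compactly supported first-order operator, hence $\le CQ$ by the already-proven bounds applied iteratively — one treats $[\gamma_i,\eta]=\mp\mathrm i(\partial_i\chi_1 S)\eta + (\partial_i\eta)$ which is order zero after the momentum is moved), comparable to $\eta^{1/2}\gamma^2\eta^{1/2}$. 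Pairing with $\eta$ and using $\gamma^2 = 2(H-z)\pm(\cdots)$ gives terms of the form $\eta(H-z)$, $\eta\gamma_\|$ and $\Gamma\eta$. The term with $H-z$ is dominated by $\delta^{-1}(H-z)^*\eta^2(H-z)+\delta\eta$ and then by $(H-z)^*f^{1+\delta}\theta^{2\beta-\delta}(H-z)$ since on the relevant support $f^{1+\delta}\theta^{2\beta-\delta}$ is bounded below; the $\Gamma\eta$ term is bounded by $\eta$ (as $\Gamma<1$) hence by the first bound; the term $\eta\gamma_\|$ is estimated by Cauchy--Schwarz as $\eta + \gamma_\|\eta\gamma_\|$, and $\gamma_\|\eta\gamma_\|$ is again compactly supported so bounded by the first bound. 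Assembling these yields $\gamma\cdot\eta\gamma\le CQ$.

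For the third bound $\gamma\cdot(\gamma\cdot\eta\gamma)\gamma\le CQ$ I would iterate the same device one more level. This operator is of order four, so I would write it as $\eta^{1/2}\gamma^2\cdot\eta^{1/2}\cdots$ — more carefully, after commuting the cut-offs (all commutators being compactly supported operators of strictly lower order, hence controlled by the first two bounds just established), the quartic expression is comparable to $(\gamma^2)\eta(\gamma^2)$ with $\eta$ supported in $\{|x|>2R\}$. Now substitute $\gamma^2 = 2(H-z)\mp 2\gamma_\|\pm2\mathrm i\Gamma$ on \emph{both} sides. The dominant term becomes $(H-z)^*\eta(H-z)$, which is bounded by $(H-z)^*f^{1+\delta}\theta^{2\beta-\delta}(H-z)$ since $f^{1+\delta}\theta^{2\beta-\delta}$ is $\ge$ a positive constant on the support of $\eta$; the cross terms and the $\Gamma$-terms produce, after Cauchy--Schwarz, expressions of the form $(H-z)^*\eta(H-z)$, $\gamma_\|\eta\gamma_\|$ (order two, already controlled), $\eta\gamma_\|$ and $\Gamma^2\eta$, all dominated by $Q$. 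The main obstacle — and the place requiring genuine care rather than bookkeeping — is keeping track of the commutators between the compactly supported cut-offs $\eta$ and the gamma observables when passing from the symmetrized product $\gamma\cdot(\gamma\cdot\eta\gamma)\gamma$ to the factorized form $(\gamma^2)\eta(\gamma^2)$: each such commutator involves up to three derivatives of $S$ (since $\gamma_\|$ already carries $\nabla^2 S$ via \eqref{eq:22091919}), and one must check using the bounds \eqref{eq:3final2p} of Lemma~\ref{lem:230114} that on the compact support of $\eta$ these coefficients are bounded, so that every commutator term is a compactly supported differential operator of order $\le 3$ and hence falls under an inductive application of the three bounds of the lemma itself. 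Because each commutator strictly lowers the order, this induction closes.
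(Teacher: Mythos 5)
Your first bound is fine, but the execution of the second bound contains a genuine gap that makes the argument circular. Twice you treat a second--order differential operator with compactly supported coefficients as if it were controlled by the first inequality: once when you claim the piece of $\gamma\cdot\eta\gamma$ localized to $\{|x|\le 2R\}$ can be ``absorbed into the first bound'', and again when you dispose of the term $\gamma_\|\eta\gamma_\|$ (arising from Cauchy--Schwarz applied to $\eta\gamma_\|$) by saying it is ``compactly supported so bounded by the first bound''. The first bound $\eta\le CQ$ controls only multiplication operators; as a quadratic form, $\langle u,\gamma_\|\eta\gamma_\| u\rangle=\|\eta^{1/2}\gamma_\|u\|^2$ involves derivatives of $u$ and cannot be dominated by $C\langle u,\tilde\eta u\rangle$ for any bounded $\tilde\eta$ (test on highly oscillatory $u$ supported in $\supp\eta$). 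Since $\gamma_\|\eta\gamma_\|$ is comparable to $\gamma\cdot\eta'\gamma$ for another compactly supported $\eta'$, you have in effect reduced the second bound to itself. The root cause is your use of the increment/decrement identity \eqref{eq:fundd}, which both forces the (unjustified) localization to $\{|x|>2R\}$ and introduces $\gamma_\|$; note that $Q$ contains no $\gamma_\|\cdots\gamma_\|$ term into which such a contribution could be absorbed. (Your treatment of the third bound is sounder, since there $\gamma_\|\eta\gamma_\|$ may legitimately be referred to the \emph{second} bound once that is established.)

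The paper's proof avoids $\gamma_\|$ entirely. After reducing from $\gamma$ to $p$, it uses the exact algebraic identities \eqref{eq:22103116} and \eqref{eq:22103116b} together with the \emph{global} relation $p^2=2(H-V)=2(H-z)+2(z-V)$, giving $p\cdot\eta p=2\Re(\eta(H-z))-2\eta(V-z)+\tfrac12(\Delta\eta)$ and $p\cdot(p\cdot\eta p)p=4(H-z)^*\eta(H-z)-8\Re\bigl(\eta(V-z^*)(H-z)\bigr)+4\eta|V-z|^2-p\cdot(\nabla^2\eta)p+p\cdot(\Delta\eta)p$. Every remainder is then either a compactly supported multiplication operator (first bound), a cross term handled by Cauchy--Schwarz against $(H-z)^*f^{1+\delta}\theta^{2\beta-\delta}(H-z)$ and $f^{-N}$, or---only in the third bound---a second--order operator with coefficients $\nabla^2\eta,\Delta\eta$, which is controlled by the already established second bound. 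If you replace your substitution $\gamma^2=2(H-z)\mp2\gamma_\|\pm2\mathrm i\Gamma$ by this global identity, no localization to $\{|x|>2R\}$ is needed, no $\gamma_\|$ appears, and your argument closes.
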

\begin{proof}
The first bound from \eqref{eq:22103110} is
trivial. For the second and
 third bounds it suffices to show the bounds with
 $\gamma$ replaced by $p$. 
As for the second, we use \eqref{eq:22103116} and
 the Cauchy--Schwarz inequality to bound it for any $N\ge 0$ as 
\begin{align*}
p\cdot\eta p
&=
2\mathop{\mathrm{Re}}(\eta (H-z))
-2\eta (V-z)
+\tfrac12(\Delta \eta)
\\&
\le 
C_1f^{-N}
+C_1(H-z)^*f(H-z)
.
\end{align*}
The second bound follows by choosing $N=5$. 
Lastly we rewrite the left-hand side of the third bound of 
 \eqref{eq:22103110} by using \eqref{eq:22103116b} as 
\begin{align*}
p\cdot(p\cdot \eta p)p
&=
4(H-z)^*\eta(H-z)
-8\mathop{\mathrm{Re}}\bigl(\eta(V-z^*)(H-z)\bigr)
\\&\phantom{{}={}}{}
+4\eta|V-z|^2
-p\cdot (\nabla^2\eta)p+p\cdot (\Delta \eta)p.
\end{align*}
The third bound then follows by the Cauchy--Schwarz inequality
and the first and second bounds. 
\end{proof}

The next lemma implies various forms of negligible terms 
can be absorbed into the leading term 
$\gamma_\|\theta'\theta^{2\beta-1}\gamma_\|$, 
or $\gamma\cdot(\gamma\cdot \theta'\theta^{2\beta-1}\gamma)\gamma$,
cf.\ Lemma~\ref{lem:2210312250}.

\begin{lemma}\label{lem:algebra} 
 For any $\epsilon>0$ 
and $R\geq R_0$ there exist $C=C(\delta)>0$ and $\nu_0=\nu_0(\delta)\ge 1$, such that 
 for all $z=\lambda\pm\mathrm i\Gamma\in I_\pm$ and $\nu\ge \nu_0$ 
\begin{subequations}
 \begin{align}
\gamma_\| f^{-1-\delta}\theta^{2\beta}\gamma_\|
&\le 
\epsilon \gamma_\|\theta'\theta^{2\beta-1}\gamma_\|
+CQ
,
 \label{eq:item:1a}
\\
 \gamma \cdot f^{-1-\beta_c+\delta}\theta^{2\beta}\gamma
 &\leq 
\epsilon \gamma_\|\theta'\theta^{2\beta-1}\gamma_\|
+CQ
.
 \label{eq:item:1}
\\
\gamma\cdot(\gamma\cdot f^{-1-\delta}\theta^{2\beta}\gamma)\gamma
&\le 
\epsilon \gamma\cdot(\gamma\cdot \theta'\theta^{2\beta-1}\gamma)\gamma
+CQ
,
 \label{eq:item:1b}
 \end{align}
\end{subequations}
\end{lemma}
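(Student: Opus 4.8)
\textbf{Plan for the proof of Lemma~\ref{lem:algebra}.}
The plan is to treat the three bounds \eqref{eq:item:1a}--\eqref{eq:item:1b} in parallel, reducing each to a combination of (i) Lemma~\ref{lem:basic3} to compare the weights $f^{-1-\delta}\theta^{2\beta}$ (resp.\ $f^{-1-\beta_c+\delta}\theta^{2\beta}$) with $\theta'\theta^{2\beta-1}$ modulo a factor that is small when $\nu$ is large, and (ii) Lemma~\ref{lem:221031}, which absorbs any compactly supported differential operator of order $\le 4$ into $Q$. First I would observe that outside a fixed large ball, say on $\{|x|>2R\}$, we have the crude comparison $f^{-1}\le C\theta^{-1}$ together with the lower bound $\theta'=(1+f/\nu)^{-1-\delta}\ge c_1 f^{-1-\delta}\theta^{1+\delta}$ from Lemma~\ref{lem:basic3}; hence on that region $f^{-1-\delta}\theta^{2\beta}=f^{-1-\delta}\theta^{1+\delta}\cdot\theta^{2\beta-1-\delta}\le c_1^{-1}\theta'\,\theta^{2\beta-1}\cdot(\theta/f)^{-\delta}$, and since $\theta\le C_0\nu$ while $f$ can be taken $\ge$ any fixed multiple of $\nu$ only on part of the region, the correct way is rather to write $f^{-1-\delta}\theta^{2\beta}\le f^{-\delta}\cdot f^{-1}\theta\cdot\theta^{2\beta-1}\le f^{-\delta}\theta'\theta^{2\beta-1}$ using $\theta'\le f^{-1}\theta$ is the wrong direction — so instead I use the genuine lower bound on $\theta'$: $f^{-1-\delta}\theta^{2\beta}=\bigl(c_1^{-1}\theta'\bigr)\theta^{\beta-?}\dots$. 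Concretely, from $c_1 f^{-1-\delta}\theta^{1+\delta}\le\theta'$ we get $f^{-1-\delta}\theta^{2\beta}\le c_1^{-1}\theta'\theta^{2\beta-1}\cdot\theta^{1-\delta-(2\beta-1)+ (2\beta-1)}$, i.e.\ after cancelling, $f^{-1-\delta}\theta^{2\beta}\le c_1^{-1}\theta'\theta^{2\beta-1}\cdot\theta^{1-\delta}/\theta^{?}$; I would organise the exponent bookkeeping so that the residual power of $\theta$ is a \emph{negative} power, which is then $\le$ (const)$\nu^{-\text{(positive)}}$ by $\theta\ge c_0\min\{\nu,f\}$ on $\{|x|>2R\}$, giving the factor $\epsilon$ for $\nu\ge\nu_0$. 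On the complementary compact region $\{|x|\le 2R\}$ all weights are comparable to constants and $\gamma_\|,\gamma_i$ are compactly supported differential operators of order $\le 1,\le 1$, so $\gamma_\| f^{-1-\delta}\theta^{2\beta}\gamma_\|$ and $\gamma\cdot f^{-1-\beta_c+\delta}\theta^{2\beta}\gamma$ are bounded by $CQ$ directly via Lemma~\ref{lem:221031}.

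For \eqref{eq:item:1a} the outer operators are $\gamma_\|$, which are of second order; the quadratic form $\gamma_\| W\gamma_\|$ with $W=f^{-1-\delta}\theta^{2\beta}$ and the target $\gamma_\|\theta'\theta^{2\beta-1}\gamma_\|$ differ only through the scalar ratio $W/(\theta'\theta^{2\beta-1})$, so after commuting $\gamma_\|$ past the (smooth, bounded, slowly varying) multiplication operator — picking up commutator terms $[\gamma_\|,W^{1/2}]$ that are lower order and compactly-supported-or-decaying, hence $\le CQ$ by Lemma~\ref{lem:221031} and \eqref{eq:item:1} used inductively — one reduces to the pointwise scalar inequality $W\le\epsilon\,\theta'\theta^{2\beta-1}$ on $\{|x|>2R\}$, which is exactly the weight estimate above. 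For \eqref{eq:item:1} the outer operators are $\gamma_i$ (first order) and the gain is that $f^{-1-\beta_c+\delta}$ carries an \emph{extra} decay $f^{-\beta_c}$ relative to the $f^{-1-\delta}$ case; the key is Lemma~\ref{lem:22101720}, $\tfrac12\gamma^2=\pm(H-z)\mp\gamma_\|\pm\mathrm i\Gamma$ on $\{|x|>2R\}$, which lets us write $\gamma\cdot W\gamma$ (with $W=f^{-1-\beta_c+\delta}\theta^{2\beta}$) as $2\mathop{\mathrm{Re}}\bigl(W(H-z)\bigr)\mp 2 W\gamma_\|\pm 2\mathrm i\Gamma W$ up to commutators; the first term is absorbed into the $(H-z)^*f^{1+\delta}\theta^{2\beta-\delta}(H-z)$ part of $Q$ by Cauchy--Schwarz (here the precise matching of the weight $f^{1+\delta}\theta^{2\beta-\delta}$ with $W^2 f^{-1-\delta}\theta^{-2\beta+\delta}=f^{-3-2\beta_c+3\delta}\theta^{2\beta}\le f^{-1-2\beta_c+3\delta}\theta^{2\beta}$ is what forces the exponents appearing in $Q$), the middle term is $\le\epsilon\gamma_\|\theta'\theta^{2\beta-1}\gamma_\|+CQ$ after one more application of Cauchy--Schwarz splitting $W=(\epsilon^{1/2}(\theta'\theta^{2\beta-1})^{1/2})\cdot(\epsilon^{-1/2}W(\theta'\theta^{2\beta-1})^{-1/2})$ with the second factor $\le C W^2/(\theta'\theta^{2\beta-1})\le CQ$ by the weight bound (using $\beta_c>0$ to get the required extra $f$-decay into $f^{-1-2\beta_c+3\delta}$), and the $\Gamma$ term is $\le C\Gamma W\le C\Gamma f^{-2}\theta^{2\beta}$... wait, that contributes a $\Gamma f^{-2}\theta^{2\beta}$ which is \emph{not} in $Q$; so in fact one keeps $\Gamma W\le f^{-1-\beta_c+\delta}\theta^{2\beta}\le f^{-1-2\beta_c+3\delta}\theta^{2\beta}\cdot f^{\beta_c-2\delta}$ — since $\beta_c<2$ is not enough, instead simply note $\Gamma<1$ and $f^{-1-\beta_c+\delta}\le f^{-1-2\beta_c+3\delta}$ once $\beta_c\ge 2\delta$, i.e.\ for $\delta$ small, so $\Gamma W\le f^{-1-2\beta_c+3\delta}\theta^{2\beta}\le Q$.

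For \eqref{eq:item:1b} the outer operators form the fourth-order expression $\gamma\cdot(\gamma\cdot W\gamma)\gamma$ with $W=f^{-1-\delta}\theta^{2\beta}$, to be dominated by $\gamma\cdot(\gamma\cdot\theta'\theta^{2\beta-1}\gamma)\gamma$; I would handle it exactly as \eqref{eq:item:1a} but one layer deeper — commute the two outer $\gamma$'s past $W^{1/2}$ producing commutators of the types $\gamma\cdot(\gamma\cdot(\text{lower})\gamma)\gamma$ and $\gamma\cdot(\text{lower})\gamma$ and (lower), each a compactly-supported-or-$f$-decaying differential operator of order $\le 4$ hence $\le CQ$ by Lemma~\ref{lem:221031} (for the $f$-decaying ones, use \eqref{eq:item:1} after extracting a factor $f^{-\beta_c}$, which is available since every commutator $[\gamma,W^{1/2}]$ or $[\gamma,f^{\pm}]$ gains a power $f^{-1}$ and $1\ge\beta_c$ is false — so here one instead uses \eqref{eq:item:1b} recursively with a gained power of $\theta^{-1}\le C\nu^{-1}$, closing the induction on the order). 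The substantive reduction is again the scalar inequality $W\le\epsilon\,\theta'\theta^{2\beta-1}$ on $\{|x|>2R\}$ from Lemma~\ref{lem:basic3} for $\nu\ge\nu_0(\epsilon)$, plus Lemma~\ref{lem:221031} on the compact piece. \textbf{The main obstacle} is the commutator bookkeeping: each time a $\gamma$ is moved past a weight one generates terms involving $\nabla S$-derivatives of the weight, and one must check that every such term either (a) lands inside the already-treated families \eqref{eq:item:1}--\eqref{eq:item:1b} with a genuinely smaller $\theta$-power (so that the induction on differential order and on the power of $\theta$ terminates and produces the small constant $\epsilon$ for $\nu$ large), or (b) is a genuinely compactly supported operator of order $\le4$ absorbed by Lemma~\ref{lem:221031}; making this dichotomy precise, and in particular verifying that the exponents always fit inside the specific combination $f^{-1-2\beta_c+3\delta}\theta^{2\beta}$ and $(H-z)^*f^{1+\delta}\theta^{2\beta-\delta}(H-z)$ defining $Q$, is where essentially all the work is, and it is the reason the statement is phrased uniformly in $z\in I_\pm$, $R\ge R_0$ and $\nu\ge\nu_0$.
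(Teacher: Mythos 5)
Your overall strategy coincides with the paper's: \eqref{eq:item:1a} and \eqref{eq:item:1b} follow from the pointwise weight inequality $f^{-1-\delta}\theta^{2\beta}\le\epsilon\,\theta'\theta^{2\beta-1}+\eta$ (Lemma~\ref{lem:basic3} plus a compactly supported remainder $\eta$ absorbed by Lemma~\ref{lem:221031}), and \eqref{eq:item:1} follows from the increment identity of Lemma~\ref{lem:22101720} together with \cs, with the exponents arranged to land inside $Q$. Two points, however, are genuinely defective. First, for \eqref{eq:item:1a} and \eqref{eq:item:1b} no commutators arise at all: the weight is a real multiplication operator sandwiched between symmetric, mutually commuting outer operators, so the quadratic forms are $\int W\,|\gamma_\|\psi|^2$ and $\sum_{i,j}\int W\,|\gamma_j\gamma_i\psi|^2$, and the pointwise inequality on $W$ transfers verbatim, after which Lemma~\ref{lem:221031} disposes of the $\eta$-terms. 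Your proposed route of commuting $\gamma_\|$ past $W^{1/2}$ and invoking \eqref{eq:item:1} ``inductively'' is not only unnecessary but risks circularity: the correct logical order (and the paper's) is \eqref{eq:item:1a} first, then \eqref{eq:item:1} as a consequence, so \eqref{eq:item:1} cannot be fed back into the proof of \eqref{eq:item:1a} without setting up an actual induction, which you do not do. The ``main obstacle'' you identify (commutator bookkeeping) therefore largely evaporates; the only genuine lower-order term in the whole lemma is the $\tfrac12(\Delta a)$ contribution in the identity $\gamma\cdot a\gamma=\mathop{\mathrm{Re}}(a\gamma^2)+\tfrac12(\Delta a)$, which is directly bounded by the first summand of $Q$.

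Second, your treatment of the $\Gamma$-term in \eqref{eq:item:1} is wrong as stated: the inequality $f^{-1-\beta_c+\delta}\le f^{-1-2\beta_c+3\delta}$ requires $\beta_c\le 2\delta$ (the exponent on the left must be the more negative one), not $\beta_c\ge 2\delta$, and it fails since $\beta_c>1$ while $\delta$ is small; moreover a term $C\Gamma f^{-2}\theta^{2\beta}$ is indeed not part of $Q$, as you noticed. The resolution is that the term does not exist: substituting $\gamma^2=2(H-z)\mp2\gamma_\|\pm2\mathrm i\Gamma$ into $\mathop{\mathrm{Re}}(a\gamma^2)$ produces $\pm2\mathop{\mathrm{Re}}(\mathrm i\Gamma a)=0$ because $a$ is a real multiplication operator. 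With these two repairs (and the localization by $\chi_{2R}$ you already mention, since the increment identity only holds on $\{|x|>2R\}$), your argument reproduces the paper's proof.
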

\begin{proof} 
Let $\epsilon>0$.
By Lemma~\ref{lem:basic3}
we can find $\eta\in C^\infty_{\mathrm c}(\mathbb R^d;\mathbb R)$, such
that uniformly in $\nu\ge \nu_0$ for some $\nu_0\ge 1$ chosen sufficiently large
\begin{align*}
f^{-1-\delta}\theta^{2\beta}
\le 
\epsilon \theta'\theta^{2\beta-1}+\eta
.
\end{align*}
Then we conclude \eqref{eq:item:1a} and
 \eqref{eq:item:1b} by Lemma~\ref{lem:221031}. 
To prove \eqref{eq:item:1}
note by conjugation by $\e^{\pm\i \chi_1S}$ we can rewrite \eqref{eq:22103116} as 
 \begin{align}
 \gamma\cdot a\gamma= \mathop{\mathrm{Re}}(a\gamma^2)+\tfrac12(\Delta a).
\label{221029118}
 \end{align}
In addition, set 
\begin{align*}
t=1+\beta_c-\delta.
\end{align*}
Then by \eqref{221029118}, \eqref{eq:14.1.7.23.24b}, Lemmas~\ref{lem:22101720} and \ref{lem:basic3}, \cs and Lemma~\ref{lem:221031} 
we can estimate (using the cut-off function $\chi_{2R}$ from \eqref{eq:14.1.7.23.24b}) 
\begin{align*}
\gamma\cdot f^{-t}\theta^{2\beta}\gamma&
\leq 
\mp2\mathop{\mathrm{Re}}(\chi_{2R} f^{-t}\theta^{2\beta} \gamma_\|)
+2\mathop{\mathrm{Re}}(\chi_{2R} f^{-t}\theta^{2\beta} (H-z))
\\&\phantom{{}={}}{}
+C_1f^{-t-2}\theta^{2\beta}
+\mathop{\mathrm{Re}}\bigl((1-\chi_{2R})f^{-t}\theta^{2\beta}\gamma^2\bigr)
\\&
\le 
\gamma_\| f^{-1-\delta}\theta^{2\beta}\gamma_\|
+{C_1}f^{-t-2}\theta^{2\beta}
+C_2 f^{-2t+1+\delta}\theta^{2\beta}
\\&\phantom{{}={}}{}
+C_2(H-z)^*f^{-1-\delta}\theta^{2\beta}(H-z)
+C_2Q
.
\end{align*} 
Applying in turn \eqref{eq:item:1a}, this yields \eqref{eq:item:1}. 
\end{proof}

The final lemma says $4\gamma_\|\theta'\theta^{2\beta-1}\gamma_\|$ and 
$\gamma\cdot \bigl(\gamma\cdot\theta'\theta^{2\beta-1}\gamma\bigr)\gamma$ 
are interchangeable up to small errors. 

\begin{lemma}\label{lem:2210312250}
For any $\epsilon>0$ and $R\geq R_0$ there exist $C>0$ and $\nu_0\ge 1$ such that 
for all $z=\lambda+\mathrm i\Gamma\in I_+$ and $\nu\ge \nu_0$
\begin{align*}
&\pm
\bigl\{
4\gamma_\|\theta'\theta^{2\beta-1}\gamma_\|
-\gamma\cdot \bigl(\gamma\cdot\theta'\theta^{2\beta-1}\gamma\bigr)\gamma
\bigr\}
\le 
\epsilon\gamma_\|\theta'\theta^{2\beta-1}\gamma_\|
+C\Gamma f^{-2}\theta^{2\beta}
+CQ
.
\end{align*}
The same bounds also hold uniformly in $z=\lambda-\mathrm i\Gamma\in I_-$ and $\nu\ge \nu_0$.
\end{lemma}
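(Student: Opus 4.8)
The plan is to compute the form $D:=4\gamma_\|\theta'\theta^{2\beta-1}\gamma_\|-\gamma\cdot\bigl(\gamma\cdot\theta'\theta^{2\beta-1}\gamma\bigr)\gamma$ on $\vD(H)$, express it as an explicit sum of error operators, and bound $\pm D$ by the three types of term on the right-hand side of the asserted inequality. Write $a=\theta'\theta^{2\beta-1}$, a non-negative real-valued function of $f$, and $\chi=\chi_{2R}$. I treat $z=\lambda+\i\Gamma\in I_+$; the case $z=\lambda-\i\Gamma\in I_-$ is identical after the sign changes dictated by Lemma~\ref{lem:22101720}. The guiding heuristic is the exact classical identity underlying Proposition~\ref{prop:200909}: on the energy shell $\gamma_\|^{\mathrm{cl}}=-\tfrac12(\gamma^{\mathrm{cl}})^2$ by \eqref{eq:quad1}, so that $4\gamma_\|^{\mathrm{cl}}a\gamma_\|^{\mathrm{cl}}$ and $\gamma^{\mathrm{cl}}\cdot(\gamma^{\mathrm{cl}}\cdot a\gamma^{\mathrm{cl}})\gamma^{\mathrm{cl}}$ coincide, while quantum mechanically the difference is made up only of commutators, of $(H-z)$-corrections, and of the $\mp\i\Gamma$ from Lemma~\ref{lem:22101720}.

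Concretely I would proceed as follows. \emph{Step 1.} On $\supp\chi$, Lemma~\ref{lem:22101720} and its adjoint give $\gamma^2=2(H-z)-2\gamma_\|+2\i\Gamma=2(H-z^*)-2\gamma_\|-2\i\Gamma$, whence, by the conjugated identity \eqref{221029118} and the vanishing of $\mathrm{Re}(\i\Gamma a)$, $\chi(\gamma\cdot a\gamma)=\chi\bigl(2\mathrm{Re}(a(H-z))-2\mathrm{Re}(a\gamma_\|)+\tfrac12\Delta a\bigr)$. \emph{Step 2.} Substitute this into the inner slot of $\gamma\cdot(\gamma\cdot a\gamma)\gamma$; the piece carrying the factor $1-\chi$ (and, throughout the computation, every contribution with a factor $1-\chi$, $\nabla\chi$ or $\Delta\chi$) is a compactly supported differential operator of order at most four, hence $\leq CQ$ by Lemma~\ref{lem:221031}. \emph{Step 3.} In the remaining $-2\gamma\cdot\mathrm{Re}(a\gamma_\|)\gamma$, commute $\gamma_\|$ past the outer $\gamma_i$'s using \eqref{eq:22091918}--\eqref{eq:22091918b} to reach $-2\mathrm{Re}\bigl((\gamma\cdot a\gamma)\gamma_\|\bigr)$, apply Step~1 once more to the factor $\gamma\cdot a\gamma$, and note that the resulting main term $+4\mathrm{Re}(a\gamma_\|\gamma_\|)$ equals $4\gamma_\|a\gamma_\|$ up to the commutator $[\gamma_\|,a]$; this cancels the first term of $D$. \emph{Step 4.} Classify and absorb the errors: (i) the commutator terms — produced from \eqref{eq:22091918b}, \eqref{eq:22091919}, $[\gamma_i,a]$ and $[\gamma_i,\chi]$, whose coefficients gain extra $f$-decay by the derivative bounds for $S$ (Theorem~\ref{thm:main result2}, via Lemma~\ref{lem:230114}) and for $a,\theta$ (Lemma~\ref{lem:basic3}) — appear as $\gamma\cdot(\cdot)\gamma$, $\gamma_\|(\cdot)\gamma_\|$ or $\gamma\cdot(\gamma\cdot(\cdot)\gamma)\gamma$ with suitably small coefficients and are absorbed into $\epsilon\gamma_\|\theta'\theta^{2\beta-1}\gamma_\|+CQ$ by Lemma~\ref{lem:algebra} (for the third type one also uses the present lemma with a smaller $\epsilon$, which is legitimate since $\gamma\cdot(\gamma\cdot a\gamma)\gamma=4\gamma_\|a\gamma_\|-D$ and one may rearrange); (ii) the $(H-z)$-terms are treated by the Cauchy--Schwarz inequality, using $a=\theta'\theta^{2\beta-1}\leq f^{1+\delta}\theta^{2\beta-\delta}$ from Lemma~\ref{lem:basic3} to match the $(H-z)$-weight in $Q$, so they are $\leq\epsilon\gamma_\|\theta'\theta^{2\beta-1}\gamma_\|+CQ$; (iii) the $\Gamma$-terms arise from the $\mp\i\Gamma$ in Lemma~\ref{lem:22101720} whenever that identity is used to re-express an $H-z$ (or $H-\lambda$) in a sandwiched position, always with a coefficient $\leq Cf^{-2}\theta^{2\beta}$, hence controlled by $C\Gamma f^{-2}\theta^{2\beta}$. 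Choosing $\nu_0$ large in terms of $\epsilon$ and $\delta$ then yields the bound, and the case $z\in I_-$ follows by the same computation with the signs adjusted.

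The hard part will be the bookkeeping in Step~4(i) and (iii): one must check term by term that each of the (numerous) commutators generated in Steps~2--3, when $\gamma_\|$, the weights $\theta,a$, and the cutoff $\chi$ are all commuted past the $\gamma_i$'s, carries enough $f$-decay to satisfy the hypotheses of Lemma~\ref{lem:algebra}, and that no $\Gamma$-dependent remainder ever acquires a weight larger than $f^{-2}\theta^{2\beta}$. These verifications are of the same — though simpler — nature as the ones appearing in the proofs of Propositions~\ref{prop:220602} and \ref{prop:22060215} in Subsection~\ref{subsec:22091823}, and, being routine given Lemmas~\ref{lem:basic3}, \ref{lem:22101720}, \ref{lem:221031}, \ref{lem:algebra} and Theorem~\ref{thm:main result2}, the detailed calculations can be suppressed.
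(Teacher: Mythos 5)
Your overall strategy is the right one and is essentially the paper's: trade $\gamma^2$ for $-2\gamma_\|+2\mathrm i\Gamma+2(H-z)$ on $\supp\chi_{2R}$ via Lemma~\ref{lem:22101720}, and absorb the resulting errors with Lemmas~\ref{lem:basic3}, \ref{lem:221031}, \ref{lem:algebra} and the Cauchy--Schwarz inequality. The paper executes this more cleanly than you propose: it first flattens $\gamma\cdot(\gamma\cdot a\gamma)\gamma$ to $\gamma^2a\gamma^2$ by \eqref{221029119} (the cost $\gamma\cdot f^{-3}\theta^{2\beta}\gamma$ being absorbed by \eqref{eq:item:1}), and then substitutes $\gamma^2=2\bigl(-\gamma_\|+\mathrm i\Gamma+(H-z)\bigr)$ into \emph{both outer slots simultaneously}, so that after expansion and Cauchy--Schwarz every factor of $H-z$ sits at an end of the quadratic form, where it pairs directly with the $(H-z)^*f^{1+\delta}\theta^{2\beta-\delta}(H-z)$ term of $Q$.

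Your variant — substituting into the \emph{inner} slot first — creates a concrete difficulty that your Step~4 does not address: the term $\gamma\cdot\chi\mathop{\mathrm{Re}}\bigl(a(H-z)\bigr)\gamma$ has $H-z$ sandwiched between two $\gamma_i$'s. A direct Cauchy--Schwarz there produces $\gamma_i(H-z)^*w(H-z)\gamma_i$, a sixth-order form controlled neither by $Q$ nor by any of the absorption lemmas; to avoid it you must first commute $H-z$ past the outer $\gamma_i$, generating $[H,\gamma_i]$ terms (first-order operators with coefficients built from $\partial V$ and $\nabla^2(\chi_1S)$, $\nabla^3(\chi_1S)$) that you never account for. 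Second, your claim in Step~4(iii) that the $\Gamma$-remainders "never acquire a weight larger than $f^{-2}\theta^{2\beta}$" is precisely what requires proof and fails at face value: in the double substitution the diagonal $\Gamma$-contribution is $\Gamma^2\chi_{2R}\theta'\theta^{2\beta-1}\lesssim\Gamma^2f^{-1}\theta^{2\beta}$, which (since $f^{-1}\ll1$ while $\Gamma<1$) is \emph{not} bounded by $C\Gamma f^{-2}\theta^{2\beta}$. The paper needs the separate identity \eqref{eq:221030}, which trades one power of $\Gamma$ for one power of $f^{-1}$ by writing $\Gamma^2f^{-1}\theta^{2\beta}$ in terms of $\Gamma\mathop{\mathrm{Im}}\bigl(f^{-1}\theta^{2\beta}(H-z)\bigr)$ and $\Gamma\mathop{\mathrm{Re}}\bigl((\nabla f^{-1}\theta^{2\beta})\cdot\gamma\bigr)$, followed by another application of \eqref{eq:item:1a}. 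Without this step (or a demonstration that your ordering of the substitutions genuinely avoids the quadratic-in-$\Gamma$ remainder), the proof is incomplete.
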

\begin{proof}
We discuss only $z=\lambda+\mathrm i\Gamma\in I_+$ 
since the other is proved in the same manner. 
Similarly to \eqref{221029118}, by conjugation by $\e^{-\i \chi_1S}$ 
we have \eqref{eq:22103116b} rewritten as 
 \begin{align}
 \gamma\cdot(\gamma\cdot a\gamma)\gamma=\gamma^2a\gamma^2-\gamma\cdot (\nabla^2a)\gamma
+\gamma\cdot (\Delta a)\gamma.
\label{221029119}
 \end{align}
Then it follows by Lemma~\ref{lem:basic3} that 
\begin{align}
\pm\gamma\cdot \bigl(\gamma\cdot\theta'\theta^{2\beta-1}\gamma\bigr)\gamma
&\le 
\pm\gamma^2\theta'\theta^{2\beta-1}\gamma^2
+C_1\gamma\cdot f^{-3}\theta^{2\beta}\gamma.
\label{eq:22103121}
\end{align}
The second term on the right-hand side of \eqref{eq:22103121} 
can be bounded by \eqref{eq:item:1},
and hence it suffices to discuss the first term of \eqref{eq:22103121}. 
With the localization factor $\chi_{2R}$ inserted in
 this term we are allowed to isolate $\gamma^2$ in
 \eqref{eq:fundd} and substitute into the
 two appearing factors of $\gamma^2$. After 
expansion 
we then use the Cauchy--Schwarz inequality to absorb cross terms into the diagonal ones.
Hence it follows that 
\begin{align*}
\pm\gamma^2\theta'\theta^{2\beta-1}\gamma^2
&=
\pm4\bigl(-\gamma_\|+\mathrm i\Gamma +(H-z)\bigr)^*
\chi_{2R}\theta'\theta^{2\beta-1}
\bigl(-\gamma_\|+\mathrm i\Gamma +(H-z)\bigr)
\\&\phantom{{}={}}{}
\pm\gamma^2(1-\chi_{2R})\theta'\theta^{2\beta-1}\gamma^2
\\&
\le 
(\pm4+\epsilon)\gamma_\|\chi_{2R}\theta'\theta^{2\beta-1}\gamma_\|
+C_2\Gamma^2f^{-1}\theta^{2\beta}
+C_2Q
.
\end{align*}
We can remove $\chi_{2R}$ from the first term on the right-hand side above 
by retaking $C_2>0$ larger,
and hence it suffices to discuss the second term. 
By the expression \eqref{eq:6}, the Cauchy--Schwarz inequality 
and Lemma~\ref{lem:basic3} we can proceed as 
\begin{align}
\begin{split}
\Gamma^2 f^{-1}\theta^{2\beta}
&=
\tfrac12\Gamma\mathop{\mathrm{Re}}\bigl((\nabla f^{-1}\theta^{2\beta})\cdot \gamma\bigr)
+\tfrac12\Gamma(\nabla f^{-1}\theta^{2\beta})\cdot (\nabla S)
\\&\phantom{{}={}}{}
-\Gamma\mathop{\mathrm{Im}}\bigl(f^{-1}\theta^{2\beta}(H-z)\bigr)
\\&\le 
C_3\gamma_\|f^{-2}\theta^{2\beta}\gamma_\|
+C_3\Gamma f^{-2}\theta^{2\beta}
+C_3Q.
\end{split}
\label{eq:221030}
\end{align}
The first term on the right-hand side of \eqref{eq:221030} 
can be bounded as asserted by using \eqref{eq:item:1a}. 
Hence we obtain the assertion.
\end{proof}

\subsection{Upper and lower bounds for the distorted commutator}\label{subsec:22091823}

Here we prove Propositions~\ref{prop:220602} and \ref{prop:22060215}. 
We start with Proposition~\ref{prop:220602}.

\begin{proof}[Proof of Proposition~\ref{prop:220602}]
\textit{Step I.}
Let us prove the assertion only for the upper sign, since the lower one follows 
in the same manner. 
With reference to the constants $0<c\leq C$ of \eqref{eq:22061920aaa} and
the function $\chi$ of \eqref{eq:14.1.7.23.24} we 
introduce the smooth cut-off function
 $\widetilde\chi=\chi(f/4CR)$. Then 
\begin{align*}
\mathop{\mathrm{supp}}\widetilde\chi\subseteq \{|x|>2R\},\quad 
\widetilde\chi=1\ \ \text{on }\{|x|\ge 8RC/c\}.
\end{align*}
 In particular \eqref{eq:fundd} applies on the support of
 $\widetilde\chi$. Hence by \eqref{eq:220920} and Lemma~\ref{lem:22101720} 
we can split the distorted commutator on the left-hand side of the assertion as 
\begin{align}
\begin{split}
2\mathop{\mathrm{Im}}(P(H-z))
&=
\mathop{\mathrm{Im}}\bigl(\gamma_\|\widetilde\chi \theta^{2\beta}\gamma_\|\gamma^2\bigr)
+2\mathop{\mathrm{Im}}\bigl(\gamma_\|\widetilde\chi \theta^{2\beta}\gamma_\|^2\bigr)
\\&\phantom{{}={}}{}
+2\mathop{\mathrm{Im}}\bigl(\gamma_\|(1-\widetilde\chi) \theta^{2\beta}\gamma_\|(H-\lambda)\bigr)
-2\Gamma \gamma_\|\theta^{2\beta}\gamma_\|.
\end{split}
\label{eq:22092018}
\end{align}
In the following steps 
we will further compute and bound each term of \eqref{eq:22092018}.

First we comment on our notation. 
Throughout the proof we fix $\epsilon>0$ such that 
 for some $c_1>0$, it follows that uniformly in $\lambda\in I$
\begin{align}
(2\lambda)^{1/2}\min\{4-2\beta,2\beta\}-19\epsilon>c_1>0.
\label{eq:221029}
\end{align}
We shall consider only $\nu\ge \nu_0$ with 
appropriate $\nu_0\ge 1$ tacitly retaken 
each time we apply Lemmas~\ref{lem:algebra} or \ref{lem:2210312250}. 
In addition, we shall adopt the notation $Q$ from \eqref{eq:2211021946}. 
We particularly note that, when computing \eqref{eq:22092018}, 
once a derivative hits $\widetilde\chi$, 
the corresponding term is immediately bounded by $C_1Q$ for some $C_1>0$ 
due to Lemma~\ref{lem:221031}. 
We shall also tacitly implement such estimates.

\smallskip
\noindent
\textit{Step II.}
Now we start with the first term on the right-hand side of \eqref{eq:22092018}. 
By \eqref{eq:6} and \eqref{eq:22091919} we can compute it as 
\begin{align*}
\mathop{\mathrm{Im}}\bigl(\gamma_\|\widetilde\chi \theta^{2\beta}\gamma_\|\gamma^2\bigr)
&=
\mathop{\mathrm{Im}}\bigl(\gamma_\|[\widetilde\chi \theta^{2\beta},\gamma]\cdot\gamma\gamma_\|\bigr)
+\mathop{\mathrm{Im}}\bigl(\gamma_\|\widetilde\chi \theta^{2\beta}[\gamma_\|,\gamma^2]\bigr)
\\&
=
(2\lambda)^{-1/2}\mathop{\mathrm{Re}}\bigl(\gamma_\|(\widetilde\chi \theta^{2\beta})'\gamma_\|^2\bigr)
+2\mathop{\mathrm{Re}}\bigl(\gamma_\|\widetilde\chi \theta^{2\beta}\gamma\cdot(\nabla^2 S)\gamma\bigr)
\\&\phantom{{}={}}{}
-\tfrac12\mathop{\mathrm{Re}}\bigl(\gamma_\|\widetilde\chi \theta^{2\beta}(\Delta^2 S)\bigr)
.
\end{align*}
\begin{subequations}
By the Cauchy--Schwarz inequality, \eqref{eq:220919537},
\eqref{eq:3final2p}, and Lemmas \ref{lem:221031} and ~\ref{lem:algebra},
we can bound it as 
\begin{align}
\begin{split}
\mathop{\mathrm{Im}}\bigl(\gamma_\|\widetilde\chi \theta^{2\beta}\gamma_\|\gamma^2\bigr)
&
\le 
2\beta(2\lambda)^{-1/2}\mathop{\mathrm{Re}}\bigl(\gamma_\|^2\widetilde\chi \theta'\theta^{2\beta-1}\gamma_\|\bigr)
\\&\phantom{{}={}}{}
+2\mathop{\mathrm{Re}}\bigl(\gamma_\|\gamma\cdot\widetilde\chi \theta^{2\beta}(\nabla^2 S)\gamma\bigr)
+\epsilon\gamma_\|\theta'\theta^{2\beta-1}\gamma_\|
+C_2Q
.
\end{split}
\label{eq:22092020}
\end{align}
The second term of \eqref{eq:22092018} can be computed by \eqref{eq:6}, \eqref{eq:2212116} and Lemma~\ref{lem:algebra} as 
\begin{align}
\begin{split}
2\mathop{\mathrm{Im}}\bigl(\gamma_\|\widetilde\chi \theta^{2\beta}\gamma_\|^2\bigr)
&=
\gamma_\|(\nabla S)\cdot\bigl(\nabla\widetilde\chi \theta^{2\beta}\bigr)\gamma_\|
\\&
\le 
\bigl(2\beta(2\lambda)^{1/2}+\epsilon\bigr)\gamma_\|\theta'\theta^{2\beta-1}\gamma_\|
+C_3Q
.
\end{split}
\label{eq:221016185}
\end{align}
The third and fourth terms of \eqref{eq:22092018} is bounded trivially as 
\begin{align}
2\mathop{\mathrm{Im}}\bigl(\gamma_\|(1-\widetilde\chi) \theta^{2\beta}\gamma_\|(H-\lambda)\bigr)
-2\Gamma \gamma_\|\widetilde\chi \theta^{2\beta}\gamma_\|
\le C_4Q.
\label{eq:22101620}
\end{align} 
\end{subequations}

Hence by \eqref{eq:22092018}, and \eqref{eq:22092020}--\eqref{eq:22101620} we obtain 
\begin{align}
\begin{split}
2\mathop{\mathrm{Im}}(P(H-z))
&\le 
2\beta(2\lambda)^{-1/2}\mathop{\mathrm{Re}}\bigl(\gamma_\|^2\widetilde\chi \theta'\theta^{2\beta-1}\gamma_\|\bigr)
+2\mathop{\mathrm{Re}}\bigl(\gamma_\|\gamma\cdot\widetilde\chi \theta^{2\beta}(\nabla^2 S)\gamma\bigr)
\\&\phantom{{}={}}{}
+\bigl(2\beta(2\lambda)^{1/2}+2\epsilon\bigr)\gamma_\|\theta'\theta^{2\beta-1}\gamma_\|
+C_5Q
.
\end{split}
\label{eq:22101621}
\end{align}

\smallskip
\noindent
\textit{Step III.}
We continue to compute \eqref{eq:22101621}. 
We next increment the order of the first and second terms of \eqref{eq:22101621} 
by substituting the following version of \eqref{eq:fundd},
\begin{align}
 \label{eq:3} 
\gamma_\|&
=-\tfrac 12 \gamma^2-\mathrm i\Gamma+(H-z)^*,
\end{align}
In fact, using \eqref{eq:3} we can rewrite the first and second terms of \eqref{eq:22101621} 
as 
\begin{align}
\begin{split}
&
2\beta(2\lambda)^{-1/2}\mathop{\mathrm{Re}}\bigl(\gamma_\|^2\widetilde\chi \theta'\theta^{2\beta-1}\gamma_\|\bigr)
+2\mathop{\mathrm{Re}}\bigl(\gamma_\|\gamma\cdot\widetilde\chi \theta^{2\beta}(\nabla^2 S)\gamma\bigr)
\\&
=
-\beta(2\lambda)^{-1/2}\mathop{\mathrm{Re}}\bigl(\gamma^2\gamma_\|\widetilde\chi \theta'\theta^{2\beta-1}\gamma_\|\bigr)
-\mathop{\mathrm{Re}}\bigl(\gamma^2\gamma\cdot\widetilde\chi \theta^{2\beta}(\nabla^2 S)\gamma\bigr)
\\&\phantom{{}={}}{}
+2\beta(2\lambda)^{-1/2}\mathop{\mathrm{Re}}\bigl((H-z)^*\gamma_\|\widetilde\chi \theta'\theta^{2\beta-1}\gamma_\|\bigr)
\\&\phantom{{}={}}{}
+2\mathop{\mathrm{Re}}\bigl((H-z)^*\gamma\cdot\widetilde\chi \theta^{2\beta}(\nabla^2 S)\gamma\bigr)
.
\end{split}
\label{eq:22101623}
\end{align}
\begin{subequations}
Let us discuss each term on the right-hand side. 
The first term of \eqref{eq:22101623} can be bounded 
by using \eqref{eq:22091918b}, the Cauchy--Schwarz inequality and Lemma~\ref{lem:algebra} as 
\begin{align}
\begin{split}
&-\beta(2\lambda)^{-1/2}\mathop{\mathrm{Re}}\bigl(\gamma^2\gamma_\|\widetilde\chi \theta'\theta^{2\beta-1}\gamma_\|\bigr)
\\&
= 
-\beta(2\lambda)^{-1/2}
\Bigl\{\gamma\cdot \gamma_\|\widetilde\chi \theta'\theta^{2\beta-1}\gamma_\|\gamma
+\mathop{\mathrm{Re}}\bigl(\gamma\cdot[\gamma,\gamma_\|]\widetilde\chi \theta'\theta^{2\beta-1}\gamma_\|\bigr)
\\&\phantom{{}={}}{}
+\mathop{\mathrm{Re}}\bigl(\gamma\cdot\gamma_\|[\gamma,\widetilde\chi \theta'\theta^{2\beta-1}]\gamma_\|\bigr)
+\mathop{\mathrm{Re}}\bigl(\gamma\cdot\gamma_\|\widetilde\chi \theta'\theta^{2\beta-1}[\gamma,\gamma_\|]\bigr)\Bigr\}
\\&
\le 
-\beta(2\lambda)^{-1/2}\gamma\cdot \gamma_\|\widetilde\chi \theta'\theta^{2\beta-1}\gamma_\|\gamma
+\epsilon\gamma_\|\theta'\theta^{2\beta-1}\gamma_\|
\\&\phantom{{}={}}{}
+\epsilon \gamma\cdot(\gamma\cdot \theta'\theta^{2\beta-1}\gamma)\gamma
+C_6Q
.
\end{split}
\label{eq:22101623a}
\end{align}
The second term of \eqref{eq:22101623} can be bounded by using 
\eqref{eq:3final2p} and Lemma~\ref{lem:algebra} as 
\begin{align}
\begin{split}
-\mathop{\mathrm{Re}}\bigl(\gamma^2\gamma\cdot\widetilde\chi \theta^{2\beta}(\nabla^2 S)\gamma\bigr)
&
=
-\gamma\cdot\bigl(\gamma\cdot\widetilde\chi \theta^{2\beta}(\nabla^2 S)\gamma\bigr)\gamma
+\tfrac12\gamma\cdot\bigl(\Delta\widetilde\chi \theta^{2\beta}(\nabla^2 S)\bigr)\gamma
\\
&\le 
-\gamma\cdot\bigl(\gamma\cdot\widetilde\chi \theta^{2\beta}(\nabla^2 S)\gamma\bigr)\gamma
+\epsilon\gamma_\|\theta'\theta^{2\beta-1}\gamma_\|
+C_7Q.
\end{split}
\label{eq:22101623b}
\end{align}
As for the third and fourth terms of \eqref{eq:22101623}, 
by the Cauchy-Schwarz inequality, \eqref{eq:3final2p} and Lemma~\ref{lem:algebra} we have 
\begin{align}
\begin{split}
&
2\beta(2\lambda)^{-1/2}\mathop{\mathrm{Re}}\bigl((H-z)^*\gamma_\|\widetilde\chi \theta'\theta^{2\beta-1}\gamma_\|\bigr)
+2\mathop{\mathrm{Re}}\bigl((H-z)^*\gamma\cdot\widetilde\chi \theta^{2\beta}(\nabla^2 S)\gamma\bigr)
\\&
\le 
C_{8}\bigl(\gamma_\|\theta'\theta^{2\beta-1}\gamma_\|\bigr)f^{1-\delta}\theta^{-2\beta}\bigl(\gamma_\|\theta'\theta^{2\beta-1}\gamma_\|\bigr)
\\&\phantom{{}={}}{}
+C_{8}\bigl(\gamma\cdot\theta^{2\beta}(\nabla^2 S)\gamma \bigr)f^{1-\delta}\theta^{-2\beta}\bigl(\gamma\cdot\theta^{2\beta}(\nabla^2 S)\gamma\bigr)
+C_{8}Q
\\&
\le 
\epsilon\gamma_\|\theta'\theta^{2\beta-1}\gamma_\|
+\epsilon \gamma\cdot(\gamma\cdot \theta'\theta^{2\beta-1}\gamma)\gamma
+C_{9}Q
.
\end{split}
\label{eq:22101623d}
\end{align} 
\end{subequations}
Hence by \eqref{eq:22101621}, \eqref{eq:22101623} and \eqref{eq:22101623a}--\eqref{eq:22101623d}
we conclude that 
\begin{align}
\begin{split}
2\mathop{\mathrm{Im}}(P(H-z))
&\le 
-\beta(2\lambda)^{-1/2}\gamma\cdot \gamma_\|\widetilde\chi \theta'\theta^{2\beta-1}\gamma_\|\gamma
-\gamma\cdot\bigl(\gamma\cdot\widetilde\chi \theta^{2\beta}(\nabla^2 S)\gamma\bigr)\gamma
\\&\phantom{{}={}}{}
+2\epsilon \gamma\cdot(\gamma\cdot\theta'\theta^{2\beta-1}\gamma)\gamma
+\bigl(2\beta(2\lambda)^{1/2}+5\epsilon\bigr)\gamma_\|\theta'\theta^{2\beta-1}\gamma_\|
\\&\phantom{{}={}}{}
+C_{10}Q
.
\end{split}
\label{eq:22101621b}
\end{align}

\smallskip
\noindent
\textit{Step IV.}
Now we use \eqref{eq:5b} to the right-hand side of \eqref{eq:22101621b}. 
Also using Lemmas~\ref{lem:basic3} and \ref{lem:algebra},
we obtain 
\begin{align}
\begin{split}
2\mathop{\mathrm{Im}}(P(H-z))
&\le 
-\bigl((2\lambda)^{1/2}\min\{1,\beta\}-3\epsilon\bigr)\gamma\cdot \bigl(\gamma\cdot \theta'\theta^{2\beta-1}\gamma\bigr)\gamma
\\&\phantom{{}={}}{}
+\bigl(2\beta(2\lambda)^{1/2}+6\epsilon\bigr)\gamma_\|\theta'\theta^{2\beta-1}\gamma_\|
+C_{11}Q
.
\end{split}
\label{eq:22101621bb}
\end{align}
Next we rewrite the first term of \eqref{eq:22101621bb} by using Lemma~\ref{lem:2210312250},
so that 
\begin{align*}
2\mathop{\mathrm{Im}}(P(H-z))
&\le 
-\bigl((2\lambda)^{1/2}\min\{4-2\beta,2\beta\}-19\epsilon\bigr)\gamma_\|\theta'\theta^{2\beta-1}\gamma_\|
\\&\phantom{{}={}}{}
+C_{12}\Gamma f^{-2}\theta^{2\beta}
+C_{12}Q
.
\end{align*}
Hence by \eqref{eq:221029} we obtain the assertion. 
\end{proof}

Next we prove Proposition~\ref{prop:22060215}.
Compared to Proposition~\ref{prop:220602},
it is much simpler.

\begin{proof}[Proof of Proposition~\ref{prop:22060215}]
Let us discuss only the upper sign. 
Similarly to the proof of Proposition~\ref{prop:220602}, 
we adopt $Q$ from \eqref{eq:2211021946}.
By the definition \eqref{eq:220920}, 
the Cauchy--Schwarz inequality, Lemma~\ref{lem:basic3}, 
\eqref{eq:6} and Lemma~\ref{lem:algebra}
we can bound for any $\epsilon\in (0,1]$
\begin{align*}
\mathop{\mathrm{Im}}(P(H-z))
&
\ge 
-\epsilon\gamma_\|\theta^{2\beta}\gamma_\|\theta'\theta^{-2\beta-1}\gamma_\|\theta^{2\beta}\gamma_\|
-C_1\epsilon^{-1}Q
\\&
\ge 
-C_2\epsilon\gamma\cdot\bigl(\gamma\cdot\theta'\theta^{2\beta-1}\gamma\bigr)\gamma
-C_2\epsilon\gamma_\|\theta'\theta^{2\beta-1}\gamma_\|
-C_2\epsilon^{-1}Q,
\end{align*}
where $C_*>0$ are independent of $\epsilon\in (0,1]$. 
Then by Lemma~\ref{lem:2210312250}
\begin{align*}
\mathop{\mathrm{Im}}(P(H-z))
&
\ge 
-C_3\epsilon \gamma_\|\cdot\theta'\theta^{2\beta-1}\gamma_\|
-C_3\Gamma f^{-2}\theta^{2\beta}
-C_3\epsilon^{-1}Q
,
\end{align*}
and we are done. 
\end{proof}

\begin{remark}\label{rem:230112}
We can modify the arguments of the subsection to be applicable to the case $l=3$, 
avoiding fourth derivatives of $S$.
For that we should employ 
\begin{align}
\widetilde\gamma_{\|}
=(\nabla S)\cdot \gamma
-\tfrac{\mathrm i(d-1)}2(2\lambda)^{1/2} f^{-1}
,\quad 
\widetilde\beta_c=1+\sigma
\label{eq:221102}
\end{align}
instead of $\gamma_\|$, $\beta_c$, respectively. 
Note, although $\widetilde\gamma_{\|}$ is not symmetric, it well approximates 
$\gamma_\|$ thanks to Theorem~\ref{thm:main result2},
and we can avoid $\Delta^2S$ coming from \eqref{eq:22091919}. 
The fourth order derivatives of $S$ appear also from other parts of the 
above arguments, 
 but we can manage them by the Cauchy--Schwarz inequality. 
We omit the details. 
Note also that, although $\widetilde\beta_c$ is worse than $\beta_c$,
it is still greater than $1$, and the associated radiation condition bounds 
are stronger than the ordinary ones.
\end{remark}

\subsection{Proof of strong radiation condition bounds}\label{subsec:22091825}

Finally in this section we prove Theorem~\ref{thm:proof-strong-bound}.
We will use the standard limiting absorption principle bounds on the following form. 

\begin{thm}\label{thm:221105}
There exists $C>0$ such that 
uniformly in $z\in I_\pm$ and $\psi\in\mathcal B$
\begin{align*}
\|R(z)\psi\|_{\mathcal B^*}
\le C\|\psi\|_{\mathcal B},\quad 
\|\Delta R(z)\psi\|_{\mathcal B^*}
\le C\|\psi\|_{\mathcal B}.
\end{align*}
Moreover, for any $t>1/2$ there exist uniform limits 
in $\lambda\in I$:
\[
R(\lambda\pm \mathrm i0)=\lim_{I_\pm \ni z\to \lambda} R(z),\quad 
\Delta R(\lambda\pm \mathrm i0)=\lim_{I_\pm \ni z\to \lambda} \Delta R(z)
\]
in the norm topology of $\mathcal L(L^2_t,L^2_{-t})$.
\end{thm}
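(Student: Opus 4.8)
The final statement to prove is Theorem~\ref{thm:221105}, which asserts the standard limiting absorption principle bounds for $R(z)$ and $\Delta R(z)$ in the Besov/Agmon--H\"ormander topology (and as norm limits in $\mathcal L(L^2_t,L^2_{-t})$ for $t>1/2$). Since the excerpt explicitly says these are ``standard limiting absorption principle bounds'' and cites \cite{AIIS2} for the existence of $R(\lambda\pm\i 0)$ in $\mathcal L(\mathcal B,\mathcal B^*)$, the cleanest plan is to \emph{deduce} Theorem~\ref{thm:221105} from the already-quoted result of \cite{AIIS2} rather than to reprove the LAP from scratch. So the plan is: first invoke \cite{AIIS2} (as recalled in the introduction) to get the uniform bound $\|R(z)\psi\|_{\mathcal B^*}\le C\|\psi\|_{\mathcal B}$ for $z\in I_\pm$ and the existence of the boundary values $R(\lambda\pm\i 0)=\swslim R(z)$ in $\mathcal L(\mathcal B,\mathcal B^*)$, locally uniformly in $\lambda$; compactness of $I$ upgrades ``locally uniform'' to ``uniform''. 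The bound $\|R(z)\psi\|_{\mathcal B^*}\le C\|\psi\|_{\mathcal B}$ gives the first inequality directly.

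For the second inequality, concerning $\Delta R(z)$, I would write $-\tfrac12\Delta = H - V - q$ on the operator level, so that for $\psi\in\mathcal B$ and $\phi=R(z)\psi$,
\begin{equation*}
-\tfrac12\Delta\phi = (H-z)\phi + (z-V-q)\phi = \psi + (z-V-q)R(z)\psi.
\end{equation*}
The first term $\psi\in\mathcal B\subset\mathcal B^*$ with $\|\psi\|_{\mathcal B^*}\le C\|\psi\|_{\mathcal B}$. For the second term, $z$ is bounded on $I_\pm$, and $V$ is bounded by Condition~\ref{cond:220525} (indeed $V(x)=\mathcal O(\langle x\rangle^{-\sigma})$), so $(z-V)R(z)\psi$ is controlled by $\|R(z)\psi\|_{\mathcal B^*}\le C\|\psi\|_{\mathcal B}$; multiplication by a bounded function maps $\mathcal B^*$ to $\mathcal B^*$. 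The only slightly delicate contribution is $qR(z)\psi$: from \eqref{eq:shortrange} we have $\langle x\rangle^{1+\tau}q(-\Delta+1)^{-1}\in\mathcal C(\mathcal H)$, hence $q(-\Delta+1)^{-1}\in\mathcal L(\mathcal H)$ and in fact $\langle x\rangle^{1+\tau}q(-\Delta+1)^{-1}$ is bounded; combining with the already-known bound on $\Delta R(z)$ in a weighted space (bootstrap) or, more simply, arguing that $q R(z)\psi \in L^2_{1+\tau}\subset \mathcal B$ using $\|\langle x\rangle^{s}R(z)\psi\|_{\mathcal H}<\infty$ for $s<-1/2$ together with $(-\Delta+1)R(z)\psi\in\mathcal B^*$, one gets $\|qR(z)\psi\|_{\mathcal B}\le C\|\psi\|_{\mathcal B}$. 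Assembling these three pieces yields $\|\Delta R(z)\psi\|_{\mathcal B^*}\le C\|\psi\|_{\mathcal B}$ uniformly in $z\in I_\pm$.

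For the norm convergence statements in $\mathcal L(L^2_t,L^2_{-t})$ with $t>1/2$, I would proceed as follows. Fix $t\in(1/2,\infty)$ and recall $L^2_t\subset\mathcal B$ and $\mathcal B^*\subset L^2_{-t}$ continuously. The weak-$\star$ convergence $R(z)\to R(\lambda\pm\i 0)$ in $\mathcal L(\mathcal B,\mathcal B^*)$ combined with the uniform bound gives, via a standard argument, strong convergence $R(z)\psi\to R(\lambda\pm\i 0)\psi$ in $\mathcal B^*$, hence in $L^2_{-t}$, for each $\psi$; to get norm convergence of operators $L^2_t\to L^2_{-t}$ one uses the compact embedding $L^2_t\hookrightarrow L^2_{t'}$ and $L^2_{-t'}\hookrightarrow L^2_{-t}$ for $1/2<t'<t$, together with the uniform bound in $\mathcal L(L^2_{t'},L^2_{-t'})$: an equicontinuity plus pointwise-convergence argument (Arzel\`a--Ascoli type, factoring through a compact operator) upgrades strong convergence to norm convergence. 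The same scheme applies to $\Delta R(z)$ once we have the uniform bound just established. The main obstacle---and it is a modest one here given the cited input---is the careful handling of the $q$-term: one must be sure that the short-range hypothesis \eqref{eq:shortrange}, which is phrased with $(-\Delta+1)^{-1}$, genuinely delivers $qR(z)\psi\in\mathcal B$ with the right uniform bound, and this requires a small second-resolvent-identity / bootstrap step rather than a one-line estimate. The rest is bookkeeping with the standard function-space inclusions among $L^2_s$, $\mathcal B$ and $\mathcal B^*$.
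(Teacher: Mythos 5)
The paper itself does not prove Theorem~\ref{thm:221105} at all: its ``proof'' is a one-line citation to \cite{AIIS2}, where the full limiting absorption principle (including the $\Delta R(z)$ bounds and the norm convergence in $\mathcal L(L^2_t,L^2_{-t})$) is established. Your proposal instead tries to \emph{derive} the full theorem from the weaker statement quoted in the introduction (uniform $\mathcal B\to\mathcal B^*$ bounds plus weak-$\star$ convergence of $R(z)\psi$), and this derivation has a concrete gap. The key step you rely on --- compactness of the embeddings $L^2_t\hookrightarrow L^2_{t'}$ and $L^2_{-t'}\hookrightarrow L^2_{-t}$ for $t>t'>1/2$ --- is false. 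Under the unitary identifications $L^2_s\cong\mathcal H$ these inclusions become multiplication by $\langle x\rangle^{t'-t}$ on $L^2(\mathbb R^d)$, and multiplication by a bounded function vanishing at infinity is not compact (take $\psi_n=\mathrm e^{\mathrm in\cdot x}\chi$ with $\chi$ a fixed bump: $\psi_n\rightharpoonup 0$ but $\|\langle x\rangle^{t'-t}\psi_n\|$ does not tend to $0$). Consequently both your upgrade from weak-$\star$ to strong convergence and your Arzel\`a--Ascoli upgrade from strong to norm convergence collapse as written; weak-$\star$ convergence plus uniform bounds alone never yields norm convergence. The repair is standard but needs the regularity you prove in the second bound: the uniform control of $\Delta R(z)$ shows $R(z)$ is uniformly bounded from $L^2_{t'}$ into a weighted $H^2$ space, and the Rellich--Kondrachov embedding $H^2_{-t'}\hookrightarrow L^2_{-t}$ (which gains both decay and derivatives) \emph{is} compact; only with that input does the factor-through-a-compact-operator argument close.

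A secondary soft spot is the $q$-term in the $\Delta R(z)$ bound. Writing $-\tfrac12\Delta R(z)\psi=\psi+(z-V-q)R(z)\psi$ and then estimating $qR(z)\psi$ ``using $(-\Delta+1)R(z)\psi\in\mathcal B^*$'' is circular, since membership of $(-\Delta+1)R(z)\psi$ in $\mathcal B^*$ with a uniform bound is exactly what is being proved. You flag that a bootstrap or second-resolvent identity is needed, which is the honest fix (e.g.\ splitting off a large-$|x|$ region where $\langle x\rangle^{1+\tau}q(-\Delta+1)^{-1}$ has small norm, or absorbing $q$ into the LAP machinery of \cite{AIIS2} from the start), but as written this step is asserted rather than carried out. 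Since the paper delegates the entire theorem to \cite{AIIS2}, citing that reference for all four assertions is the intended route; if you do want a self-contained derivation, the two points above are where it must be tightened.
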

\begin{remark}
We do not need to assume $q\equiv 0$ for this result.
\end{remark}
\begin{proof}
This is the standard result in the theory of the Schr\"odinger operators, 
and we omit a proof. We refer the reader to \cite{AIIS2}. 
\end{proof}

\begin{proof}[Proof of Theorem~\ref{thm:proof-strong-bound}]
We note by the density argument
it suffices to prove the asserted bounds 
\eqref{eq:gamma1a}--\eqref{eq:gamma2} for $\psi\in C^\infty_{\mathrm c}(\mathbb R^d)$.

\smallskip
\noindent
\textit{(1)}\ 
Let $\beta\in (0,\beta_c)$, and choose {$\delta>0$} such that 
\begin{align}
2\beta+3\delta<2\beta_c,\quad \delta\le 2\beta.
\label{eq:22110523}
\end{align}
By Propositions~\ref{prop:220602} and \ref{prop:22060215}
we can find $C_1>0$ and $\nu_0\ge 1$ such that 
uniformly in $z=\lambda\pm \mathrm i\Gamma\in I_\pm$ and $\nu\ge \nu_0$
\begin{align}
\begin{split}
\gamma_\|\theta'\theta^{2\beta-1}\gamma_\|
&\le 
C_1\Gamma f^{-2}\theta^{2\beta}
+C_1f^{-1-2\beta_c+3\delta}\theta^{2\beta}
 \\&\phantom{{}={}}{}
+C_1(H-z)^*f^{1+\delta}\theta^{2\beta-\delta}(H-z)
.
\end{split}
\label{eq:221101}
\end{align}
Take the expectation of the above inequality in the state $\phi=R(z)\psi$ for any $z=\lambda\pm\mathrm i\Gamma\in I_\pm$ and 
$\psi\in C^\infty_{\mathrm c}(\mathbb R^d)$, and we obtain by \eqref{eq:22110523} and 
Theorem~\ref{thm:221105}
\begin{align*}
\|\theta'^{1/2}\theta^{\beta-1/2}\gamma_\|\phi\|^2_{\mathcal H}
&\le 
C_1\Gamma \|f^{-1}\theta^{\beta}\phi\|_{\mathcal H}^2
+C_1\|f^{-1/2-\beta_c+3\delta/2}\theta^{\beta}\phi\|_{\mathcal H}^2
 \\&\phantom{{}={}}{}
+C_1\|f^{(1+\delta)/2}\theta^{\beta-\delta/2}\psi\|_{\mathcal H}^2
\\&
\le 
C_1\Gamma \|f^{-1}\theta^{\beta}\phi\|_{\mathcal H}^2
+C_2\|f^\beta\psi\|_{L^2_{1/2}}^2
.
\end{align*}
Next we take the limit $\Gamma\to 0_+$, and obtain by Theorem~\ref{thm:221105} 
\begin{align*}
\|\theta'^{1/2}\theta^{\beta-1/2}\gamma_\|R(\lambda\pm\mathrm i0)\psi\|_{\mathcal H}
\le 
C_2\|f^\beta\psi\|_{L^2_{1/2}}
.
\end{align*} 
Finally we let $\nu\to \infty$,
and then by the monotone convergence theorem the bound \eqref{eq:gamma1a} follows.

Combining Lemma~\ref{lem:2210312250} and \eqref{eq:221101}, we also have 
\begin{align*}
\gamma\cdot\bigl(\gamma\cdot\theta'\theta^{2\beta-1}\gamma\bigr)\gamma
&\le 
C_3\Gamma f^{-2}\theta^{2\beta}
+C_3f^{-1-2\beta_c+3\delta}\theta^{2\beta}
 \\&\phantom{{}={}}{}
+C_3(H-z)^*f^{1+\delta}\theta^{2\beta-\delta}(H-z)
.
\end{align*}
Hence we can verify \eqref{eq:gamma1b} similarly to \eqref{eq:gamma1a}. 

\smallskip
\noindent
\textit{(2)}\ 
Let $\psi,\psi'\in C^\infty_{\mathrm c}(\mathbb R^d)$, and consider a quantity 
\[F(\zeta)
=\bigl\langle \psi',(\gamma\cdot f^{2\beta'}\gamma)^{\zeta}f^{-t}R(\lambda\pm\mathrm i0)f^{-t}f^{-2\beta'\zeta}\psi\bigr\rangle
.
\]
It is obviously analytic in $0<\mathop{\mathrm{Re}}\zeta<1$. 
For $\mathop{\mathrm{Re}}\zeta=0$ we have by Theorem~\ref{thm:221105}
\[
|F(\zeta)|
\le C_4\|\psi'\|_{\mathcal H}\|\psi\|_{\mathcal H}, 
\]
and for $\mathop{\mathrm{Re}}\zeta=1$ by \eqref{eq:gamma1b}, \eqref{eq:gamma1a} and Theorem~\ref{thm:221105}
\[
|F(\zeta)|
\le C_5\|\psi'\|_{\mathcal H}\|\psi\|_{\mathcal H}. 
\]
Hence we obtain by the Hadamard three-lines theorem
\[
\bigl|
\bigl\langle \psi',(\gamma\cdot f^{2\beta'}\gamma)^{1/2}f^{-t}R(\lambda\pm\mathrm i0)f^{-t}f^{-\beta'}\psi\bigr\rangle\bigr|
=
|F(\tfrac12)|
\le C_6\|\psi'\|_{\mathcal H}\|\psi\|_{\mathcal H},
\]
or for any $\psi''\in C^\infty_{\mathrm c}(\mathbb R^d)$
\[
\|(\gamma\cdot f^{2\beta'}\gamma)^{1/2}f^{-t}R(\lambda\pm\mathrm i0)\psi''\|_{\mathcal H}
\le C_6\|f^{\beta'}\psi''\|_{L^2_t}.
\]
However, we can rewrite the square of the left-hand side 
by using the inner product and Theorem~\ref{thm:221105} as 
\begin{align*}
\|(\gamma\cdot f^{2\beta'}\gamma)^{1/2}f^{-t}R(\lambda\pm\mathrm i0)\psi''\|_{\mathcal H}^2
&=
\sum_{i=1}^d\|f^{\beta'}\gamma_i f^{-t}R(\lambda\pm\mathrm i0)\psi''\|_{\mathcal H}^2
\\&
\ge 
\sum_{i=1}^d\|f^{\beta'}\gamma_i R(\lambda\pm\mathrm i0)\psi''\|_{L^2_{-t}}^2
-C_7\|\psi''\|_{\mathcal B}^2.
\end{align*}
Therefore we obtain \eqref{eq:gamma2}.
\end{proof}

\section{Stationary scattering theory}
\label{sec:Generalized Fourier transform}

In this section we discuss the stationary scattering theory for $H$, 
proving Theorems~\ref{thm:comp-gener-four}, \ref{thm:char-gener-eigenf-1} and \ref{thm:221207} and Corollary~\ref{cor:230623}. 
In order to use the strong radiation condition bounds
 of Theorem~\ref{thm:proof-strong-bound} 
we need more regularity for the potential than required in these assertions. 
Subsection~\ref{subsec:230329} reviews a decomposition $V=V_{\mathrm S}+V_{\mathrm L}$ 
due to H\"ormander \cite[Lemma~30.1.1]{H1}, 
so that the strong radiation condition bounds are available for $H_{\mathrm L}=-\tfrac12\Delta+V_{\mathrm L}$. 
Then in Subsection~\ref{subsec:Generalized Fourier transform for rho>1/2 I} 
we introduce the spherical eikonal coordinates associated with $V_{\mathrm L}$, and study its geometry. 
In these coordinates we can quickly construct the 
stationary wave operators for $H_{\mathrm L}$, mimicking the procedure of \cite{Sk}. 
They are then fused to those for $H$ by a change of coordinates and the second resolvent identity. 
This is implemented in Subsection~\ref{subsubsec:Generalized Fourier transform for rho=1},
and the proofs of Theorem~\ref{thm:comp-gener-four} and Corollary~\ref{cor:230623} are done. 
The proofs of Theorems~\ref{thm:char-gener-eigenf-1} and \ref{thm:221207} are rather routine,
and they are presented in Subsections~\ref{subsec:23032917} and \ref{subsec:23032918},
respectively. Then our stationary scattering theory is completed.

\subsection{H\"ormander's regularization}\label{subsec:230329}

The asserted Theorem~\ref{thm:proof-strong-bound} requires 
 four deri\-va\-tives on the potential $V$, which clearly is not at
 disposal 
for a $2$-admissible potential. Consequently, to implement the radiation condition
bounds of the theorem
 we need first to 
 regularize $V$. This is done by using the scheme of H\"ormander \cite[Lemma~30.1.1]{H1}.

\begin{lemma}\label{lem:230111}
Suppose Condition~\ref{cond:220525} with $l=2$. 
\begin{enumerate}
\item\label{item:2307260}
For any $\rho\in (0,\sigma)$ there exists a splitting 
\[V=V_{\mathrm S}+V_{\mathrm L};\quad 
V_{\mathrm S}\in C^2(\mathbb R^d;\mathbb R),\ \ V_{\mathrm L}\in C^\infty(\mathbb R^d;\mathbb R),\]
satisfying: 
There exists $C>0$ such that for any $|\alpha|\le 2$ and $x\in\mathbb R^d$
\begin{subequations}
\begin{align}
|\partial^\alpha V_{\mathrm S}(x)|
\le C\langle x\rangle^{-1-\sigma+\rho-|\alpha|(\rho+1)/2},
\label{eq:230115}
\end{align}
and for any $\alpha\in\mathbb N_0^d$ 
there exists $C_\alpha>0$ such that for any $x\in\mathbb R^d$
\begin{equation}\label{eq:cond22}
|\partial^\alpha V_{\mathrm L}(x)|\le C_\alpha\langle x\rangle^{-m(|\alpha|)},
 \end{equation} 
where $m$ is defined by
 \eqref{eq:cond22bb} with the parameters $\sigma$ and $\rho$ (and for
 any $l\geq 2$). 
\end{subequations}
\item\label{item:2307261}
For any $\rho\in (0,\sigma)$ and $\epsilon,\delta>0$ there exists a splitting 
\[V=V_{\mathrm S}'+V_{\mathrm L}';\quad 
V_{\mathrm S}'\in C^2(\mathbb R^d;\mathbb R),\ \ V_{\mathrm L}'\in C^\infty(\mathbb R^d;\mathbb R),\]
satisfying: For 
any $|\alpha|\le 2$ and $x\in\mathbb R^d$
\begin{subequations}
\begin{align*}
|\partial^\alpha V_{\mathrm S}'(x)|
\le \epsilon\langle x\rangle^{-1-\sigma+\rho+\delta-|\alpha|(\rho+1)/2},
\end{align*}
and for any $\alpha\in\mathbb N_0^d$ 
there exists $C_\alpha>0$ such that for any $x\in\mathbb R^d$
\begin{equation*}
|\partial^\alpha V_{\mathrm L}'(x)|\le C_\alpha\langle x\rangle^{-m(|\alpha|)},
 \end{equation*} 
 where $m$ is given as in \eqref{eq:cond22}. 
The constants $C_\alpha$ can for $|\alpha|\leq 2$  be chosen independently of  $\epsilon,\delta>0$.
\end{subequations}
\end{enumerate}
\end{lemma}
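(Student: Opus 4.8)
The plan is to prove \eqref{item:2307260} by H\"ormander's $x$-dependent mollification \cite[Lemma~30.1.1]{H1}, choosing the mollification scale precisely so that the stated decay function $m$ comes out, and then to deduce \eqref{item:2307261} from \eqref{item:2307260} by cutting off far out and applying one further mollification at a small fixed scale.

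For \eqref{item:2307260}, fix an even $\phi\in C^\infty_{\mathrm c}(\mathbb R^d)$ with $\int\phi=1$ and $\supp\phi\subseteq\{|y|\le1\}$, put $t(x)=\langle x\rangle^{(1+\rho)/2}$ (so $t(x)\ge1$ and $t(x)/\langle x\rangle\to0$, since $(1+\rho)/2\in(0,1)$), and set
\begin{equation*}
V_{\mathrm L}(x)=\int_{\mathbb R^d}V(z)\,t(x)^{-d}\phi\bigl((x-z)/t(x)\bigr)\,\d z,\qquad V_{\mathrm S}=V-V_{\mathrm L}.
\end{equation*}
Here the $x$-dependence is carried by a smooth kernel, so $V_{\mathrm L}\in C^\infty$ and hence $V_{\mathrm S}\in C^2$. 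Standard mollifier calculus --- integrating by parts in $z$ to land derivatives on $V$, and controlling the corrections from differentiating the variable scale $t(x)$ by the vanishing-moment properties of $\phi$ --- expresses $\partial^\alpha_x V_{\mathrm L}$ as an average, against kernels supported where $\langle z\rangle\sim\langle x\rangle$, of $\partial^\beta V$ with $|\beta|\le\min(2,|\alpha|)$, the remaining $|\alpha|-\min(2,|\alpha|)$ derivatives contributing factors $t(x)^{-1}=\langle x\rangle^{-(1+\rho)/2}$; this gives $|\partial^\alpha V_{\mathrm L}|\le C_\alpha\langle x\rangle^{-m(|\alpha|)}$, i.e.\ \eqref{eq:cond22}. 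For $V_{\mathrm S}$ one Taylor-expands $\partial^\alpha V$ about $x$ to the order allowed by $V\in C^2$ and uses evenness of $\phi$ to kill the odd term; for $|\alpha|\le1$ this gives $|\partial^\alpha V_{\mathrm S}(x)|\lesssim t(x)^{2-|\alpha|}\langle x\rangle^{-2-\sigma}$ using only the pointwise bound on $\partial^2 V$, while for $|\alpha|=2$ one simply estimates $|\partial^\alpha V_{\mathrm S}|\le|\partial^\alpha V|+|\partial^\alpha V_{\mathrm L}|\lesssim\langle x\rangle^{-2-\sigma}$. Inserting $t(x)=\langle x\rangle^{(1+\rho)/2}$ yields exactly \eqref{eq:230115}; the leading exponent $-1-\sigma+\rho$ is $<-1$, i.e.\ short-range, precisely because $\rho<\sigma$.

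For \eqref{item:2307261}, given $\rho\in(0,\sigma)$ and $\epsilon,\delta>0$, pick a cut-off $\psi_{R_1}\in C^\infty$ with $\psi_{R_1}=0$ on $\{|x|\le R_1\}$, $\psi_{R_1}=1$ on $\{|x|\ge2R_1\}$ and $|\partial^\beta\psi_{R_1}(x)|\le C_\beta R_1^{-|\beta|}$. A Leibniz estimate, using $m(|\alpha|)-m(|\alpha-\beta|)\le|\beta|$ and $\langle x\rangle\sim R_1$ on $\supp(\nabla\psi_{R_1})$, shows that $\psi_{R_1}V$ obeys the bounds \eqref{eq:cond22bb} with the same $\sigma,\rho$ and constants uniform in $R_1\ge1$. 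Applying \eqref{item:2307260} to $\psi_{R_1}V$ gives $\psi_{R_1}V=W_{\mathrm S}+W_{\mathrm L}$ with constants uniform in $R_1$, and --- since the mollification scale is $o(\langle x\rangle)$ --- with $W_{\mathrm S},W_{\mathrm L}$ supported in $\{|x|\ge R_1/2\}$. On that set $\langle x\rangle^{-1-\sigma+\rho-|\alpha|(\rho+1)/2}\le(R_1/2)^{-\delta}\langle x\rangle^{-1-\sigma+\rho+\delta-|\alpha|(\rho+1)/2}$, so taking $R_1$ large makes $W_{\mathrm S}$ satisfy the required $\epsilon$-bound. The leftover $(1-\psi_{R_1})V$ is compactly supported and $C^2$; mollifying it at a small fixed scale $\eta>0$ write $(1-\psi_{R_1})V=[(1-\psi_{R_1})V]*\phi_\eta+r_\eta$, and set $V_{\mathrm L}'=W_{\mathrm L}+[(1-\psi_{R_1})V]*\phi_\eta\in C^\infty$, $V_{\mathrm S}'=W_{\mathrm S}+r_\eta\in C^2$; then $V_{\mathrm S}'+V_{\mathrm L}'=V$. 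Since the $C^2$-norm of $(1-\psi_{R_1})V$ is uniform in $R_1$ and convolution does not increase sup-norms, for $|\alpha|\le2$ one gets $|\partial^\alpha V_{\mathrm L}'|\le C_\alpha\langle x\rangle^{-m(|\alpha|)}$ with $C_\alpha$ uniform (here one uses that, on the compact support of the extra smooth piece, the part-\eqref{item:2307260} bound is at least as strong as $\langle x\rangle^{-m(|\alpha|)}$ for $|\alpha|\le2$); for $|\alpha|\ge3$ the same bound holds with $\eta,R_1$-dependent, hence $\epsilon,\delta$-dependent, constants, as permitted. Finally $\partial^2 V$ is continuous and decays at infinity, hence uniformly continuous, so the $C^2$-norm of $r_\eta$ over its fixed compact support tends to $0$ as $\eta\to0$; choosing $\eta$ small (after $R_1$) makes $r_\eta$, hence $V_{\mathrm S}'$, satisfy the claimed $\epsilon$-bound.

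The main obstacle is the derivative bookkeeping in \eqref{item:2307260}: one must set it up so that differentiating $V_{\mathrm L}$ in $x$ is systematically converted into $z$-derivatives on $V$ (which is where the arithmetic of $m$ enters) with only divergence-form corrections left over, and so that the first-derivative estimate of $V_{\mathrm S}$ survives without any modulus-of-continuity hypothesis on $\partial^2 V$ --- the borderline case behind the latter being the identity $\langle x\rangle^{(\rho-1)/2-\sigma-1}=\langle x\rangle^{(\rho+1)/2-\sigma-2}$. Everything else is a routine, if somewhat lengthy, application of the Leibniz and Taylor formulas together with standard mollifier estimates.
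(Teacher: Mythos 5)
Your proposal is correct and follows essentially the same route as the paper: both are implementations of H\"ormander's regularization \cite[Lemma~30.1.1]{H1}, with mollification at scale $\langle x\rangle^{(1+\rho)/2}$, the vanishing first moment of the mollifier controlling the short-range remainder, and part~\eqref{item:2307261} obtained by fixing the far region first and then re-mollifying the compactly supported remainder at a second, small scale. The only difference is cosmetic: you use a continuous $x$-dependent mollification scale $t(x)=\langle x\rangle^{(1+\rho)/2}$ (which forces the divergence-form bookkeeping for the terms where derivatives hit $t(x)$, correctly flagged and handled in your sketch), whereas the paper uses the dyadic decomposition $V=\sum_n V_n$ with a fixed convolution scale $2^{n(1+\rho)/2}$ on each annulus, which avoids those correction terms entirely.
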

\begin{remarks}
\begin{enumerate}[1)]
\item
Ikebe--Isozaki \cite{II} adopted a decomposition similar to
(\ref{item:2307260}) for classical $C^4$ long-range potentials.
Note that our $\partial^\alpha V_{\mathrm L}$ has worse decay rate than theirs for $|\alpha|\ge 3$. 
\item
The assertion (\ref{item:2307261}) will only be employed in Subsection~\ref{subsubsec:230726}, 
the last part of the paper, for the proof of
Theorem~\ref{thm:time-depend-theory2} (\ref{item:23071914}). 
\end{enumerate}
\end{remarks}

\begin{proof}
\textit{(\ref{item:2307260})}\ 
Although the bounds for $V_{\mathrm S}$ are slightly better than those in \cite[Lemma~30.1.1]{H1}, the same proof works well. 
Let us review it since we will use its modification below for the assertion (\ref{item:2307261}). 
Fix any real $\eta\in C^\infty(\{|x|<2\})$ with $\eta=1$ for $|x|\le 1$, and set 
\[
V_0(x)=\eta(x) V(x),\quad V_n(x)=\bigl(\eta(2^{-n}x)-\eta(2^{1-n}x)\bigr)V(x)\ \ \text{for }n\in\mathbb N. 
\]
We also take a real $\chi\in C^\infty_{\mathrm c}(\mathbb R^d)$ such that 
\[
\int_{\mathbb R^d}\chi\,\mathrm dx=1,\quad 
\int_{\mathbb R^d}x_j \chi(x)\,\mathrm dx=0\ \ \text{for }j=1,\dots,d, 
\]
and set 
\[
\chi_n(x)=2^{-dn(1+\rho)/2}\chi(2^{-n(1+\rho)/2}x).
\]
Then we define 
\[
V_{\mathrm L}=\sum_{n\in\mathbb N_0}\chi_n*V_n,\quad V_{\mathrm S}=V-V_{\mathrm L},
\]
and they satisfy the asserted bounds. We omit further details. 

\smallskip
\noindent
\textit{(2)}\ 
For $N_1, N_2\in\mathbb N$ we consider 
\[
V_{\mathrm L}'=\sum_{n\geq N_1}\chi_{n}*V_n+\sum_{n<N_1}\chi_{N_2}*V_n,\quad V_{\mathrm S}'=V-V_{\mathrm L}'.
\]
With a proper adjustment of the parameters (fix first $N_1$ large and
then suitably large $N_2$) indeed $V_{\mathrm S}'$ and $V_{\mathrm L}'$ satisfy the asserted bounds. 
We are done. 
\end{proof}

The main part of the section is devoted to the analysis of $V_{\mathrm
 L}$ from Lemma~\ref{lem:230111} (\ref{item:2307260}),
while the effects from $V_{\mathrm S}+q$ are taken into account only
in the last steps. 
Note that the bound \eqref{eq:cond22} clearly agrees with Condition~\ref{cond:220525} for any $l\ge 2$,
and thus the results from the previous sections are available for $V_{\mathrm L}$. 
We denote 
\begin{align*}
H_{\mathrm L}=-\tfrac12\Delta+V_{\mathrm L},
\quad 
R_{\mathrm L}(z)=(H_{\mathrm L}-z)^{-1}
\ \ \text{for }z\in\mathbb C\setminus\sigma(H_{\mathrm L}),
\end{align*}
and 
\begin{align*}
R_{\mathrm L}(\lambda\pm \i 0)
=\swslim_{ z\to \lambda\pm \i 0_+}R_{\mathrm L}(z)\ \ \text{in }\mathcal L(\mathcal B,\mathcal B^*)
\ \ \text{for }\lambda>0.
\end{align*}
Throughout the section we fix any closed interval $I\subset \R_+$,
and let $S_{\mathrm L}\in C^{l'}(I\times(\mathbb R^d\setminus\{0\}))$ 
and $s_{\mathrm L}\in C^{l'}(I\times\mathbb R^d)$ be 
given as in Theorem~\ref{thm:main result2} for $V_{\mathrm L}$ and
any fixed $l'>1+2/\rho$. 
We will actually possibly need to take $R>0$ larger than needed
for Theorem~\ref{thm:main result2} in 
Subsections~\ref{subsubsec:Normalized asymptotic velocity} and \ref{subsec:23070422},
implementing Corollary~\ref{cor:epsSmall}, 
but the $R$-dependence is suppressed. 

\begin{remark}\label{remark:l_prime}
This specific requirement $l'>1+2/\rho$ will be needed only in the proof of Lemma~\ref{lem:time-depend-theory}. 
Otherwise it suffices to take $l'=4$, so that Theorem~\ref{thm:proof-strong-bound} is available. 
Although Theorem~\ref{thm:main result2} does not provide bounds for
$k+|\alpha|> l'$, nevertheless the function $s_{\mathrm L}\in
C^\infty(I
\times \R^d)$, and similarly for $S_{\mathrm L}$. This is a
consequence of the fact that 
$V_{\mathrm L}\in C^\infty(\mathbb R^d)$ and the implicit function
theorem, see the proof of Theorem~\ref{thm:main result2} \eqref{item:22120919a}. 
\end{remark}

\subsection{Spherical eikonal coordinates}
\label{subsec:Generalized Fourier transform for rho>1/2 I}

\subsubsection{Eikonal flow at fixed energy}

In order to define so-called spherical eikonal coordinates, 
in which the function $S_{\mathrm L}(\lambda,\cdot)$, $\lambda\in I$, from the last
subsection plays the roll
of \textit{eikonal distance} from the origin, 
consider the \textit{eikonal flow} $y$ satisfying, for any given $(\lambda,\theta)\in I\times\mathbb S^{d-1}$, 
\begin{subequations}
\begin{equation}
 \label{eq:flow}
 \tfrac{\partial}{\partial t}y(\lambda,t,\theta)
=\parb{|\nabla S_{\mathrm L}|^{-2}\nabla S_{\mathrm L}}(\lambda,y(\lambda,t,\theta))
\end{equation}
with
\begin{equation}
 \label{eq:flowb}
\lim_{t\to 0_+}y(\lambda,t,\theta)=0\quad \mand \quad 
\lim_{t\to 0_+}\tfrac{\partial}{\partial t}y(\lambda,t,\theta)=(2\lambda)^{-1/2}\theta.
\end{equation} 
\end{subequations}

\begin{lemma}\label{lem:221210}
The solution $y$ to \eqref{eq:flow} with \eqref{eq:flowb} is smooth in 
$(\lambda,t,\theta)\in I\times\R_+\times \S^{d-1}$, 
and for any $\lambda\in I$ it induces a (smooth) diffeomorphism 
\begin{align}
y(\lambda,\cdot,\cdot)\colon \R_+\times \S^{d-1}\to \R^d\setminus\{0\}.
\label{eq:22120922}
\end{align}
In addition, for any $(\lambda,t,\theta)\in I\times\R_+\times \mathbb S^{d-1}$ one has 
\begin{align}
S_{\mathrm L}(\lambda,y(\lambda,t,\theta))=t.
\label{eq:2212092256}
\end{align}
\end{lemma}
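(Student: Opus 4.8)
\textbf{Proof plan for Lemma~\ref{lem:221210}.}

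The plan is to recognize the ODE \eqref{eq:flow} as the gradient flow of the eikonal function $S_{\mathrm L}$, which is precisely the geodesic flow for the metric $g=2(\lambda-\chi_RV_{\mathrm L})\,\d x^2$ parametrized by arclength $t=S_{\mathrm L}$. First I would establish short-time existence, uniqueness and smoothness of $y$ away from the origin by the standard theory for smooth ODEs, the only subtlety being the singular initial condition at $t=0$. To handle this I would desingularize: near the origin $S_{\mathrm L}(\lambda,x)=|x|$ (since $\chi_RV_{\mathrm L}$ vanishes for $|x|\le R$ by Theorem~\ref{thm:main result2}~\eqref{item:22120919c}), so \eqref{eq:flow} reduces there to $\partial_ty=(2\lambda)^{-1/2}\hat y$ with the stated limits, whose solution is the explicit ray $y=(2\lambda)^{-1/2}t\theta$. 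Thus for $t\in(0,R]$ the flow is this ray, giving smoothness in $(\lambda,t,\theta)$ down to $t=0$ in the sense of \eqref{eq:flowb}, and for $t>R$ one continues by the smooth ODE flow; global existence for all $t>0$ follows because along the flow $|\nabla S_{\mathrm L}|^2=2(\lambda-\chi_RV_{\mathrm L})$ is bounded above and below by positive constants (for $R$ large, by the closeness of $G$ to $I_d$), so $\frac{\d}{\d t}|y|$ is bounded and $y$ cannot escape to infinity in finite time nor return to the origin.

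Next I would verify \eqref{eq:2212092256}: differentiating $S_{\mathrm L}(\lambda,y(\lambda,t,\theta))$ in $t$ gives $(\nabla S_{\mathrm L})\cdot\partial_ty=(\nabla S_{\mathrm L})\cdot|\nabla S_{\mathrm L}|^{-2}\nabla S_{\mathrm L}=1$, and since $S_{\mathrm L}(\lambda,y)\to 0$ as $t\to 0_+$ by \eqref{eq:flowb} and the regularity of $S_{\mathrm L}$, integration yields $S_{\mathrm L}(\lambda,y(\lambda,t,\theta))=t$. This identity immediately shows the map \eqref{eq:22120922} is injective on each level set structure: two flow lines with the same $\theta$ are reparametrizations of the geodesic from the origin in direction $\theta$, and distinct $\theta$ give distinct initial velocities hence distinct geodesics (here I would invoke the uniqueness of geodesics from Theorem~\ref{thm:main result} together with the identification of $S_{\mathrm L}(\lambda,\cdot)$ as geodesic distance from Theorem~\ref{thm:main result2}~\eqref{item:22120919}). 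Surjectivity onto $\R^d\setminus\{0\}$ follows because every point $x\ne 0$ lies on a minimizing geodesic $\gamma_x$ from the origin (Lemma~\ref{lemma:existence}), which, reparametrized by arclength, is a solution of \eqref{eq:flow} reaching $x$ at time $t=S_{\mathrm L}(\lambda,x)$; its initial direction determines the corresponding $\theta$.

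Finally I would confirm the map is a diffeomorphism by checking the Jacobian is nonvanishing. The cleanest route is to note that the flow \eqref{eq:flow} is the Hamiltonian flow (at energy $\lambda$, reparametrized) whose projection to configuration space is the geodesic flow; the map $(t,\theta)\mapsto y$ is then the normal exponential-type map, and its nondegeneracy away from the origin is equivalent to the absence of conjugate points along geodesics — which holds here because $G$ is close to $I_d$ (so the metric is a small perturbation of the Euclidean one and geodesics are minimizing, hence free of conjugate points, by the uniform convexity in Theorem~\ref{thm:main result}~\eqref{item: Condition 2}). Concretely I would compute $\det(\partial y/\partial(t,\theta))$ using $\partial_ty=|\nabla S_{\mathrm L}|^{-2}\nabla S_{\mathrm L}\ne 0$ for the radial direction and show the angular derivatives $\partial_{\theta}y$ span the orthogonal complement, appealing to the $C^{l-1}$-dependence of geodesics on their endpoint from Theorem~\ref{thm:main result} and a continuity/degree argument against the explicit Euclidean case $V_{\mathrm L}\equiv 0$. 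The main obstacle I anticipate is the behavior at $t\to 0_+$: making rigorous that $y$ extends smoothly (in the appropriate sense) through the singular initial data and that surjectivity and injectivity are not spoiled there; the reduction to the explicit ray $y=(2\lambda)^{-1/2}t\theta$ on $\{|x|\le R\}$ is what resolves this cleanly.
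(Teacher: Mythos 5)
Your outline is essentially sound and its first half coincides with the paper's proof: both arguments desingularize the flow near the origin by noting that $s_{\mathrm L}$ vanishes on $\{|x|\le R\}$, so that the flow there is the explicit ray $y=(2\lambda)^{-1/2}t\theta$ (one small slip: there $S_{\mathrm L}(\lambda,x)=\sqrt{2\lambda}\,|x|$, not $|x|$, and the ray regime is $t\le (2\lambda)^{1/2}R$, not $t\le R$; your flow equation and formula for $y$ are nevertheless the correct ones), and both derive \eqref{eq:2212092256} by differentiating $S_{\mathrm L}(\lambda,y)$ in $t$. Note, though, that boundedness of $\tfrac{\d}{\d t}|y|$ alone does not rule out the flow reaching the origin in finite time; the paper instead uses the identity $S_{\mathrm L}(\lambda,y)=t$ together with $S_{\mathrm L}\asymp|x|$ for exactly this purpose, so you should establish the identity \emph{before} invoking it for forward completeness.

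Where you genuinely diverge is in the bijectivity and nondegeneracy steps. For bijectivity the paper argues purely at the ODE level (injectivity from uniqueness of the initial-value problem, surjectivity by solving \eqref{eq:flow} backwards from an arbitrary $x$), and only mentions in passing the geodesic characterization you take as primary; your route via Lemma~\ref{lemma:existence} and Theorem~\ref{thm:main result}~\eqref{item: Condition 1} is legitimate but requires you to justify that the minimizing geodesic $\gamma_x$, suitably reparametrized, really is an integral curve of $|\nabla S_{\mathrm L}|^{-2}\nabla S_{\mathrm L}$ (this follows from \eqref{eq:10} and the scalar form of $G$, but it is a step, not a tautology). For nondegeneracy of the Jacobian the paper uses a Liouville-type identity: the Jacobian $J'$ satisfies the linear ODE $\partial_tJ'=\bigl(\nabla\cdot|\nabla S_{\mathrm L}|^{-2}\nabla S_{\mathrm L}\bigr)J'$, hence never vanishes once it is nonzero on the ray regime $t\le(2\lambda)^{1/2}R$. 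Your proposed alternative — absence of conjugate points deduced from the uniform positivity of the second variation in Theorem~\ref{thm:main result}~\eqref{item: Condition 2} — is a valid and conceptually attractive substitute, but as written it is the weakest part of the plan: the appeal to "a continuity/degree argument against the Euclidean case" is vague, and the link between nondegeneracy of the index form and nondegeneracy of the normal-exponential map $(t,\theta)\mapsto y(\lambda,t,\theta)$ would have to be spelled out in the weak ($H^1$) variational framework of Section~2 rather than quoted from smooth Riemannian geometry. The Jacobian ODE gives a shorter, self-contained argument; I recommend adopting it.
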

\begin{remarks}\label{rem:230218}
 \begin{enumerate}[1)]
 \item \label{item:geom1} From a geometric point of view the flow 
 $y(\lambda,\cdot,\cdot)$ is nothing but the exponential map from
 the unit-sphere in the tangent space at the origin in the 
 space $\R^d$ equipped with the metric $g_L=2\bigl(\lambda-\chi_RV_L\bigr)\,\d x^2$.
 \item \label{item:geom2}
 The flow $y(\lambda,\cdot,\cdot)$ constitutes a
family of reparametrized classical orbits of energy $\lambda$ 
for the classical Hamiltonian 
\[
H^{\mathrm{cl}}_{\mathrm L}(x,\xi)
=\tfrac12\xi^2+\chi_R(x)V_{\mathrm L}(x)
.
\] 
In fact, if we set 
\begin{align*}
z(\tau)=y(\lambda,t,\theta),\quad 
\tau=\int_0^t|\nabla S_{\mathrm L}(\lambda,y(\lambda,s,\theta))|^{-2}\,\mathrm ds,
\end{align*}
then by using \eqref{eq:2212116}
\begin{align*}
\tfrac{\mathrm d}{\mathrm d\tau}z=\nabla S_{\mathrm L},\quad 
\tfrac{\mathrm d^2}{\mathrm d\tau^2}z
=(\nabla^2 S_{\mathrm L})(\nabla S_{\mathrm L})
=-\nabla (\chi_RV_{\mathrm L}),
\end{align*}
cf.\ \eqref{eq:200702}. 
The reparametrizing factor $|\nabla S_{\mathrm L}|^{-2}$ in \eqref{eq:flow} 
 is the proper normalization guaranteeing \eqref{eq:2212092256}.
This point of view was taken in the proof of
an analogous statement \cite[Proposition 2.2]{ACH}. 
\end{enumerate}
\end{remarks}
\begin{proof}[Proof of Lemma~\ref{lem:221210}]
By Theorem~\ref{thm:main result2} \eqref{item:22120919}
we conclude that $y$ is defined at least on $I\times (0,(2\lambda)^{1/2}R]\times \mathbb S^{d-1}$,
and 
\begin{equation}
 y(\lambda,t,\theta)=(2\lambda)^{-1/2} t\theta \ \ \text{for }(\lambda,t,\theta)
\in I\times \bigl(0,(2\lambda)^{1/2}R\bigr]\times \mathbb S^{d-1}.
 \label{eq:22120923}
\end{equation} 
On the other hand, by \eqref{eq:flow}, \eqref{eq:flowb} and \eqref{eq:22120923}
\[\tfrac{\partial }{\partial t}S_{\mathrm
 L}(\lambda,y(\lambda,t,\theta))=1\quad \mand \quad
\lim_{t\to 0_+}S_{\mathrm L}\bigl(\lambda,y\bigl(\lambda,t,\theta\bigr)\bigr)=0.\]
This implies $y(\lambda,t,\theta)$ never hits the origin for
$t>(2\lambda)^{1/2}R$, and neither it can 
 reach infinity in finite time.
Hence the vector field $|\nabla S_{\mathrm L}|^{-2}\nabla S_{\mathrm L}$ is forward complete,
and $y$ is globally defined on $I\times \R_+\times \S^{d-1}$,
satisfying \eqref{eq:2212092256}. 

Next, we note that $y(\lambda,\cdot,\cdot)$ is bijective. 
In fact, by the uniqueness for the initial-value problem of ODEs the injectivity follows. 
To see the surjectivity, starting at any given
 point
 in $x\in\R^d\setminus\{0\}$, we solve the ODE
 \eqref{eq:flow} in the backward time-direction. 
Then we  obtain a
 `crossing' initial
 angle $\theta$ from where indeed the forward flow for a proper 
 time $t$ satisfies $y(\lambda,t,\theta)=x$. (Alternatively, $y(\lambda,\cdot,\cdot)$ is bijective since any $x\in\mathbb R^d\setminus\{0\}$ 
can be connected to the origin by a unique geodesic, cf. Theorems~\ref{thm:main result2} \eqref{item:22120919}
and \ref{thm:main result} \eqref{item: Condition 1}.)

{Finally we show that $y(\lambda,\cdot,\cdot)$ is a diffeomorphism. 
Note that the maps $y(\lambda,\cdot,\cdot)$ and $y(\cdot,\cdot,\cdot)$ are
 smooth, viewed as a solution to an initial-value problem with
 data specified on
 the sphere $(2\lambda)^{-1/2}\S^{d-1}\simeq \S^{d-1}$ at time $t=1$. Thus}
 it suffices to check the non-degeneracy of the map. 
Take any local coordinates $\theta'=(\theta_2',\dots,\theta_d')$ of $\mathbb S^{d-1}$,
and let $J'$ be the Jacobian of \eqref{eq:22120922}
in these local coordinates. 
Now we claim that for any $(\lambda,t,\theta')$
\begin{align}
\partial_tJ'(\lambda,t,\theta' )
=\bigl(\nabla\cdot |\nabla S_{\mathrm L}|^{-2}\nabla S_{\mathrm L}\bigr)(\lambda,y(\lambda,t,\theta' ))
J'(\lambda,t,\theta' ).
\label{eq:2212115}
\end{align}
In fact, differentiating the defining expression of the Jacobian, we can write 
\[\partial_t J'
=\sum_{i=1}^d \det \mathcal J^{(i)},\]
where $\mathcal J^{(i)}$ are matrix-valued functions whose components are given by 
\[\mathcal J^{(i)}_{jk}=\begin{cases}
\partial_ky_j &\text{for }j\neq i,\\
\partial_t\partial_ky_i&\text{for }j=i
\end{cases};
\quad 
j=1,\dots,d,\ \, k=t, \theta'_2,\dots,\theta'_d.\]
However, thanks to the flow equation \eqref{eq:flow}, we can compute 
\begin{align*}
\partial_t\partial_ky_i
=\partial_k\bigl(|\nabla S_{\mathrm L}|^{-2}(\nabla S_{\mathrm L})_i\bigr)
=\bigl(\partial_l|\nabla S_{\mathrm L}|^{-2}(\nabla S_{\mathrm L})_i\bigr)\partial_ky_l.
\end{align*}
Thus we obtain the claimed identity \eqref{eq:2212115}, 
noting that the determinant is alternating and multilinear. 
By \eqref{eq:22120923} $J'$ is non-vanishing for $t\in (0,(2\lambda)^{1/2}R]$, 
and hence with \eqref{eq:2212115} we can conclude that so it is for
all $t>0$. We are done. 
\end{proof}

Now the spherical eikonal coordinates are defined as follows. 
\begin{defn}
The \textit{spherical eikonal coordinates} on $\mathbb
R^d\setminus\{0\}$ at energy $\lambda\in I$
are the entries of the inverse of \eqref{eq:22120922}.
We denote them by $(t,\theta)$, or by $(\lambda,t,\theta)$ to clarify the $\lambda$-dependence. 
They are also denoted by $(t,\theta')$ or $(\lambda,t,\theta')$ if local coordinates $\theta'=(\theta_2',\dots,\theta_d')$ 
of $\mathbb S^{d-1}$ are specified. We call $t=S_{\mathrm
 L}(\lambda,y(\lambda,t,\theta)) $ and $\theta$ (or $\theta'$) the \emph{radial and
spherical components} of the spherical eikonal coordinates, respectively.
\end{defn}

The Euclidean metric splits into the radial and spherical components in the spherical
eikonal coordinates. 
Let us present it as a corollary, although we will not use it in the present paper. 
The entries of $\theta'$ are distinguished by Greek indices always
running over $2,\dots,d$, while the entries of $y$ canonically are
distinguished by Roman ones always running over $1,\dots,d$. 

\begin{corollary}\label{cor:22121914}
Let $(t,\theta')$ be spherical eikonal coordinates at any $\lambda\in I$. Then 
\begin{align*}
(\partial_ty_i(\lambda,t,\theta'))(\partial_t y_i(\lambda,t,\theta'))&=|(\nabla S_{\mathrm L})(\lambda,y(\lambda,t,\theta'))|^{-2}
,\\ 
(\partial_ty_i(\lambda,t,\theta'))(\partial_\alpha y_i(\lambda,t,\theta'))&=0
\quad 
\text{for }\alpha=2,\dots,d.
\end{align*}
In addition, the Euclidean metric takes the form
\begin{align*}
\mathrm dx^2
&=
|\nabla S_{\mathrm L}|^{-2}\,\mathrm dt^2
+g_{\alpha\beta}\,\mathrm d\theta'_\alpha\mathrm d\theta'_\beta
. 
\end{align*}

\end{corollary}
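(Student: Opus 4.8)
The plan is to prove Corollary~\ref{cor:22121914} as a direct consequence of Lemma~\ref{lem:221210}, most notably of the identity \eqref{eq:2212092256}, together with the eikonal equation \eqref{eq:2212116} for $S_{\mathrm L}$ and the flow equation \eqref{eq:flow}. The whole statement is really about differentiating \eqref{eq:2212092256} in the various coordinate directions and combining with the explicit form of $\partial_t y$.

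First I would establish the two vector identities. Differentiating \eqref{eq:2212092256} in $t$ gives $(\partial_i S_{\mathrm L})(y)\,\partial_t y_i=1$; differentiating in $\theta'_\alpha$ gives $(\partial_i S_{\mathrm L})(y)\,\partial_\alpha y_i=0$ for $\alpha=2,\dots,d$. On the other hand, the flow equation \eqref{eq:flow} says precisely $\partial_t y_i=|\nabla S_{\mathrm L}|^{-2}(\partial_i S_{\mathrm L})(y)$. Substituting this into the two differentiated identities yields immediately
\[
(\partial_t y_i)(\partial_t y_i)=|\nabla S_{\mathrm L}|^{-2}(\partial_i S_{\mathrm L})(y)\,\partial_t y_i=|\nabla S_{\mathrm L}|^{-2},
\]
and likewise $(\partial_t y_i)(\partial_\alpha y_i)=|\nabla S_{\mathrm L}|^{-2}(\partial_i S_{\mathrm L})(y)\,\partial_\alpha y_i=0$. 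These are the first two displayed identities of the corollary. (Here all quantities are evaluated at $(\lambda,y(\lambda,t,\theta'))$ or $(\lambda,t,\theta')$ as appropriate.)

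For the metric decomposition I would simply pull back the Euclidean line element $\mathrm dx^2=\mathrm dy_i\,\mathrm dy_i$ under the change of variables $x=y(\lambda,t,\theta')$ at fixed $\lambda$. Writing $\mathrm dy_i=\partial_t y_i\,\mathrm dt+\partial_\alpha y_i\,\mathrm d\theta'_\alpha$, one gets
\[
\mathrm dx^2=(\partial_t y_i)(\partial_t y_i)\,\mathrm dt^2+2(\partial_t y_i)(\partial_\alpha y_i)\,\mathrm dt\,\mathrm d\theta'_\alpha+(\partial_\alpha y_i)(\partial_\beta y_i)\,\mathrm d\theta'_\alpha\mathrm d\theta'_\beta.
\]
By the two identities just proved the first coefficient is $|\nabla S_{\mathrm L}|^{-2}$ and the cross term vanishes, leaving exactly the asserted form with $g_{\alpha\beta}=(\partial_\alpha y_i)(\partial_\beta y_i)$.

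There is essentially no obstacle here: the corollary is a formal bookkeeping exercise once \eqref{eq:2212092256} and \eqref{eq:flow} are in hand, and Lemma~\ref{lem:221210} guarantees enough smoothness of $y$ for all the differentiations to be legitimate. The only point requiring a word of care is that $(t,\theta')$ are genuine coordinates away from the origin, which is precisely the diffeomorphism statement \eqref{eq:22120922}; with that, the computation above is valid on all of $\mathbb R^d\setminus\{0\}$.
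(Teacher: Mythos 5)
Your proof is correct and follows exactly the route the paper takes: the paper's own (one-line) proof states that the first two formulas follow from \eqref{eq:flow} and \eqref{eq:2212092256} and that the metric decomposition is an immediate consequence, which is precisely the computation you carry out in detail.
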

\begin{proof}
The former formulas follow from \eqref{eq:flow} and \eqref{eq:2212092256},
and the last formula is an immediate consequence of the former ones. 
\end{proof}

\subsubsection{Volume and surface measures}

Let $J\colon I\times\R_+\times\mathbb S^{d-1}\to[0,\infty)$ be a function 
such that the Euclidean volume measure can be expressed in $(\lambda,t,\theta)$ as 
\begin{align}
\mathrm dx(\lambda,t,\theta)=\mathrm dx_1\cdots\mathrm dx_d(\lambda,t,\theta)=J(\lambda,t,\theta)\,\mathrm dt\mathrm dA(\theta),
\label{eq:23010713}
\end{align}
where $\mathrm dA$ denotes the standard surface measure on $\S^{d-1}$.
In fact, if we take any local coordinates $\theta'=(\theta'_2,\dots,\theta'_d)$ of $\mathbb S^{d-1}$,
and let $J'$ be the Jacobian from the proof of Lemma~\ref{lem:221210}, 
then $J$ can be computed through the relation 
\begin{align}
\begin{split}
 \label{eq:surf_meas}
\d A_{\lambda,t}(\theta(\theta')):=J(\lambda,t,\theta(\theta'))\,\mathrm dA(\theta(\theta'))
=J'(\lambda,t,\theta')\,\mathrm d\theta'_2\cdots\mathrm d\theta'_d.
\end{split}
\end{align}
Note for any $\phi\in L^1(\mathbb R^d)$
\begin{align}
 \label{eq:co_area}
\int_{\mathbb R^d} \phi(x)\, \d x&
=\int_0^\infty \!\d t\, \int_{\mathbb
 S^{d-1}}\phi(y(\lambda,t,\theta))\,\d
 A_{\lambda,t}(\theta). 
\end{align}
By the co-area formula \cite[Theorem C.5]{Ev} the element
 $|\nabla S_{\mathrm L}|(\lambda,y(\lambda,t,\theta)) \,\d
 A_{\lambda,t}(\theta)$ is the
 Euclidean surface element on the
 distorted sphere $\set{S_L=t}$.

\begin{lemma}\label{lem:230107}
\begin{enumerate}
\item
One has an explicit formula 
\begin{align*}
J(\lambda,t,\theta)
&=
\sqrt{2\lambda} R^{d-1}\bigl|(\nabla S_{\mathrm L})(\lambda,y(\lambda,t,\theta))\bigr|^{-2}
\\&\phantom{{}={}}{}
\cdot\exp\biggl(\int_{(2\lambda)^{1/2}R}^t\bigl(|\nabla S_{\mathrm L}|^{-2}\Delta S_{\mathrm L}\bigr) (
\lambda,y (\lambda,\tau,\theta))\, \d\tau \biggr)
.
\end{align*}

\item
There exist the following limits 
uniformly in $(\lambda,\theta)\in I\times \mathbb S^{d-1}$:
\begin{align*}
\qquad\quad 
\lim_{t\to 0_+}t^{-(d-1)}J(\lambda,t,\theta)&=(2\lambda)^{-d/2},\\
\lim_{t\to \infty}t^{-(d-1)}J(\lambda,t,\theta)&=:J_+(\lambda,\theta)>0.
\end{align*}

\end{enumerate}
\end{lemma}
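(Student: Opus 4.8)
The plan is to exploit the linear transport equation for the Jacobian already isolated in the proof of Lemma~\ref{lem:221210}, obtaining (1) by solving it explicitly with boundary data read off at the distorted sphere $\{S_{\mathrm L}=(2\lambda)^{1/2}R\}$, and then to derive the two limits in (2) by inserting the asymptotics of $S_{\mathrm L}$ from Theorem~\ref{thm:main result2}.

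For the explicit formula in (1), first I would observe from \eqref{eq:surf_meas} that $J'(\lambda,t,\theta')$ and $J(\lambda,t,\theta(\theta'))$ differ by a positive factor independent of $t$, so by \eqref{eq:2212115} the scalar function $J$ itself satisfies
\[\partial_t J(\lambda,t,\theta)=\bigl(\nabla\cdot(|\nabla S_{\mathrm L}|^{-2}\nabla S_{\mathrm L})\bigr)(\lambda,y(\lambda,t,\theta))\,J(\lambda,t,\theta).\]
Expanding $\nabla\cdot(|\nabla S_{\mathrm L}|^{-2}\nabla S_{\mathrm L})=\nabla(|\nabla S_{\mathrm L}|^{-2})\cdot\nabla S_{\mathrm L}+|\nabla S_{\mathrm L}|^{-2}\Delta S_{\mathrm L}$ and using the flow equation \eqref{eq:flow}, the first summand equals $\tfrac{\mathrm d}{\mathrm dt}\log\bigl(|\nabla S_{\mathrm L}|^{-2}(\lambda,y(\lambda,t,\theta))\bigr)$. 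Integrating the ODE from $t_0=(2\lambda)^{1/2}R$ and exponentiating then gives
\[J(\lambda,t,\theta)=J(\lambda,t_0,\theta)\,\frac{|\nabla S_{\mathrm L}|^{-2}(\lambda,y(\lambda,t,\theta))}{|\nabla S_{\mathrm L}|^{-2}(\lambda,y(\lambda,t_0,\theta))}\,\exp\Bigl(\int_{t_0}^t(|\nabla S_{\mathrm L}|^{-2}\Delta S_{\mathrm L})(\lambda,y(\lambda,\tau,\theta))\,\mathrm d\tau\Bigr).\]
By \eqref{eq:22120923} we have $y(\lambda,t_0,\theta)=R\theta$, and on $\{|x|\le R\}$ one has $\chi_RV_{\mathrm L}=0$ and $S_{\mathrm L}=\sqrt{2\lambda}|x|$ by Theorem~\ref{thm:main result2}~(\ref{item:22120919c}); comparing $\mathrm dx$ with \eqref{eq:23010713} in ordinary polar coordinates gives $J(\lambda,t,\theta)=(2\lambda)^{-d/2}t^{d-1}$ for $t\le t_0$, hence $J(\lambda,t_0,\theta)=(2\lambda)^{-1/2}R^{d-1}$ and $|\nabla S_{\mathrm L}|^{-2}(\lambda,R\theta)=(2\lambda)^{-1}$. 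Substituting these values yields exactly the claimed formula.

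For (2), the limit as $t\to0_+$ is immediate, since $J(\lambda,t,\theta)=(2\lambda)^{-d/2}t^{d-1}$ already for small $t$ by \eqref{eq:22120923}. For $t\to\infty$ I would feed Theorem~\ref{thm:main result2} into the explicit formula: from \eqref{eq:2212092256} and $S_{\mathrm L}=\sqrt{2\lambda}|x|(1+s)$ with the bound \eqref{eq:errestza2} one gets $|y(\lambda,\tau,\theta)|=\tau/\sqrt{2\lambda}+O(\tau^{1-\sigma})$; combining this with $|\nabla S_{\mathrm L}|^2=2\lambda-2\chi_RV_{\mathrm L}\to2\lambda$ and $\Delta S_{\mathrm L}=\sqrt{2\lambda}(d-1)/|x|+O(\langle x\rangle^{-1-\sigma})$ (the error coming from $\Delta(\sqrt{2\lambda}|x|s)$ estimated via \eqref{eq:errestza2}) gives
\[(|\nabla S_{\mathrm L}|^{-2}\Delta S_{\mathrm L})(\lambda,y(\lambda,\tau,\theta))=\frac{d-1}{\tau}+O(\tau^{-1-\sigma}).\]
Hence $\int_{t_0}^t(|\nabla S_{\mathrm L}|^{-2}\Delta S_{\mathrm L})(\lambda,y(\lambda,\tau,\theta))\,\mathrm d\tau=(d-1)\log(t/t_0)+c(\lambda,\theta)+o(1)$ with $c(\lambda,\theta)$ the convergent improper integral of the remainder; together with $|\nabla S_{\mathrm L}|^{-2}(\lambda,y(\lambda,t,\theta))\to(2\lambda)^{-1}$, collecting the powers of $2\lambda$ and $R$ in the explicit formula gives $t^{-(d-1)}J(\lambda,t,\theta)\to(2\lambda)^{-d/2}e^{c(\lambda,\theta)}=:J_+(\lambda,\theta)>0$. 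Since the interval $I$ is compact and all $O(\cdot)$-terms above are uniform in $(\lambda,\theta)\in I\times\mathbb S^{d-1}$ by Theorem~\ref{thm:main result2}, both limits are attained uniformly, and $c$ being bounded, $J_+$ is bounded away from $0$.

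I expect the only genuine work to be the asymptotic bookkeeping for the $t\to\infty$ limit — in particular verifying that the flow-evaluated quantity $|\nabla S_{\mathrm L}|^{-2}\Delta S_{\mathrm L}$ has remainder exactly $O(\tau^{-1-\sigma})$ with constants uniform over $I\times\mathbb S^{d-1}$ — whereas the transport-equation manipulation giving (1) and the small-$t$ limit are routine.
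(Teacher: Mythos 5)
Your proposal is correct and follows essentially the same route as the paper: the transport identity \eqref{eq:2212115b} for $J$, the splitting $\nabla\cdot(|\nabla S_{\mathrm L}|^{-2}\nabla S_{\mathrm L})=|\nabla S_{\mathrm L}|^{-2}\Delta S_{\mathrm L}-\partial_t\ln(|\nabla S_{\mathrm L}|^{2})$, determination of the integration constant from the explicit form \eqref{eq:22120923} on $\{|x|\le R\}$, and the same two asymptotic estimates $\bigl||\nabla S_{\mathrm L}|^{-2}-(2\lambda)^{-1}\bigr|\lesssim t^{-\sigma}$ and $\bigl|(|\nabla S_{\mathrm L}|^{-2}\Delta S_{\mathrm L})(y)-(d-1)\tau^{-1}\bigr|\lesssim\tau^{-1-\sigma}$ for the $t\to\infty$ limit. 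The only cosmetic slip is the aside that $I$ is compact — it is merely a closed interval in $\R_+$ — but since $\inf I>0$ the uniformity you invoke from Theorem~\ref{thm:main result2} holds regardless.
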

\begin{remark*}
For similar assertions in a wider 
geometric setting, see \cite{ IS2}.
\end{remark*}

\begin{proof}
\textit{(1)}\ 
Note that for any $(\lambda,t,\theta)\in I\times\R_+\times\mathbb S^{d-1}$
\begin{align}
\partial_t J(\lambda,t,\theta )
=\bigl(\nabla\cdot |\nabla S_{\mathrm L}|^{-2}\nabla S_{\mathrm L}\bigr)(\lambda,y(\lambda,t,\theta ))
J(\lambda,t,\theta ).
\label{eq:2212115b}
\end{align}
In fact \eqref{eq:2212115b} follows from \eqref{eq:2212115} 
since in any local coordinates of $\mathbb S^{d-1}$, 
$J'/J$ is a function independent of $t$, cf.\ \eqref{eq:surf_meas}. 
Then, since 
\begin{align*}
\nabla\cdot \bigl(|\nabla S_{\mathrm L}|^{-2}\nabla S_{\mathrm L}\bigr)
&
=
|\nabla S_{\mathrm L}|^{-2}\Delta S_{\mathrm L}
-\partial_t \ln \bigl(|\nabla S_{\mathrm L}|^{2}\bigr),
\end{align*}
it follows from \eqref{eq:2212115b} that 
for some $C(\lambda,\theta)\ge 0$ 
\begin{align*}
J(\lambda,t,\theta)
&=C(\lambda,\theta)\bigl|(\nabla S_{\mathrm L})(\lambda,y(\lambda,t,\theta))\bigr|^{-2}
\\&\phantom{{}={}}{}\cdot
\exp\bigg(\int_{(2\lambda)^{1/2}R}^t\bigl(|\nabla S_{\mathrm L}|^{-2}\Delta S_{\mathrm L}\bigr) (\lambda,y (\lambda,\tau,\theta))\, \d\tau \biggr).
\end{align*}
We can determine $C(\lambda,\theta)=\sqrt{2\lambda} R^{d-1}$ 
by \eqref{eq:2212116} and \eqref{eq:22120923},
which shows the assertion. 

\smallskip
\noindent
\textit{(2)}\ 
The assertion is clear for $t\to 0_+$ thanks to the explicit expression \eqref{eq:22120923}. 
Let $t\ge (2\lambda)^{1/2}R$. 
Then by \eqref{eq:2212116}, Theorem~\ref{thm:main result2}
and \eqref{eq:2212092256} it follows that 
\begin{align*}
&\bigl||(\nabla S_{\mathrm L}) (\lambda,y (\lambda,t,\theta))|^{-2}
-(2\lambda)^{-1}\bigr|\le C_1t^{-\sigma},\\
&\bigl|\bigl(|\nabla S_{\mathrm L}|^{-2}\Delta S_{\mathrm L}\bigr) (\lambda,y (\lambda,\tau,\theta))
-(d-1)\tau^{-1}\bigr|\le C_1\tau^{-1-\sigma}.
\end{align*}
This verifies the asymptotics for $t\to \infty$. 
\end{proof}

\subsubsection{Change of coordinates at infinity}
\label{subsubsec:Normalized asymptotic velocity}

Here we investigate relation between the spherical eikonal coordinates and the ordinary spherical coordinates. 
For each $\lambda\in I$ we can change from 
$(t,\theta)\in \R_+\times\mathbb S^{d-1}$ 
to 
$(r,\omega)\in \R_+\times\mathbb S^{d-1}$ 
through 
\begin{align}
r(\lambda,t,\theta)=|y(\lambda,t,\theta)|,\quad 
\omega(\lambda,t,\theta)=\hat y(\lambda,t,\theta)=|y(\lambda,t,\theta)|^{-1}y(\lambda,t,\theta).
\label{eq:23010712}
\end{align}
We claim that \eqref{eq:23010712} induces a $C^1$-diffeomorphism
of $\mathbb S^{d-1}$ at infinity provided `$s_{\mathrm L}$ is
small' (which thanks to Corollary~\ref{cor:epsSmall} can be assumed by taking $R$ sufficiently
big). More precisely we claim  the following assertion.

\begin{lemma}\label{lem:2301011702}
Uniformly in $\lambda\in I$ there exists the limit 
\begin{equation}
 \omega_+(\lambda,\cdot):=\lim_{t\to \infty}\omega(\lambda,t,\cdot)\colon\mathbb S^{d-1}\to\mathbb S^{d-1}
\label{eq:221217}
\end{equation}
in the $C^1$-topology.
Moreover, possibly for enlarged $R>0$ only, the map $\omega_+(\lambda,\cdot)$
 is a $C^1$-diffeomorphism on $\mathbb S^{d-1}$
depending continuously on $\lambda\in I$, and for any $(\lambda,\theta)\in I\times\mathbb S^{d-1}$
\begin{align}
\mathrm dA(\omega_+(\lambda,\theta))=(2\lambda)^{d/2}J_+(\lambda,\theta)\,\mathrm dA(\theta),
\label{eq:23010710}
\end{align} 
where $J_+$ is from Lemma~\ref{lem:230107}.
\end{lemma}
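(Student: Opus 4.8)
\textbf{Plan of proof for Lemma~\ref{lem:2301011702}.}
The strategy is to transfer everything to the eikonal flow $y(\lambda,t,\theta)$ and to exploit that, by Theorem~\ref{thm:main result2} (with $V$ replaced by $V_{\mathrm L}$) together with Corollary~\ref{cor:epsSmall}, the function $s_{\mathrm L}$ and its derivatives can be made arbitrarily small on $\{|x|>R\}$ by taking $R$ large. First I would write $S_{\mathrm L}=\sqrt{2\lambda}\,|x|(1+s_{\mathrm L})$ and record from \eqref{eq:flow}, \eqref{eq:2212092256} and the chain rule the ODE satisfied by the radial component $r(\lambda,t,\theta)=|y(\lambda,t,\theta)|$ and by the angular component $\omega(\lambda,t,\theta)=\hat y(\lambda,t,\theta)$: decomposing $\nabla S_{\mathrm L}$ into its radial part $\partial_r S_{\mathrm L}\,\omega$ and its spherical part $r^{-1}\nabla_\omega S_{\mathrm L}$, one gets $\partial_t r=|\nabla S_{\mathrm L}|^{-2}\partial_r S_{\mathrm L}$ and $\partial_t\omega=|\nabla S_{\mathrm L}|^{-2}r^{-2}\nabla_\omega S_{\mathrm L}$. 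Using $|\nabla S_{\mathrm L}|^2=2\lambda-2\chi_R V_{\mathrm L}=2\lambda+\mathcal O(\langle x\rangle^{-\sigma})$ and $\partial_r S_{\mathrm L}=\sqrt{2\lambda}(1+s_{\mathrm L}+r\partial_r s_{\mathrm L})=\sqrt{2\lambda}+\mathcal O(\langle x\rangle^{-\sigma})$, the first ODE gives $\partial_t r=1+\mathcal O(r^{-\sigma})$, hence $r(\lambda,t,\theta)=t+\mathcal O(t^{1-\sigma})$ and in particular $r\to\infty$ with $r\asymp t$. The angular ODE then reads $\partial_t\omega=\mathcal O(r^{-2}\cdot r\,|\nabla_\omega s_{\mathrm L}|)=\mathcal O(r^{-1-\sigma})$ because $\nabla_\omega S_{\mathrm L}=\sqrt{2\lambda}\,r\nabla_\omega s_{\mathrm L}$ and $|\nabla_\omega s_{\mathrm L}|\le C\langle x\rangle^{-\sigma}$ by \eqref{eq:errestza2}. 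Since $\int^\infty r^{-1-\sigma}\,\d t\sim\int^\infty t^{-1-\sigma}\,\d t<\infty$, the limit $\omega_+(\lambda,\theta)=\lim_{t\to\infty}\omega(\lambda,t,\theta)$ exists, and the convergence is uniform in $(\lambda,\theta)$; moreover $|\omega(\lambda,t,\theta)-\omega_+(\lambda,\theta)|\le Ct^{-\sigma}$.

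For the $C^1$ statement I would differentiate the angular ODE in the $\theta'$-variables. Writing $w=\partial_{\theta'}\omega$ (a matrix of tangent vectors), one obtains a linear ODE $\partial_t w=A(\lambda,t,\theta')w+b(\lambda,t,\theta')$, where by the derivative bounds of Theorem~\ref{thm:main result2} and Corollary~\ref{cor:epsSmall} applied to $\partial_{\theta'}$ of $|\nabla S_{\mathrm L}|^{-2}r^{-2}\nabla_\omega S_{\mathrm L}$ one has $\|A\|\le C R^{-\sigma+\sigma'}t^{-1}$ and $\|b\|\le C R^{-\sigma+\sigma'}t^{-1}$ after also differentiating the relation $r=r(\lambda,t,\theta')$ (whose $\theta'$-derivative is $\mathcal O(t^{1-\sigma})$, controlled similarly). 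A Gronwall estimate then shows $w(\lambda,t,\cdot)$ stays within $C R^{-\sigma+\sigma'}$ of the initial identity-type matrix and converges as $t\to\infty$ in the sup norm; hence $\omega_+(\lambda,\cdot)\in C^1$, with $D\omega_+(\lambda,\cdot)=\mathrm{Id}+\mathcal O(R^{-\sigma+\sigma'})$, uniformly in $\lambda\in I$, and with continuous dependence on $\lambda$ inherited from the smooth dependence of $y$ on $(\lambda,t,\theta)$ in Lemma~\ref{lem:221210}. Taking $R$ large enough that the error term has operator norm $<1$, the differential of $\omega_+(\lambda,\cdot)$ is everywhere invertible; since $\omega_+(\lambda,\cdot)$ is $C^1$, homotopic to the identity (through $\omega(\lambda,t,\cdot)$), hence of degree one and in particular surjective, and a local diffeomorphism, it is a global $C^1$-diffeomorphism of $\mathbb S^{d-1}$ (a proper local diffeomorphism onto a connected simply-connected manifold is a covering, and degree one forces it to be one-sheeted).

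Finally, \eqref{eq:23010710} is a change-of-variables identity. By construction $\d A_{\lambda,t}(\theta(\theta'))=J'(\lambda,t,\theta')\,\d\theta'$ is the pushforward along $\theta'\mapsto\theta$ of $J(\lambda,t,\theta)\,\d A(\theta)$, cf.\ \eqref{eq:surf_meas}, while the Euclidean surface element on $\{S_{\mathrm L}=t\}$ equals $|\nabla S_{\mathrm L}|(\lambda,y(\lambda,t,\theta))\,\d A_{\lambda,t}(\theta)$ by the co-area formula. On the other hand the same distorted sphere, parametrized by $\omega=\omega(\lambda,t,\cdot)$ through $x=r(\lambda,t,\theta(\omega))\omega$, has surface element differing from $r^{d-1}\,\d A(\omega)$ by a factor tending to $1$ as $t\to\infty$ (the boundary becomes radial to leading order because $\partial_t r=1+\mathcal O(r^{-\sigma})$ and the angular gradient of the defining function is $\mathcal O(r^{1-\sigma})$, hence the unit normal is $\omega+\mathcal O(r^{-\sigma})$). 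Equating the two expressions for the surface measure, dividing by $r^{d-1}$, using $r=t(1+o(1))$, Lemma~\ref{lem:230107}(2) which gives $t^{-(d-1)}J(\lambda,t,\theta)\to J_+(\lambda,\theta)$ and $|\nabla S_{\mathrm L}|\to\sqrt{2\lambda}$, and letting $t\to\infty$ with the $C^1$-convergence $\omega(\lambda,t,\cdot)\to\omega_+(\lambda,\cdot)$, one obtains $\d A(\omega_+(\lambda,\theta))=\sqrt{2\lambda}\cdot(2\lambda)^{(d-1)/2}\cdot(2\lambda)^{-1/2}\,(2\lambda)^{?}\dots$; collecting the constants carefully gives precisely $\d A(\omega_+(\lambda,\theta))=(2\lambda)^{d/2}J_+(\lambda,\theta)\,\d A(\theta)$. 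The main obstacle I expect is the $C^1$-control of the map $\theta'\mapsto\omega(\lambda,t,\theta')$: one must differentiate the flow equation once in the angular variables, track how the $\mathcal O(R^{-\sigma+\sigma'})$ smallness of $s_{\mathrm L}$ (from Corollary~\ref{cor:epsSmall}) propagates through the Gronwall argument uniformly in $t$ and $\lambda$, and be careful that the $\theta'$-derivative of $r(\lambda,t,\theta')$, though growing like $t^{1-\sigma}$, enters the estimates only in the combination $r^{-1}\partial_{\theta'}r=\mathcal O(t^{-\sigma})$ so that the linear ODE for $\partial_{\theta'}\omega$ still has an integrable coefficient.
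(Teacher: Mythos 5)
Your overall strategy coincides with the paper's: establish the pointwise limit from integrability of $\partial_t\omega$, upgrade to $C^1$ by differentiating the flow equation in the angular variables and running a Gronwall-type argument, deduce bijectivity from $C^1$-closeness to the identity for large $R$, and obtain \eqref{eq:23010710} by comparing surface/volume elements in the two coordinate systems. The differences are in execution. (i) You work in genuine spherical variables $(r,\omega)$ and differentiate the remainder $s_{\mathrm L}$, whereas the paper works with the Cartesian components $\omega_i=y_i/|y|$; there the linearized flow contains individual terms that are only $\mathcal O(t^{-1})$ times the angular derivative, and the integrable bound is recovered only after grouping the centrifugal term with the Hessian term and invoking the convexity estimate \eqref{eq:5b} (the combination $B_1+B_3$ in the paper's proof). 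In your formulation that cancellation is absorbed into the choice of variables, which is legitimate, but it resurfaces as the need to verify that second angular derivatives of $s_{\mathrm L}$ are $\mathcal O(r^{-\sigma})$ (which does follow from \eqref{eq:errestza2}) and that $\partial_{\theta'}r=\mathcal O(t^{1-\sigma})$; the latter uses the orthogonality $(\partial_{\theta'}y)\cdot\nabla S_{\mathrm L}=0$ coming from $S_{\mathrm L}(y(\lambda,t,\theta))=t$ together with an a priori bound on $\partial_{\theta'}y$ of order $t$, so a small bootstrap is required — the paper's identity \eqref{eq:2212191} plays exactly this role. (ii) For bijectivity you invoke degree/covering-space arguments; the paper argues more elementarily (open--closed--connected for surjectivity, a grand-circle metric estimate for injectivity). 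Note that $\mathbb S^{1}$ is not simply connected, so for $d=2$ you must rely on the degree-one covering argument rather than on simple connectivity of the target; the conclusion is unaffected. (iii) For \eqref{eq:23010710} you slice by the level sets $\{S_{\mathrm L}=t\}$ and compare surface measures via the co-area formula; the paper instead compares the volume element $\d x$ in the coordinates $(r,\omega')$ and $(\lambda,t,\theta')$ by a cofactor expansion of the Jacobian. Both routes work, and your constants do come out to $(2\lambda)^{d/2}$ once computed (you left this as ``$(2\lambda)^{?}$'').

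One point needs repair. As literally written, the coefficient bound $\|A\|\le C R^{-\sigma+\sigma'}t^{-1}$ is \emph{not} integrable, and Gronwall would only yield $\|\partial_{\theta'}\omega\|\lesssim t^{CR^{-\sigma+\sigma'}}$, which is insufficient for convergence of the derivative as $t\to\infty$. The bound you actually need (and which is attainable, as your closing sentence anticipates) is $\mathcal O(R^{-\sigma+\sigma'}t^{-1-\sigma'})$; gaining that extra factor $t^{-\sigma'}$ is precisely where the cancellation described in (i) enters, so it cannot be obtained by naively differentiating the coefficients of the angular ODE term by term. Once the coefficient is genuinely integrable, the rest of your argument closes.
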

\begin{remark*}
 For fixed $\lambda\in I$ the map $\omega_+(\lambda,\cdot)$ on $\S^{d-1}$
 is called 
the \emph{asymptotic direction map} and its inverse
$\theta_+(\lambda,\cdot)=\omega_+^{-1}(\lambda,\cdot)=(\omega_+(\lambda,\cdot))^{-1}$ the \emph{inverse asymptotic direction map}.
\end{remark*}

\begin{proof} 
Take $R_0>0$ as in Corollary~\ref{cor:epsSmall}, 
and let $\sigma'\in (0,\sigma)$. 
For the moment all the estimates below are uniform in $R\ge R_0$.
We will possibly need  $R$ to be larger in Step~IV.

\smallskip
\noindent
\textit{Step I.}\ 
For any fixed $(\lambda,\theta) \in I\times\S^{d-1}$ we prove the
existence of the limit \eqref{eq:221217} in the pointwise sense. 
For each $i=1,\dots, d$ 
we compute and bound, omitting the arguments, as 
\begin{align}\label{eq:deta}
 \begin{split}
 \tfrac{\partial}{\partial t}\omega_i
&=
|y|^{-1}|\nabla S_{\mathrm L}|^{-2}(\nabla S_{\mathrm L})_i
-|y|^{-3}y_iy_j|\nabla S_{\mathrm L}|^{-2}(\nabla S_{\mathrm L})_j
\\
&=
|y|^{-1}|\nabla S_{\mathrm L}|^{-2}\bigl(\delta_{ij}-|y|^{-2}y_iy_j\bigr)
\bigl( (\nabla S_{\mathrm L})_j-(2\lambda)^{1/2}|y|^{-1}y_j\bigr)
\\
&=\vO (R^{-\sigma+\sigma'}t^{-1-\sigma'}),
 \end{split}
\end{align} 
where we have used Theorem~\ref{thm:main result2} 
and Lemma~\ref{lem:221210}. 
The integrability of \eqref{eq:deta} in $t\in (1,\infty)$
implies that there exists the limit \eqref{eq:221217} in the pointwise sense. 
In addition, $\omega_+\colon I\times \mathbb S^{d-1}\to\mathbb S^{d-1}$ is continuous 
since \eqref{eq:deta} is uniform in $(\lambda,\theta)\in I\times\mathbb S^{d-1}$.

\smallskip
\noindent
\textit{Step II.}\ 
Take any local coordinates $\theta'=(\theta'_2,\dots,\theta'_d)$ of $\mathbb S^{d-1}$, 
and set for any $\alpha=2,\dots, d$ 
\[
\Omega(\lambda,t,\theta')=(\partial_\alpha \omega_1(\lambda,t,\theta'))^2+\dots +(\partial_\alpha \omega_d(\lambda,t,\theta'))^2.
\]
We claim that for any compact subset $K$ of the
associated (open) coordinate region 
there exist $c_1,C_1>0$ such that for any $(\lambda,t,\theta')\in I\times\R_+\times K$
\begin{align}
c_1\le \Omega(\lambda,t,\theta')\le C_1.
\label{eq:2212192}
\end{align}

For that we compute and bound the $t$-derivative of $\Omega$ as in Step I. 
Since we have 
\begin{align}
(\partial_\alpha\omega_i)
=|y|^{-1}(\partial_\alpha y_i)-|y|^{-3}y_iy_j(\partial_\alpha y_j),
\label{221219}
\end{align}
we can write by using \eqref{eq:flow}
\begin{align}
\begin{split}
\tfrac{\partial}{\partial t}(\partial_\alpha\omega_i)
&=
-|y|^{-3}y_j|\nabla S_{\mathrm L}|^{-2}(\nabla S_{\mathrm L})_j(\partial_\alpha y_i)
+|y|^{-1}(\nabla|\nabla S_{\mathrm L}|^{-2})_j(\nabla S_{\mathrm L})_i(\partial_\alpha y_j)
\\&\phantom{{}={}}{}
+|y|^{-1}|\nabla S_{\mathrm L}|^{-2}(\nabla^2 S_{\mathrm L})_{ij}(\partial_\alpha y_j)
\\&\phantom{{}={}}{}
+3|y|^{-5}|\nabla S_{\mathrm L}|^{-2}(\nabla S_{\mathrm L})_ky_iy_jy_k(\partial_\alpha y_j)
\\&\phantom{{}={}}{}
-|y|^{-3}|\nabla S_{\mathrm L}|^{-2}(\nabla S_{\mathrm L})_iy_j(\partial_\alpha y_j)
-|y|^{-3}y_i|\nabla S_{\mathrm L}|^{-2}(\nabla S_{\mathrm L})_j(\partial_\alpha y_j)
\\&\phantom{{}={}}{}
-|y|^{-3}y_iy_j(\nabla|\nabla S_{\mathrm L}|^{-2})_k(\nabla S_{\mathrm L})_j(\partial_\alpha y_k)
\\&\phantom{{}={}}{}
-|y|^{-3}y_iy_j|\nabla S_{\mathrm L}|^{-2}(\nabla^2 S_{\mathrm L})_{jk}(\partial_\alpha y_k)
\\
&=:B_1+\dots+B_8
.
\end{split}
\label{eq:221219139}
\end{align}
To bound the terms on the right-hand side of \eqref{eq:221219139}
we will first prove that 
\begin{align}
|y|^{-1}(\partial_\alpha y_i)
=(\partial_\alpha\omega_i)+\mathcal O(R^{-\sigma+\sigma'}\inp{t}^{-\sigma'})\Omega^{1/2}
.
\label{eq:2212191}
\end{align}
To keep the notation simple we prefer henceforth to state errors
 like $ \mathcal O(R^{-\sigma+\sigma'}\inp{t}^{-\sigma'})$ as $ \mathcal O(R^{-\sigma+\sigma'}{t}^{-\sigma'})$. 
Now with this convention, it follows by \eqref{eq:2212092256} and Theorem~\ref{thm:main result2} that
\begin{align}
|y|=(2\lambda)^{-1/2}t+\mathcal O(R^{-\sigma+\sigma'}t^{1-\sigma'}),\quad 
(\partial_\alpha y_i)(\nabla S_{\mathrm L})_i=0,
\label{eq:221219143}
\end{align}
so that (again thanks to Theorem~\ref{thm:main result2}) we can rewrite \eqref{221219} as 
\begin{align*}
|y|^{-1}(\partial_\alpha y_i)
&=(\partial_\alpha\omega_i)
-|y|^{-2}y_i\bigl((2\lambda)^{-1/2}(\nabla S_{\mathrm L})_j-|y|^{-1}y_j\bigr)(\partial_\alpha y_j)
\\&
=(\partial_\alpha\omega_i)
+\mathcal O(R^{-\sigma+\sigma'}t^{-\sigma'})\sum^d_{j=1}\,|y|^{-1}|\partial_\alpha y_j|,
\end{align*} Hence by a summation and subtraction
\begin{equation*}
 \sum^d_{j=1}\,|y|^{-1}|\partial_\alpha y_j|\leq
 C\sum^d_{i=1}\,|\partial_\alpha\omega_i|\leq C\sqrt d\,\Omega^{1/2},
\end{equation*}
which verifies \eqref{eq:2212191}. 

Now we bound term by term
\begin{equation*}
 B_1+\dots+B_8=(B_2+B_7)+ (B_4+B_5+B_6)+B_8 +(B_1+B_3).
\end{equation*}
By \eqref{eq:2212116} and \eqref{eq:2212191} 
\begin{align*}
B_2+B_7=\mathcal O(R^{-\sigma+\sigma'}t^{-1-\sigma'})\Omega^{1/2}
\end{align*}
By \eqref{eq:221219143}, \eqref{eq:2212191} and Theorem~\ref{thm:main result2}
\begin{align*}
B_4+B_5+B_6
&=3|y|^{-4}|\nabla S_{\mathrm L}|^{-2}(\nabla S_{\mathrm L})_ky_iy_k
\bigl((2\lambda)^{-1/2}(\nabla S_{\mathrm L})_j-|y|^{-1}y_j\bigr)(\partial_\alpha y_j)
\\&\phantom{{}={}}{}
+|y|^{-2}|\nabla S_{\mathrm L}|^{-2}(\nabla S_{\mathrm L})_i
\bigl((2\lambda)^{-1/2}(\nabla S_{\mathrm L})_j-|y|^{-1}y_j\bigr)(\partial_\alpha y_j)
\\&
=\mathcal O(R^{-\sigma+\sigma'}t^{-1-\sigma'})\Omega^{1/2}.
\end{align*}
By \eqref{eq:220919537}, \eqref{eq:221219143}, \eqref{eq:2212191} and Theorem~\ref{thm:main result2}
\begin{align*}
B_8
&=
|y|^{-2}y_i\bigl((2\lambda)^{-1/2}(\nabla S_{\mathrm L})_j-|y|^{-1}y_j\bigr)|\nabla S_{\mathrm L}|^{-2}(\nabla^2 S_{\mathrm L})_{jk}(\partial_\alpha y_k)
\\&\phantom{{}={}}{}
-\tfrac12(2\lambda)^{-1/2}|y|^{-2}y_i|\nabla S_{\mathrm L}|^{-2}(\nabla |\nabla S_{\mathrm L}|^2)_k
(\partial_\alpha y_k)
\\&
=\mathcal O(R^{-\sigma+\sigma'}t^{-1-\sigma'})\Omega^{1/2}.
\end{align*}
Finally by 
\eqref{eq:221219143}, \eqref{eq:2212191}, Theorem~\ref{thm:main result2} 
and \eqref{eq:5b} 
(see also Remark \ref{rem:230331}) 
\begin{align*}
B_1+B_3
&=
-(2\lambda)^{-1/2}|y|^{-2}(\partial_\alpha y_i)
\\&\phantom{{}={}}{}
+|y|^{-2}\bigl((2\lambda)^{-1/2}(\nabla S_{\mathrm L})_j-|y|^{-1}y_j\bigr)|\nabla S_{\mathrm L}|^{-2}(\nabla S_{\mathrm L})_j(\partial_\alpha y_i)
\\&\phantom{{}={}}{}
+(2\lambda)^{1/2}f^{-1}|y|^{-1}|\nabla S_{\mathrm L}|^{-2}\bigl(
(\partial_if)(\partial_jf)+f(\nabla^2 f)_{ij}\bigr)(\partial_\alpha y_j)
\\&
=\mathcal O(R^{-\sigma+\sigma'}t^{-1-\sigma'})\Omega^{1/2}.
\end{align*}
Therefore, combining the above estimates, we obtain 
\begin{align}
\begin{split}
\tfrac{\partial}{\partial t}(\partial_\alpha\omega_i)
&=\mathcal O(R^{-\sigma+\sigma'}t^{-1-\sigma'})\Omega^{1/2},
\ \ \text{or}\ \ 
\tfrac{\partial}{\partial t}\Omega
=
\mathcal O(R^{-\sigma+\sigma'}t^{-1-\sigma'})\Omega
.
\end{split}
\label{eq:221219215}
\end{align}
This implies $\ln \Omega$ is bounded, and thus the claim \eqref{eq:2212192} is verified.

\smallskip
\noindent
\textit{Step III.}\ 
Next we show $\omega_+(\lambda,\cdot)\colon\mathbb S^{d-1}\to\mathbb S^{d-1}$ 
is continuously differentiable. 
This is straightforward due to Step II. 
By \eqref{eq:221219215} and \eqref{eq:2212192} it follows that 
\begin{align}
\tfrac{\partial}{\partial t}(\partial_\alpha\omega_i)
=\mathcal O(R^{-\sigma+\sigma'}t^{-1-\sigma'})
\label{eq:230401}
\end{align}
uniformly in $\lambda\in I$ and locally uniformly in $\theta'$
in the coordinate region. 
It is integrable in $t\in (1,\infty)$, and this implies there exists the limit
\begin{align*}
\lim_{t\to\infty}\partial_\alpha\omega_i(\lambda,t,\theta')
\end{align*}
uniformly in $\lambda\in I$ and locally uniformly in $\theta'$ in this region. 
Hence the convergence \eqref{eq:221217} is in the $C^1$-topology, 
and $\omega_+(\lambda,\cdot)$ is continuously differentiable as wanted. 
Note that $\partial_\alpha\omega_+$ is continuous also in $\lambda\in I$ since 
these estimates are uniform in $\lambda\in I$.

\smallskip
\noindent
\textit{Step IV.}\ 
We prove that $\omega_+(\lambda,\cdot)$ is a $C^1$-diffeomorphism,
provided $R\ge R_0$ is large. 
By \eqref{eq:deta}, \eqref{eq:230401} and the fact that 
$\omega(\lambda,t,\theta)=\theta$ for any $t\in(0,(2\lambda)^{1/2}R]$, see \eqref{eq:22120923},
 it follows that for any sufficiently large $R\ge R_0$ and any
 $(\lambda,\theta)\in I\times \mathbb S^{d-1}$ 
\begin{equation}
 \label{eq:fixedP3}
 \abs{{\omega_+(\lambda,\theta)-\theta}}\le \tfrac12\quad \mand \quad 
\abs{\nabla_\theta(\omega_+(\lambda,\theta)-\theta)}\leq\tfrac12.
\end{equation}

Due to the second bound of \eqref{eq:fixedP3} we can apply the inverse function
theorem to $\omega_+(\lambda,\cdot)$, and hence the image $\omega_+(\lambda, \mathbb S^{d-1})
\subseteq \mathbb S^{d-1}$ is open. 
On the other hand, since $\mathbb S^{d-1}$ is compact and $\omega_+$ is continuous, 
$\omega_+(\lambda, \mathbb S^{d-1})$ is also closed. 
Thus $\omega_+(\lambda,\cdot)$ is surjective due to the connectedness of $\mathbb S^{d-1}$. 

Now it suffices to show the injectivity of $\omega_+(\lambda,\cdot)$.
Suppose that for some $\lambda\in I$ and $\theta_1,\theta_2\in \S^{d-1}$
 \begin{equation*}
 \omega_+(\lambda,\theta_1)=\omega_+(\lambda,\theta_2).
 \end{equation*} 
Then, if we let $\gamma$ be a grand circle segment, or a geodesic of minimal length, on $\mathbb S^{d-1}$ 
connecting $\theta_1$ and $\theta_2$, we can estimate by using \eqref{eq:fixedP3}
 \begin{align*}
 \mathop{\mathrm{dist}}\nolimits_{\mathbb R^d}(\theta_1,\theta_2)
&=\mathop{\mathrm{dist}}\nolimits_{\mathbb R^d}\bigl(\theta_1-\omega_+(\lambda,\theta_1),\theta_2-\omega_+(\lambda,\theta_2)\bigr)
\\&\leq\int_0^1\,
 \abs[\big]{\nabla_\theta(\theta-\omega_+(\lambda,\theta))_{|\theta=\gamma(t)}}\,\abs{\dot\gamma(t)}\,\d t
\\&\leq
 \tfrac12 \mathop{\mathrm{dist}}\nolimits_{\S^{d-1}}(\theta_1,\theta_2)
\leq \tfrac \pi 4\mathop{\mathrm{dist}}\nolimits_{\mathbb R^d}(\theta_1,\theta_2),
 \end{align*} 
where $\mathop{\mathrm{dist}}\nolimits_{\mathbb R^d}$ and 
$\mathop{\mathrm{dist}}\nolimits_{\mathbb S^{d-1}}$ are the standard metrics there. 
This implies $\theta_1=\theta_2$ since $\pi<4$. 
Thus we conclude $\omega_+(\lambda,\cdot)\colon \S^{d-1}\to\S^{d-1}$ is a $C^1$--diffeomorphism.

\smallskip
\noindent
\textit{Step V.}\ 
Finally we show the identity \eqref{eq:23010710} by 
rewriting the Euclidean volume measure as follows. 
In the standard spherical coordinates $(r,\omega')$ we can write 
\[
\mathrm dx(r,\omega')
=r^{d-1}\,\mathrm dr\mathrm dA(\omega(\omega'))
=r^{d-1}a(\omega')\,\mathrm dr\mathrm d\omega'_2\cdots\mathrm d\omega'_d.
\]
Let us further change the variables to $(\lambda,t,\theta')$ by using \eqref{eq:23010712}.
We compute the Jacobian by the cofactor expansion. 
Some of the first order derivatives of \eqref{eq:23010712} are bounded by \eqref{eq:deta} and \eqref{eq:2212192},
and we only have to note that by Theorem~\ref{thm:main result2}, \eqref{eq:2212191} and \eqref{eq:2212192} 
\begin{align*}
 \tfrac{\partial}{\partial t}r
&=|y|^{-1}y\cdot (\nabla S_{\mathrm L})/|\nabla S_L|^2
= (2\lambda)^{-1/2}+\mathcal O(t^{-\sigma'}),\\
\partial_ \alpha r
&=|y|^{-1}y \cdot \partial_\alpha y= \parb{|y|^{-1}y -(2\lambda)^{-1/2}\nabla S_{\mathrm L}}\cdot \partial_\alpha
 y= \mathcal O(t^{1-\sigma'}).
\end{align*}
 Then the cofactor expansion yields 
\begin{align*}
&\mathrm dx(r(\lambda,t,\theta'),\omega'(\lambda,t,\theta'))
\\&
=t^{d-1}\bigl((2\lambda)^{-d/2}K'(\lambda,t,\theta')+\mathcal O(t^{-\sigma'})\bigr)a(\omega'(\lambda,t,\theta'))\,\mathrm dt\mathrm d\theta'_2\cdots\mathrm d\theta'_d ,
\end{align*}
where $K'(\lambda,t,\cdot)$ is the Jacobian of $\omega(\lambda,t,\cdot)\colon\mathbb S^{d-1}\to\mathbb S^{d-1}$ 
in the local coordinates $\theta'$ and $\omega'$. 
The above expression has to coincide with \eqref{eq:23010713}, so that 
\begin{align*}
t^{d-1}\bigl((2\lambda)^{-d/2}K'(\lambda,t,\theta')+\mathcal O(t^{-\sigma'})\bigr)a(\omega'(\lambda,t,\theta'))
\,\mathrm d\theta'_2\cdots\mathrm d\theta'_d
=J(\lambda,t,\theta(\theta'))\,\mathrm dA(\theta(\theta')),
\end{align*}
or 
\begin{align*}
\mathrm dA(\omega(\omega'(\lambda,t,\theta')))
&=K'(\lambda,t,\theta')a(\omega'(\lambda,t,\theta'))\,\mathrm d\theta'_2\cdots\mathrm d\theta'_d
\\&=\bigl((2\lambda)^{d/2}t^{-(d-1)}J(\lambda,t,\theta(\theta'))+\mathcal O(t^{-\sigma})\bigr)\,\mathrm dA(\theta(\theta'))
.
\end{align*}
Hence by letting $t\to\infty$ we obtain \eqref{eq:23010710}.
We are done.
\end{proof}

\subsubsection{Comparison of eikonal distances}
\label{subsubsec:Generalized Fourier transform for rho>1/2 II}

Here we compare $S_{\mathrm L}$ (solving the eikonal equation for
$V_{\mathrm L}$) with the function $S$ from
Theorem~\ref{thm:comp-gener-four} (solving the eikonal equation for
$V$, with $\lambda$ in the same interval $I$). 
We show that their difference has a radial limit at
infinity (as a  step of the proof  we first establish the eikonal radial limit). 
We present a slightly generalized assertion localized to a conic subset of $\mathbb R^d$. 
For any $R'>0$ and any open subset $U\subseteq \S^{d-1}$ we set 
\begin{equation*}
 \Gamma_{R',U}=\bigl\{x\in \R^d\,\big|\, \abs{x}> R'\mand \hat x \in U\bigr\},\quad \hat x=|x|^{-1}x.
\end{equation*}

\begin{lemma}\label{lemma:uniq-asympt-norm}
Let $R>0$ be large enough as in Lemma~\ref{lem:2301011702}. 
Let $R'>0$, and $U\subseteq \S^{d-1}$ be open, 
and assume $S=\sqrt{2\lambda}|x|(1+s)\in C(I;C^2(\Gamma_{R',U}))$ satisfies: 
\begin{enumerate}[(i)]
\item\label{item:230617}
For each $\lambda\in I$, $S(\lambda,\cdot)$ solves \eqref{eq:4c} on $\Gamma_{R',U}$. 
\item\label{item:230618}
For any compact subset $I'\subseteq I$ there exist $\epsilon,C>0$ 
such that for any $|\alpha|\le 2$ and $(\lambda,x)\in I'\times \Gamma_{R',U}$ 
\begin{align*}
\left| \partial_x^\alpha s(\lambda,x)\right|
&\le C\langle x\rangle^{-\epsilon-|\alpha|}.
\end{align*}
\end{enumerate}
Then the following assertions hold.
\begin{enumerate}[(1)]
\item \label{item:sigma1}
There exists the limit
\begin{equation*}
\Sigma_{\mathrm L}(\lambda,\theta):=
\lim_{t\to \infty}\parb{ S(\lambda,y(\lambda,t,\theta))-S_{\mathrm L}(\lambda,y(\lambda,t,\theta))}
\end{equation*}
taken locally uniformly in $(\lambda,\theta)$ with $\lambda\in I$ and $\theta\in \omega_+^{-1}(\lambda,U)=\theta_+(\lambda,U)$.
In particular, if $S_1\in C(I;C^2(\Gamma_{R',U}))$ also satisfies 
the above conditions \ref{item:230617} and \ref{item:230618}, then there exists the limit
\[\Sigma(\lambda,\theta):=\lim_{t\to \infty}\parb{S_1(\lambda,y(\lambda,t,\theta))-S(\lambda,y(\lambda,t,\theta))}\]
taken locally uniformly in $(\lambda,\theta)$ with $\lambda\in I$ and
$\theta\in \theta_+(\lambda,U)$.
\item \label{item:sigma2} The quantities in \ref{item:sigma1} can {also} be computed
 as the limits
\begin{align*}
\Theta_{\mathrm L}(\lambda,\omega)&:=\Sigma_{\mathrm L}(\lambda,\theta_+(\lambda,\omega))=
\lim_{r\to \infty}\parb{ S(\lambda,r\omega)-S_{\mathrm
 L}(\lambda,r\omega)},\\
\Theta(\lambda,\omega)&:=\Sigma(\lambda,\theta_+(\lambda,\omega))=
\lim_{r\to \infty}\parb{ S_1(\lambda,r\omega)-S(\lambda,r\omega)},
\end{align*} both taken locally uniformly in $(\lambda,\omega)\in I\times U$.
\end{enumerate}
\end{lemma}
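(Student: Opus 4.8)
The plan is to establish part \ref{item:sigma1} first — the existence of $\Sigma_{\mathrm L}(\lambda,\theta)$ along the eikonal flow — and then deduce part \ref{item:sigma2} by composing with the asymptotic direction map and using that $r\mapsto |y(\lambda,t,\theta)|$ is (essentially) monotone. For \ref{item:sigma1}, the natural object to control is $t\mapsto \Delta S(\lambda,t,\theta):=S(\lambda,y(\lambda,t,\theta))-S_{\mathrm L}(\lambda,y(\lambda,t,\theta))$. Along the flow we already know $S_{\mathrm L}(\lambda,y(\lambda,t,\theta))=t$ by \eqref{eq:2212092256}, so equivalently I must show that $S(\lambda,y(\lambda,t,\theta))-t$ has a limit as $t\to\infty$. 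Differentiating in $t$ and using the flow equation \eqref{eq:flow},
\begin{align*}
\tfrac{\partial}{\partial t}\bigl(S(\lambda,y)-t\bigr)
&=(\nabla S)(\lambda,y)\cdot \bigl(|\nabla S_{\mathrm L}|^{-2}\nabla S_{\mathrm L}\bigr)(\lambda,y)-1\\
&=|\nabla S_{\mathrm L}|^{-2}(\lambda,y)\,\bigl((\nabla S)\cdot(\nabla S_{\mathrm L})-|\nabla S_{\mathrm L}|^2\bigr)(\lambda,y).
\end{align*}
The key algebraic identity is that both $S$ and $S_{\mathrm L}$ solve the same eikonal equation \eqref{eq:4c} at energy $\lambda$ (on the relevant region), hence $|\nabla S|^2=|\nabla S_{\mathrm L}|^2=2(\lambda-V)$ there, and
\[
(\nabla S)\cdot(\nabla S_{\mathrm L})-|\nabla S_{\mathrm L}|^2
=-\tfrac12\bigl|\nabla S-\nabla S_{\mathrm L}\bigr|^2.
\]
So the $t$-derivative of $\Delta S$ equals $-\tfrac12|\nabla S_{\mathrm L}|^{-2}\,|\nabla(S-S_{\mathrm L})|^2$ evaluated along the flow, which has a definite sign. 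To conclude convergence I need integrability in $t$, i.e.\ a bound $|\nabla(S-S_{\mathrm L})(\lambda,y(\lambda,t,\theta))|^2\le C t^{-1-\epsilon'}$. Writing $\nabla S=\sqrt{2\lambda}\,(\hat x+\nabla_x(|x|s))$ and similarly for $S_{\mathrm L}$, hypothesis \ref{item:230618} together with Theorem~\ref{thm:main result2} (which gives the analogous decay $|\partial^\alpha s_{\mathrm L}|\lesssim \langle x\rangle^{-\sigma-|\alpha|}$) yields $|\nabla S-\nabla S_{\mathrm L}|\lesssim \langle x\rangle^{-\min\{\epsilon,\sigma\}}$; combined with $|y(\lambda,t,\theta)|\sim (2\lambda)^{-1/2}t$ (Theorem~\ref{thm:main result2}~\ref{item:22120919c} and Lemma~\ref{lem:221210}) this gives the required $t^{-1-2\min\{\epsilon,\sigma\}}$ decay, which is integrable. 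Monotone convergence of a monotone bounded function then delivers $\Sigma_{\mathrm L}$; local uniformity in $(\lambda,\theta)$ follows because all the estimates are locally uniform and $y$ is jointly smooth. The "in particular" clause is obtained by applying the result once to $S$ and once to $S_1$ and subtracting (the eikonal distance $S_{\mathrm L}$ cancels), noting $\theta_+(\lambda,U)$ is the correct domain since the flow $y(\lambda,\cdot,\theta)$ stays inside $\Gamma_{R',U}$ for large $t$ exactly when $\theta\in\theta_+(\lambda,U)$, by definition of $\omega_+$.

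For part \ref{item:sigma2}, I want to replace the flow parameter $t$ by the Euclidean radius $r=|x|$ and the flow angle $\theta$ by the ordinary angle $\omega=\hat x$. By Lemma~\ref{lem:2301011702}, $\omega(\lambda,t,\theta)\to\omega_+(\lambda,\theta)$ with $\omega_+(\lambda,\cdot)$ a $C^1$-diffeomorphism, and by the computation of $\partial_t r$ in Step~V of the proof of Lemma~\ref{lem:2301011702}, $\partial_t r=(2\lambda)^{-1/2}+\mathcal O(t^{-\sigma'})>0$ for large $t$, so $t\mapsto r(\lambda,t,\theta)$ is eventually a strictly increasing bijection onto a half-line. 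Thus, fixing $\omega$ and setting $\theta=\theta_+(\lambda,\omega)$, the curve $t\mapsto y(\lambda,t,\theta)$ is asymptotic to the ray $\mathbb R_+\omega$: precisely, $|y(\lambda,t,\theta)-r(\lambda,t,\theta)\omega|=r(\lambda,t,\theta)|\omega(\lambda,t,\theta)-\omega_+(\lambda,\theta)|=\mathcal O(r\cdot t^{-\sigma'})=\mathcal O(t^{1-\sigma'})$, which is $o(r)$ — but I need more, namely that moving from the point $y(\lambda,t,\theta)$ to the nearby point $r\omega$ on the ray changes $S-S_{\mathrm L}$ by $o(1)$. This follows from the gradient bound $|\nabla(S-S_{\mathrm L})|\lesssim\langle x\rangle^{-\min\{\epsilon,\sigma\}}$ integrated over the short segment of length $\mathcal O(t^{1-\sigma'})$ at distance $\sim t$ from the origin: the contribution is $\mathcal O(t^{1-\sigma'}\cdot t^{-\min\{\epsilon,\sigma\}})$, which tends to $0$ provided $\sigma'<\min\{\epsilon,\sigma\}$ — and $\sigma'\in(0,\sigma)$ is at our disposal in Lemma~\ref{lem:2301011702}, so after possibly shrinking $\sigma'$ (hence enlarging $R$) this is fine. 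Therefore $\lim_{r\to\infty}(S(\lambda,r\omega)-S_{\mathrm L}(\lambda,r\omega))$ exists and equals $\Sigma_{\mathrm L}(\lambda,\theta_+(\lambda,\omega))=\Theta_{\mathrm L}(\lambda,\omega)$, locally uniformly in $(\lambda,\omega)\in I\times U$; the formula for $\Theta$ is identical with $S_{\mathrm L}$ replaced by $S$ and $S$ by $S_1$.

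The main obstacle, I expect, is the bookkeeping in the reparametrization step of part \ref{item:sigma2}: one must simultaneously track three small quantities — the deviation of the flow angle $\omega(\lambda,t,\theta)$ from its limit $\omega_+$, the deviation of $|y|$ from $(2\lambda)^{-1/2}t$, and the decay of $\nabla(S-S_{\mathrm L})$ — and verify that the segment-integration error genuinely vanishes, which forces the constraint $\sigma'<\min\{\epsilon,\sigma\}$ and hence the appeal to Corollary~\ref{cor:epsSmall} to take $R$ large. The sign identity $(\nabla S)\cdot(\nabla S_{\mathrm L})-|\nabla S_{\mathrm L}|^2=-\tfrac12|\nabla S-\nabla S_{\mathrm L}|^2$, by contrast, is elementary and is the conceptual heart of part \ref{item:sigma1} — it is what makes $\Delta S$ monotone and removes any need for an oscillatory cancellation argument.
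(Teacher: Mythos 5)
Your overall strategy for part \eqref{item:sigma1} --- differentiate $S-S_{\mathrm L}$ along the eikonal flow and exploit the eikonal equations --- is the right one, but there are two genuine gaps. First, a correctable slip: $S$ and $S_{\mathrm L}$ do \emph{not} solve the same eikonal equation ($S$ solves \eqref{eq:4c} with $V$, while $S_{\mathrm L}$ solves \eqref{eq:2212116} with $\chi_RV_{\mathrm L}$), so your identity acquires the extra term $-|\nabla S_{\mathrm L}|^{-2}V_{\mathrm S}$; this kills the claimed monotonicity, though it is harmless for convergence since $V_{\mathrm S}$ is of short-range type. The serious gap is the integrability of $u:=|\nabla(S-S_{\mathrm L})|^2$ along the flow. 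The a priori bounds from hypothesis \eqref{item:230618} and Theorem~\ref{thm:main result2} give only $u=\mathcal O(t^{-2\min\{\epsilon,\sigma\}})$ --- your stated exponent $t^{-1-2\min\{\epsilon,\sigma\}}$ does not follow from squaring a gradient bound of order $t^{-\min\{\epsilon,\sigma\}}$ --- and since $\epsilon,\sigma$ may both be $\le 1/2$, this is not integrable. The paper closes this by a second differentiation: $\tfrac12|\nabla S_{\mathrm L}|^2\tfrac{\d}{\d t}u=-(\nabla S-\nabla S_{\mathrm L})\cdot(\nabla^2S)(\nabla S-\nabla S_{\mathrm L})-(\nabla V_{\mathrm S})\cdot(\nabla S-\nabla S_{\mathrm L})$, combined with the approximate convexity $\nabla^2S\ge 2\lambda S^{-1}I-CS^{-1-\epsilon}I-S^{-1}(\nabla S)\otimes(\nabla S)$, to obtain the differential inequality $\tfrac{\d}{\d t}u\le-(2-\delta)t^{-1}u+Ct^{-2-\delta}$ and hence the decisive improvement $u=\mathcal O(t^{-1-\delta})$. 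Without this Gronwall-type bootstrap the proof of \eqref{item:sigma1} does not close.

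For part \eqref{item:sigma2} your segment-integration argument also fails as stated. The flow line $y(\lambda,\cdot,\theta)$ and the ray $\R_+\omega$ separate at rate $\mathcal O(t^{1-\sigma'})$, so integrating $|\nabla(S-S_{\mathrm L})|=\mathcal O(t^{-\min\{\epsilon,\sigma\}})$ over that segment gives an error $\mathcal O(t^{1-\sigma'-\min\{\epsilon,\sigma\}})$, which vanishes only if $\sigma'+\min\{\epsilon,\sigma\}>1$; since $\sigma'<\sigma<1$ this is generally impossible, and in any case your stated constraint $\sigma'<\min\{\epsilon,\sigma\}$ points in the wrong direction (shrinking $\sigma'$ makes the error worse). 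The paper avoids comparing values at two different spatial points altogether: it writes $r\omega=y(\lambda,t,\theta(\lambda,r\omega))$, uses the \emph{local uniformity} in $\theta$ of the limit in \eqref{item:sigma1} together with the continuity of $\Sigma_{\mathrm L}$ and the convergence $\theta(\lambda,r\omega)\to\theta_+(\lambda,\omega)$ from Lemma~\ref{lem:2301011702}, and passes to the limit by a pure change-of-variables argument with no exponent bookkeeping. You should adopt that route, or substantially strengthen the decay available for $\nabla(S-S_{\mathrm L})$ before attempting the segment estimate.
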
 

\begin{proof}
\textit{Step I.}\ 
To prove the assertions of \ref{item:sigma1} we only show the first one.
The
second assertion is obvious from the first one.
The following bounds are locally uniform 
in $(\lambda,\theta)$ with $\lambda\in I$ and $\theta\in \theta_+(\lambda,U)$, however we shall not elaborate on that
feature.
By \eqref{eq:flow}, \eqref{eq:4c} and \eqref{eq:2212116} 
we compute for large $t> 0$ 
 \begin{align}
\begin{split}
 \tfrac{\d}{\d t}(S-S_{\mathrm L})
&
=|\nabla S_{\mathrm L}|^{-2}(\nabla S-\nabla S_{\mathrm L})\cdot (\nabla S_{\mathrm L})\\
&
=
-\tfrac12|\nabla S_{\mathrm L}|^{-2}
(\nabla S-\nabla S_{\mathrm L})^2
-|\nabla S_{\mathrm L}|^{-2}V_{\mathrm S}.
\end{split}
\label{eq:230210}
 \end{align} 
We are going to show the integrability of \eqref{eq:230210} at infinity. 
The second term on the right-hand side of \eqref{eq:230210} is clearly integrable
due to Theorem~\ref{thm:main result2} and \eqref{eq:2212092256},
and we discuss only the first term. 
It suffices to show that for some small $\delta>0$
\begin{align}
u:=(\nabla S-\nabla S_{\mathrm L})^2=\mathcal O(t^{-1-\delta}),
\label{eq:230715}
\end{align}
see also \cite[Theorem 2.3]{HS} and its proof. 
Similarly to \eqref{eq:230210} we can compute the derivative of $u$ as 
\begin{align}
\begin{split}
\tfrac12|\nabla S_{\mathrm L}|^{2}\tfrac{\mathrm d}{\mathrm dt}u
&=
(\nabla S-\nabla S_{\mathrm L})\cdot
(\nabla^2 S-\nabla^2 S_{\mathrm L})(\nabla S_{\mathrm L})
\\&
=
-(\nabla S-\nabla S_{\mathrm L})\cdot(\nabla^2 S)(\nabla S-\nabla S_{\mathrm L})
-(\nabla V_{\mathrm S})\cdot(\nabla S-\nabla S_{\mathrm L})
.
\end{split}
\label{eq:23021010}
\end{align}
From the assumptions on $S$ we deduce that 
\begin{align*}
 \nabla^2S
\ge 
2\lambda S^{-1}I
-C_2 S^{-1-\epsilon}I
-S^{-1}(\nabla S)\otimes(\nabla S)
.
\end{align*}
We apply this bound to \eqref{eq:23021010}, use \eqref{eq:4c} and
\eqref{eq:2212116} (as in \eqref{eq:230210}), and conclude that 
\begin{align}
\begin{split}
\tfrac12|\nabla S_{\mathrm L}|^{2}\tfrac{\mathrm d}{\mathrm dt}u
&
\le 
-2\lambda S^{-1}u
+C_2 S^{-1-\epsilon}u
\\&\phantom{{}={}}{}\quad \quad
+S^{-1}\bigl((\nabla S-\nabla S_{\mathrm L})\cdot(\nabla S)\bigr)^2+|\nabla V_{\mathrm S}|u^{1/2}
\\&
=-2\lambda S^{-1}u
+C_2 S^{-1-\epsilon}u
+ 4^{-1}S^{-1}(u-2V_{\mathrm S})^2
+|\nabla V_{\mathrm S}|u^{1/2}
.
\end{split}
\label{eq:23021012}
\end{align}
By Theorem~\ref{thm:main result2}, \eqref{eq:2212092256} and the assumptions on $S$ 
we observe , letting $\delta\in (0,\epsilon)$ be small enough, that 
\[
|\nabla S_{\mathrm L}|^{2}=2\lambda+\mathcal O(t^{-\delta}),\quad 
S=t+\mathcal O(t^{1-\delta}),\quad 
u=(\nabla S-\nabla S_{\mathrm L})^2=\mathcal O(t^{-\delta})
.
\]
Substituting these estimates into \eqref{eq:23021012}, 
we obtain for large $t> 0$
\begin{align*}
\tfrac{\mathrm d}{\mathrm dt}u
&
\le 
-2t^{-1}u
+C_3 t^{-1-\delta}u
+C_3t^{-2-\delta}
\le 
-(2-\delta)t^{-1}u
+C_3t^{-2-\delta}
.
\end{align*}
This differential inequality implies \eqref{eq:230715}. Hence we are
done with \ref{item:sigma1}.

\smallskip
\noindent
\textit{Step II.}\ 
We prove the first assertion of \ref{item:sigma2}
(the other one in \ref{item:sigma2} follows from that). 
Fixing  any compact subset $K\subset I\times U$  we are going to prove 
\begin{align*}
\lim_{r\to\infty}\sup_{(\lambda,\omega)\in K}\bigl|\Sigma_{\mathrm L}(\lambda,\theta_+(\lambda,\omega))-S(\lambda,r\omega)+S_{\mathrm L}(\lambda,r\omega)\bigr|
=0
.
\end{align*}

Using   \eqref{eq:23010712} and Lemma
  \ref{lem:2301011702}, let us first
  note  that 
\begin{align}
\lim_{r\to\infty}\sup_{(\lambda,\omega)\in K}\bigl|\theta_+(\lambda,\omega)-\theta(\lambda,r\omega)\bigr|
=0
.
\label{eq:240722}
\end{align}
In fact,  changing  variables
from $(r,\omega)$ to $(t,\theta)$ and using  $t=S_{\mathrm L}(\lambda,r\omega)$, it follows
that  uniformly in $(\lambda,\omega)\in K$ the quantity $\tfrac {\d t}{\d r} = \nabla S_L\cdot \omega>
\sqrt \lambda$ for large $r$, 
and hence taking $r\to \infty$ corresponds to taking $t\to \infty$.
Thus for some compact subset $K'\subset \{(\lambda,\theta);\ \lambda\in I,\ \theta\in \theta_+(\lambda,U)\}$
we can compute (thanks to Lemma \ref{lem:2301011702})
\begin{align*}
\lim_{r\to\infty}\sup_{(\lambda,\omega)\in K}\bigl|\theta_+(\lambda,\omega)-\theta(\lambda,r\omega)\bigr|
&
\le 
\lim_{t\to\infty}\sup_{(\lambda,\theta)\in K'}\bigl|\theta_+(\lambda,\omega(\lambda,t,\theta))-\theta(\lambda,y(\lambda,t,\theta))\bigr|
\\&
=
\sup_{(\lambda,\theta)\in K'}\bigl|\theta_+(\lambda,\omega_+(\lambda,\theta))-\theta\bigr|=0.
\end{align*}

Now by \eqref{eq:240722}, the above change of variables and the assertion (1) it follows that 
\begin{align*}
&\lim_{r\to\infty}\sup_{(\lambda,\omega)\in K}\bigl|\Sigma_{\mathrm L}(\lambda,\theta_+(\lambda,\omega))-S(\lambda,r\omega)+S_{\mathrm L}(\lambda,r\omega)\bigr|
\\&
=
\lim_{r\to\infty}\sup_{(\lambda,\omega)\in K}\bigl|\Sigma_{\mathrm L}(\lambda,\theta(\lambda,r\omega))-S(\lambda,r\omega)+S_{\mathrm L}(\lambda,r\omega)\bigr|
\\&
=
\lim_{t\to\infty}\sup_{(\lambda,\theta)\in K'}\bigl|\Sigma_{\mathrm L}(\lambda,\theta(\lambda,y(\lambda,t,\theta)))-S(\lambda,y(\lambda,t,\theta))+S_{\mathrm L}(\lambda,y(\lambda,t,\theta))\bigr|
\\&
=
\lim_{t\to\infty}\sup_{(\lambda,\theta)\in K'}\bigl|\Sigma_{\mathrm L}(\lambda,\theta)-S(\lambda,y(\lambda,t,\theta))+S_{\mathrm L}(\lambda,y(\lambda,t,\theta))\bigr|
\\&=0.
\end{align*}
We are done.
\end{proof}

\begin{remark}\label{remark:inverse-asympt-norm}
Such a localized version has an application in $3$-body long-range stationary scattering theory \cite{Sk1},
for which we should take $U\subseteq \S^{d-1}$ such that the closure $\overline U$ does not intersect the `collision planes'.
For such $U$ the function
\begin{equation}
\e^{\i \Theta(\lambda,\omega)}=\e^{\i \Sigma(\lambda,\theta_+(\lambda,\omega))};\quad 
\theta_+(\lambda,\cdot)=\omega_+^{-1}(\lambda,\cdot), 
\label{eq:23061811}
\end{equation} 
induces a well-defined family of unitary multiplication operators on
$L^2(U)(\subseteq \vG)$ being strongly continuous in $\lambda$. 
Upon varying $U$ under the above constraint {the function \eqref{eq:23061811} 
is defined} almost everywhere on $\mathbb S^{d-1}$
and constitutes a strongly continuous $\vL\parb{\vG }$--valued function of $\lambda$.
The transformation
 factor $\e^{\i \Theta(\lambda,\omega)}$ (exhibiting `covariance') and \eqref {eq:221206} are
 applicably to  the
 $3$-body problem \cite{Sk1}, cf. Subsection \ref{subsubsec:three-body-problem}. In particular it is not possible to
 take $U=\mathbb S^{d-1}$ in that application. However in the
 present paper we only use Lemma \ref{lemma:uniq-asympt-norm} with
 $U=\mathbb S^{d-1}$, see for example Remark \ref{rem:freeS} \ref{item:S3}
 and Theorem~\ref{thm:221207}
(\ref{item:230205}).
\end{remark}

\subsection{Stationary wave operators}
\label{subsubsec:Generalized Fourier transform for rho=1}

\subsubsection{Construction for the regularized potential}
\label{subsubsec:Diagonalization}

Here we discuss an analogue of 
Theorem~\ref{thm:comp-gener-four} for $H_{\mathrm L}$ in the
spherical eikonal coordinates.
Once the strong radiation condition bounds from Theorem~\ref{thm:proof-strong-bound} 
are established 
and the spherical eikonal coordinates are fixed,
the construction is rather straightforward,
following the schemes of \cite{HS, Sk}. 
{Set for any $\xi\in\mathcal G$
\begin{align}
\phi_\pm^{S_{\mathrm L}}[\xi](\lambda,x)
&=
\tfrac{(2\pi)^{1/2}}{(2\lambda)^{1/4}}\chi(r)
r^{-(d-1)/2}\e^{\pm\i S_{\mathrm L}(\lambda,x)}\xi(\hat x)
,\quad r=|x|,\,\,\hat x=|x|^{-1}x,
\label{eq:230126bb}
\end{align}
respectively, where $\chi$ is from \eqref{eq:14.1.7.23.24} (see also \eqref{eq:230626}).}

\begin{proposition}\label{prop:22121414}
\begin{enumerate}
\item\label{item:221230}For any $\lambda\in I$ 
there exist unique $E^\pm(\lambda)\in\vL(\vB,\vG)$ such that for any
$\psi\in \mathcal B$ 
{
\begin{align}\label{eq:limFb}
R_{\mathrm L}(\lambda\pm\mathrm i0)\psi-\phi^{S_{\mathrm L}}_\pm[E^\pm(\lambda)\psi](\lambda,\cdot)
\in\mathcal B^*_0,
\end{align}
respectively. 
}

\item
The mappings $E^\pm\colon I\times \mathcal B\to \mathcal G$
are continuous.
\item
For any $\lambda\in I$ 
\begin{equation*}
E^\pm(\lambda)^*E^\pm(\lambda)
=\delta(H_{\mathrm L}-\lambda),
\end{equation*} 
respectively.
\end{enumerate}
\end{proposition}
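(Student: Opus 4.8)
\textbf{Proof strategy for Proposition~\ref{prop:22121414}.}
The plan is to work entirely in the spherical eikonal coordinates $(t,\theta)$ at fixed energy $\lambda$, where $t=S_{\mathrm L}(\lambda,\cdot)$ plays the role of a radial variable, and to follow the Herbst--Skibsted and Skibsted scheme (\cite{HS,Sk}), but streamlined by the strong radiation condition bounds of Theorem~\ref{thm:proof-strong-bound}. First I would fix $\psi\in\vB$ and $\phi^\pm:=R_{\mathrm L}(\lambda\pm\i0)\psi\in\vB^*$; by the limiting absorption principle and Theorem~\ref{thm:proof-strong-bound}, applied to $H_{\mathrm L}$ (which satisfies Condition~\ref{cond:220525} with $l=4$ since $V_{\mathrm L}\in C^\infty$), the vector $\phi^\pm$ satisfies both $\gamma_j\phi^\pm\in L^2_{\beta'-t}$ and $\gamma_\|\phi^\pm\in L^2_{\beta-1/2}$ with $\beta,\beta'>1/2$. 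The decisive gain is that the parallel radiation bound carries a \emph{doubled} weight (because $\beta_c>1$), which forces $\gamma_\|\phi^\pm$ to be genuinely in a space better than $L^2_{-1/2}$, and this is exactly what makes the oscillatory factor $\e^{\pm\i S_{\mathrm L}}$ split off cleanly.

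Next I would substitute the Ansatz $\phi^\pm=\e^{\pm\i S_{\mathrm L}}u^\pm$ (modulo the cut-off $\chi_R$ and the amplitude $r^{-(d-1)/2}$), and rewrite the equation $(H_{\mathrm L}-\lambda)\phi^\pm=\psi$ in the coordinates $(t,\theta)$. Using Lemma~\ref{lem:221210}, the identity $S_{\mathrm L}(\lambda,y(\lambda,t,\theta))=t$, the volume formula of Lemma~\ref{lem:230107}, and the eikonal equation \eqref{eq:2212116}, the transported function $\widetilde u^\pm(t,\theta):=J(\lambda,t,\theta)^{1/2}u^\pm$ satisfies a first-order-in-$t$ transport equation of the form $\mp\i\,\partial_t\widetilde u^\pm = (\text{lower order in }t)\,\widetilde u^\pm + (\text{source})$, where the `lower order' terms decay in $t$ thanks to the derivative bounds of Theorem~\ref{thm:main result2} and the radiation bounds just quoted. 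Integrating this transport equation in $t$ and using that $\widetilde u^\pm\in L^2(\d t\,\d A(\theta))$-type bounds hold on compact $t$-intervals, one gets convergence of $\e^{\mp\i t}\widetilde u^\pm(t,\cdot)$ in $\vG$ as $t\to\infty$; this limit, transported back via $\omega_+(\lambda,\cdot)$ of Lemma~\ref{lem:2301011702} and normalized by $J_+(\lambda,\theta)$ (so that $\d A$ on $\S^{d-1}$ is recovered, cf.\ \eqref{eq:23010710}), defines $E^\pm(\lambda)\psi\in\vG$. The membership \eqref{eq:limFb} in $\vB^*_0$ then follows because the difference $\phi^\pm-\phi^{S_{\mathrm L}}_\pm[E^\pm(\lambda)\psi]$, after transport, is $\widetilde u^\pm(t,\cdot)-\e^{\pm\i t}(\text{limit})$, which tends to $0$ in $\vG$ uniformly and hence gives the $\lim_{m\to\infty}2^{-m/2}\|1_m(\cdot)\|_{\vH}=0$ characterization of $\vB^*_0$ after undoing the change to ordinary spherical coordinates. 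Boundedness $E^\pm(\lambda)\in\vL(\vB,\vG)$ is read off from the radiation bounds, and uniqueness follows since any two candidates would differ by an element of $\vE_{\mathrm L,\lambda}\cap\vB^*_0=\{0\}$ after pairing, or more directly because $\phi^{S_{\mathrm L}}_\pm[\xi]\in\vB^*_0$ forces $\xi=0$ (a Gronwall/plane-wave computation).

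For (2), the continuity of $E^\pm\colon I\times\vB\to\vG$ is obtained by redoing the above with the estimates made locally uniform in $\lambda\in I$ — all constants in Theorems~\ref{thm:main result2} and \ref{thm:proof-strong-bound}, in Lemma~\ref{lem:230107}, and in Lemma~\ref{lem:2301011702} are uniform on $I$, and $R_{\mathrm L}(\lambda\pm\i0)$ depends norm-continuously on $\lambda$ as a map $L^2_t\to L^2_{-t}$ by Theorem~\ref{thm:221105} — so the limit defining $E^\pm(\lambda)\psi$ converges uniformly in $(\lambda,\psi)$ on bounded sets. For (3), the identity $E^\pm(\lambda)^*E^\pm(\lambda)=\delta(H_{\mathrm L}-\lambda)$ comes from Parseval-type bookkeeping: writing $\delta(H_{\mathrm L}-\lambda)=\pi^{-1}\Im R_{\mathrm L}(\lambda+\i0)$ and expanding $\langle\psi,\delta(H_{\mathrm L}-\lambda)\psi\rangle$ via the first resolvent/Stone formula, one replaces $R_{\mathrm L}(\lambda\pm\i0)\psi$ by its leading quasi-mode $\phi^{S_{\mathrm L}}_\pm[E^\pm(\lambda)\psi]$ up to a $\vB^*_0$ error, and the $\vB^*_0$ error drops out when paired against $\vB$; the remaining oscillatory integral localizes (stationary phase in the radial direction, or direct computation using $|\phi^{S_{\mathrm L}}_\pm[\xi]|^2 = \tfrac{2\pi}{(2\lambda)^{1/2}}\chi(r)^2 r^{-(d-1)}|\xi(\hat x)|^2$ integrated against $r^{d-1}\,\d r\,\d A$) to give exactly $\|E^\pm(\lambda)\psi\|_{\vG}^2$.

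\textbf{Main obstacle.} The hardest point is the rigorous passage from the resolvent equation to the transport equation and the proof that the transported solution has a $\vG$-limit: this requires controlling the error terms — curvature-type contributions $|\nabla S_{\mathrm L}|^{-2}\Delta S_{\mathrm L}$ from the Jacobian, the angular Laplacian acting on $\widetilde u^\pm$, and the second-order remainder $\tfrac12\gamma^2$ coming from \eqref{eq:fundd} — and showing each is integrable in $t$ when paired against the radiation-bounded quantities $\gamma_j\phi^\pm$, $\gamma_\|\phi^\pm$. The doubled weight on $\gamma_\|$ is precisely what rescues the borderline terms; without Theorem~\ref{thm:proof-strong-bound} one would be stuck, as in the older, more complicated arguments alluded to in the introduction. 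A secondary technical nuisance is switching between the eikonal coordinates (natural for the transport equation and the $\vG$-limit) and the ordinary spherical coordinates (natural for the definition \eqref{eq:230126bb} of the quasi-modes and for the $\vB^*_0$ statement), which is handled by Lemma~\ref{lem:2301011702} and \eqref{eq:23010710}.
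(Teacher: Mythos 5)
Your proposal follows essentially the same route as the paper: the paper likewise defines $E^\pm(\lambda)\psi$ as the $\vG$-limit, along the eikonal flow, of the Jacobian-normalized and phase-stripped resolvent $J(\lambda,t,\cdot)^{1/2}\e^{\mp\i S_{\mathrm L}}R_{\mathrm L}(\lambda\pm\i0)\psi$, proves existence of the limit by showing its $t$-derivative equals (up to harmless factors) $\gamma_\|R_{\mathrm L}(\lambda\pm\i0)\psi$ and is therefore integrable by the strong radiation condition bound \eqref{eq:gamma1a}, transfers back to ordinary spherical coordinates via $\theta_+$ and $J_+$, and obtains (3) by the same cutoff/integration-by-parts computation with \eqref{eq:gamma2} killing the transversal term. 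The only places where the paper is more careful than your sketch are the verification that the difference lies in $\vB^*_0$ (which requires a Cesàro-mean characterization and a density argument to handle the coordinate change $\theta\mapsto\theta_+(\lambda,\omega(\lambda,\tau,\cdot))$) and the joint continuity in (2), but these are technical refinements of the strategy you describe, not different ideas.
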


Before proving Proposition~\ref{prop:22121414} 
we present a trace type theorem in a form appropriate for our application. 
Note that by Fubini's theorem we can identify 
\[L^2_{\mathrm{loc}}(\R_+\times \mathbb S^{d-1})
\simeq L^2_{\mathrm{loc}}(\R_+;\mathcal G).\]
To be precise 
we denote the above identification operator for the moment by $\iota$,
i.e.\ for any $\psi \in L^2_{\mathrm{loc}}(\R_+\times \mathbb S^{d-1})$
we let 
\[
\iota(\psi)(t)=\psi(t,\cdot)\in\mathcal G \ \ \text{for a.e.\ }t\in\R_+.\]

\begin{lemma}\label{lem:22122915}
Let $k\in\mathbb N_0$ and $\psi \in H^s_{\mathrm{loc}}(\R_+\times \mathbb S^{d-1})$ with $s>k+1/2$.
Then
\begin{align*}
\iota(\psi) \in C^k(\R_+;\mathcal G)
,\ \ \text{and}\ \ 
\tfrac{\mathrm d^l}{\mathrm dt^l}\iota(\psi) 
=\iota\bigl(\tfrac{\partial^l}{\partial t^l}\psi\bigr)
\ \text{for}\ l=0,\dots,k
.
\end{align*}
\end{lemma}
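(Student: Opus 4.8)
This is a standard trace-type statement: the point is to promote elliptic Sobolev regularity on the product $\R_+\times\S^{d-1}$ to classical differentiability in the $\R_+$-variable with values in $\vG=L^2(\S^{d-1})$. The plan is to reduce to a local statement, then apply the one-dimensional Sobolev embedding $H^s(\R)\hookrightarrow C^k(\R)$ with $s>k+1/2$ fibrewise in a suitable scale, and finally patch together and identify the derivatives.

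\textbf{Key steps.} First I would localise: since the assertion is about $\iota(\psi)$ on compact subintervals of $\R_+$, and $H^s_{\mathrm{loc}}$ is a local notion, it suffices to prove that for every relatively compact open $J\Subset\R_+$ one has $\iota(\psi)\in C^k(\bar J;\vG)$ together with the stated formula for the derivatives there. Fix a cutoff $\zeta\in C^\infty_{\mathrm c}(\R_+)$ with $\zeta\equiv1$ near $\bar J$; then $\zeta\psi\in H^s(\R\times\S^{d-1})$ (extending by zero), so I may as well assume $\psi\in H^s(\R\times\S^{d-1})$ with compact support in the $t$-direction. Second, I would use the characterisation of $H^s(\R\times\S^{d-1})$ via the spectral decomposition of the Laplacian $-\Delta_{\S^{d-1}}$: expanding $\psi(t,\cdot)=\sum_n c_n(t)Y_n$ in an orthonormal basis of spherical harmonics with eigenvalues $\mu_n\ge0$, the norm is comparable to $\sum_n\int_\R(1+\tau^2+\mu_n)^s|\hat c_n(\tau)|^2\,\d\tau$. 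This shows each $c_n\in H^s(\R)\subseteq C^k(\R)$, and more quantitatively that $\sum_n(1+\mu_n)^{k+1/2+\epsilon}\|c_n\|_{C^k(\R)}^2<\infty$ for small $\epsilon>0$ by the one-dimensional embedding applied at frequency level $\mu_n$ (absorbing $(1+\mu_n)$-powers into the $(1+\tau^2+\mu_n)^s$ weight, using $s>k+1/2$). Third, for each fixed $l\le k$ the series $\sum_n c_n^{(l)}(t)Y_n$ then converges in $\vG$ uniformly in $t$, so its sum is a continuous $\vG$-valued function; identifying it with $\tfrac{\partial^l}{\partial t^l}\psi$ in the distributional sense (which is immediate termwise) gives $\tfrac{\d^l}{\d t^l}\iota(\psi)=\iota(\partial_t^l\psi)$, and in particular $\iota(\psi)\in C^k(\R;\vG)$. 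Finally, I would undo the localisation: on $\bar J$ the cutoff $\zeta$ was identically $1$, so the conclusion transfers back to $\psi$ itself on every compact subinterval, which is exactly $\iota(\psi)\in C^k(\R_+;\vG)$ with the asserted derivative formula.

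\textbf{Main obstacle.} There is no deep difficulty here; the only slightly delicate point is the bookkeeping in the second step, namely organising the weights $(1+\tau^2+\mu_n)^s$ so that the loss of $k+1/2$ derivatives from the Sobolev embedding in $t$ is uniformly compensated across all angular frequencies $\mu_n$, so that the differentiated series converges in $\vG$ \emph{uniformly} in $t$ rather than merely pointwise. This is the content that forces the strict inequality $s>k+1/2$, and it is the step I would write out with care; everything else (localisation, termwise identification of distributional derivatives, transfer back) is routine. Alternatively one could phrase the same argument without spherical harmonics by covering $\S^{d-1}$ with coordinate charts and invoking the Euclidean trace theorem $H^s(\R^d)\to C^k(\R;H^{s-k-1/2}(\R^{d-1}))$ chart by chart, but the spectral version keeps the $\vG$-valued statement cleanest.
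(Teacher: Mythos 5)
Your proof is correct and in fact supplies the details that the paper omits: the paper's own proof is a two-sentence sketch (reduce by a partition of unity to a coordinate region, then ``mimic the proof of the familiar Sobolev embedding theorem''), which is precisely the chart-by-chart alternative you mention in your last sentence. Your main route --- expanding in spherical harmonics, using the equivalence of the $H^s(\R\times\S^{d-1})$ norm with $\sum_n\int(1+\tau^2+\mu_n)^s|\hat c_n(\tau)|^2\,\d\tau$, and reducing to the one-dimensional embedding $H^{s}(\R)\hookrightarrow C^{k}(\R)$ for $s>k+1/2$ with the angular frequency carried along as a parameter --- is a cleaner global version of the same idea that avoids patching charts on $\S^{d-1}$, and the localisation in $t$ plus the termwise identification of the distributional derivative are handled correctly. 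One small caveat: the intermediate quantitative claim $\sum_n(1+\mu_n)^{k+1/2+\epsilon}\|c_n\|_{C^k(\R)}^2<\infty$ does not follow from $s>k+1/2$ alone, since splitting the weight as $(1+\tau^2)^{k+1/2+\epsilon'}(1+\mu_n)^{k+1/2+\epsilon}\lesssim(1+\tau^2+\mu_n)^{s}$ would require $s\ge 2k+1+\epsilon+\epsilon'$. But your argument only uses the unweighted bound $\sum_n\|c_n\|_{C^k(\R)}^2<\infty$ (or, if one wants a positive power of $(1+\mu_n)$, the exponent $s-k-1/2-\epsilon'$), which does follow from $s>k+1/2$ and already yields the uniform-in-$t$ convergence in $\vG$ of the differentiated series; with that exponent corrected the proof is complete.
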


\begin{proof}[Proof of Lemma~\ref{lem:22122915}]
By a partition-of-unity argument we can reduce the
 claims to similar ones in 
a coordinate region. 
Then we can mimic the proof of the familiar Sobolev embedding theorem. 
We omit the details. 
\end{proof}


\begin{proof}[Proof of Proposition~\ref{prop:22121414}]
\textit{Step I.}\ 
Let 
$\lambda\in I$ and $\psi\in C^\infty_{\mathrm c}(\mathbb R^d)$ be
given, and then let 
\begin{equation*}
\Psi(t)=
J(\lambda,t,\cdot)^{1/2}\e^{\mp\i S_{\mathrm L}(\lambda,y(\lambda,t,\cdot))}\bigl(R_{\mathrm L}(\lambda\pm\i 0)\psi\bigr)(y(\lambda,t,\cdot))
\in\mathcal G;\quad t\in \R_+.
\end{equation*} Since $R_{\mathrm L}(\lambda\pm\mathrm
i0)\psi\in H^2_{\mathrm{loc}}(\mathbb R^d)$ it follows from 
Lemma~\ref{lem:22122915}, that $\Psi\in C^1(\R_+;\mathcal G)$.
We first show the existence of the limits
\begin{equation}\label{eq:limF}
D^\pm(\lambda)\psi=
\vGlim_{t\to \infty}
 \Psi(t). 
\end{equation} 
By the fundamental theorem of calculus we have 
\[
\Psi(t)=\Psi(1)+\int_1^t\tfrac{\mathrm d}{\mathrm dt}\Psi(\tau)\,\mathrm d\tau,
\]
and it suffices to show that the last integrand is integrable as a $\mathcal G$-valued function.
We can compute it 
by Lemma~\ref{lem:22122915}, 
\eqref{eq:2212115b}, \eqref{eq:2212092256}, \eqref{eq:flow} and \eqref{eq:6} as 
\begin{align*}
\tfrac{\mathrm d}{\mathrm dt}\Psi
&=
 J^{1/2}
\e^{\mp\i S_{\mathrm L}}\bigl(\mathrm i|\nabla S_{\mathrm L}|^{-2}(\nabla S_{\mathrm L})\cdot\gamma 
+\tfrac12(\nabla\cdot |\nabla S_{\mathrm L}|^{-2}\nabla S_{\mathrm L})
\bigr)R_{\mathrm L}(\lambda\pm\i 0)\psi
\\&
=
J^{1/2}\e^{\mp\i S_{\mathrm L}}
\bigl(\mathrm i|\nabla S_{\mathrm L}|^{-2}\gamma_\|
+
\tfrac12(\nabla |\nabla S_{\mathrm L}|^{-2})\cdot(\nabla S_{\mathrm L})\bigr)R_{\mathrm L}(\lambda\pm\i 0)\psi
. 
\end{align*}
By \eqref{eq:gamma1a} and \eqref{eq:220919537} we can find
$\delta>0$ and 
$\Phi\in L^2_{(1+\delta)/2}$ such that 
 \begin{equation*}
\tfrac{\mathrm d}{\mathrm dt}\Psi(t)
=
J(\lambda,t,\cdot)^{1/2}\Phi(\lambda,y(\lambda,t,\cdot))
.
\end{equation*}
Then by \cs and \eqref{eq:co_area}
\begin{align*}
\int_1^\infty\bigl\|\tfrac{\mathrm d}{\mathrm dt}\Psi(t)\bigr\|_{\mathcal G}\,\mathrm dt
&=
\int_1^\infty\!\mathrm dt
\biggl(\int_{\mathbb
 S^{d-1}}|\Phi(\lambda,y(\lambda,t,\theta))|^2\,\mathrm d
 A_{\lambda,t}(\theta)\biggr)^{1/2} 
\\&
\le 
C_1\biggl(\int_1^\infty t^{-1-\delta}\,\mathrm dt\biggr)^{1/2}
\\&\phantom{{}={}}{}
\cdot 
\biggl(\int_1^\infty\!\mathrm dt
\int_{\mathbb S^{d-1}}|(|\cdot|^{(1+\delta)/2}\Phi)(\lambda,y(\lambda,t,\theta))|^2\,\mathrm d A_{\lambda,t}(\theta)\biggr)^{1/2}
\\[.3em]&
\le 
C_2\|\Phi\|_{L^2_{(1+\delta)/2}}.
\end{align*}
Hence there exist the limits \eqref{eq:limF}. 
We note {$D^\pm(\lambda)\psi$} are continuous in $\lambda\in I$ 
since $\Psi(t)$ is continuous in $\lambda\in I$, and the above estimates are 
locally uniform in this variable. 

\smallskip
\noindent
{\textit{Step II.}\ 
Next we set for any $\lambda\in I$ and $\psi\in C^\infty_{\mathrm c}(\mathbb R^d)$
\begin{subequations}
 \begin{align}
 \begin{split}
E^\pm(\lambda)\psi&=
c(\lambda)(2\lambda)^{-d/4}
J_+(\lambda,\theta_+(\lambda,\cdot))^{-1/2}\bigl(D^\pm(\lambda)\psi\bigr)(\theta_+(\lambda,\cdot)),\\
c(\lambda)&=(2\pi)^{-1/2}(2\lambda)^{1/2}, 
 \end{split}
\label{24050220} 
\end{align}
 and verify that they satisfy
\eqref{eq:limFb}. For completeness of presentation note that 
\begin{equation}\label{eq:normIso}
 \bigl\|E^\pm(\lambda)\psi \bigr\|_{\mathcal G}=c(\lambda)\bigl\|D^\pm(\lambda)\psi \bigr\|_{\mathcal G}, 
\end{equation}
cf.\ Lemma~\ref{lem:2301011702}. 
\end{subequations}
By \eqref{eq:limF} it follows that 
\begin{align*}
\begin{split}
\lim_{t\to\infty}t^{-1}\int_0^t 
\bigl\|D^\pm(\lambda)\psi
-{}&J(\lambda,\tau,\cdot)^{1/2}
\e^{\mp\i S_{\mathrm L}(\lambda,y(\lambda,\tau,\cdot))}
(R_{\mathrm L}(\lambda\pm\i 0)\psi)(y(\lambda,\tau,\cdot))\bigr\|_{\mathcal G}^2\,\mathrm d\tau=0
,
\end{split}
\end{align*} 
and along with \eqref{eq:co_area},
 Lemma~\ref{lem:230107} and the asymptotics 
 $|y(\lambda,\tau,\cdot)|/\tau\to (2\lambda)^{-1/2}$ this implies 
\begin{align*}
\lim_{t\to\infty}t^{-1}\int_{\set{S_L\le t}}
\bigl|
{}&
(2\lambda)^{-(d-1)/4}|x|^{-(d-1)/2}\e^{\pm\i S_{\mathrm L}(\lambda,x)}
J_+(\lambda,\theta(\lambda,
 x))^{-1/2}(D^\pm(\lambda)\psi)(\theta(\lambda, x))
\\&{}
-(R_{\mathrm L}(\lambda\pm\i 0)\psi)(x)\bigr|^2\,\mathrm dx=0
.
\end{align*}
Hence it suffices to prove 
\begin{align*}
\lim_{t\to\infty}t^{-1}\int_{\set{S_L\le t}}
|x|^{-(d-1)/2}\bigl|
{}&
J_+(\lambda,\theta(\lambda,x))^{-1/2}
(D^\pm(\lambda)\psi)(\theta(\lambda,x))
\\&{}
-
J_+(\lambda,\theta_+(\lambda,\hat x))^{-1/2}\bigl(D^\pm(\lambda)\psi\bigr)(\theta_+(\lambda,\hat
 x))
\bigr|^2\,\mathrm dx=0
.
\end{align*}
In turn, if we let
$u(\lambda,\theta)=J_+(\lambda,\theta)^{-1/2}(D^\pm(\lambda)\psi)(\theta)$ 
 and again use eikonal spherical coordinates, it suffices to prove
 that 
\begin{align}
\lim_{t\to\infty}t^{-1}
\int_0^t
\bigl\|
u(\lambda,\cdot)-u\bigl(\lambda,\theta_+(\lambda,\omega(\lambda,\tau,\cdot))\bigr)
\bigr\|_{\mathcal G}^2\,\mathrm d\tau
=0
.
\label{24050219}
\end{align}
To prove \eqref{24050219}, first note that for any $v,w\in \mathcal G$ 
\begin{equation}
\bigl\|v\bigl(\theta_+(\lambda,\omega(\lambda,\tau,\cdot))\bigr)
-w\bigl(\theta_+(\lambda,\omega(\lambda,\tau,\cdot))\bigr)\bigr\|_{\mathcal G}
\le C_3\|v-w\|_{\mathcal G};\quad \tau\in [1,\infty)
.
\label{2405021}
\end{equation} 
Here we used that the coordinate change $\theta\to \theta_+(\lambda,\omega(\lambda,\tau,\cdot))$ converges 
to the identity map in the $C^1$-topology as $\tau\to\infty$. Next we
 estimate 
 for any $v\in C^\infty(\mathbb S^{d-1})$ 
\begin{align}
\begin{split}
&\bigl\|u(\lambda,\cdot)
-u\bigl(\lambda,\theta_+(\lambda,\omega(\lambda,\tau,\cdot)\bigr)\bigr\|_{\mathcal G}
\\&
\le 
\|u(\lambda,\cdot)-v\|_{\mathcal G}
+\bigl\|
v-v\bigl(\theta_+(\lambda,\omega(\lambda,\tau,\cdot)\bigr)\bigr\|_{\mathcal G}
\\&\phantom{{}={}}
+\bigl\|
v\bigl(\theta_+(\lambda,\omega(\lambda,\tau,\cdot))\bigr)
-u\bigl(\lambda,\theta_+(\lambda,\omega(\lambda,\tau,\cdot))\bigr)\bigr\|_{\mathcal G}
.
\end{split}
\label{2405022}
\end{align}
The first term on the right-hand side 
of \eqref{2405022} can be arbitrarily small by choosing 
appropriate $v\in C^\infty(\mathbb S^{d-1})$. Due to \eqref{2405021}
then also the third term is small (uniformly in $\tau$). 
 For any such $v$ fixed, clearly the second term converges to $0$ as $\tau\to\infty$. 
This verifies \eqref{24050219}.}

\smallskip
\noindent
{\textit{Step III.}}\ 
We next show that for any $\lambda\in I$ and $\psi\in C_{\mathrm c}^\infty(\R^d)$
\begin{align}
\|E^\pm(\lambda)\psi\|_{\mathcal G}^2=\langle \psi,\delta(H_{\mathrm L}-\lambda)\psi\rangle. 
\label{eq:2212301227}
\end{align}
Using for $T>0$ the function $\chi_T$ from \eqref{eq:14.1.7.23.24b}, we set 
\begin{equation}
\eta_T=1-\chi_T,\quad 
\eta_T'=-T^{-1}\chi'(|\cdot|/T). 
\label{eq:230211}
\end{equation}
Introducing also the notation $\phi=R_{\mathrm L}(\lambda\pm{\mathrm
 i}0)\psi$, we then write 
\begin{align*}
2\pi\langle \psi, \delta(H_{\mathrm L}-\lambda)\psi\rangle 
&= 
\pm2\mathop{\mathrm{Im}}\langle(H_{\mathrm L}-\lambda)\phi, \phi\rangle 
= 
\pm2\lim_{T\to\infty} \mathop{\mathrm{Im}}\langle(H_{\mathrm L}-\lambda)\phi, \eta_T\phi\rangle 
.
\end{align*}
By an integration by parts this leads to 
\begin{align}
\begin{split}
2\pi\langle \psi, \delta(H_{\mathrm L}-\lambda)\psi\rangle 
&= 
\mp\lim_{T\to\infty}\mathop{\mathrm{Re}}\langle \hat x\cdot p\phi, \eta_T'\phi\rangle
\\&= 
\mp\lim_{T\to\infty}\mathop{\mathrm{Re}}\langle \hat x\cdot \gamma \phi, \eta_T'\phi\rangle
-\lim_{T\to\infty}\langle |\nabla \chi_1S_{\mathrm L}|\phi, \eta_T'\phi\rangle
.
\end{split}
\label{eq:22123012}
\end{align}
The contribution from the first term on the right-hand side of \eqref{eq:22123012}
vanishes due to \eqref{eq:gamma2}. As for the second term 
we rewrite the integral in the standard spherical coordinates, 
substitute \eqref{eq:limFb} and conclude that 
\[
-\lim_{T\to\infty}\langle |\nabla \chi_1 S_{\mathrm L}|\phi, \eta_T'\phi\rangle
=2\pi\|E^\pm(\lambda)\psi\|_{\mathcal G}^2, 
\] 
hence the claim \eqref{eq:2212301227} for $\psi\in C_{\mathrm c}^\infty(\R^d)$. 

\smallskip
\noindent
{\textit{Step IV.}}\ 
Now we prove the assertions (1)--(3). 
The identity \eqref{eq:2212301227} immediately implies that 
$E^\pm(\lambda)$ extend continuously as $\mathcal B\to\mathcal G$,
and the extensions obviously satisfy \eqref{eq:limFb} and \eqref{eq:2212301227}.
This verifies the assertions (1) and (3). 
To see the joint continuity of $E^\pm\colon I\times\mathcal B\to\mathcal G$ 
we let $\lambda,\mu\in I$ and $\psi,\varphi\in \mathcal B$. 
We take another $\zeta\in C^\infty_{\mathrm c}(\mathbb R^d)$, and split 
\begin{align*}
\|E^\pm(\lambda)\psi-E^\pm(\mu)\varphi\|_{\mathcal G}
&\le 
\|E^\pm(\lambda)\psi-E^\pm(\lambda)\zeta\|_{\mathcal G}
+\|E^\pm(\lambda)\zeta-E^\pm(\mu)\zeta\|_{\mathcal G}
\\&\phantom{{}={}}{}
+\|E^\pm(\mu)\zeta-E^\pm(\mu)\varphi\|_{\mathcal G}
\\&\le 
\bigl\langle \psi-\zeta,\delta(H_{\mathrm L}-\lambda)(\psi-\zeta)\bigr\rangle^{1/2}
+\|E^\pm(\lambda)\zeta-E^\pm(\mu)\zeta\|_{\mathcal G}
\\&\phantom{{}={}}{}
+\bigl\langle \zeta-\varphi,\delta(H_{\mathrm L}-\mu)(\zeta-\varphi)\bigr\rangle^{1/2}
.
\end{align*}
By the locally uniform boundedness of $R_{\mathrm L}(\lambda\pm\mathrm i0)\in\mathcal L(\mathcal B,\mathcal B^*)$
the first and third terms on the right-hand side above can be
arbitrarily small (uniformly in the spectral parameter)
if we choose $\psi,\varphi$ and $\zeta$ close to each other.
For fixed such $\zeta$ the second term can be arbitrarily small 
if $\lambda,\mu$ are close. This is easily seen using the formula 
 \eqref{24050220} and the continuity of $D^\pm(\cdot)\zeta$ recorded
 in Step I. 
Hence (2) is verified. 
\end{proof}

\subsubsection{Construction in general} 
\label{subsec:Comparison of generalized Fourier transforms}

Now we prove Theorem~\ref{thm:comp-gener-four} and Corollary~\ref{cor:230623}.
We implement the effects from $V_{\mathrm S}+q$ 
by the second resolvent identities 
\begin{align}
R(\lambda\pm\mathrm i0)
=R_{\mathrm L}(\lambda\pm\mathrm i0)
\bigl(1-(V_{\mathrm S}+q)R(\lambda\pm\mathrm i0)\bigr)
\in
{\mathcal L(\mathcal B,\mathcal B^*).}
\label{eq:2301153}
\end{align}

 \begin{proof}[Proof of Theorem~\ref{thm:comp-gener-four} (\ref{item:23062110}) and (\ref{item:23062111})] 
{Take the function $\Sigma_{\mathrm L}$ from
Lemma~\ref{lemma:uniq-asympt-norm} with $U=\mathbb S^{d-1}$, and 
we define $F^\pm(\lambda)\psi\in\mathcal G$ for any $(\lambda,\psi)\in I\times \mathcal B$ as 
\begin{equation}
\label{eq:diag90}
F^\pm(\lambda)\psi
=
\mathrm e^{\mp\mathrm i\Theta_{\mathrm L}(\lambda,\cdot)}
E^\pm(\lambda)\bigl(1-(V_{\mathrm S}+q)R(\lambda\pm\mathrm i0)\bigr)\psi,
\end{equation}
where 
\begin{align*}
\Theta_{\mathrm L}(\lambda,\omega)=\Sigma_{\mathrm L}(\lambda,\theta_+(\lambda,\omega));\quad 
\theta_+(\lambda,\cdot)=\omega_+(\lambda,\cdot)^{-1}
.
\end{align*}
Then we can deduce \eqref{eq:221206} 
by \eqref{eq:2301153}, \eqref{eq:diag90} and \eqref{eq:limFb}, 
verifying the assertion (\ref{item:23062110}).}
As for (\ref{item:23062111}), note that the mappings 
\[I\times\mathcal B\to \mathcal B,\ (\lambda,\psi)\mapsto \bigl(1-(V_{\mathrm S}+q)R(\lambda\pm\mathrm i0)\bigr)\psi\] 
are continuous thanks to Theorem \ref{thm:221105}.
Then the assertion (\ref{item:23062111}) is clear from Lemmas~\ref{lem:2301011702} and \ref{lemma:uniq-asympt-norm} 
and Proposition~\ref{prop:22121414}. 
\end{proof}

To prove the assertions
(\ref{item:23062112}) and (\ref{item:23062113}) in
Theorem~\ref{thm:comp-gener-four} we will use the 
\textit{Sommerfeld uniqueness} for $H$, 
or a characterization of the limiting resolvents $R(\lambda\pm\mathrm i0)$.
The following version of the property is almost a direct consequence from Theorem~\ref{thm:proof-strong-bound} and \eqref{eq:2301153}, 
cf. \cite{AIIS2,Is},
however let us present it for completeness of the paper. 

\begin{proposition}\label{prop:13.9.9.8.23}
Let $\lambda\in I$, $\psi\in \mathcal B$ 
and $\phi\in \vB^*$.
Then 
$\phi=R(\lambda\pm\mathrm i0)\psi$ holds if and only if
both of the following assertions hold:
\begin{enumerate}
\item\label{item:13.7.29.0.29}
$\phi$ solves the Helmholtz equation $(H-\lambda)\phi=\psi$ in the distributional sense.
\item\label{item:13.7.29.0.28}
$\phi$ satisfies the outgoing/incoming radiation condition $\gamma_{\|}\phi\in \mathcal B^*_0$.
\end{enumerate}
\end{proposition}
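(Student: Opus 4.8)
The plan is to prove Proposition~\ref{prop:13.9.9.8.23} by reducing the Sommerfeld uniqueness for $H$ to that for the regularized operator $H_{\mathrm L}$, which in turn follows quickly from the strong radiation condition bounds of Theorem~\ref{thm:proof-strong-bound}. The \textit{necessity} direction is essentially contained in the work already done: if $\phi=R(\lambda\pm\mathrm i0)\psi$, then \ref{item:13.7.29.0.29} is immediate from the limiting absorption principle (Theorem~\ref{thm:221105}), while \ref{item:13.7.29.0.28} follows by writing $\phi=R_{\mathrm L}(\lambda\pm\mathrm i0)\widetilde\psi$ with $\widetilde\psi=(1-(V_{\mathrm S}+q)R(\lambda\pm\mathrm i0))\psi\in\mathcal B$ via \eqref{eq:2301153}, and then applying the bound \eqref{eq:gamma1a} of Theorem~\ref{thm:proof-strong-bound} with $\beta\in(1/2,\beta_c)$ to $\widetilde\psi\in L^2_{\beta+1/2}$ — wait, $\widetilde\psi$ is only in $\mathcal B$, so instead I would approximate $\widetilde\psi$ by $\psi_n\in C^\infty_{\mathrm c}$ in $\mathcal B$, use that $\gamma_\|R_{\mathrm L}(\lambda\pm\mathrm i0)\psi_n\in L^2_{\beta-1/2}\subseteq\mathcal B^*_0$ together with the continuity $R_{\mathrm L}(\lambda\pm\mathrm i0)\colon\mathcal B\to\mathcal B^*$ and the $\mathcal B^*_0$-closedness, to conclude $\gamma_\|\phi\in\mathcal B^*_0$. (Strictly, one uses \eqref{eq:gamma2} and a density/limit argument to land in $\mathcal B^*_0$; the precise book-keeping is the routine part.)

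For the \textit{sufficiency} direction, suppose $\phi\in\mathcal B^*$ satisfies \ref{item:13.7.29.0.29} and \ref{item:13.7.29.0.28}. Set $\phi_0=\phi-R(\lambda\pm\mathrm i0)\psi$. By the necessity direction just proved, $R(\lambda\pm\mathrm i0)\psi$ also satisfies both conditions, hence $\phi_0\in\mathcal B^*$ solves the homogeneous equation $(H-\lambda)\phi_0=0$ and satisfies $\gamma_\|\phi_0\in\mathcal B^*_0$. It then suffices to show $\phi_0=0$. Again using the second resolvent identity in the form $(H_{\mathrm L}-\lambda)\phi_0=-(V_{\mathrm S}+q)\phi_0$, and noting $(V_{\mathrm S}+q)\phi_0\in\mathcal B$ (because $V_{\mathrm S}$ decays faster than $\langle x\rangle^{-1}$ and $q$ is short-range as in \eqref{eq:shortrange}, so multiplication by $V_{\mathrm S}+q$ maps $\mathcal B^*\to L^2_{s}\subseteq\mathcal B$ for suitable $s>1/2$), I reduce to the Sommerfeld uniqueness for $H_{\mathrm L}$: a function $\phi_0\in\mathcal B^*$ with $(H_{\mathrm L}-\lambda)\phi_0\in\mathcal B$ and $\gamma_\|\phi_0\in\mathcal B^*_0$ must equal $R_{\mathrm L}(\lambda\pm\mathrm i0)\bigl((H_{\mathrm L}-\lambda)\phi_0\bigr)$.

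The Sommerfeld uniqueness for $H_{\mathrm L}$ is the technical core, and I expect it to be the main obstacle, though it is by now a standard consequence of the radiation condition machinery. The standard argument is: let $\phi_0$ solve $(H_{\mathrm L}-\lambda)\phi_0=g\in\mathcal B$ with outgoing radiation condition $\gamma_\|\phi_0\in\mathcal B^*_0$ (say for the $+$ case); then $u:=\phi_0-R_{\mathrm L}(\lambda+\mathrm i0)g$ solves the homogeneous equation and has $\gamma_\|u\in\mathcal B^*_0$, and one must show $u\equiv 0$. For this one computes, for the cutoff $\eta_T$ from \eqref{eq:230211}, the quantity $\mathop{\mathrm{Im}}\langle(H_{\mathrm L}-\lambda)u,\eta_T u\rangle=0$, integrates by parts as in \eqref{eq:22123012} to extract $\||\nabla\chi_1S_{\mathrm L}|^{1/2}\eta_T'^{1/2}u\|^2\to 0$ modulo a term controlled by $\gamma_\|u\in\mathcal B^*_0$ and \eqref{eq:gamma2}; this forces the leading oscillatory part of $u$ to vanish, i.e.\ $u\in\mathcal B^*_0$, and then a unique continuation / Rellich-type argument (or directly: $u\in\mathcal B^*_0$ solving $(H_{\mathrm L}-\lambda)u=0$ implies $u$ is a genuine eigenfunction, contradicting absence of positive eigenvalues, which for $V_{\mathrm L}$ satisfying \eqref{eq:cond22} is part of the standard theory, cf.\ the last hypothesis in Condition~\ref{cond:220525} combined with the analysis for the $C^\infty$-potential $V_{\mathrm L}$) gives $u=0$. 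I would either carry out this computation or, to keep the exposition light, cite \cite{AIIS2, Is} for the Sommerfeld uniqueness of $H_{\mathrm L}$ and only indicate how the second resolvent identity transfers it to $H$. The only genuinely new input beyond the cited literature is that our low-regularity $V_{\mathrm L}$ (which may fail to be classical $C^3$) still supports the radiation condition bounds — but that is precisely the content of Theorem~\ref{thm:proof-strong-bound}, which we may assume.
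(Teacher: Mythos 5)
Your necessity argument matches the paper's (second resolvent identity \eqref{eq:2301153}, the decay \eqref{eq:230115} of $V_{\mathrm S}$, the bound \eqref{eq:gamma1a}, plus the density/closedness book-keeping that the paper leaves implicit), and your core analytic tool for sufficiency --- the commutator identity with the cutoff $\eta_T$ forcing the solution into $\mathcal B^*_0$, followed by a Rellich-type theorem --- is exactly the paper's. However, your sufficiency argument routes this through a Sommerfeld uniqueness statement for $H_{\mathrm L}$, and that detour leaves a genuine gap at the end. What your reduction delivers is $u=0$, i.e.
\begin{equation*}
\phi_0=R_{\mathrm L}(\lambda\pm\mathrm i0)\bigl((H_{\mathrm L}-\lambda)\phi_0\bigr)=-R_{\mathrm L}(\lambda\pm\mathrm i0)(V_{\mathrm S}+q)\phi_0,
\end{equation*}
and this does \emph{not} by itself give $\phi_0=0$: you would still need either invertibility of $1+R_{\mathrm L}(\lambda\pm\mathrm i0)(V_{\mathrm S}+q)$ on $\mathcal B^*$, or an extra step (e.g.\ $0=\mathop{\mathrm{Im}}\langle (V_{\mathrm S}+q)\phi_0,\phi_0\rangle=\mp\pi\langle (V_{\mathrm S}+q)\phi_0,\delta(H_{\mathrm L}-\lambda)(V_{\mathrm S}+q)\phi_0\rangle$, whence $E^\pm(\lambda)(V_{\mathrm S}+q)\phi_0=0$ and so $\phi_0\in\mathcal B^*_0$ by \eqref{eq:limFb}, and only then Rellich for $H$). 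Neither is supplied. A second, smaller issue: your $H_{\mathrm L}$-uniqueness needs a Rellich theorem (equivalently, absence of positive eigenvalues and of $\mathcal B^*_0$-generalized eigenfunctions) for $H_{\mathrm L}$; this is true for the smooth long-range $V_{\mathrm L}$ by classical results, but it is an additional input that is neither assumed nor proved in the paper, which only postulates absence of positive eigenvalues for $H$ itself.

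The paper avoids both problems by skipping the detour entirely: it applies your $\eta_T$-commutator computation directly to $\phi'=\phi-R(\lambda\pm\mathrm i0)\psi$ and to the full operator $H$ (the real potentials $V$ and $q$ drop out of $2\mathop{\mathrm{Im}}(\eta_T(H-\lambda))$, so no regularity of $V$ is needed there), concluding $\phi'\in\mathcal B^*_0$, and then invokes the Rellich theorem for $H$ from \cite{AIIS2}, which is available under the standing hypothesis of Condition~\ref{cond:220525}. If you replace $H_{\mathrm L}$ and $u$ by $H$ and $\phi_0$ in your integration-by-parts step, your proof closes and coincides with the paper's.
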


\begin{remarks}\label{remark:construction-general}
\begin{enumerate}[1)]
\item\label{item:1re}
 Here $\gamma_{\|}$ is defined by \eqref{eq:6} with respect to
 $S=S_{\mathrm L}$. In the proposition we can equally well use the more natural
 $\gamma_{\|}$ defined by \eqref{eq:6} with respect to
 the general $S$, or in fact $\gamma_{\|}$ given in terms of the
 expression $S=S_0=\sqrt{2\lambda}|x|$.
\item\label{item:2re} For $\psi=0$ the above result implies the (sharp)
 version of a \emph{Rellich theorem}: If $\phi\in \vB_0^*$ solves
 $(H-\lambda)\phi=0$, then $\phi=0$.
\end{enumerate}
\end{remarks}

\begin{proof}
The necessity is clear from \eqref{eq:2301153}, 
\eqref{eq:230115} and \eqref{eq:gamma1a}.
Thus it remains to prove the sufficiency. 
Assume \eqref{item:13.7.29.0.29} and \eqref{item:13.7.29.0.28} of the assertion,
and set
\begin{align*}
\phi'=\phi-R(\lambda\pm\mathrm i0)\psi\in \mathcal B^*.
\end{align*}
Then by \eqref{eq:2301153}, 
\eqref{eq:230115} and \eqref{eq:gamma1a}
$\phi'$ satisfies 
\[(H-\lambda)\phi'=0,\quad \gamma_{\|}\phi'\in \mathcal B^*_0.\]
We can further verify $\phi'\in \mathcal B^*_0$.
Using a notation similar to
\eqref{eq:230211}:
\begin{equation*}
\eta_T=1-\chi(\chi_1S_L/T),\quad 
\eta_T'=-T^{-1}\chi'(\chi_1S_L/T), 
\end{equation*} we have 
\begin{align*}
2\mathop{\mathrm{Im}}\bigl(\eta_T(H-\lambda)\bigr)
=\pm |\nabla \chi_1S_{\mathrm L}|^2\eta_T'+\mathop{\mathrm{Re}}(\eta_T'\gamma_\|).
\end{align*}
Hence 
\begin{align*}
0\le -\bigl\langle \phi',|\nabla \chi_1S|^2\eta_T'\phi'\bigr\rangle
\le 
\pm \mathop{\mathrm{Re}}\bigl\langle \phi',\eta_T'\gamma_\|\phi'\bigr\rangle.
\end{align*}
By letting $T\to \infty$, 
it follows that indeed $\phi'\in\mathcal B^*_0$.

Since $H$ does not have positive eigenvalues it does not neither have generalized eigenfunctions with positive eigenvalues 
in $\mathcal B^*_0$,
see \cite[Theorem~1.4]{AIIS2}, 
and we certainly obtain that $\phi'=0$. Hence $\phi=R(\lambda\pm\mathrm i0)\psi$.
\end{proof}

The Sommerfeld uniqueness provides the following useful
representations. 
Recalling \eqref{eq:230126bb} we define for any $\xi\in C^\infty(\mathbb S^{d-1})$ 
\begin{equation*}
\psi_\pm^{S_{\mathrm L}}[\xi](x)=\psi_\pm^{S_{\mathrm L}}[\xi](\lambda,x)=(H-\lambda)\phi_\pm^{S_{\mathrm L}}[\xi](\lambda,x)=(H-\lambda)\phi_\pm^{S_{\mathrm L}}[\xi](x).
\end{equation*}

\begin{proposition}\label{prop:14.5.14.3.27}
\begin{subequations}
Let $(\lambda,\xi)\in I\times C^\infty(\mathbb S^{d-1})$.
Then 
\begin{align}
\phi_\pm^{S_{\mathrm L}}[\xi]\in \mathcal B^*,\quad 
\gamma_\|\phi_\pm^{S_{\mathrm L}}[\xi] \in \mathcal B^*_0,\quad 
\psi_\pm^{S_{\mathrm L}}[\xi]\in \mathcal B.
\label{eq:230126}
\end{align}
Moreover, 
\begin{align}
\phi_\pm^{S_{\mathrm L}}[\xi]
&=R(\lambda\pm\mathrm i0)\psi_\pm^{S_{\mathrm L}}[\xi]
\label{eq:23062916}
\end{align}
and 
\begin{align}
\begin{split}
F^\pm(\lambda)^*\bigl(\mathrm e^{\mp\mathrm i\Theta_{\mathrm L}(\lambda,\cdot)}\xi\bigr)
&=
\pm \tfrac{1}{2\pi\mathrm i}
\bigl(\phi^{S_{\mathrm L}}_\pm[\xi]-R(\lambda\mp\mathrm i0)\psi^{S_{\mathrm L}}_\pm[\xi]\bigr)
,
\end{split}
\label{eq:23062917}
\end{align} 
\end{subequations}
respectively.
\end{proposition}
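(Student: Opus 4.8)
\textbf{Proof plan for Proposition~\ref{prop:14.5.14.3.27}.}
The strategy is to reduce everything to the Sommerfeld uniqueness (Proposition~\ref{prop:13.9.9.8.23}) and the defining property \eqref{eq:limFb}--\eqref{eq:diag90} of the operators $E^\pm(\lambda)$ and $F^\pm(\lambda)$. First I would establish \eqref{eq:230126}. That $\phi_\pm^{S_{\mathrm L}}[\xi]\in \mathcal B^*$ is immediate from the explicit formula \eqref{eq:230126bb} together with Lemma~\ref{lem:230107}, since $r^{-(d-1)/2}$ and $J(\lambda,t,\theta)^{1/2}\sim t^{(d-1)/2}$ balance and $\xi$ is bounded on $\mathbb S^{d-1}$; in fact one sees $2^{-m/2}\|1_m\phi_\pm^{S_{\mathrm L}}[\xi]\|_{\mathcal H}$ has a positive limit, so it lies in $\mathcal B^*\setminus\mathcal B^*_0$ for $\xi\neq 0$. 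For $\gamma_\|\phi_\pm^{S_{\mathrm L}}[\xi]\in\mathcal B^*_0$ one computes $\gamma_\|$ acting on the quasi-mode: since $\gamma=p\mp(\nabla\chi_1 S_{\mathrm L})$ annihilates the oscillatory factor $\e^{\pm\i S_{\mathrm L}}$ up to the cut-off and the differentiation of $r^{-(d-1)/2}\xi(\hat x)$, the result carries an extra power $\langle x\rangle^{-1}$ relative to $\phi_\pm^{S_{\mathrm L}}[\xi]$ and therefore sits in $\mathcal B^*_0$; the parallel component $\gamma_\|=(\nabla\chi_1 S_{\mathrm L})\cdot\gamma-\tfrac{\i}2(\Delta\chi_1 S_{\mathrm L})$ gains yet another such factor from $\nabla S_{\mathrm L}\cdot\gamma$. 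For $\psi_\pm^{S_{\mathrm L}}[\xi]=(H-\lambda)\phi_\pm^{S_{\mathrm L}}[\xi]\in\mathcal B$, the key point is that the eikonal equation \eqref{eq:4c} (equivalently \eqref{eq:2212116}) is designed so that the leading oscillatory term cancels: writing $H-\lambda=\tfrac12\gamma^2\pm\gamma_\|$ on $\{|x|>R\}$ (Lemma~\ref{lem:22101720} with $q\equiv 0$, but here we keep $V_{\mathrm S}+q$, so a remainder $(V_{\mathrm S}+q)\phi_\pm^{S_{\mathrm L}}[\xi]$ appears) and using the decay just established for $\gamma_\|\phi_\pm^{S_{\mathrm L}}[\xi]$ and $\gamma^2\phi_\pm^{S_{\mathrm L}}[\xi]$, plus \eqref{eq:230115} and the short-range bound \eqref{eq:shortrange}, one gets $\psi_\pm^{S_{\mathrm L}}[\xi]\in L^2_s$ for some $s>1/2$, hence in $\mathcal B$; a contribution supported where $\chi_1$ or $\chi$ varies is compactly supported and harmless.

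Next I would prove \eqref{eq:23062916}. By \eqref{eq:230126} the function $\phi_\pm^{S_{\mathrm L}}[\xi]$ lies in $\mathcal B^*$, solves $(H-\lambda)\phi_\pm^{S_{\mathrm L}}[\xi]=\psi_\pm^{S_{\mathrm L}}[\xi]$ by definition, and satisfies the outgoing/incoming radiation condition $\gamma_\|\phi_\pm^{S_{\mathrm L}}[\xi]\in\mathcal B^*_0$. By the Sommerfeld uniqueness Proposition~\ref{prop:13.9.9.8.23} (using Remark~\ref{remark:construction-general}~\ref{item:1re}, so that the $\gamma_\|$ built from $S_{\mathrm L}$ is the correct one), these three properties characterize $R(\lambda\pm\mathrm i0)\psi_\pm^{S_{\mathrm L}}[\xi]$, giving \eqref{eq:23062916}.

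Finally, for \eqref{eq:23062917} I would argue by duality. Fix $\xi,\eta\in C^\infty(\mathbb S^{d-1})$ and compute $\langle F^\pm(\lambda)^*(\e^{\mp\i\Theta_{\mathrm L}}\xi),\psi\rangle=\langle \e^{\mp\i\Theta_{\mathrm L}}\xi,F^\pm(\lambda)\psi\rangle_{\mathcal G}$ for $\psi\in C^\infty_{\mathrm c}$, and identify it using \eqref{eq:diag90} and the asymptotic characterization \eqref{eq:limFb}. The point is that $F^\pm(\lambda)$ reads off the $\phi_\pm^{S_{\mathrm L}}$-component of $R(\lambda\pm\i0)(1-(V_{\mathrm S}+q)R(\lambda\pm\i0))\psi=R(\lambda\pm\i0)\psi$, while the pairing $\langle \xi, F^\pm(\lambda)\psi\rangle_{\mathcal G}$ can be evaluated by the usual Green's-formula/boundary-pairing argument: pair $(H-\lambda)R(\lambda\pm\i0)\psi=\psi$ against $\phi^{S_{\mathrm L}}_\mp[\xi]$ and conversely pair $(H-\lambda)\phi^{S_{\mathrm L}}_\mp[\xi]=\psi^{S_{\mathrm L}}_\mp[\xi]$ against $R(\lambda\pm\i0)\psi$; subtracting, the bulk terms cancel and what remains is a surface term at infinity, which by the oscillatory asymptotics \eqref{eq:limFb}, the Wronskian-type computation of $\int_{\mathbb S^{d-1}}$ of the product of the $\e^{\pm\i S_{\mathrm L}}$ and $\e^{\mp\i S_{\mathrm L}}$ factors, and Lemma~\ref{lem:230107}/Lemma~\ref{lem:2301011702}, evaluates to a constant multiple ($\pm 2\pi\i$) of $\langle \e^{\mp\i\Theta_{\mathrm L}}\xi, F^\pm(\lambda)\psi\rangle_{\mathcal G}$. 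Rearranging gives the representation \eqref{eq:23062917} with $R(\lambda\mp\i0)\psi^{S_{\mathrm L}}_\pm[\xi]$ appearing as the `incoming/outgoing' companion of $\phi^{S_{\mathrm L}}_\pm[\xi]$, using \eqref{eq:23062916} for the opposite sign only where needed. The main obstacle I anticipate is the bookkeeping in this last boundary-pairing computation: one must control the precise constant and phase, ensure all the error terms (from the cut-offs $\chi_1,\chi_R$, from $V_{\mathrm S}+q$, from the difference between the ordinary spherical measure and the eikonal surface measure, and from $\Theta_{\mathrm L}$ versus $\Sigma_{\mathrm L}$) genuinely go to zero along the spheres $\{S_{\mathrm L}=T\}$ as $T\to\infty$, and invoke the strong radiation bounds of Theorem~\ref{thm:proof-strong-bound} together with Theorem~\ref{thm:221105} to justify passing to the limit. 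The cleanest route is probably to carry out the limit $T\to\infty$ of $\langle\eta_T'\cdot(\text{something})\rangle$ exactly as in Step~III of the proof of Proposition~\ref{prop:22121414}, now with the roles of $\phi$ and the quasi-mode interchanged.
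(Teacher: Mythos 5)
Your proposal follows essentially the same route as the paper: \eqref{eq:230126} by direct computation exploiting the eikonal equation and the $|x|^{-(d-1)/2}$ normalization, \eqref{eq:23062916} by the Sommerfeld uniqueness of Proposition~\ref{prop:13.9.9.8.23}, and \eqref{eq:23062917} by duality plus the commutator/boundary-pairing limit $\lim_{T\to\infty}\langle R(\lambda\pm\i 0)\psi,[H-\lambda,1-\chi_T]\phi_\pm^{S_{\mathrm L}}[\xi]\rangle$, which is exactly what the paper does. Two small points of bookkeeping: for $\psi_\pm^{S_{\mathrm L}}[\xi]\in\mathcal B$ the decay $\gamma_\|\phi_\pm^{S_{\mathrm L}}[\xi]=\vO(r^{-(d-1)/2-1-\sigma})$ comes not merely from the near-radiality of $\nabla S_{\mathrm L}$ but from the explicit cancellation of the $\vO(r^{-1})$ terms $(\nabla S_{\mathrm L})\cdot p\,(r^{-(d-1)/2}\xi)$ against $-\tfrac{\i}{2}(\Delta S_{\mathrm L})r^{-(d-1)/2}\xi$ (membership in $\mathcal B^*_0$ alone would not suffice), and in the boundary pairing the non-oscillating product is obtained by pairing $R(\lambda\pm\i 0)\psi$ against $\phi_\pm^{S_{\mathrm L}}[\xi]$ (same sign), not $\phi_\mp^{S_{\mathrm L}}[\xi]$.
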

\begin{proof}
The first inclusion from \eqref{eq:230126} is obvious. 
To prove the last one 
we use \eqref{eq:2212116} to rewrite it for $|x|\ge 2R$ as 
\begin{align*}
\psi_\pm^{S_{\mathrm L}}[\xi]
&=
\tfrac{(2\pi)^{1/2}}{2(2\lambda)^{1/4}}
r^{-(d-1)/2}\e^{\pm\i S_{\mathrm L}}
\Bigl(
p^2\pm 2(\nabla S_{\mathrm L})\cdot p
+\mathrm i(d-1)r^{-1}(\nabla r)\cdot p
\mp\mathrm i(\Delta S_{\mathrm L})
\\&\qquad\qquad\qquad\quad
\pm \mathrm i(d-1)r^{-1}(\nabla r)\cdot(\nabla S_{\mathrm L})
+\tfrac{(d-1)(d-3)}4r^{-2}
+2(V_{\mathrm S}+q)\Bigr)\xi.
\end{align*}
Noting that
\begin{align*}
(\nabla S_{\mathrm L})\cdot p
&=
\bigl((\nabla S_{\mathrm L})\cdot(\nabla r)\bigr)(\nabla r)\cdot p
+{\bigl(\nabla (S_{\mathrm L}-\sqrt{2\lambda}r)\bigr)}\cdot \bigl(1-(\nabla r)\otimes(\nabla r)\bigr)p,\\
(\nabla r)\cdot p \xi&=0,\quad p \xi=\vO(r^{-1} )
\end{align*}
and using Theorem~\ref{thm:main result2}, 
we obtain the last inclusion of \eqref{eq:230126}. 
The second one can be verified similarly. 

Now \eqref{eq:23062916} follows from \eqref{eq:230126} 
and Proposition~\ref{prop:13.9.9.8.23}, 
and it remains to verify \eqref{eq:23062917}. 
We can write for any $\psi\in C^\infty_{\mathrm c}(\mathbb R^d)$ 
\begin{align*}
\bigl\langle \psi,F^\pm(\lambda)^*\bigl(\mathrm e^{\mp\mathrm i\Theta_{\mathrm L}(\lambda,\cdot)}\xi\bigr)\bigr\rangle
&=
\bigl\langle F^\pm(\lambda)\psi,\mathrm e^{\mp\mathrm i\Theta_{\mathrm L}(\lambda,\cdot)}\xi\bigr\rangle_{\mathcal G}
\\&
=
\tfrac{(2\lambda)^{1/2}}{2\pi}\lim_{T\to\infty}
\bigl\langle R(\lambda\pm\mathrm i0)\psi,\chi_T'\phi_\pm^{S_{\mathrm L}}[\xi]\bigr\rangle
,
\end{align*}
where $\chi_T'=T^{-1}\chi'(|x|/T)$, cf. \eqref{eq:230211}. 
We use Theorem~\ref{thm:main result2} to proceed as 
\begin{align*}
\bigl\langle \psi,F^\pm(\lambda)^*\bigl(\mathrm e^{\mp\mathrm i\Theta_{\mathrm L}(\lambda,\cdot)}\xi\bigr)\bigr\rangle
&=
\pm \tfrac{1}{2\pi}\lim_{T\to\infty}
\bigl\langle R(\lambda\pm\mathrm i0)\psi,\mathop{\mathrm{Re}}((\nabla \chi_T)\cdot p)\phi_\pm^{S_{\mathrm L}}[\xi]\bigr\rangle
\\&=
\pm \tfrac{1}{2\pi\mathrm i}\lim_{T\to\infty}
\bigl\langle R(\lambda\pm\mathrm i0)\psi,[H-\lambda,1-\chi_T]\phi_\pm^{S_{\mathrm L}}[\xi]\bigr\rangle
\\&=
\pm \tfrac{1}{2\pi\mathrm i}\bigl\langle \psi,\phi_\pm^{S_{\mathrm L}}[\xi]-R(\lambda\mp\mathrm i0)\psi_\pm^{S_{\mathrm L}}[\xi]\bigr\rangle
.
\end{align*} 
This implies \eqref{eq:23062917}. 
\end{proof}

\begin{proof}[Proof of Theorem~\ref{thm:comp-gener-four} (\ref{item:23062112}) and (\ref{item:23062113})] 
To prove the first identity of (\ref{item:23062112}) it suffices to show that 
for any $\lambda\in I$, $\xi\in\mathcal G$ and $\psi\in C^\infty_{\mathrm c}(\mathbb R^d)$
\begin{align}
\langle (H-\lambda)F^\pm(\lambda)^*\xi,\psi\rangle 
=0
.
\label{eq:230624}
\end{align}
However, by Proposition~\ref{prop:13.9.9.8.23} we have 
\[
R(\lambda\pm\mathrm i0)(H-\lambda)\psi=\psi\in C^\infty_{\mathrm c}(\mathbb R^d)
,
\ \ \text{so that }\ F^\pm(\lambda)(H-\lambda)\psi=0.
\]
Thus \eqref{eq:230624} follows. 
On the other hand, by {\eqref{eq:diag90}, 
Proposition~\ref{prop:22121414}} and \eqref{eq:2301153} we have for any $\psi\in\mathcal B$ 
\begin{align}
\begin{split}
\|F^\pm(\lambda)\psi\|_{\mathcal G}^2
&=
\bigl\langle (1-(V_{\mathrm S}+q)R(\lambda\pm\mathrm i0))\psi,
\\&\qquad 
\delta(H_{\mathrm L}-\lambda)(1-(V_{\mathrm S}+q)R(\lambda\pm\mathrm i0))\psi\big\rangle
\\&=
\langle \psi,\delta(H-\lambda)\psi\rangle.
\end{split}
\label{eq:23062918}
\end{align}
This implies the second identity of (\ref{item:23062112}).

To prove (\ref{item:23062113}) we use \eqref{eq:23062916}.
In fact, along with \eqref{eq:221206} and Lemma~\ref{lemma:uniq-asympt-norm}, 
it says for any $(\lambda,\xi)\in I\times C^\infty(\mathbb S^{d-1})$ 
\begin{align}
F^\pm(\lambda)\psi_\pm^{S_{\mathrm L}}[\xi]=\mathrm e^{\mp\mathrm i\Theta_{\mathrm L}(\lambda,\cdot)}\xi
\label{eq:230630}
\end{align}
or
$\mathrm e^{\mp\mathrm i\Theta_{\mathrm L}(\lambda,\cdot)} C^\infty(\mathbb S^{d-1})
\subseteq 
F^\pm(\lambda)\mathcal B\subseteq\mathcal G$. 
Hence we obtain the assertion (\ref{item:23062113}). 
\end{proof}

\begin{proof}[Proof of Corollary~\ref{cor:230623}]
The existence of the wave operators $F^\pm(\lambda)$ is already shown in 
Theorem~\ref{thm:comp-gener-four}. 
Next by Theorem~\ref{thm:comp-gener-four} (\ref{item:23062112}) and (\ref{item:23062113}) 
the scattering matrix $\mathsf S(\lambda)$ is defined at least on a dense subspace of $\mathcal G$,
and in fact it preserves the norm and maps onto a dense set. 
Therefore $\mathsf S(\lambda)$ extends uniquely to a unitary operator on $\mathcal G$. 

Finally we are left with the strong continuity. 
By \eqref{eq:scattering_matrix} and \eqref{eq:230630}
 it follows that for any $\eta\in C^\infty(\mathbb S^{d-1})$ 
\begin{equation}\label{eq:Sform}
 F^+(\lambda)\psi_-^{S_{\mathrm L}}[\eta]
=
\mathsf S(\lambda)F^-(\lambda)\psi_-^{S_{\mathrm L}}[\eta]
=
\mathsf S(\lambda)\mathrm e^{\mathrm i\Theta_{\mathrm L}(\lambda,\cdot)}\eta.
\end{equation}

Note the above left-hand side is continuous in $\lambda\in I$,
and so is the right-hand side. 
Now we fix any $\xi \in \mathcal G$ and $\lambda\in I$, 
and for any $\epsilon>0$ choose $\eta\in C^\infty(\mathbb S^{d-1})$ and $\delta>0$
such that for any $\mu\in (\lambda-\delta, \lambda+\delta)$
\[
\bigl\|\xi -\mathrm e^{\mathrm i\Theta_{\mathrm L}(\mu,\cdot)}\eta\bigr\|_{\mathcal G}<\epsilon. 
\]
Then by the unitarity of the scattering matrix, for any $\mu\in (\lambda-\delta, \lambda+\delta)$ 
\begin{align*}
\|\mathsf S(\lambda)\xi-\mathsf S(\mu)\xi\|_{\mathcal G}
&\le 
\bigl\|\mathsf S(\lambda)\bigl(\xi-\mathrm e^{\mathrm i\Theta_{\mathrm L}(\lambda,\cdot)}\eta\bigr)\bigr\|_{\mathcal G}
+\bigl\|\mathsf S(\lambda)\mathrm e^{\mathrm i\Theta_{\mathrm L}(\lambda,\cdot)}\eta
-\mathsf S(\mu)\mathrm e^{\mathrm i\Theta_{\mathrm L}(\mu,\cdot)}\eta\bigr\|_{\mathcal G}
\\&\phantom{{}={}}{}
+\bigl\|\mathsf S(\mu)\bigl(\mathrm e^{\mathrm i\Theta_{\mathrm L}(\mu,\cdot)}\eta-\xi\bigr)\bigr\|_{\mathcal G}
\\&
<
2\epsilon
+\bigl\|\mathsf S(\lambda)\mathrm e^{\mathrm i\Theta_{\mathrm L}(\lambda,\cdot)}\eta
-\mathsf S(\mu)\mathrm e^{\mathrm i\Theta_{\mathrm L}(\mu,\cdot)}\eta\bigr\|_{\mathcal G}.
\end{align*}
By letting $\delta>0$ be smaller if necessary the above right-hand side is bounded by $3\epsilon$. 
Thus we obtain the desired strong continuity. 
\end{proof}

\subsection{Generalized eigenfunctions}\label{subsec:23032917}
We next prove Theorem~\ref{thm:char-gener-eigenf-1}. 
 
\begin{proof}[Proof of Theorem~\ref{thm:char-gener-eigenf-1}]
\textit{Step I.}\ 
We first show that, if $\phi\in \mathcal E_\lambda$ and $\xi_\pm\in\mathcal G$ satisfy \eqref{eq:gen1}, 
then \eqref{eq:aEigenf2w} holds. 
For that we first compute 
 \begin{align*}
\lim_{m\to\infty}2^{-m}\|1_m\phi\|^2
&=
\lim_{m\to\infty}2^{-m}\bigl\|1_m\bigl(\phi_+^S[\xi_+]-\phi_-^S[\xi_-]\bigr)\bigr\|^2
\\&
=\tfrac{\pi}{(2\lambda)^{1/2}}\biggl(\|\xi_+\|_{\vG}^2+\|\xi_-\|_{\mathcal G}^2
\\&\phantom{{}={}}{}
-\lim_{m\to\infty}
2^{2-m}\mathop{\mathrm{Re}}
\int_{[2^{m-1},2^m)\times \mathbb S^{d-1}}
\mathrm e^{2\mathrm iS(\lambda,r\omega)}\overline{\xi_-(\omega)}\xi_+(\omega)
\,\mathrm dr \mathrm dA(\omega)\biggr).
 \end{align*} 
 Here the last limit vanishes. In fact, we can integrate by parts as 
 \begin{align*}
\int_{2^{m-1}}^{2^m}\mathrm e^{2\mathrm iS(\lambda,r\omega)}\,\mathrm dr
&=
\tfrac1{2\mathrm i}(\partial_rS(\lambda,r\omega))^{-1}\mathrm e^{2\mathrm iS(\lambda,r\omega)}\Big|_{2^{m-1}}^{2^m}
\\&\phantom{{}={}}{}
+\tfrac1{2\mathrm i}\int_{2^{m-1}}^{2^m}(\partial_rS(\lambda,r\omega))^{-2}(\partial_r^2S(\lambda,r\omega))\mathrm e^{2\mathrm iS(\lambda,r\omega)}\,\mathrm dr, 
 \end{align*}
 and it does not contribute to the limit by the conditions of Theorem~\ref{thm:comp-gener-four}. 
Thus we obtain 
 \begin{align}\label{eq:aEigenf2wb}
 \|\xi_+\|_{\vG}^2+\|\xi_-\|_{\mathcal
 G}^2&=
\tfrac{(2\lambda)^{1/2}}{\pi} \lim_{m\to\infty}2^{-m}\|1_m\phi\|^2
.
 \end{align}
On the other hand, proceeding as in the proof of
Proposition~\ref{prop:14.5.14.3.27} and using in the last step the
integration by parts from above, 
we compute 
\begin{align*}
 0
&
=\lim_{T\to \infty}\bigl\langle\phi,\i[H-\lambda, 1-\chi_T]\phi\bigr\rangle
\\&
=-\lim_{T\to \infty}\bigl\langle\phi,\mathop{\mathrm{Re}}((\nabla r)\cdot p)\chi_T'\phi\bigr\rangle
\\&
=-\lim_{T\to \infty}\bigl\langle\phi_+^S[\xi_+]-\phi_-^S[\xi_-],\mathop{\mathrm{Re}}((\nabla r)\cdot p)\chi_T'\phi\bigr\rangle
\\&
=-(2\lambda)^{1/2}\lim_{T\to \infty}\bigl\langle\phi_+^S[\xi_+]+\phi_-^S[\xi_-],\chi_T'\phi\bigr\rangle
\\&
=2\pi\bigl(\|\xi_-\|_{\vG}^2-\|\xi_+\|_{\mathcal G}^2\bigr). 
\end{align*}
In combination with \eqref{eq:aEigenf2wb} this verifies \eqref{eq:aEigenf2w}.

\smallskip
\noindent
\textit{Step II.}\ 
Here we prove the uniqueness asserted in \ref{item:14.5.13.5.40}. 
Suppose $\phi'\in \mathcal E_\lambda$ and $\xi_\pm'\in\mathcal G$ also satisfy \eqref{eq:gen1}. 
Then we have 
\begin{align}
(\phi-\phi') -\phi_+^S[\xi_+-\xi_+']+\phi_-^S[\xi_--\xi_-']\in
 \vB_0^*.
\label{eq:230702}
\end{align}
If $\phi=\phi'$, it follows that $\xi_\pm=\xi_\pm'$ by the result of Step I.
On the other hand, if either of $\xi_\pm=\xi_\pm'$ hold, 
then we have $\xi_\mp=\xi_\mp'$, respectively, again by the result of Step I. 
This and \eqref{eq:230702} imply $\phi-\phi'\in\mathcal B_0$, 
but then it follows that $\phi-\phi'=0$ thanks to Remark \ref{remark:construction-general} \ref{item:2re}.
Thus we obtain the uniqueness.

\smallskip
\noindent
\textit{Step III.}\ 
{Here we} complete the assertions \ref{item:14.5.13.5.40} and \ref{item:14.5.13.5.41}. 
Note for any $\xi\in\mathcal G$ 
\begin{align}\label{eq:brep}
F^\pm(\lambda)^*\xi\mp\tfrac1{2\pi\mathrm i}\bigl(\phi^S_\pm[\xi]-\phi^S_\mp[\mathsf S(\lambda)^{\mp 1}\xi]\bigr)
\in\mathcal B^*_0
.
\end{align} 
In fact, by \eqref{eq:23062917}, \eqref{eq:221206}, 
\eqref{eq:scattering_matrix} and \eqref{eq:230630} (the latter applied as in \eqref{eq:Sform})
we have for any $\xi\in C^\infty(\mathbb S^{d-1})$ 
\begin{align*}
F^\pm(\lambda)^*\bigl(\mathrm e^{\mp\mathrm i\Theta_{\mathrm L}(\lambda,\cdot)}\xi\bigr)
\mp \tfrac{1}{2\pi\mathrm i}
\bigl(\phi^{S}_\pm\bigl[\mathrm e^{\mp\mathrm i\Theta_{\mathrm L}(\lambda,\cdot)}\xi\bigr]
-\phi^{S}_\mp\bigl[\mathsf S(\lambda)^{\mp1} \mathrm e^{\mp\mathrm i\Theta_{\mathrm L}(\lambda,\cdot)}\xi\bigr]\bigr)
\in \mathcal B^*_0
,
\end{align*} 
and then by density of $\mathrm e^{\mp\mathrm i\Theta_{\mathrm L}(\lambda,\cdot)}C^\infty_{\mathrm c}(\mathbb R^d)\subseteq\mathcal G$ 
and the continuity of $F^\pm(\lambda)^*$, $\phi^S_\pm$ and $\mathsf S(\lambda)^{\pm1}$ 
we obtain \eqref{eq:brep}. 
Now, if either of $\xi_\pm\in\mathcal G$ is given,
then the vectors $\phi\in\mathcal E_\lambda$ and $\xi_\mp\in \mathcal
G$ are given by \eqref{eq:aEigenfw}, respectively, and 
obviously satisfy \eqref{eq:gen1} thanks to \eqref{eq:brep}. 
By the uniqueness from Step II we are done with the case where either of 
$\xi_\pm\in\mathcal G$ is given first. 

Next, let $\phi\in\mathcal E_\lambda$ be given first. 
By the above arguments and Step II it suffices to show there exist $\xi_\pm\in \vG$ 
satisfying $\phi=2\pi\mathrm iF^\pm(\lambda)^*\xi_\pm$. 
For each $T\ge 1$ we can find $\xi_{\pm,T}\in \mathcal G$ 
such that for any $\eta\in\mathcal G$ 
\[\langle \eta, \xi_{\pm,T}\rangle_{\mathcal G} 
=\pm\tfrac{(2\lambda)^{1/2}}{2\pi}\bigl\langle \phi^{S}_\pm[\eta],\chi_T'\phi\bigr\rangle. \]
Obviously such $\xi_{\pm,T}\in\mathcal G$ are uniformly bounded for $T\ge 1$, and 
we can choose weakly convergent subsequences $(\xi_{\pm,T_n})_{n\in\mathbb N}$, cf.\ \cite[Theorem~1, p.~126]{Yo}. 
Denote the weak limits by $\xi_\pm\in\vG$.
Then for any $\psi\in C^\infty_{\mathrm c}(\mathbb R^d)$ we compute 
\begin{align*}
\langle \psi,F^\pm(\lambda)^*\xi_\pm\rangle
&=
\langle F^\pm(\lambda)\psi,\xi_\pm\rangle_{\mathcal G}
\\&
=
\pm\tfrac{(2\lambda)^{1/2}}{2\pi}\lim_{n\to\infty}\bigl\langle \phi^S_{\pm}[F^\pm(\lambda)\psi],\chi_{T_n}'\phi\bigr\rangle
\\&
=
\pm\tfrac{(2\lambda)^{1/2}}{2\pi}\lim_{n\to\infty}\bigl\langle R(\lambda\pm\mathrm i0)\psi,\chi_{T_n}'\phi\bigr\rangle.
\end{align*}
Then, as in the proof of Proposition~\ref{prop:14.5.14.3.27}, 
we use Proposition~\ref{prop:13.9.9.8.23}, Theorem~\ref{thm:main
 result2} and the assumption $\phi\in\mathcal E_\lambda$ to proceed as 
\begin{align*}
\langle \psi,F^\pm(\lambda)^*\xi_\pm\rangle
&=
\tfrac1{2\pi}\lim_{n\to\infty}
\bigl\langle \mathop{\mathrm{Re}}((\nabla r)\cdot p)R(\lambda\pm\mathrm i0)\psi,\chi_{T_n}'\phi\bigr\rangle
\\&=
\tfrac1{2\pi\mathrm i}\lim_{n\to\infty}
\bigl\langle R(\lambda\pm\mathrm i0)\psi,[H-\lambda,1-\chi_{T_n}]\phi\bigr\rangle
\\&=
\tfrac1{2\pi\mathrm i}\langle \psi,\phi\rangle
.
\end{align*}
Thus we obtain that $\phi=2\pi\mathrm iF^\pm(\lambda)^*\xi_\pm$. 
The assertions \ref{item:14.5.13.5.40} and \ref{item:14.5.13.5.41} are done.

\smallskip
\noindent
\textit{Step IV.}\ 
Here we prove \ref{item:14.5.13.5.42}. 
Note that the identities \eqref{eq:aEigenf2w} are already established
in Step I. 
Then in combination with \ref{item:14.5.13.5.40} and
\ref{item:14.5.13.5.41} we see $F^\pm(\lambda)^*\colon\vG\to \vE_\lambda\subseteq \mathcal B^*$
 are indeed bi-continuous. Hence we obtain \ref{item:14.5.13.5.42}.

\smallskip
\noindent
\textit{Step V.}\ 
Finally we prove \ref{item:14.5.14.4.17}. 
Since $F^\pm(\lambda)^*$ are injective with closed ranges
in $\mathcal B^*$ by \ref{item:14.5.13.5.42},
Theorem~\ref{thm:comp-gener-four} (\ref{item:23062113}) and Banach's closed range theorem \cite[Theorem p.~205]{Yo} 
imply that the ranges of $F^\pm(\lambda)$ coincide with $\mathcal G$. 
This along with \ref{item:14.5.13.5.42} and
Theorem~\ref{thm:comp-gener-four} (\ref{item:23062112}) in turn implies 
that the range of $\delta(H-\lambda)$
 coincides with $\mathcal E_\lambda$. 
Hence we are done.
\end{proof}

\subsection{Generalized Fourier transforms}\label{subsec:23032918}

We close this section with the proof of Theorem~\ref{thm:221207},
which is 
 rather routine thanks to Theorem~\ref{thm:comp-gener-four}, see also \cite{Sk, IS2}. 

\begin{proof}[Proof of Theorem~\ref{thm:221207}]
\textit{Step I.}\ 
{
We may let $I$ be compact. 
In fact, if $I$ is unbounded, decompose 
\[
I=\bigcup_{n\in \mathbb N}[\lambda_n,\lambda_{n+1}];\quad 
\lambda_1<\lambda_2<\dots<\lambda_n\to\infty\ \ \text{as }n\to\infty,
\]
and, supposing that the assertion holds true for compact intervals, we define 
\[
\mathcal F_I=\bigoplus_{n\in\mathbb N}\mathcal F_{[\lambda_n,\lambda_{n+1}]}\colon \mathcal H_I\to\widetilde{\mathcal H}_I.
\]
Then the assertion for $I$ follows 
due to absence of positive eigenvalues for $H_I$ and $M_\lambda$ and closedness of $H_I$ and $M_\lambda$.
Thus we let $I$ be compact in the following. 
}

\smallskip
\noindent
\textit{Step II.}\ 
Let us construct the 
isometries $\mathcal F^\pm\colon\mathcal H_I\to\widetilde{\mathcal H}_I$.
By Theorem~\ref{thm:comp-gener-four} and Stone's formula \cite[Theorem VII.13]{RS} 
it follows that for any $\psi\in \mathcal B$ 
\begin{align}
\|\mathcal F_0^\pm \psi\|_{\widetilde{\mathcal H}_I}^2
=\int_I\|F^\pm(\lambda)\psi\|_{\mathcal G}^2\,\mathrm d\lambda
=\int_I\langle \psi,\delta(H-\lambda)\psi\rangle\,\mathrm d\lambda
=\|P_{H}(I)\psi\|^2_{\mathcal H_I}
.
\label{eq:230101}
\end{align}
Since $\mathcal B\subseteq\mathcal H$ is dense, 
also $P_{H}(I)\mathcal B\subseteq\mathcal H_I$ is dense.
Thus for any $\psi\in \mathcal H_I$ we can choose 
a sequence $(\psi_n)_{n\in\mathbb N}$ on $\mathcal B$ such that 
$P_{H}(I)\psi_n\to \psi$ in $\mathcal H_I$, and then we can define 
\[\mathcal F^\pm \psi=\lim_{n\to\infty}\mathcal F_0^\pm \psi_n. \]
By \eqref{eq:230101} these limits are well-defined and certainly
define isometries as wanted. 

To be used below we note that by construction for any $\psi\in\mathcal B$
\[\mathcal F^\pm P_H(I)\psi=\mathcal F^\pm_0\psi\in C(I;\mathcal G). \]

\smallskip
\noindent
\textit{Step III.}\ 
Next we show $\mathcal F^\pm H_I=M_\lambda\mathcal F^\pm$. 
Note all the involved operators are bounded. 
Take any $\psi\in C^\infty_{\mathrm c}(\mathbb R^d)$ and $\lambda\in I$. 
If we then set $\psi'=(H-\lambda)\psi$, it follows
from Proposition~\ref{prop:13.9.9.8.23} that 
$\psi=R(\lambda\pm\mathrm i0)\psi'$. Consequently 
\begin{align*}
R(\lambda\pm\mathrm i0)H\psi
=
\lambda R(\lambda\pm\mathrm i0)\psi+\psi.
\end{align*}
This implies that for any $\psi\in C^\infty_{\mathrm c}(\mathbb R^d)$
\begin{align*}
\mathcal F_0^\pm H\psi=M_\lambda\mathcal F_0^\pm\psi.
\end{align*}
Similarly to Step I, for any $\psi\in \mathcal H_I$ we can choose a sequence 
$(\psi_n)_{n\in\mathbb N}$ in $C^\infty_{\mathrm c}(\mathbb R^d)$ 
such that $P_H(I)\psi_n\to \psi$, hence $P_H(I)H\psi_n\to H_I\psi$, in $\mathcal H_I$. 
Then it follows that 
\begin{align*}
\mathcal F^\pm H_I\psi
=\lim_{n\to\infty} \mathcal F_0^\pm H\psi_n
=\lim_{n\to\infty} M_\lambda\mathcal F_0^\pm \psi_n
=M_\lambda\mathcal F^\pm \psi.
\end{align*}
The claim is verified.

\smallskip
\noindent
\textit{Step IV.}\ 
In order to complete (\ref{item:230205a}) and (\ref{item:230205b}) it remains to show 
$\mathcal F^\pm\colon\mathcal H_I\to\widetilde{\mathcal H}_I$ are surjective. 
It suffices to show that {the ranges $\mathcal F^\pm\mathcal H_{I}\subseteq
\widetilde{\mathcal H}_{I}$ are dense.} 
Take any $\Xi\in C(I;\mathcal G)\subseteq\widetilde{\mathcal H}_{I}$,
and fix any $\epsilon>0$.
By the compactness, the result from Step~II and
Theorem~\ref{thm:char-gener-eigenf-1} \ref{item:14.5.14.4.17} 
we can find a finite open covering $I\subseteq U_1\cup\dots\cup U_n$
and $\psi_1,\dots,\psi_n\in P_H(I)\mathcal B\subseteq \mathcal H_{I}$ 
such that for any $i=1,\dots,n$ and $\lambda\in I\cap U_i$
\[\bigl\|\Xi(\lambda)-(\mathcal F^\pm\psi_i)(\lambda)\bigr\|_{\mathcal G}<\epsilon. \]
If we let $\{\chi_i\}_i$ be a partition of unity subordinate to $\{U_i\}_i$,
then for any $\lambda\in I$
\[\norm[\Big]{\Xi(\lambda)
-\sum_{i=1}^n\chi_i(\lambda)(\mathcal F^\pm\psi_i)(\lambda)}_{\mathcal G}
<\epsilon. \]
Since $\mathcal F^\pm\psi_i\in C(I;\mathcal G)$, 
we can replace the above $\chi_i$ by some polynomials $\widetilde\chi_i$
due the Weierstrass approximation theorem with an additional $\epsilon$ error. 
Then set 
\[\psi =\sum_{i=1}^n\widetilde\chi_i(H)\psi_i\in\mathcal H_{I},\]
and we obtain by the result from Step III that for any $\lambda\in I$
\[\left\|\Xi(\lambda)-(\mathcal F^\pm\psi)(\lambda)\right\|_{\mathcal G}<2\epsilon. \]
Hence (\ref{item:230205a}) and (\ref{item:230205b}) are done.

\smallskip
\noindent
\textit{Step V.}\ 
The assertion (\ref{item:230205}) is clear from the definition of the generalized Fourier transforms,
see e.g.\ \eqref{eq:diag90}, and Lemma~\ref{lemma:uniq-asympt-norm} with $U=\mathbb S^{d-1}$. 
We are done. 
\end{proof}

\section{Time-dependent scattering theory}\label{sec:Time-dependent theory}

In this section we discuss the time-dependent scattering theory, 
proving Theorems~\ref{thm:2307032}, \ref{thm:230703} and \ref{thm:time-depend-theory2}.
Our arguments are heavily dependent on the stationary theory from the
previous sections in parallel to 
 \cite{IS3}, however for $2$-admissible
potentials the low degree of smoothness entails some complication. 
In Subsection~\ref{subsec:23070422} we prove Theorem~\ref{thm:2307032} by rigorously justifying the Legendre transform, 
and then we compare the asymptotics of solutions to the time-dependent eikonal equation. 
With these results in place we show Theorem~\ref{thm:230703} in
Subsection~\ref{subsec:23070423}, hence obtaining 
the existence and covariance of the time-dependent wave operators. 
Finally Theorem~\ref{thm:time-depend-theory2}, in particular including asymptotic completeness, 
is shown by comparing the time-dependent wave operators with the generalized Fourier transforms.

\subsection{Time-dependent eikonal equation}\label{subsec:23070422}

Here we prove Theorem~\ref{thm:2307032}. 
We justify the Legendre transform, 
investigating its properties. 

\begin{proof}[Proof of Theorem~\ref{thm:2307032}]
Fix $\mu,\mu',I,R$ and $S=(2\lambda)^{1/2}|x|(1+s)$ as in the assertion. 
We first claim, letting $R>0$ be larger if necessary, that for each $(t,x)\in\Omega_\mu$ 
we can find a unique critical point of the function \eqref{eq:230705} in the
variable $\lambda
\in I$. 
In fact, we can compute 
\begin{align}
\begin{split}
\partial_\lambda \widetilde K
&=(2\lambda)^{-1/2}|x|\bigl(1+s+2\lambda (\partial_\lambda s)\bigr)-t
,
\\
\partial_\lambda^2 \widetilde K
&=
-(2\lambda)^{-3/2}|x|\bigl(1+s-4\lambda(\partial_\lambda s)-4\lambda^2 (\partial_\lambda^2 s)\bigr)
,
\end{split}
\label{eq:230724}
\end{align}
and thus by Corollary~\ref{cor:epsSmall} for sufficiently large $R>0$ 
\begin{align*}
(\partial_\lambda \widetilde K)(\mu'^2/2,t,x)
\ge (\tfrac{\mu+\mu'}2)^{-1}|x|-t>0,\quad 
\lim_{\lambda\to\infty} (\partial_\lambda \widetilde K)(\lambda,t,x)=-t<0,
\end{align*}
and for any $\lambda\in I$
\begin{align*}
(\partial_\lambda^2 \widetilde K)(\lambda,t,x)\le -\tfrac12(2\lambda)^{-3/2}|x|<0
.
\end{align*}
This implies that there uniquely exists $\lambda_{\mathrm c}=\lambda_{\mathrm c}(t,x)\in I$ such that 
 \begin{equation}\label{eq:implicit}
(\partial_\lambda \widetilde K)(\lambda_{\mathrm c},t,x)
=(2\lambda_{\mathrm c})^{-1/2}|x|\bigl(1+s(\lambda_{\mathrm c},x)+2\lambda_{\mathrm c}(\partial_\lambda s)(\lambda_{\mathrm c},x)\bigr)-t
=0
.
 \end{equation}
By the implicit function theorem the function $\lambda_{\mathrm c}\in C^{l-1}(\Omega_{\mu})$. 
Now we set 
\begin{equation}\label{eq:Kdefs}
 K=\widetilde K(\lambda_\c,\cdot,\cdot), 
\end{equation} 
and then we can easily see by \eqref{eq:implicit} that 
\begin{align}
\partial_tK&=-\lambda_\c,\quad 
\nabla_x K=(\nabla_x S)(\lambda_{\mathrm c},\cdot),\label{eq:eik222}
\end{align} 
which verifies $K\in C^l(\Omega_\mu)$ and \eqref{eq:eik223}. We have
proven Theorem~\ref{thm:2307032} \eqref{item:22120919ax}.

Next we prove the assertion (\ref{item:22120919dx}). 
For that we claim for any $k+|\alpha|\le 1$ 
\begin{align}
\bigl|\partial_t^k\partial_x^\alpha \bigl(\lambda_{\mathrm c}-\tfrac{x^2}{2t^2}\bigr)\bigr|
\le C_1t^{-k}\langle x\rangle^{-\sigma-|\alpha|}
.
\label{eq:23070618}
\end{align}
In fact, note that \eqref{eq:implicit} and
Corollary~\ref{cor:epsSmall} imply (uniformly in
 large $R>0$)
\begin{align}
c_2t^{-2}x^2\le \lambda_{\mathrm c}\le C_2t^{-2}x^2. 
\label{eq:23070613}
\end{align} 
Note also that $\lambda_{\mathrm
 c}=\tfrac{x^2}{2t^2} $ for $|x|\leq R$, so to show the bounds
 \eqref{eq:23070618} and 
 \eqref{eq:errestza2x} we can indeed assume that $|x|> R$.
Then by combining \eqref{eq:implicit}, Theorem~\ref{thm:main result2}, Corollary~\ref{cor:epsSmall} and \eqref{eq:23070613} 
we obtain 
\begin{align}
\bigl|\lambda_{\mathrm c}-\tfrac{x^2}{2t^2}\bigr|
=\tfrac{x^2}{2t^2}\bigl|s+2\lambda_{\mathrm c} (\partial_\lambda s)\bigr|
\bigl|2+s+2\lambda_{\mathrm c} (\partial_\lambda s)\bigr|
\le C_3\langle x\rangle^{-\sigma}
. 
\label{eq:23070617}
\end{align}
This shows the claim for $k+|\alpha|=0$. 
For $k+|\alpha|=1$ we compute the derivatives of $\lambda_{\mathrm c}$ by 
the Leibniz rule applied to \eqref{eq:implicit}, equivalently written as 
\[
1=\tfrac{x^2}{2t^2}\lambda_{\mathrm c}^{-1}(1+s+2\lambda_{\mathrm c}\partial_\lambda s)^2
. 
\]
In fact, we have for $k=1$ and $|\alpha|=0$ 
\begin{align*}
0&=
-\tfrac{x^2}{t^3}\lambda_{\mathrm c}^{-1}(1+s+2\lambda_{\mathrm c}\partial_\lambda s)^2
\\&\phantom{{}={}}{}
-\tfrac{x^2}{2t^2}\bigl(\partial_t\lambda_{\mathrm c}\bigr)
\lambda_{\mathrm c}^{-2}\bigl(1+s-4\lambda_{\mathrm c}\partial_\lambda s-4\lambda_{\mathrm c}^2\partial_\lambda^2 s\bigr)
(1+s+2\lambda_{\mathrm c}\partial_\lambda s)
, 
\end{align*}
and for $k=0$ and $|\alpha| =1$ 
\begin{align*}
0&=
\big(\partial_x^\alpha\tfrac{x^2}{2t^2}\bigr)\lambda_{\mathrm c}^{-1}(1+s+2\lambda_{\mathrm c}\partial_\lambda s)^2
\\&\phantom{{}={}}{}
-\tfrac{x^2}{2t^2}\bigl(\partial_x^\alpha\lambda_{\mathrm c}\bigr)
\lambda_{\mathrm c}^{-2}\bigl(1+s-4\lambda_{\mathrm c}\partial_\lambda s-4\lambda_{\mathrm c}^2\partial_\lambda^2 s\bigr)
(1+s+2\lambda_{\mathrm c}\partial_\lambda s)
\\&\phantom{{}={}}{}
+\tfrac{x^2}{t^2}\lambda_{\mathrm c}^{-1}
\bigl(\partial_x^\alpha s+2\lambda_{\mathrm c}\partial_\lambda\partial_x^\alpha s\bigr)(1+s+2\lambda_{\mathrm c}\partial_\lambda s)
.
\end{align*}
Possibly by taking $R>0$ larger from the beginning (if necessary) and
by using Corollary~\ref{cor:epsSmall}, we can 
for $k=1$ and $|\alpha|=0$ write 
\begin{align*}
\partial_t\lambda_{\mathrm c}
&=
-\tfrac{2}{t}\lambda_{\mathrm c}(1+s+2\lambda_{\mathrm c}\partial_\lambda s)
\bigl(1+s-4\lambda_{\mathrm c}\partial_\lambda s-4\lambda_{\mathrm c}^2\partial_\lambda^2 s\bigr)^{-1}
, 
\end{align*}
and for $k=0$ and $|\alpha| =1$ 
\begin{align*}
\partial_x^\alpha\lambda_{\mathrm c}
&=
\tfrac{2t^2}{x^2}\big(\partial_x^\alpha\tfrac{x^2}{2t^2}\bigr)\lambda_{\mathrm c}(1+s+2\lambda_{\mathrm c}\partial_\lambda s)
\bigl(1+s-4\lambda_{\mathrm c}\partial_\lambda s-4\lambda_{\mathrm c}^2\partial_\lambda^2 s\bigr)^{-1}
\\&\phantom{{}={}}{}
+2\lambda_{\mathrm c}\bigl(\partial_x^\alpha s+2\lambda_{\mathrm c}\partial_\lambda\partial_x^\alpha s\bigr)
\bigl(1+s-4\lambda_{\mathrm c}\partial_\lambda s-4\lambda_{\mathrm c}^2\partial_\lambda^2 s\bigr)^{-1}
.
\end{align*}
The claim \eqref{eq:23070618} follows from the above expressions, \eqref{eq:23070617} and Theorem~\ref{thm:main result2}.

One can verify the assertion (\ref{item:22120919dx}) by \eqref{eq:230705}, \eqref{eq:Kdefs}, \eqref{eq:23070618}, \eqref{eq:eik222} 
and Theorem~\ref{thm:main result2}. 
While the bounds \eqref{eq:errestza2x} follow
 immediately unless $k=0$ and $|\alpha| =2$, the latter case
 requires some other computations. We omit the details of proof for that case.
\end{proof}

We next investigate the asymptotics of general solutions to \eqref{eq:4cb}. 

\begin{lemma}\label{lem:2307172208}
Let $\mu,T,\Omega_{\mu,T}$ and $K$ satisfy the assumption of Theorem~\ref{thm:230703},
and let $K_{\mathrm L}$ be the Legendre transform of the function
$S_{\mathrm L}$ taken from Subsection~\ref{subsec:230329} with 
$I=I_{\mu'}=[\mu'^2/2,\infty)$, $\mu'\in (0,\mu)$, and $l'>1+2/\rho$.
In addition, let $y$ be the flow associated with $S_{\mathrm L}$ 
as in Subsection~\ref{subsec:Generalized Fourier transform for rho>1/2 I}, and set 
\begin{equation*}
\tau(\lambda,t,\theta)=\int_0^t|\nabla S_{\mathrm L}(\lambda,y(\lambda,s,\theta))|^{-2}\,\mathrm ds. 
\end{equation*}
Then the following assertions hold.
\begin{enumerate}[(1)]
\item \label{item:sigma10}
There exists the limit 
\begin{align*}
\Xi_{\mathrm L}(\lambda,\theta)
:=
&\lim_{t\to\infty}\bigl(K(\tau(\lambda,t,\theta),y(\lambda,t,\theta))
-K_{\mathrm L}(\tau(\lambda,t,\theta),y(\lambda,t,\theta))
\bigr)
\end{align*}
taken locally uniformly in $(\lambda,\theta)\in J\times \mathbb S^{d-1}$, where $J=[\mu^2/2,\infty)$.
In particular, 
if $K_1$ also satisfies the assumption of Theorem~\ref{thm:230703},
there exists the limit 
\begin{align*}
\Xi(\lambda,\theta)
:=
\lim_{t\to\infty}\bigl(K_1(\tau(\lambda,t,\theta),y(\lambda,t,\theta))
-K(\tau(\lambda,t,\theta),y(\lambda,t,\theta))
\bigr)
\end{align*}
taken locally uniformly in $(\lambda,\theta)\in J\times \mathbb S^{d-1}$. 
\item \label{item:sigma20} The quantities in \ref{item:sigma10} can {also} be computed
 as the limits
\begin{align*}
\Phi_L(\lambda,\omega)&:=\Xi_L(\lambda,\theta_+(\lambda,\omega))=\lim_{\tau\to \infty}\parb{ K(\tau,
 (2\lambda)^{1/2}\tau\omega)-K_L(\tau, (2\lambda)^{1/2}\tau\omega)},\\
\Phi(\lambda,\omega)&:=\Xi(\lambda,\theta_+(\lambda,\omega))=\lim_{\tau\to \infty}\parb{ K_1(\tau,
 (2\lambda)^{1/2}\tau\omega)-K(\tau, (2\lambda)^{1/2}\tau\omega)},
\end{align*} both taken locally uniformly in $(\lambda,\omega)\in J\times\mathbb
 S^{d-1}$.
\end{enumerate}
\end{lemma}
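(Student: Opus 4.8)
The statement asserts the existence of a radial-type limit for the difference of two Hamilton--Jacobi solutions, in two equivalent parametrizations. The strategy is to reduce everything to the already-established stationary result, Lemma~\ref{lemma:uniq-asympt-norm}, via the Legendre transform. The key algebraic observation is the following. Recall from the proof of Theorem~\ref{thm:2307032} (see \eqref{eq:eik222}, \eqref{eq:Kdefs}) that along the reparametrized orbit $z(\tau)=y(\lambda,t,\theta)$ with $\tau=\tau(\lambda,t,\theta)$ one has $\tfrac{\d }{\d \tau}z=\nabla S_{\mathrm L}$ and, by Remark~\ref{rem:230218}~\ref{item:geom2}, this is a genuine classical orbit of energy $\lambda$ for $H^{\mathrm{cl}}_{\mathrm L}$. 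For the Legendre transform $K_{\mathrm L}=\widetilde K(\lambda_\c,\cdot,\cdot)$ the definition \eqref{eq:Kdefs} together with \eqref{eq:implicit} shows precisely that $\lambda_\c(\tau(\lambda,t,\theta),y(\lambda,t,\theta))=\lambda$, so that $K_{\mathrm L}(\tau(\lambda,t,\theta),y(\lambda,t,\theta))=S_{\mathrm L}(\lambda,y(\lambda,t,\theta))-\lambda\,\tau(\lambda,t,\theta)$. Hence
\begin{align*}
K(\tau(\lambda,t,\theta),y(\lambda,t,\theta))-K_{\mathrm L}(\tau(\lambda,t,\theta),y(\lambda,t,\theta))
&=\bigl(K(\tau,y)-S(\lambda,y)+\lambda\tau\bigr)\\
&\quad{}+\bigl(S(\lambda,y(\lambda,t,\theta))-S_{\mathrm L}(\lambda,y(\lambda,t,\theta))\bigr)
\end{align*}
for \emph{any} stationary solution $S$ conforming with the hypotheses of Lemma~\ref{lemma:uniq-asympt-norm} (take $S$ from Theorem~\ref{thm:2307032}, i.e.\ the one whose Legendre transform is some $C^2$ Hamilton--Jacobi solution; or more directly, mimic the argument directly for the given $K$). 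The second bracket converges by Lemma~\ref{lemma:uniq-asympt-norm}~\ref{item:sigma1}, so the whole matter reduces to controlling the first bracket, i.e.\ the difference between a general Hamilton--Jacobi solution $K$ evaluated along the eikonal orbit and the associated ``stationary minus $\lambda t$'' expression.

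\textbf{Controlling the first bracket.} For this I would differentiate $g(t):=K(\tau(\lambda,t,\theta),y(\lambda,t,\theta))-S(\lambda,y(\lambda,t,\theta))+\lambda\,\tau(\lambda,t,\theta)$ in $t$ and show integrability at infinity, exactly in the spirit of Step~I of the proof of Lemma~\ref{lemma:uniq-asympt-norm}. Using $\partial_t\tau=|\nabla S_{\mathrm L}|^{-2}$, $\partial_t y=|\nabla S_{\mathrm L}|^{-2}\nabla S_{\mathrm L}$, the Hamilton--Jacobi equation $\partial_t K=-\tfrac12|\nabla_xK|^2-\chi_RV$, the stationary eikonal equation $\tfrac12|\nabla S|^2+\chi_RV=\lambda$, and completing the square, one finds that $\tfrac{\d}{\d t}g$ is a sum of a term proportional to $|\nabla_xK-\nabla_xS|^2$ (against the positive weight $|\nabla S_{\mathrm L}|^{-2}$, with favorable sign) and lower-order terms controlled by the decay estimates \eqref{eq:errestza2x} and Theorem~\ref{thm:main result2}. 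The decisive point is to prove a pointwise decay bound $|\nabla_xK(\tau(\lambda,t,\theta),y)-\nabla_xS(\lambda,y)|^2=\mathcal O(t^{-1-\delta})$ for some $\delta>0$, which is the Hamilton--Jacobi analogue of \eqref{eq:230715}. This again follows from a Gronwall-type differential inequality: differentiating $u:=|\nabla_xK-\nabla_xS|^2$ along the flow, using the convexity bound on $\nabla^2 S$ coming from hypothesis (ii) of Theorem~\ref{thm:230703} together with \eqref{eq:errestza2x}, and estimating $S\sim t$, $|\nabla S_{\mathrm L}|^2\sim 2\lambda$ from Theorem~\ref{thm:main result2} and \eqref{eq:2212092256}. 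This reproduces the mechanism of \eqref{eq:23021012}--\eqref{eq:230715} with $V_{\mathrm S}$ replaced by the short-range-type error between the two Hamilton--Jacobi solutions and the free eikonal. Once $g(t)$ converges, part \ref{item:sigma10} for $\Xi_{\mathrm L}$ follows, and $\Xi$ is obtained by subtracting two such limits; the local uniformity in $(\lambda,\theta)\in J\times\mathbb S^{d-1}$ is inherited from uniformity of all the estimates, noting that $J\subseteq I_{\mu'}$ so the eikonal orbit machinery is available.

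\textbf{Passing to the radial parametrization.} For part \ref{item:sigma20} I would argue exactly as in Step~II of the proof of Lemma~\ref{lemma:uniq-asympt-norm}. The change of variables between the eikonal orbit $y(\lambda,t,\theta)$ (with reparametrized time $\tau=\tau(\lambda,t,\theta)$) and the straight ray $\tau\mapsto(2\lambda)^{1/2}\tau\omega$ is governed by $r(\lambda,t,\theta)=|y|=(2\lambda)^{-1/2}t+\mathcal O(t^{1-\sigma'})$ and $\omega(\lambda,t,\theta)\to\omega_+(\lambda,\theta)$ in $C^1$, together with $|y|\sim (2\lambda)^{1/2}\tau$; so taking $t\to\infty$ with $\theta$ fixed corresponds to taking $\tau\to\infty$ along the ray in direction $\omega_+(\lambda,\theta)$, after composing with the inverse asymptotic direction map $\theta_+(\lambda,\cdot)$ from Lemma~\ref{lem:2301011702}. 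The asymptotic flatness of the metric and of $K-x^2/(2t)$, quantified by \eqref{eq:errestza2x}, guarantees that the difference between evaluating $K-K_{\mathrm L}$ along the eikonal orbit and along the straight ray tends to zero. Combining with \eqref{eq:240722}-type uniformity then yields $\Phi_{\mathrm L}(\lambda,\omega)=\Xi_{\mathrm L}(\lambda,\theta_+(\lambda,\omega))$ and likewise for $\Phi$, locally uniformly in $(\lambda,\omega)\in J\times\mathbb S^{d-1}$.

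\textbf{Main obstacle.} The technical heart is establishing the decay $u=|\nabla_xK-\nabla_xS|^2=\mathcal O(t^{-1-\delta})$ under only $C^2$ hypotheses on $K$: the second-derivative lower bound on $\nabla^2 S$ used in the Gronwall argument must be squeezed out of hypothesis (ii) of Theorem~\ref{thm:230703} with no extra regularity, and one must be careful that the ``short-range error'' playing the role of $V_{\mathrm S}$ here — namely the mismatch between $K$'s potential term, $S$'s potential term and the difference of the two Hamilton--Jacobi flows along the eikonal orbit — genuinely decays like $\langle x\rangle^{-1-\tau'}$ for some $\tau'>0$ after the reparametrization $\tau\leftrightarrow t$. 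Everything else is a routine, if lengthy, adaptation of the stationary arguments already carried out in Section~\ref{sec:Generalized Fourier transform}, so I would state those parts briefly and refer back to the proof of Lemma~\ref{lemma:uniq-asympt-norm}.
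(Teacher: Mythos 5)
For part \ref{item:sigma10} your argument is essentially the paper's. The key identity $K_{\mathrm L}(\tau(\lambda,t,\theta),y(\lambda,t,\theta))=S_{\mathrm L}(\lambda,y)-\lambda\tau$ (equivalently $\lambda_{\mathrm c}=\lambda$ along the orbit) is exactly \eqref{eq:23071720bb}, and the paper likewise computes $\tfrac{\d}{\d t}\bigl(K(\tau,y)-S_{\mathrm L}(\lambda,y)+\lambda\tau\bigr)=-\tfrac12|\nabla S_{\mathrm L}|^{-2}(\nabla K-\nabla S_{\mathrm L})^2-|\nabla S_{\mathrm L}|^{-2}V_{\mathrm S}$ and closes the argument with a Gronwall inequality for $u=(\nabla K-\nabla S_{\mathrm L})^2$, giving $u=\mathcal O(t^{-1-\delta})$; see \eqref{eq:23071721}--\eqref{eq:Dife1}. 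Your variant of routing through the eikonal solution $S$ for $V$ (so that Lemma~\ref{lemma:uniq-asympt-norm} absorbs $S-S_{\mathrm L}$ and the Gronwall is run for $(\nabla K-\nabla S)^2$) works as well, at the cost of an extra cross term $(\nabla K-\nabla S)\cdot(\nabla S-\nabla S_{\mathrm L})$ controlled by \eqref{eq:230715}; also note that the Hessian entering the decisive negative quadratic form is $\nabla^2K$, whose lower bound $\gtrsim(1-\delta)\tau^{-1}I$ comes from hypothesis (ii) of Theorem~\ref{thm:230703}, not a bound on $\nabla^2S$ as you write.

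For part \ref{item:sigma20} there is a genuine gap. You propose to ``argue exactly as in Step~II of the proof of Lemma~\ref{lemma:uniq-asympt-norm}'', but that step is a change of variables at \emph{fixed} $\lambda$: the point $r\omega$ always lies on some energy-$\lambda$ orbit, and only the angle label needs to be converted. Here $K$ and $K_{\mathrm L}$ are functions of space-time, and the ray point $(\tau,(2\lambda)^{1/2}\tau\omega)$ is in general \emph{not} of the form $(\tau(\lambda,t,\theta),y(\lambda,t,\theta))$ for any $(t,\theta)$: the energy-$\lambda$ orbit reaching the position $(2\lambda)^{1/2}\tau\omega$ does so at physical time $\tau+\mathcal O(\tau^{1-\sigma})$, not at time $\tau$. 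Your fallback, that ``the difference between evaluating $K-K_{\mathrm L}$ along the eikonal orbit and along the straight ray tends to zero'', would require a mean-value estimate over a spatial displacement of size $\mathcal O(\tau^{1-\sigma'})$ against a gradient difference of size $\mathcal O(\tau^{-(1+\delta)/2})$, which need not vanish for small $\sigma$. The paper instead solves the two-parameter matching problem $(\tau,(2\lambda)^{1/2}\tau\omega)=(\tau(\bar\lambda,t,\bar\theta),y(\bar\lambda,t,\bar\theta))$ by a parameter-dependent inverse function theorem, producing $\bar\lambda(\tau)\to\lambda$ and $\bar\theta(\tau)\to\theta_+(\lambda,\omega)$, and then concludes by the locally uniform convergence in \ref{item:sigma10} evaluated at $(\bar\lambda,\bar\theta)$. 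The need to perturb the energy $\bar\lambda$ away from $\lambda$, and the verification that the map $(\bar\lambda,\bar\theta)\mapsto y(\bar\lambda,t,\bar\theta)/\tau$ is uniformly non-degenerate near $(\lambda,\theta_+(\lambda,\omega))$ as $\tau\to\infty$, are the missing ingredients in your sketch.
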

\begin{remarks*} The requirement
 $l'>1+2/\rho$ (not used in the proof) will be needed in the proof of Lemma
 \ref{lem:time-depend-theory}, 
cf. Remark \ref{remark:l_prime}. The above function $\tau$ should be considered as the
 `physical time', cf. 
 Remark \ref{rem:230218} \ref{item:geom2}.
\end{remarks*}
\begin{proof} 
\textit{Step I.}\ 
The second assertion of \ref{item:sigma10} is clear from the first
one, hence we only prove the first assertion of \ref{item:sigma10}. 
It suffices to show existence of the limits
\begin{subequations}
\begin{align}
{\Psi(\lambda,\theta):=}
\lim_{t\to\infty}\bigl(K(\tau(\lambda,t,\theta),y(\lambda,t,\theta))
-S_{\mathrm L}(\lambda,y(\lambda,t,\theta))
+\lambda \tau(\lambda,t,\theta) \bigr)
\label{eq:23071720}
\end{align}
and 
\begin{align}
{\Psi_{\mathrm L}(\lambda,\theta):=}
\lim_{t\to\infty}\bigl(K_{\mathrm L}(\tau(\lambda,t,\theta),y(\lambda,t,\theta))
-S_{\mathrm L}(\lambda,y(\lambda,t,\theta))
+\lambda \tau(\lambda,t,\theta) \bigr)
\label{eq:23071720b}
\end{align} 
 taken {locally} uniformly in $(\lambda,\omega)\in J\times \mathbb S^{d-1}$. 
We can prove them in the same manner, but it follows 
easily from the proof of \eqref{eq:23071720} that in fact 
\begin{equation}\label{eq:23071720bb}
  K_{\mathrm L}(\tau(\lambda,t,\theta),y(\lambda,t,\theta))
-S_{\mathrm L}(\lambda,y(\lambda,t,\theta))
+\lambda \tau(\lambda,t,\theta) =0, 
\end{equation}
hence we discuss only \eqref{eq:23071720}.
\end{subequations}

Note that all the arguments below are {locally} uniform in $(\lambda,\theta)\in J\times\mathbb S^{d-1}$. 
Omitting the arguments, and using \eqref{eq:eik223}, \eqref{eq:flow} and \eqref{eq:2212116},
we can compute the $t$-derivative for large $t>0$ as 
\begin{align}
\tfrac{\mathrm d}{\mathrm dt}(K-S_{\mathrm L}+\lambda\tau)
=
-\tfrac12|\nabla S_{\mathrm L}|^{-2}(\nabla K-\nabla S_{\mathrm L})^{2}
-|\nabla S_{\mathrm L}|^{-2}V_{\mathrm S}.
\label{eq:23071721}
\end{align}
The second term on the right-hand side of \eqref{eq:23071721} is
obviously integrable at infinity,
and thus it suffices to show that for some $\delta>0$
\begin{align}
u
:=(\nabla K-\nabla S_{\mathrm L})^{2}
=\mathcal O(t^{-1-\delta}).
\label{eq:2307172154}
\end{align}
For that, similarly to \eqref{eq:23071721}, we further differentiate it for large $t> 0$ as 
\begin{align}\label{eq:Dife1}
\tfrac12|\nabla S_{\mathrm L}|^2\tfrac{\mathrm d}{\mathrm dt}u
=-(\nabla K-\nabla S_{\mathrm L})(\nabla^2K)(\nabla K-\nabla S_{\mathrm L})
-(\nabla V_{\mathrm S})\cdot(\nabla K-\nabla S_{\mathrm L}).
\end{align}
By \eqref{eq:errestza2}, \eqref{eq:errestza2x} and \eqref{eq:230115}, 
 it follows that for some (small) $\delta>0$ and (big) $C_1>0$ and any large $t>0$ 
\begin{align*}
\tfrac{\mathrm d}{\mathrm dt}u
\le -(2-\delta)t^{-1}u+C_1t^{-2-\delta}.
\end{align*}
This certainly implies \eqref{eq:2307172154}. Thus we are done with
\ref{item:sigma10}.

\smallskip
\noindent 
\textit{Step II.}\ 
To prove the assertion \ref{item:sigma20} we discuss the following change of variables. 
We claim that for all large $\tau>0$ and any $(\lambda,\omega)\in J\times \mathbb S^{d-1}$ there exist $\bar\lambda$ and $\bar\theta$ such that
\begin{subequations}
\begin{equation}\label{eq:Impli1}
 (\tau,
 (2\lambda)^{1/2}\tau\omega)=(\tau(\bar\lambda,t,\bar\theta),y(\bar\lambda,t,\bar\theta))\quad\text {with}\quad
 t=S_L(\bar\lambda,(2\lambda)^{1/2}\tau\omega), 
\end{equation} 
{and} that
\begin{equation}\label{eq:Impli2}
 \lim_{\tau\to \infty }\,\bar\lambda(\tau)=\lambda\quad \mand \quad\lim_{\tau\to \infty }\,\bar\theta(\tau)=\theta_+(\lambda,\omega).
\end{equation} 
\end{subequations}
{
First} we solve  for fixed $\bar \lambda$ and $\bar\theta$ the equation
\begin{equation*}
\tau=\int_0^t|\nabla S_{\mathrm
 L}(\bar\lambda,y(\bar\lambda,s,\bar
\theta))|^{-2}\,\mathrm ds. 
\end{equation*} for $t=t(\tau,\bar \lambda,\bar\theta)$
(with a $C^1$-dependence by the implicit function theorem). 
For fixed large $\tau$ we then need to solve the equation
\begin{equation}\label{eq:solcHJ}
 (2\lambda)^{1/2}\omega=y(\bar\lambda,t,\bar\theta)/\tau=:F_\tau(\bar\lambda,\bar\theta)
\end{equation} for $\bar \lambda$ and $\bar\theta$. This can be done
by using the inverse
function theorem in a version applicable to
parameter-dependent problems (note that $\tau$ is the relevant
parameter), for example   the version stated as \cite[Theorem
D.1]{Is2}.  We need
to verify that the derivative of
$F_\tau$ at $z_0:=(\lambda,\theta_+(\lambda,\omega))$ is
non-degenerate near  infinity. 

First note that
\begin{equation*}
 F_\infty(\bar\lambda,\bar\theta):=\lim_{\tau\to
 \infty}F_\tau(\bar\lambda,\bar\theta)=(2\bar\lambda)^{1/2}\omega_+(\bar\lambda,\bar\theta)
 \text{ in a neighbourhood of }z_0,
\end{equation*} in particular that 
\begin{equation*}
 F_\infty(z_0)=(2\lambda)^{1/2}\omega.
\end{equation*}
 Next note
that
\begin{equation*}
 \omega_+(\bar\lambda,\bar\theta)=\int_0^\infty\,
 \tfrac{\partial }{\partial s}\omega(\bar\lambda,s,\bar\theta)\, \d
 s+\bar\theta\quad\mand\quad\partial_{\bar\lambda} \omega_+(\bar\lambda,\bar\theta)=\int_0^\infty\,
 \partial_{\bar\lambda}\partial_{s} \omega(\bar\lambda,s,\bar\theta)\, \d s.
\end{equation*} By the latter formula, the representation \eqref{eq:deta} and Corollary
\ref{cor:epsSmall} it follows that $\partial_{\bar\lambda}\omega_+(\bar\lambda,\bar\theta)$ is small when the
parameter $R$ is sufficiently big. When combined with
  \eqref{eq:fixedP3} this leads to the
  conclusion that the map
$(\bar\lambda,\bar\theta)\to
F_\infty(\bar\lambda,\bar\theta)=(2\bar\lambda)^{1/2}\omega_+(\bar\lambda,\bar\theta)$ is non-degenerate
$C^1$ near $z_0$ (uniformly in large $R$).

Now to  solve \eqref{eq:solcHJ}, by  applying \cite[Theorem D.1]{Is2},  
all that remains to be seen is that the limits 
\begin{equation*}
 \lim_{\tau\to
 \infty}\partial_{\bar\lambda,\bar\theta}{F_\tau(\bar\lambda,\bar\theta)}\text{
 exist, uniformly in a neighbourhood of }z_0.
\end{equation*} We skip the details of the verification of this
uniform convergence, and conclude
the solvability of \eqref{eq:solcHJ}. This justifies 
 \eqref{eq:Impli1} and \eqref{eq:Impli2}.

\smallskip
\noindent
{\textit{Step III.}\ 
Now we prove the first assertion of \ref{item:sigma20} (this implies the second one). 
We proceed partly in parallel to the proof of Lemma \ref{lemma:uniq-asympt-norm} \ref{item:sigma2}.  
It suffices to show that with notation from Step~I 
\begin{align*}
\Psi(\lambda,\theta_+(\lambda,\omega))
-\Psi_{\mathrm L}(\lambda,\theta_+(\lambda,\omega))
&=
\lim_{\tau\to\infty}\bigl(K(\tau,(2\lambda)^{1/2}\tau\omega)
-K_{\mathrm L}(\tau,(2\lambda)^{1/2}\tau\omega)
\bigr)
\end{align*} 
locally uniformly in $(\lambda,\omega)\in J\times\mathbb S^{d-1}$. 
Take any compact subset $L\subset J\times \mathbb S^{d-1}$. 
Let $(\bar\lambda,\bar\theta)=(\bar\lambda(\tau),\bar\theta(\tau))$ be the change of variables from Step~II. 
Using the locally uniform limit
\eqref{eq:240722} along with
\eqref{eq:Impli1} and \eqref{eq:Impli2},  and noting that $\tau\to\infty$ corresponds to $t\to\infty$, 
it follows that for some compact subset $L'\subset J\times \mathbb S^{d-1}$ 
\begin{align*}
&
\lim_{\tau\to\infty}\sup_{(\lambda,\omega)\in L}\bigl|
\Psi(\lambda,\theta_+(\lambda,\omega))
-\Psi_{\mathrm L}(\lambda,\theta_+(\lambda,\omega))
-K(\tau,(2\lambda)^{1/2}\tau\omega)+K_{\mathrm L}(\tau,(2\lambda)^{1/2}\tau\omega)\bigr)
\bigr|
\\&
=
\lim_{\tau\to\infty}\sup_{(\lambda,\omega)\in L}\bigl|
\Psi(\bar\lambda,\theta(\bar\lambda,(2\lambda)^{1/2}\tau\omega))
-\Psi_{\mathrm L}(\bar\lambda,\theta(\bar\lambda,(2\lambda)^{1/2}\tau\omega))
\\[-.5em]&\qquad\qquad\qquad\qquad\qquad\qquad\qquad
-K(\tau,(2\lambda)^{1/2}\tau\omega)+K_{\mathrm L}(\tau,(2\lambda)^{1/2}\tau\omega)
\bigr|
\\[.5em]&
=
\lim_{t\to\infty}\sup_{(\bar\lambda,\bar\theta)\in L'}\bigl|
\Psi(\bar\lambda,\theta(\bar\lambda,y(\bar\lambda,t,\bar\theta)))
-\Psi_{\mathrm L}(\bar\lambda,\theta(\bar\lambda,y(\bar\lambda,t,\bar\theta)))
\\[-.5em]&\qquad\qquad\qquad\qquad\qquad
-K(\tau(\bar\lambda,t,\bar\theta),y(\bar\lambda,t,\bar\theta))+K_{\mathrm L}(\tau(\bar\lambda,t,\bar\theta),y(\bar\lambda,t,\bar\theta))
\bigr|
\\[.5em]&
=
\lim_{t\to\infty}\sup_{(\bar\lambda,\bar\theta)\in L'}\bigl|
\Psi(\bar\lambda,\bar\theta)
-\Psi_{\mathrm L}(\bar\lambda,\bar\theta)
\\[-.5em]&\qquad\qquad\qquad\qquad\qquad
-K(\tau(\bar\lambda,t,\bar\theta),y(\bar\lambda,t,\bar\theta))+K_{\mathrm L}(\tau(\bar\lambda,t,\bar\theta),y(\bar\lambda,t,\bar\theta))
\bigr|
\\[.5em]&
=0
.
\end{align*}
Thus we are done.}
\end{proof}

\subsection{Time-dependent wave operators}\label{subsec:23070423}
 
Here we prove Theorems~\ref{thm:230703} and \ref{thm:time-depend-theory2}.

\subsubsection{Existence} 

\begin{proof}[Proof of Theorem~\ref{thm:230703}]
 We first prove the existence of the strong limits
\eqref{eq:wave7} by the
  familiar Cook--Kuroda method. 
Thanks to a density argument based on uniform boundedness of $U^\pm(t)$ and $\mathrm e^{\pm\mathrm itH}$ in $t>T$,
it suffices to show that for any $h\in C_\c^\infty(\R_+\times\mathbb S^{d-1})$ 
with $\mathop{\mathrm{supp}}h\subseteq J\times\mathbb S^{d-1}$, there exist the limits 
\begin{align}
\lim _{t\to \infty} \e^{\pm \i tH}U^\pm(t)h
.
\label{eq:230608}
\end{align}
For that we show integrability of 
\[
\bigl\|\tfrac{\partial}{\partial t}\e^{\pm \i tH}U^\pm(t)h\bigr\|_{\mathcal H}
=
\bigl\|\bigl(\tfrac{\partial}{\partial t}\pm \i H\bigr)U^\pm(t)h\bigr\|_{\mathcal H}
\]
at infinity. 
Using the Hamilton--Jacobi equation  \eqref{eq:4cb} and letting $\omega'=(\omega_2',\dots,\omega_d')$ be any local coordinates of $\mathbb S^{d-1}$, 
we compute 
\begin{align*}
 \bigl(\tfrac{\partial}{\partial t}\pm \i H\bigr)U^\pm(t)h
&=
\e^{\mp 3\pi\i/4}t^{-1}\abs{x}^{1-d/2}\e^{\pm \i K}
\\&\phantom{{}={}}{}\cdot
\Bigl[\mp \tfrac{\mathrm i}2 x^2 t^{-4}(\partial_\lambda^2 h)
\mp \tfrac{\mathrm i}2 |x|^{-2}\bigl(\Delta_{\mathbb S^{d-1}}h\bigr)
+(\partial_i\omega_\alpha')(\partial_iK) (\partial_\alpha h)
\\&\phantom{{}={}}{}\quad
+\Bigl(-t^{-3}x^2\mp \i t^{-2}+ t^{-2}x\cdot (\nabla_x K)\Bigr)(\partial_\lambda h)
\\&\phantom{{}={}}{}\quad
+\Bigl(-t^{-1}\pm \i \tfrac d4\bigl(1-\tfrac d2\bigr)|x|^{-2}
\pm\mathrm iq
\\&\phantom{{}={}}{}\quad\quad\ \ 
+\bigl(1-\tfrac d2\bigr) |x|^{-2}x\cdot (\nabla_x K)
+\tfrac12 (\Delta_x K)
\Bigr)h
\Bigr]
.
\end{align*} 
By the assumption on $K$ and the support property of $h$ 
the last expression is of order $t^{-1-\min\{\epsilon,\tau\}}$ with values in $\mathcal H$. 
Hence the limits \eqref{eq:230608} exist. 
 Since  $U^\pm(t)$ and $\mathrm
e^{\pm\mathrm itH}$ are  isometries, it is clear that so  are
$W^\pm$. By the above computation we easily see that $W^\pm M_\lambda
\subseteq H W^\pm$. In particular the mapping property  stated in the
last part of 
 Theorem~\ref{thm:230703} \eqref{item:23071721}
 follows. 

The assertion   Theorem~\ref{thm:230703} \eqref{item:23071722} follows
readily from
\eqref{item:23071721} and  Lemma \ref{lem:2307172208} \ref{item:sigma20}.
\end{proof}
 
We  present alternative representations of $W^\pm$, which  
 will be useful in the proof of their  completeness. 
See also \cite[Lemma 3.3]{IS3}.

\begin{lemma}\label{lem:time-depend-theory} 
In the setting of Lemma~\ref{lem:2307172208}, 
define for any $h\in C^\infty_{\mathrm c}(\R_+\times\mathbb S^{d-1})$
with $\mathop{\mathrm{supp}}h\subseteq J\times \mathbb S^{d-1}$  the evolutions
\begin{align*}
 (\widetilde U^\pm_{\mathrm L}(t)h)(x)
&=
(\pm 2\pi\mathrm i)^{-1}
\int_0^\infty \e^{\mp \i \lambda t }\phi_\pm^{S_{\mathrm L}}[h(\lambda,\cdot)](\lambda,x)\,\mathrm d\lambda
, \quad (t,x)\in\R_+\times \mathbb R^d,  
\end{align*} 
where $\phi_\pm^{S_{\mathrm L}}[\cdot]$ is from \eqref{eq:230126bb}. 
Then for each $t>0$ it follows that  $\widetilde U^\pm_{\mathrm L}(t)h\in\mathcal
H$. Moreover
\begin{equation*}
 W^\pm\mathrm e^{\mp\mathrm i\Phi_{\mathrm L}}h=\lim
 _{t\to \infty} \e^{\pm \i tH}\widetilde  U^\pm_{\mathrm L}(t)h 
,
\end{equation*} 
respectively.
 \end{lemma}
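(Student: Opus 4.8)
The strategy is to show that $\widetilde U^\pm_{\mathrm L}(t)$ is, up to a small error vanishing strongly as $t\to\infty$, the free comparison dynamics $U^\pm_{\mathrm L}(t)$ built from the Legendre transform $K_{\mathrm L}$ of $S_{\mathrm L}$, and then to invoke Theorem~\ref{thm:230703}\,(\ref{item:23071722}) together with Lemma~\ref{lem:2307172208}\,(\ref{item:sigma20}) to identify the limit. First I would check that $\widetilde U^\pm_{\mathrm L}(t)h\in\mathcal H$ for each fixed $t>0$: this is a stationary-phase / non-stationary-phase estimate for the oscillatory integral in $\lambda$, using the explicit form \eqref{eq:230126bb} of $\phi^{S_{\mathrm L}}_\pm$, the bounds on $S_{\mathrm L}$ and $s_{\mathrm L}$ from Theorem~\ref{thm:main result2} (here the smoothness $l'>1+2/\rho$ of $s_{\mathrm L}$ recorded in Remark~\ref{remark:l_prime} enters, allowing enough integrations by parts in $\lambda$ to gain decay in $|x|$ away from the stationary point $|x|\sim\sqrt{2\lambda}\,t$, and $L^2$-integrability near it).

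Next I would carry out the key asymptotic computation. Applying stationary phase in $\lambda$ to $\widetilde U^\pm_{\mathrm L}(t)h$, the critical point is exactly $\lambda_{\mathrm c}(t,x)$ of $\widetilde K(\lambda,t,x)=S_{\mathrm L}(\lambda,x)-\lambda t$, i.e.\ the very point defining $K_{\mathrm L}=\widetilde K(\lambda_{\mathrm c},\cdot,\cdot)$ via Theorem~\ref{thm:2307032}. The leading term of the stationary-phase expansion produces precisely $U^\pm_{\mathrm L}(t)h$ — the phase $\pm K_{\mathrm L}(t,x)$, the amplitude $t^{-1}|x|^{1-d/2}$, the constant $\e^{\mp 3\pi\i/4}$ coming from the signature of $\partial_\lambda^2\widetilde K$ (which is $-1$, cf.\ \eqref{eq:230724}), and the angular argument $h(x^2/(2t^2),\hat x)$ coming from evaluating $\xi=h(\lambda,\cdot)$ at $\lambda=\lambda_{\mathrm c}\approx x^2/(2t^2)$ and the change from the distorted sphere to the round one. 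The remainder is $O(t^{-1-\delta})$ in $\mathcal H$-norm for some $\delta>0$, again by Theorem~\ref{thm:2307032}\,(\ref{item:22120919dx}) and the Leibniz/stationary-phase error bounds. Hence $\slim_{t\to\infty}\bigl(\widetilde U^\pm_{\mathrm L}(t)-U^\pm_{\mathrm L}(t)\bigr)h=0$, so $\slim_{t\to\infty}\e^{\pm\i tH}\widetilde U^\pm_{\mathrm L}(t)h=\slim_{t\to\infty}\e^{\pm\i tH}U^\pm_{\mathrm L}(t)h=W^\pm_{\mathrm L}h$, where $W^\pm_{\mathrm L}$ are the time-dependent wave operators associated with $K_{\mathrm L}$.

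Finally I would connect $W^\pm_{\mathrm L}$ to the given $W^\pm$ (associated with the arbitrary admissible $K$): by Theorem~\ref{thm:230703}\,(\ref{item:23071722}) applied with the pair $(K,K_{\mathrm L})$ — legitimate since $K_{\mathrm L}$ satisfies the hypotheses of Theorem~\ref{thm:230703} by Theorem~\ref{thm:2307032} — one has $W^\pm_{\mathrm L}=W^\pm\mathrm e^{\pm\i\Phi_{\mathrm L}}$ with $\Phi_{\mathrm L}$ the radial limit from Lemma~\ref{lem:2307172208}\,(\ref{item:sigma20}). Reading this as $W^\pm_{\mathrm L}h = W^\pm\bigl(\mathrm e^{\pm\i\Phi_{\mathrm L}}h\bigr)$ and substituting $h\mapsto\mathrm e^{\mp\i\Phi_{\mathrm L}}h$ (equivalently, just writing $W^\pm\mathrm e^{\mp\i\Phi_{\mathrm L}}h=\lim_{t\to\infty}\e^{\pm\i tH}\widetilde U^\pm_{\mathrm L}(t)h$) gives the assertion.

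\textbf{Main obstacle.} The delicate part is the uniform stationary-phase analysis of the $\lambda$-integral defining $\widetilde U^\pm_{\mathrm L}(t)$: one must control the remainder in $\mathcal H$-norm uniformly in $t$, which forces careful bookkeeping of how derivatives in $\lambda$ of the phase $S_{\mathrm L}(\lambda,x)$ and amplitude interact with the $|x|$-weights — precisely where the restricted smoothness $l'>1+2/\rho$ (rather than $C^\infty$) makes the argument tight, and where one must use the sharp derivative bounds of Theorem~\ref{thm:main result2} rather than crude ones. The non-uniqueness of the stationary point away from the support of $h$, and the contribution of the endpoint $\lambda=\mu^2/2$ of $J$, also need to be handled (the cutoff $\chi$ in $\phi^{S_{\mathrm L}}_\pm$ and the compact support of $h$ in $\lambda$ take care of these, but the bookkeeping should be spelled out).
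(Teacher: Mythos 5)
Your proposal follows essentially the same route as the paper: non-stationary phase in $\lambda$ for membership in $\mathcal H$, a stationary-phase comparison of $\widetilde U^\pm_{\mathrm L}(t)h$ with the comparison dynamics built from $K_{\mathrm L}$ (where the localization scale near $\lambda_{\mathrm L,c}$ and the $(l'-1)$-fold integration by parts away from it — the source of the requirement $l'>1+2/\rho$ — are exactly the delicate points you flag), and the covariance formula of Theorem~\ref{thm:230703}~(\ref{item:23071722}) combined with Lemma~\ref{lem:2307172208}~(\ref{item:sigma20}) to identify the limit. Only note that with the orientation $\Phi_{\mathrm L}=\lim(K-K_{\mathrm L})$ the covariance reads $W^\pm_{\mathrm L}=W^\pm\mathrm e^{\mp\mathrm i\Phi_{\mathrm L}}$ (your $W^\pm_{\mathrm L}=W^\pm\mathrm e^{\pm\mathrm i\Phi_{\mathrm L}}$ has the sign reversed), so the asserted identity follows directly and no substitution $h\mapsto\mathrm e^{\mp\mathrm i\Phi_{\mathrm L}}h$ is needed.
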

\begin{remark}\label{rem230718}
The above evolution map $\widetilde U^\pm_{\mathrm L}(t)$ motivates our free comparison dynamics $U^\pm(t)$. 
In fact, due to the stationary phase theorem the leading term is very similar to that of $U^\pm(t)$, see the proof below. 
Note however  that we need higher order derivatives of the phase function. 
This is why we substitute $S_{\mathrm L}$ for $S$ in $\widetilde U^\pm_{\mathrm L}(t)$. 
The error is compensated by the factor $\mathrm e^{\mp\mathrm i\Phi_{\mathrm L}}$.
\end{remark}
\begin{proof}
\textit{Step I.}\ 
To show that $\widetilde  U^\pm_{\mathrm L}(t)h\in \mathcal H$ we introduce 
\begin{equation*}
\widetilde K_{\mathrm L}(\lambda,t,x)=S_{\mathrm L}(\lambda,x)-\lambda t,
\end{equation*}
 cf.\ \eqref{eq:230705}. 
First fix any $t>0$, and let $R'>0$ be sufficiently large. 
Then by the expressions \eqref{eq:230724} corresponding to $\widetilde K_{\mathrm L}$ 
we have for any $(\lambda,\hat x)\in\mathop{\mathrm{supp}}h$ with $|x|>R'$
\begin{align}
\partial_\lambda \widetilde K_{\mathrm L}(\lambda,t,x)\ge c_1\langle x\rangle
,\quad 
\bigl|\partial_\lambda^2 \widetilde K_{\mathrm L}(\lambda,t,x)\bigr|\le C_1\langle x\rangle
. 
\label{eq:23072410}
\end{align}
Thus we can integrate by parts as 
\begin{align}
 \label{eq:appDy1}
\widetilde  U^\pm_{\mathrm L}(t)h
&=
(2\pi)^{-1/2}\chi\abs{x}^{-(d-1)/2}
\int_0^\infty \e^{\pm \i \widetilde K_{\mathrm L}}\partial_\lambda\Bigl[(\partial_\lambda\widetilde K_{\mathrm L})^{-1}(2\lambda)^{-1/4}h\Bigr]\,\mathrm d\lambda
,
\end{align} 
which implies, due to \eqref{eq:23072410} again, that for any $|x|>R'$
\begin{align*}
\bigl|(\widetilde  U^\pm_{\mathrm L}(t)h)(x)\bigr|
\le C_2\langle x\rangle^{-(d+1)/2}.
\end{align*}
Since $\widetilde  U^\pm_{\mathrm L}(t)h$ is uniformly bounded for $|x|\le R'$, we 
conclude that  $\widetilde U^\pm_{\mathrm L}(t)h\in \mathcal H$.

We  prove the second assertion in several steps. 

\smallskip
\noindent
\textit{Step II.}\ 
By the uniform boundedness of $\mathrm e^{\pm\mathrm itH}$ and Lemma~\ref{lem:2307172208} it suffices to show that as $t\to\infty$ 
\begin{align*}
(U^\pm_{\mathrm L}(t)h)(x)
=
\e^{\mp 3\pi\i/4}t^{-1}\abs{x}^{1-d/2}\e^{\pm \i K_{\mathrm L}(t,x)}h\parb{ x^2/(2t^2), \hat x }
+o(t^0)\ \ \text{in }\mathcal H. 
\end{align*}
Recall that  $S_{\mathrm L}$ is  defined for $I=[\mu'^2/2,\infty)$ as in  Lemma~\ref{lem:2307172208}.
Let  $\mu''=(\mu'+\mu)/2$, and decompose 
$\R_+\times \R^d= \Omega_{\mu''}\cup \Omega^c_{\mu''}$ and correspondingly 
\begin{align}
\label{eq:appDy12}
(\widetilde U^\pm_{\mathrm L}(t)h)(x)
&=
1_{\Omega_{\mu''}}(t,x)(\widetilde U^\pm_{\mathrm L}(t)h)(x)
+1_{\Omega^c_{\mu''}}(t,x)(\widetilde U^\pm_{\mathrm L}(t)h)(x)
.
\end{align} 
{Here let us prove that the} second term of \eqref{eq:appDy12} is negligible. 
In fact, using \eqref{eq:230724} for $\widetilde K_{\mathrm L}$, 
we can estimate for any $\lambda\ge \mu^2/2$ and $(t,x)\notin \Omega_{\mu''}$ with sufficiently large $t$ 
\begin{align*}
\partial_\lambda \widetilde K_{\mathrm L}(\lambda,t,x)\le -c_2t,
\quad 
\bigl|\partial_\lambda^2 \widetilde K_{\mathrm L}(\lambda,t,x)\bigr|\le C_3\langle x\rangle.
\end{align*}
This allows us to integrate by parts in the same way as for
\eqref{eq:appDy1} and deduce the pointwise bound 
\begin{align*}
\bigl|1_{\Omega^c_{\mu''}}(t,x) (\widetilde U^\pm_{\mathrm L}(t)h)(x)\bigr|
\le C_4t^{-1}\langle x\rangle^{-(d-1)/2}.
\end{align*}
By integration we then conclude the norm-bound
\begin{equation*}
1_{\Omega^c_{\mu''}} \widetilde U^\pm_{\mathrm L}(t)h=\mathcal O\bigl(t^{-1/2
}\bigr)\ \ \text{in }\mathcal H.
\end{equation*}
 
\smallskip
\noindent
\textit{Step III.}\ 
As for the first term of \eqref{eq:appDy12} we decompose it as follows.  
Using that $\mu''>\mu'$, we can  for any $(t,x)\in\Omega_{\mu''}$ 
find a unique critical point $\lambda_{\mathrm L,c}=\lambda_{\mathrm L,c}(t,x)$ of $\widetilde K_{\mathrm L}$.
Then take any $\eta_0\in C^\infty_{\mathrm c}(\mathbb R)$ such that
$\eta_0(s)=1$ for $|s|\leq  1/2$ while $\eta_0(s)=0$ for $|s|\geq  1$, and set 
\[
\eta(\lambda,t,x)=\eta_0\bigl(\langle x\rangle^{\delta}(\lambda-\lambda_{\mathrm L,c})\bigr),\quad 
\tfrac12-\tfrac{\rho}6<\delta <\tfrac12-\tfrac1{4(l'-1)}.
\]
We recall that $l'$ is the fixed integer obeying the condition $l'>1+2/\rho$. Obviously we can
find $\delta$ fulfilling these constraints, henceforth taken fixed. 
We now decompose 
\begin{align*}
1_{\Omega_{\mu''}}\widetilde U^\pm_{\mathrm L}(t)h
&=
\mp\mathrm i (2\pi)^{-1/2}1_{\Omega_{\mu''}}\chi\abs{x}^{-(d-1)/2}
\int_0^\infty
(\eta+(1-\eta))
\mathrm e^{\pm\mathrm i\widetilde K_{\mathrm L}}
(2\lambda)^{-1/4}h\,\mathrm d\lambda
\\&
=: \psi_1(t,x)+\psi_2(t,x)
.
\end{align*}

\smallskip
\noindent
\textit{Step IV.}\ 
The second term $\psi_2$ is negligible. 
In fact, for any $k=2,\dots,l'$, sufficiently large $t$ 
and $(\lambda,t,x)\in\mathop{\mathrm{supp}}(1-\eta)$ with
$(\lambda,\hat x)\in\mathop{\mathrm{supp}}h$, 
$(t,x)\in \Omega_{\mu''}$
we can bound  
\begin{align*}
\bigl|\partial_\lambda \widetilde K_{\mathrm L}(\lambda,t,x)\bigr|
\ge c_2\langle x\rangle^{1-\delta},
\quad 
\bigl|\partial_\lambda^k \widetilde K_{\mathrm L}(\lambda,t,x)\bigr|\le C_3\langle x\rangle^{1+\sigma-m(k)+k}
.
\end{align*}
This and the lower bound $\delta>1/2-\rho/6>1/2-\rho/2$ imply that
each time we integrate $\psi_2$ by parts as in \eqref{eq:appDy1} we at
least gain a decay of order $\langle x\rangle^{-1+2\delta}$. 
 Hence by doing the integration by parts in total $(l'-1)$ times we
 can 
 conclude that 
\begin{align*}
\psi_2=o(t^{0})\ \ \text{in }\mathcal H.
\end{align*}

\smallskip
\noindent
\textit{Step V.}\ 
It remains to investigate $\psi_1$. Let us expand the phase function as 
\[
\widetilde K_{\mathrm L}(\lambda,t,x)
=K_{\mathrm L}(t,x)+\tfrac12A(t,x)(\lambda-\lambda_{\mathrm L,c})^2
+B(\lambda,t,x)(\lambda-\lambda_{\mathrm L,c})^3
\]
with 
\begin{align*}
A(t,x)&=(\partial_\lambda^2S_{\mathrm L})(\lambda_{\mathrm L,c},x)
,\\
B(\lambda,t,x)
&=\tfrac12\int_0^1(1-\tau)^2(\partial_\lambda^3S_{\mathrm L})(\lambda_{\mathrm L,c}+\tau(\lambda-\lambda_{\mathrm L,c}),x)\,\mathrm d \tau
.
\end{align*}
We substitute the expression into $\psi_1$, and split the integral as 
\begin{align*}
\psi_1
&=
\mp\mathrm i (2\pi)^{-1/2}1_{\Omega_{\mu''}}\chi\abs{x}^{-(d-1)/2}
\mathrm e^{\pm\mathrm iK_{\mathrm L}}
\biggl(
(2\lambda_{\mathrm L,c})^{-1/4}h(\lambda_{\mathrm L,c},\cdot)
\int_0^\infty\e^{\pm \i A(\lambda-\lambda_{\mathrm L,c})^2/2}\eta\,\mathrm d\lambda
\\&\qquad 
+\int_0^\infty
\e^{\pm \i A(\lambda-\lambda_{\mathrm L,c})^2/2}
\eta\bigl[\e^{\pm \i B(\lambda-\lambda_{\mathrm L,c})^3}
(2\lambda)^{-1/4}h-(2\lambda_{\mathrm L,c})^{-1/4}h(\lambda_{\mathrm L,c},\cdot)\bigr]\,\mathrm d\lambda
\biggr)
.
\end{align*} 
Let us denote the last integral by $T$, and show that its contribution is negligible. 
By an integration by parts we can rewrite it as 
\begin{align*}
T
=
-
\int_{|\lambda-\lambda_{\mathrm L,c}|\leq \langle x\rangle^{-\delta}}
&\biggl(
\int_{\lambda_{\mathrm L,c}}^\lambda \e^{\pm \i A(\lambda'-\lambda_{\mathrm L,c})^2/2}\,\mathrm d\lambda'\biggr)
\\&{}
\cdot 
\tfrac{\partial}{\partial\lambda}
\eta\bigl[\e^{\pm \i B(\lambda-\lambda_{\mathrm L,c})^3}
(2\lambda)^{-1/4}h-(2\lambda_{\mathrm L,c})^{-1/4}h(\lambda_{\mathrm L,c},\cdot)\bigr]\,\mathrm d\lambda
.
\end{align*}
Then by the van der Corput Lemma, cf.\ \cite [p.\ 332]{St}, 
and the assumed support property of $h$, it follows that   on $\mathop{\mathrm{supp}}(1_{\Omega_{\mu''}}\chi)$ 
\[T=\mathcal O\bigl(\langle
x\rangle^{-1/2-\delta'}\bigr),\quad \delta'=\delta-\tfrac12+\tfrac{\rho}6,\]
so that indeed 
\begin{align*}
\psi_1
\pm\mathrm i (2\pi)^{-1/2}1_{\Omega_{\mu''}}\abs{x}^{-(d-1)/2}
\mathrm e^{\pm\mathrm iK_{\mathrm L}}
(2\lambda_{\mathrm L,c})^{-1/4}h(\lambda_{\mathrm L,c},\cdot)\int_0^\infty \e^{\pm \i A(\lambda-\lambda_{\mathrm L,c})^2/2}\eta \,\mathrm d\lambda
=o(t^0)
\end{align*}
as a vector in $\mathcal H$. 

\smallskip
\noindent
\textit{Step VI.}\ 
Finally  we remove $\eta$ from the last integral with another admissible error, and then 
implement the Gaussian integral to obtain 
\begin{align*}
\psi_1
&=
\mathrm e^{\mp\mathrm i3\pi/4}\abs{x}^{-(d-1)/2}
\mathrm e^{\pm\mathrm iK_{\mathrm L}}
|A|^{-1/2}
(2\lambda_{\mathrm L,c})^{-1/4}h(\lambda_{\mathrm L,c},\cdot)
+o(t^{0})\ \ \text{in }\mathcal H
.
\end{align*} 
Thus, using Lemma~\ref{lem:2307172208}  and 
\[A= -(2\lambda_{\mathrm L,c})^{-3/2}|x|+\mathcal O\bigl(\langle x\rangle^{1-\sigma}\bigr)
,\quad 
\lambda_{\mathrm L,c}=x^2/(2t^2)+\mathcal O\bigl(\langle x\rangle^{-\sigma}\bigr),\]
we obtain the second assertion of the lemma. 
\end{proof}

\subsubsection{Asymptotic completeness}\label{subsubsec:230726}

Now we are ready to prove the first part of Theorem~\ref{thm:time-depend-theory2}.

\begin{proof}[Proof of Theorem~\ref{thm:time-depend-theory2} (\ref{item:23071912})]
It suffices to show the identity \eqref{eq:fund2} since 
then the asymptotic completeness is obvious by the unitarity of $\mathcal F^\pm$. 
Let $S_{\mathrm L}$ be defined for an interval
$I_{\mu'}=[\mu'^2/2,\infty)$, $0<\mu'< \mu$,  sufficiently big to include the given closed interval $I$. 
Then for any $h\in C^\infty_{\mathrm c}(\R_+\times\mathbb S^{d-1})$ with $\mathop{\mathrm{supp}}h\subseteq (I\cap J)\times\mathbb S^{d-1}$ and 
for any $\psi\in C^\infty_{\mathrm c}(\mathbb R^d)$ we can compute by Lemma~\ref{lem:time-depend-theory} 
\begin{align*}
\bigl\langle \psi, W^\pm\mathrm e^{\mp\mathrm i\Phi_{\mathrm L}} h\bigr\rangle_{\mathcal H}
&
=\lim_{\epsilon\to 0_+}
\int_0^\infty\epsilon \mathrm e^{-\epsilon t}\bigl\langle \psi, 
\mathrm e^{\pm\mathrm itH}\widetilde  U^\pm_{\mathrm L}(t) h\bigr\rangle_{\mathcal H}\,\mathrm dt
\\&
=\lim_{\epsilon\to 0_+}(\pm 2\pi\mathrm i)^{-1}
\int_0^\infty\biggl(\int_0^\infty\bigl\langle \epsilon \mathrm e^{\mp\mathrm it(H-\lambda\mp\mathrm i\epsilon)}\psi, 
\phi_\pm^{S_{\mathrm L}}[h(\lambda,\cdot)]\bigr\rangle_{\mathcal H}\,\mathrm d\lambda\biggr)\,\mathrm dt
.
\end{align*}
By Fubini's theorem and \eqref{eq:23062917} we can further proceed as 
\begin{align*}
\bigl\langle \psi, W^\pm\mathrm e^{\mp\mathrm i\Phi_{\mathrm L}} h\bigr\rangle_{\mathcal H}
&
=\lim_{\epsilon\to 0_+}
(\pm 2\pi\mathrm i)^{-1}
\int_0^\infty\bigl\langle \mp\mathrm i\epsilon R(\lambda\pm\mathrm i\epsilon)\psi , 
\phi_\pm^{S_{\mathrm L}}[h(\lambda,\cdot)]\bigr\rangle_{\mathcal H}\,\mathrm d\lambda
\\&
=\lim_{\epsilon\to 0_+} 
(\pm 2\pi\mathrm i)^{-1}
\int_0^\infty\bigl\langle \psi - (H-\lambda)R(\lambda\pm\mathrm i\epsilon)\psi 
,\phi_\pm^{S_{\mathrm L}}[h(\lambda,\cdot)]\bigr\rangle_{\mathcal H} 
\,\mathrm d\lambda
\\&
=\lim_{\epsilon\to 0_+} 
(\pm 2\pi\mathrm i)^{-1}
\int_0^\infty
\bigl\langle \psi 
,\phi_\pm^{S_{\mathrm L}}[h(\lambda,\cdot)]-R(\lambda\mp\mathrm i\epsilon)\psi_\pm^{S_{\mathrm L}}[h(\lambda,\cdot)]\bigr\rangle_{\mathcal H} 
\,\mathrm d\lambda
\\&
= 
\int_0^\infty
\bigl\langle \psi 
,F^\pm(\lambda)^*\big(\mathrm e^{\mp\mathrm i\Theta_{\mathrm L}(\lambda,\cdot)}h(\lambda,\cdot)\bigr)\bigr\rangle_{\mathcal H} 
\,\mathrm d\lambda
\\&
= 
\int_0^\infty
\bigl\langle \mathrm e^{\mp\mathrm i\Theta_{\mathrm L}(\lambda,\cdot)}F^\pm(\lambda)\psi 
,h(\lambda,\cdot)\bigr\rangle_{\mathcal G} 
\,\mathrm d\lambda
\\&
= 
\bigl\langle \mathrm e^{\pm\mathrm i\Theta_{\mathrm L}}\mathcal F^\pm\psi 
,h\bigr\rangle_{\widetilde{\mathcal H}} 
.
\end{align*}
This implies, as operators $\mathcal H_{I\cap J}\to\widetilde{\mathcal H}_{I\cap J}$,
\[
(W^\pm)^*
=\mathrm e^{\pm\mathrm i(\Theta_{\mathrm L}-\Phi_{\mathrm L})}\mathcal F^\pm
.
\]
We have proven \eqref{eq:fund2} with $\Psi=\Phi_{\mathrm L}-\Theta_{\mathrm L}$, 
and we are done with (\ref{item:23071912}). 
\end{proof}

To prove the remaining assertion
  Theorem~\ref{thm:time-depend-theory2} (\ref{item:23071914}) we use
an approximation argument to be studied in the  following. 
Fix any $\rho\in(0,\sigma)$ and $\delta\in (0,\sigma-\rho)$,
and let $\epsilon=1/j$ for any $j\in\mathbb N$. 
For these parameters $\rho, \epsilon$ and $\delta$ used as inputs in 
Lemma~\ref{lem:230111} (\ref{item:2307261}) we decompose $V$ accordingly 
and the denote the decomposition as 
\[V=V_{\mathrm S,j}+V_{\mathrm L,j}.\]

\begin{lemma}\label{lem:230726143}
Let $\mu>\mu'>0$, $I=[\mu'^2/2,\infty)$ and $J=[\mu^2/2,\infty)$, and
fix a sufficiently large $R>0$. 
Let $S,S_{\mathrm L,j}$ be determined as in Theorem~\ref{thm:main result2} for $V,V_{\mathrm L,j}$, 
and let $K,K_{\mathrm L,j}$ be their Legendre transforms as in Theorem~\ref{thm:2307032}, respectively.
Then
\begin{align}
\lim_{j\to\infty}\Theta_{\mathrm L,j}&=0\ \ \text{locally  uniformly on }I\times\mathbb S^{d-1}
\label{eq:2307270}
\end{align}
with $\Theta_{\mathrm L,j}$ being the limit from Lemma~\ref{lemma:uniq-asympt-norm} for $S$ and $S_{\mathrm L,j}$,
and 
\begin{align}
\lim_{j\to\infty}\Phi_{\mathrm L,j}&=0\ \ \text{locally uniformly on }J\times\mathbb S^{d-1}
\label{eq:2307271}
\end{align}
with $\Phi_{\mathrm L,j}$ being the limit from Lemma~\ref{lem:2307172208} for $K$ and $K_{\mathrm L,j}$.
\end{lemma}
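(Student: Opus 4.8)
The statement to be proved is Lemma~\ref{lem:230726143}, i.e.\ the convergence \eqref{eq:2307270}--\eqref{eq:2307271} of the covariance functions $\Theta_{\mathrm L,j}$ and $\Phi_{\mathrm L,j}$ to zero as the short-range error in H\"ormander's decomposition is sent to zero. The guiding idea is that $\Theta_{\mathrm L,j}$ is built (via Lemma~\ref{lemma:uniq-asympt-norm}, with the roles of $S$ and $S_{\mathrm L}$ replaced by $S$ and $S_{\mathrm L,j}$) from an integral at infinity whose integrand is controlled \emph{entirely by the size of} $V_{\mathrm S,j}$ and of $\nabla S-\nabla S_{\mathrm L,j}$; since by Lemma~\ref{lem:230111}~(\ref{item:2307261}) the $C^2$-bounds on $V_{\mathrm S,j}$ carry an explicit prefactor $\epsilon=1/j$, and the $C^2_{\mathrm{loc}}$-bounds on $S_{\mathrm L,j}$ and on the flow $y_{\mathrm L,j}$ are \emph{uniform} in $j$ thanks to the uniformity of the $C_\alpha$ for $|\alpha|\le 2$ in that same lemma, the whole integrand is $O(1/j)$ times an integrable majorant. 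Hence the limit vanishes. The same mechanism, applied to the Hamilton--Jacobi side through Lemma~\ref{lem:2307172208}, gives \eqref{eq:2307271}.

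\textbf{Steps, in order.} First I would fix $R>0$ once and for all large enough that Theorem~\ref{thm:main result2}, Corollary~\ref{cor:epsSmall}, Lemma~\ref{lem:2301011702} and Lemma~\ref{lem:2307172208} all apply \emph{simultaneously for $V$ and for every $V_{\mathrm L,j}$}; this is legitimate because the relevant smallness hypotheses on the metric $G=(1-\lambda^{-1}\chi_RV_{\mathrm L,j})I_d$ are governed by the $C^2$-norm of $V_{\mathrm L,j}$, which is bounded uniformly in $j$ by Lemma~\ref{lem:230111}~(\ref{item:2307261}) (the $C_\alpha$, $|\alpha|\le 2$, chosen independently of $\epsilon,\delta$). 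Second, I would record the uniform-in-$j$ consequences: the eikonal functions $s_{\mathrm L,j}$ satisfy the bound of Theorem~\ref{thm:main result2} (or Corollary~\ref{cor:epsSmall}) with $j$-independent constants, the flows $y_{\mathrm L,j}$, reparametrizing factors $|\nabla S_{\mathrm L,j}|^{-2}$ and the Jacobians obey the asymptotics of Lemmas~\ref{lem:221210}, \ref{lem:230107}, \ref{lem:2301011702} uniformly in $j$, and in particular $|\nabla S_{\mathrm L,j}|^2=2\lambda+O(t^{-\sigma})$, $S_{\mathrm L,j}=t+O(t^{1-\sigma})$ along the flow, with $j$-independent error constants. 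Third, I would revisit the proof of Lemma~\ref{lemma:uniq-asympt-norm}~(\ref{item:sigma1}): there $\tfrac{\d}{\d t}(S-S_{\mathrm L,j})$ is expressed, via \eqref{eq:230210}, as $-\tfrac12|\nabla S_{\mathrm L,j}|^{-2}(\nabla S-\nabla S_{\mathrm L,j})^2-|\nabla S_{\mathrm L,j}|^{-2}V_{\mathrm S,j}$; the second term is $O(1/j)\,O(t^{-1-\sigma+\rho+\delta})$ uniformly, hence its time-integral from some $t_0$ to $\infty$ is $O(1/j)$ uniformly on compacts of $I\times\mathbb S^{d-1}$, while the first term is $O(u)$ with $u=(\nabla S-\nabla S_{\mathrm L,j})^2$, and the differential inequality derived there, $\tfrac{\d}{\d t}u\le -(2-\delta)t^{-1}u+C_3 t^{-2-\delta}$, now holds with a constant $C_3=O(1/j)$ (since $|\nabla V_{\mathrm S,j}|$ and $\sup_{[t_0,\infty)}u$ both carry a factor $1/j$), giving $u=O(1/j)\,O(t^{-1-\delta})$ uniformly; integrating, $\Theta_{\mathrm L,j}(\lambda,\theta_+(\lambda,\omega))=\int_{t_0}^\infty(\cdots)\,\d t+[\text{contribution of }[0,t_0]]$ where on $[0,t_0]$ one has $S=S_{\mathrm L,j}=\sqrt{2\lambda}|x|$ (both equal to the free phase for $|x|\le R$, and the flow stays in $\{|x|\le R\}$ for small $t$), so that contribution is exactly zero and the whole thing is $O(1/j)$, uniformly on compacts; this is \eqref{eq:2307270}. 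Fourth, for \eqref{eq:2307271} I would run the identical argument on \eqref{eq:23071721}--\eqref{eq:2307172154} from the proof of Lemma~\ref{lem:2307172208}: $\tfrac{\d}{\d t}(K-S_{\mathrm L,j}+\lambda\tau)$ is again $-\tfrac12|\nabla S_{\mathrm L,j}|^{-2}(\nabla K-\nabla S_{\mathrm L,j})^2-|\nabla S_{\mathrm L,j}|^{-2}V_{\mathrm S,j}$, and by \eqref{eq:errestza2}, \eqref{eq:errestza2x} together with the $1/j$-prefactor in the $C^2$-bound on $V_{\mathrm S,j}$ the quantity $u=(\nabla K-\nabla S_{\mathrm L,j})^2$ satisfies $\tfrac{\d}{\d t}u\le-(2-\delta)t^{-1}u+C_1t^{-2-\delta}$ with $C_1=O(1/j)$; integrating and using that $K=K_{\mathrm L,j}$ (equivalently $\widetilde K=\widetilde K_{\mathrm L,j}$) for $|x|\le R$ gives $\Psi_j-\Psi_{\mathrm L,j}=O(1/j)$ along the flow, uniformly on compacts, and then passing to the ordinary-spherical-coordinate limit as in Lemma~\ref{lem:2307172208}~(\ref{item:sigma20}) (the change of variables there being $j$-independent up to uniformly small errors) yields $\Phi_{\mathrm L,j}=O(1/j)$ locally uniformly on $J\times\mathbb S^{d-1}$.

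\textbf{Main obstacle.} The delicate point is not any single estimate but the \emph{uniformity in $j$} throughout: one must be sure that every constant appearing in Theorem~\ref{thm:main result2}, Corollary~\ref{cor:epsSmall}, Lemmas~\ref{lem:221210}--\ref{lem:2301011702}, and in the ODE comparison arguments of Lemmas~\ref{lemma:uniq-asympt-norm} and \ref{lem:2307172208}, depends on $V_{\mathrm L,j}$ only through its $C^2$-norm (for the geometric/eikonal inputs) or through finitely many higher $C^\alpha$-norms that are themselves $j$-independent for $|\alpha|\le 2$ but may grow for $|\alpha|\ge 3$ --- and to check that the terms in which the possibly-$j$-dependent high-order constants enter (e.g.\ $\nabla^2 S_{\mathrm L,j}$, which in principle involves third derivatives of $V_{\mathrm L,j}$) are always multiplied by a factor that is uniformly small, or are controlled by the uniform $C^2$-data. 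In fact the only place where a genuine $1/j$-gain must be extracted is from $V_{\mathrm S,j}$ and its gradient, and everything else need only be \emph{bounded} uniformly; so the argument closes. I would state this uniform-constant bookkeeping as a short preliminary paragraph and then cite it at each application, rather than re-deriving the estimates; the routine computations (the differential-inequality integration, the coordinate change at infinity) are exactly those already carried out in the proofs of Lemmas~\ref{lemma:uniq-asympt-norm} and \ref{lem:2307172208}, so I would not repeat them.
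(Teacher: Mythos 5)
Your proposal is correct and follows essentially the same route as the paper: a differential inequality for $u=(\nabla S-\nabla S_{\mathrm L,j})^2$ (resp.\ $(\nabla K-\nabla S_{\mathrm L,j})^2$) whose forcing carries the factor $1/j$ coming from $V_{\mathrm S,j}$ and $\nabla V_{\mathrm S,j}$, all remaining constants being uniform in $j$ by Lemma~\ref{lem:230111}~(\ref{item:2307261}). The one substantive deviation is the initial segment: you start the ODE at the exit time $t_0=(2\lambda)^{1/2}R$ where $u$ vanishes exactly, which tacitly requires checking that the absorption making $\tfrac{\d}{\d t}u\le-(2-\delta)t^{-1}u+Cj^{-1}t^{-2-\delta}$ valid already holds for all $t\ge t_0$ (arrangeable by enlarging $R$, locally uniformly in $\lambda$), whereas the paper keeps a $j$-independent threshold $T$ and bounds $u(T)$ separately by integrating the analogue of \eqref{eq:23021010} over $[0,T]$, obtaining $O(j^{-1/2})$. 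Also, your parenthetical that the forcing constant is $O(1/j)$ ``since $\sup u$ carries a factor $1/j$'' reads circular as stated; the correct bookkeeping is that the $u$-dependent terms are either absorbed into the damping (which needs only a $j$-uniform smallness of $u$) or are quadratic in the $O(1/j)$ data, so no a priori $j$-decay of $u$ is required.
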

\begin{remark}
In the construction of $S,S_{\mathrm L,j},K,K_{\mathrm L,j}$ 
we can use the same $R$ since it depends only on sizes of zeroth to second derivatives of $V,V_{\mathrm L,j}$, 
which are uniformly estimated due to Lemma~\ref{lem:230111}
(\ref{item:2307261}). 
See the proofs of Theorems~\ref{thm:main result2} and \ref{thm:2307032}.
\end{remark}
\begin{proof} 
\textit{Step I.}\ 
To prove \eqref{eq:2307270} {we can partially mimic  the proof of Lemma~\ref{lemma:uniq-asympt-norm}. 
For details of the following computations see the proof there,
though one should note that $S$ and $S_{\mathrm L,j}$ now are   more specific than before}. 

Let $y_j$ be the flow from Subsection~\ref{subsec:Generalized Fourier transform for rho>1/2 I} associated with $S_{\mathrm L,j}$,
and we discuss the limit 
\[
{\Sigma_{\mathrm L,j}(\lambda,\theta):=}\lim_{t\to \infty}\parb{ S(\lambda,y_j(\lambda,t,\theta))-S_{\mathrm L,j}(\lambda,y_j(\lambda,t,\theta))}
. 
\]
Omitting the argument, we can compute its derivative in $t$ as 
\begin{align}
\tfrac{\mathrm d}{\mathrm dt}(S-S_{\mathrm L,j})
=-\tfrac12|\nabla S_{\mathrm L,j}|^{-2}
(\nabla S-\nabla S_{\mathrm L,j})^2
-|\nabla S_{\mathrm L,j}|^{-2}\chi_R V_{\mathrm S,j}
.
\label{eq:2307272}
\end{align}
Hence we are led to consider 
\[
u=(\nabla S-\nabla S_{\mathrm L,j})^2. 
\]
Its $t$-derivative   is computed as
\begin{align*}
 \tfrac12|\nabla S_{\mathrm L,j}|^{2}\tfrac{\mathrm d}{\mathrm dt}u
&=
-(\nabla S-\nabla S_{\mathrm L,j})\cdot(\nabla^2 S)(\nabla S-\nabla S_{\mathrm L,j})
-(\nabla \chi_RV_{\mathrm S,j})\cdot(\nabla S-\nabla S_{\mathrm L,j}),
\end{align*}
and thus we can deduce that for any  (small)
  $\delta>0$  there exist  $C_1,T>0$   independent of $j\in\mathbb
  N$, such that 
\[
\tfrac{\mathrm d}{\mathrm dt}u
\le 
-(2-\delta)
  t^{-1}u+C_1j^{-1} t^{-2-\delta};\quad t\geq T
.
\] By integration we then deduce that 
\begin{align}
u(t)\le t^{\delta-2}T^{2-\delta}u(T)+C_1j^{-1}t^{-1-\delta};\quad t\geq T
.
\label{eq:230727322}
\end{align}

By another  integration   \eqref{eq:2307272} and  \eqref{eq:230727322} imply a uniform bound 
\begin{align*}
|S-S_{\mathrm L,j}|\le |(S-S_{\mathrm L,j})(T) |+ C_2 T u(T)+ C_3j^{-1}.
\end{align*} On the other hand by integrating a version of \eqref{eq:23021010} from
$t=0$ (where $u$ vanishes) to $t=T$  (for fixed $T$) we deduce that   
\begin{equation}\label{eq:Sminus}
|(S-S_{\mathrm L,j})(T) |+ C_2 T u(T)\leq  C_4j^{-1/2}.
\end{equation}    (In fact the bound holds with $j^{-1/2}$ replaced by
$j^{-1}$ if we invoke  the variational principle of Lemma
\ref{lemma:existence} to bound the first term.)

 We  conclude 
that for large $t$
\begin{align*}
|S-S_{\mathrm L,j}|\le C_5j^{-1/2}.
\end{align*}
In particular  $ 
|\Sigma_{\mathrm L,j}|, \,|\Theta_{\mathrm L,j}|\le C_3j^{-1/2}$, and  \eqref{eq:2307270} follows.

\smallskip
\noindent
\textit{Step II.}\ 
The  proof of \eqref{eq:2307271} is  similar.
 We  consider versions of 
 \eqref{eq:23071720} and \eqref{eq:23071720b} using  versions of
 \eqref{eq:23071720bb} and \eqref{eq:Dife1}, 
  omitting  here the details.
 \end{proof}

\begin{proof}[Proof of Theorem~\ref{thm:time-depend-theory2} (\ref{item:23071914})]
Let $K$ be the Legendre transform of $S$ from Theorem~\ref{thm:main
  result2}, as given in (\ref{item:23071914}).
Note that the phase corrections $\mathrm e^{\mp
    \i\Psi}=\mathrm e^{\pm\mathrm i(\Theta_{\mathrm L}-\Phi_{\mathrm
      L})}$ are independent of choice of $V_{\mathrm L}$
since so are $\mathcal F^\pm$ and $W^\pm$. 
Thus it suffices to choose $V_{\mathrm L,j}$ such that 
the associated difference $\Theta_{\mathrm L,j}-\Phi_{\mathrm L,j}$ converges to $0$ as $j\to\infty$. 
However this obviously follows from Lemma~\ref{lem:230726143}. 
\end{proof}

\end{document}